%% file: main.tex
\documentclass[a4paper,12pt,dvipsnames]{article}
\usepackage[affil-sl]{authblk}
\usepackage[left=1in, bottom=1in, right=1in, top=1.0in]{geometry}
\usepackage{setspace}
\DeclareOldFontCommand{\bf}{\normalfont\bfseries}{\mathbf} %
\usepackage{pgfplots}

\usepackage{amsmath, amsfonts, amssymb}
\usepackage{amsthm}
\usepackage{pifont}
\usepackage[round,longnamesfirst]{natbib}
\usepackage{verbatim}
\usepackage{graphicx, subcaption}
\usepackage{mathtools}
\usepackage[mathscr]{euscript}
\usepackage[labelfont=bf,labelsep=colon]{caption}
\usepackage[capposition=top]{floatrow}
\definecolor{myLightGray}{RGB}{191,191,191}
\definecolor{myGray}{RGB}{160,160,160}
\definecolor{myDarkGray}{RGB}{144,144,144}
\definecolor{myDarkRed}{RGB}{167,114,115}
\definecolor{myRed}{RGB}{255,58,70}
\definecolor{myGreen}{RGB}{1,37,110}
\definecolor{myDarkCerulean}{RGB}{16,80,113}

\usepackage[colorlinks=True, citecolor=myDarkCerulean, linkcolor=black, urlcolor=blue, hypertexnames=false]{hyperref}
\usepackage[inline]{enumitem}
\usepackage{eurosym}
\usepackage{booktabs}
\usepackage[page]{appendix}
\usepackage{xpatch}
\usepackage{multirow}
\usepackage{dsfont}

\makeatletter
\xpatchcmd{\paragraph}{3.25ex \@plus1ex \@minus.2ex}{3pt plus 1pt minus 1pt}{\typeout{success!}}{\typeout{failure!}}
\makeatother

\newcommand{\underb}[1]{\underbar{\itshape #1}}
\setcitestyle{citesep={,}}

\providecommand{\sym}[1]{\ifmmode^{#1}\else\(^{#1}\)\fi}

\newtheorem{prop}{Proposition}[section]
\newtheorem{cor}{Corollary}[section]
\newtheorem{lem}{Lemma}[section]
\newtheorem{defi}{Definition}[section]
\newtheorem{ppert}{Property}[section]

\usepackage{tikz}
\usetikzlibrary{decorations.pathreplacing}
\usetikzlibrary{graphs}
\usetikzlibrary{chains,shapes.multipart}
\usetikzlibrary{shapes,calc,fit}
\usetikzlibrary{automata,positioning}

\usepackage{etoolbox,xstring,xspace}
\makeatletter

\AtBeginDocument{%
\let\origref\ref
\renewcommand*\ref[1]{%
  \origref{#1}\xlabel{#1}}
}
\newrobustcmd*\xlabel[1]{%
   \ifcsdef{siteref@doc@#1}{}{\csgdef{siteref@doc@#1}{,}}%
    \@bsphack%
    \begingroup
       \csxdef{siteref@doc@#1}{\csuse{siteref@doc@#1},\thepage}%
         \protected@write\@auxout{}%
        {\string\SiteRef{siteref@#1}{\csuse{siteref@doc@#1}}}%
     \endgroup
     \@esphack%
}

\newrobustcmd*\SiteRef[2]{\csgdef{#1}{#2}}

\newrobustcmd*\xref[1]{%
\ifcsundef{siteref@#1}{%
     \@latex@warning@no@line{Label `#1' not defined}
     }{%
    \begingroup
      \StrGobbleLeft{\csuse{siteref@#1}}{2}[\@tempa]\relax%
      \def\@tempb{}%
      \@tempcnta=0\relax%
      \@tempcntb=\@ne\relax%
      \def\do##1{\advance\@tempcnta\@ne}%
      \expandafter\docsvlist\expandafter{\@tempa}%
       \def\do##1{%
         \ifnum\@tempcntb=\@tempcnta\relax%
            \hyperpage{##1}%
         \else
            \hyperpage{##1},%
          \fi%
          \advance\@tempcntb\@ne
       }%
       [\expandafter\docsvlist\expandafter{\@tempa}]\xspace%
    \endgroup
   }%
}

\usepackage{scrwfile}
\TOCclone[\contentsname]{toc}{atoc}
\newcommand\StartAppendixEntries{}
\AfterTOCHead[toc]{%
  \renewcommand\StartAppendixEntries{\value{tocdepth}=-10000\relax}%
}
\AfterTOCHead[atoc]{%
  \edef\maintocdepth{\the\value{tocdepth}}%
  \value{tocdepth}=-10000\relax%
  \renewcommand\StartAppendixEntries{\value{tocdepth}=\maintocdepth\relax}%
}
\newcommand*\appendixwithtoc{%
  \appendix
  \addtocontents{toc}{\protect\StartAppendixEntries}
  \listofatoc
}

\usepackage{bibunits}

\usepackage{threeparttable}

\begin{document}

\begin{bibunit}
\input{draft_v9_body}

{
\singlespacing
\small
\input{main.bbl}
}
\end{bibunit}

\pagebreak

\setcounter{page}{1}
\setcounter{footnote}{0}

\numberwithin{equation}{section}
\numberwithin{figure}{section}
\numberwithin{table}{section}

\renewcommand{\theequation}{\thesection.\arabic{equation}}
\renewcommand{\thetable}{\thesection.\arabic{table}}
\renewcommand{\thefigure}{\thesection.\arabic{figure}}

\begin{center}
\section*{\large{Appendix for \\ ``The Collapse of Human Capital Ladders in Recessions''}}\label{appendix:online_appendix}
\end{center}
\smallskip 
\begin{center}
Edoardo Maria Acabbi, Andrea Alati and Luca Mazzone
\end{center}
\smallskip 
\begin{center}
January 2026
\end{center}

\appendixwithtoc
\begin{bibunit}

\input{draft_v9_FullAppendix}
{
\clearpage
\singlespacing
\small
\def\printappendixreferences{}
\input{main.bbl}
}
\end{bibunit}

\end{document}

%% file: draft_v9_body.tex
\title{\large{\textbf{
The Collapse of Human Capital Ladders in Recessions}}\thanks{The authors thank Job Boerma, Gideon Bornstein, Gabriel Chodorow Reich, Alessandro Dovis,  Simone Ferro,  Cecilia Garcia-Pe\~nalosa, Christoph Hedtrich, Kyle Herkenhoff, Marianna Kudlyak, Lien Laureys, Paolo Martellini, Antonio Mele, Silvia Miranda-Agrippino, Kurt Mitman, Simon Mongey, Facundo Piguillem, Victor Rios-Rull, Edouard Schaal, Silvia Vannutelli, Gianluca Violante, Liangjie Wu and the participants at 
numerous seminars and workshops for useful comments and conversations. Acabbi gratefully acknowledges the financial support from the Comunidad de Madrid (Programa Excelencia para el Profesorado Universitario, convenio con Universidad Carlos III de Madrid, V Plan Regional de Investigación Científica e Innovación Tecnológica), the Spanish Ministry of Science and Innovation (projects PID2020-114108GB-I00 and PID2022-137707NB-I00), and the Fundación Ramón Areces (projects CISP20S12348 and CISP22S15759). 
This study uses anonymous data from the Italian Social Security Institute (INPS). Data access was provided as part of the VisintINPS Scholars initiative. We are grateful to the staff of Direzione Centrale Studi e Ricerche at INPS. Findings and conclusions expressed are solely those of the authors and do not represent the views of INPS, the Bank of England, or the IMF. All errors are our own.
\newline Acabbi, corresponding author: edoardo.maria.acabbi@uni-mannheim.de \\ Alati: andrea.alati@bankofengland.co.uk\\ Mazzone: luca.mazzone@umontreal.ca} }
\author[1]{{\fontsize{12}{10.8}\selectfont Edoardo Maria Acabbi}}
\author[2]{{\fontsize{12}{10.8}\selectfont Andrea Alati}}
\author[3]{{\fontsize{12}{10.8}\selectfont Luca Mazzone}}
\affil[1]{University of Mannheim}
\affil[2]{Bank of England}
\affil[3]{University of Montr\'eal}

\date{\fontsize{12}{10.8}\selectfont \vspace{-1em} January 2026}

\frenchspacing
\tolerance=1
\emergencystretch=\maxdimen
\hyphenpenalty=10000
\hbadness=10000
\newgeometry{top=2em,bottom=2em,left=1.1in,right=1.1in}
\maketitle
 
\vspace{-1cm}
\begin{abstract}
\input{draft_v9_abstract}

\end{abstract}
\vspace{1em}
\thispagestyle{empty}
\restoregeometry

\section{Introduction}
\setcounter{page}{1}
\input{draft_v9_intro}

\section{Firms matter for human capital accumulation}\label{sect2}

We leverage unique longitudinal matched employer-employee data from Italy to shed new light on the role played by firms in shaping the speed of workers' human capital accumulation and career progression. 
We proxy human capital accumulation by looking at persistent returns to experience for workers' earnings. 
We first highlight that experience-earnings profiles are steeper for individuals employed in highly productive firms, and their effect persists after individuals have moved away from those firms, even after displacement from the previous job. 
Second, we show that past firm quality matters for re-employment wages. Workers finding a job after an unemployment spell earn more the greater the productivity of their previous firm, even after controlling for their past wage or the duration of their unemployment spell.
Lastly, we show that workers tend to match with firms with lower productivity in recession periods and that the job ladder collapses in recessions \citep{haltiwanger_cyclical_2025}.

We interpret these results as evidence that workers accumulate human capital at different rates based on the productivity of the firms they work for, and this effect varies over the cycle. As a consequence, in the rest of the paper we incorporate firm-dependent human capital accumulation in a labor market search model with heterogeneous workers and firms and aggregate risk, to evaluate the aggregate effect of distortions in sorting and human capital production along the business cycle. %

\subsection{Data} 
We rely on two main data sources provided by Italian Social Security Institute (INPS): i) data on the universe of private employment relationships in Italy from 1996 to 2018 (\textit{Uniemens} dataset, \citealt{inps_data}) and ii) balance sheet data for incorporated Italian firms from 1994 to 2019 (\textit{Cerved} dataset, \citealt{cerved}). On the workers' side, we have information on annual incomes, months worked, contract types, a coarse measure of qualifications, education levels and usual demographic information. We do not observe unemployment benefits in this dataset, and cannot distinguish between unemployment and non-employment when a worker is not observed in the data. On the firms' side, we have access to standard balance sheet and income statements with information on revenues, value added, payroll size, and sector of activity. We report more details on the data and sample construction in \textbf{Appendix~\ref{sect:app_data}}.

\subsection{Firm dependent returns to experience}
Measuring human capital accumulation is challenging. In this section, we proxy human capital accumulation by looking at potentially different returns to experience for workers' earnings across firms \citep{mion2022dream, arellano2022}. The key intuition is that, if workers accumulate more human capital in a certain firm type with respect to the other, this should translate into persistent changes in their lifetime earnings trajectory, regardless of unemployment events and job mobility.

We use administrative data on the universe of Italian workers and firms' employment, as well as firms' balance sheet data, spanning 1994 to 2019. We assign each firm, $f$, to a class according to its productivity. Firm classes are defined as quintiles of yearly value added per employee. Then, for each worker $i$ we construct a measure of experience in each firm-class and we estimate the following wage equation,
\begin{equation}\label{eqn:experience_profiles}
\log(w_{i,t}) = \alpha_i + \alpha_{j(i,t)} + \sum_{c = 1}^5 \beta_c e^c_{i,t} \times \mathbb{I}\{f(i,t) \in c \} + \sum_{c = 1}^5 \gamma_{c} e^c_{i,t} \times \mathbb{I}\{f(i,t) \notin c\} + \mathbf{X}'_{i,t} \mathbf{\theta} + \varepsilon_{i,t},
\end{equation}
where $w_{i,t}$ are monthly earnings\footnote{Throughout the section we use monthly earnings as the variable of reference for labor earnings. Results are qualitatively analogous when using annual earnings.}, $e_{i,t}^c$ are the number of years worker $i$ has worked in firms belonging to productivity class $c$ up to year $t$, $\mathbb{I}\{\cdot\}$ are indicator functions that take value one when the worker's current employer, $f(i,t)$, belongs to quintile $c$, and $\mathbf{X}_{i,t}$ are a set of controls that include age, sector-year and contract type (temporary versus open-ended, full-time versus part-time) fixed effects. In the spirit of the AKM literature \citep{Abowd1999a}, all regressions control for worker and firm fixed effects $\alpha_i$ and $\alpha_{j(i,t)}$. The firm fixed effect is estimated by clustering firms in one of 10 latent firm types obtained by applying the K-means measure reduction method on the data as in \cite{bonhomme2015}, using the earnings decile distribution for clustering similarity. We denote firm clusters by $j$.

The main coefficients of interest are $\{\beta_c,\gamma_c\}_{c=1}^5$. They represent the returns to experience when working in a firm in the $c^{th}$ quintile of the labor productivity distribution, with $\beta_c$ capturing returns to experience when the worker is \textit{currently} employed in firms belonging to a certain class, while $\gamma_c$ captures the returns of having worked in a given firm class at some point in the worker's career. Distinguishing between the two cases is useful to assess how portable is the experience gained in any firm-class.\footnote{In {Appendix~\ref{sect:empirics}} we also report the estimate from the same class of regressions without distinguishing between current and past firm classes' experience returns, splitting by high-school vs higher education, allowing for quadratic returns to experience. The latter results highlight, consistently with our model, that firm-specific marginal experience returns are diminishing \citep{gregory2023}.} 
If one year of experience is more valuable when accrued in a more productive firm, we would expect $ \{\beta_c,\gamma_c\}_{c=1}^5$ to be positive and increasing with firms' productivity. A positive relationship between firm productivity and wages could also be consistent with employers being forced to match outside offers, as in sequential bargaining models \`a la \cite{postel-vinay2002},  while workers are not becoming more productive over time.  In order to rule out this alternative explanation, we estimate \textbf{Equation~\eqref{eqn:experience_profiles}} also on a sample of exogenously displaced workers, only for the years following their displacement/mass layoff event. In order to identify mass layoff events we follow the logic of the classical event study in \cite{jacobson1993} and the criteria of \cite{bertheau2023} (see \textbf{Appendix~\ref{sect:displ}} for a precise definition).%
 \;By virtue of the displacement, the bargaining position of workers is equalized to their unemployment value and the returns to experience should better capture the premia associated with an increase in skills.\footnote{Greater experience-earnings profiles in highly productive firms could also be due to the fact that high productivity firms are better at screening low quality workers. We look at the separation rates across worker qualities by firm productivity--\textbf{Figure~\ref{fig:seprates}}--and we do not observe a differential gradient in the separation rates of low quality workers in high quality firms.} 

\paragraph{Results.} \textbf{Figure~\ref{fig:experience_profiles}} plots the returns to experience for the full sample of workers, Panel~(a), and for the subsample of displaced workers, Panel~(b). Given the sample size, coefficients are estimated with an extreme degree of precision. Coefficients are reported in \textbf{Table~\ref{tab:controlling_unemp}} in \textbf{Appendix \ref{sect:empirics}}. In both cases, the returns to experience are increasing in firm classes. In particular, one year of experience in a firm belonging to the top 20\% of the productivity distribution is associated with an increase in earnings more than twice as large than the one implied by working an extra year for a firm in the bottom 20\%, 1.6\% versus 0.6\%. While the level of returns is lower once we remove returns to bargaining positions, the positive gradient in the wage experience profiles is preserved also in the sample of displaced workers, reinforcing the interpretation that experience in firms belonging to higher productivity classes benefits workers' earnings beyond the potential availability of better outside offers.%
The results are robust to using quintiles of firm-cluster fixed effects estimated using a rolling-AKM algorithm \citep{lachowska_firm_2023}, in which we address the limited mobility bias by clustering firms ex-ante on their labor earnings distribution as in \cite{bonhomme2015, bonhomme2019}. The estimated $\beta$ and $\gamma$ coefficients are of similar magnitude to each other across both samples, with returns to experience outside of the firm class of reference ($\gamma$) amounting to between $70\%$ and $90\%$ of those within firm class ($\beta$). Results are displayed in \textbf{Appendix~\ref{sect:empirics}}, where we also report robustness results including the duration in months of the unemployment spell between displacement and re-employment as an additional control. %
In sum, evidence suggests that experience returns are highly portable across firms, which is entirely consistent with the persistent accumulation of human capital on the job for workers.%

We take the results as evidence that workers' returns to being employed in different firms are highly heterogeneous and depend on firms' productivity. Firm characteristics and workers' employment histories therefore can have significant effects for the long-term dynamics of their careers. %

\begin{figure}
\centering
\caption{Experience profiles\label{fig:experience_profiles}}
\subcaptionbox{Full sample \label{fig:experience_profiles_full}}{\includegraphics[width=0.5\textwidth]{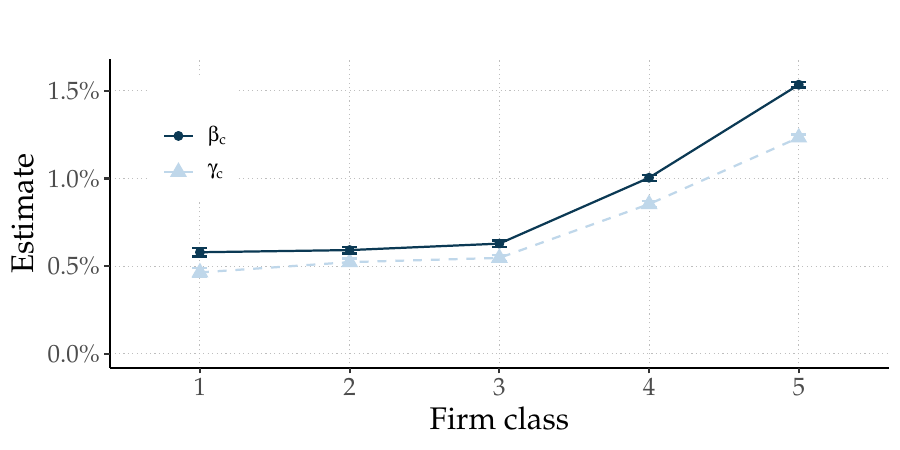}}%
\subcaptionbox{Displaced workers\label{fig:experience_profiles_disp}}{\includegraphics[width=0.5\textwidth]{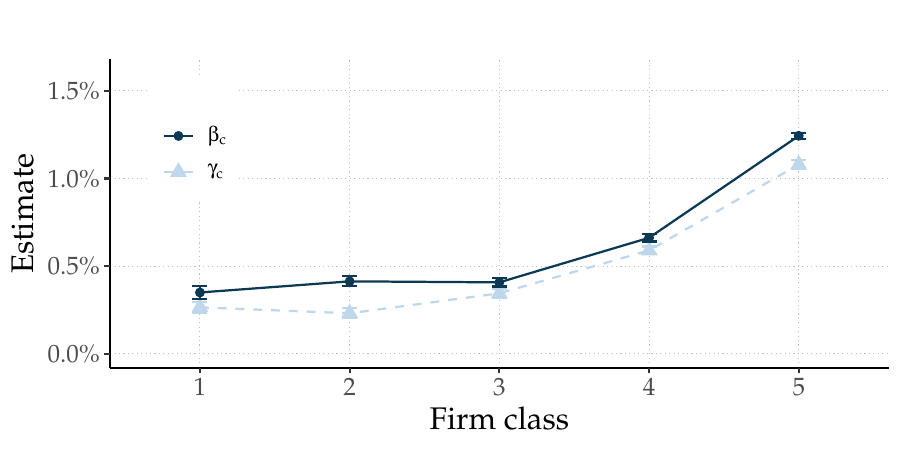}}
\vspace{-1em}
\floatfoot{\textbf{Note:} The figure reports the returns to experience estimated from Equation~\eqref{eqn:experience_profiles}. The full sample includes all workers employed in private firms in Italy and includes approximately 98M observations (total worker-year). Displaced workers are identified from mass layoffs and the sample includes approximately 5M observations. Error bars are 99.9\% confidence intervals with standard errors clustered at the worker level. When estimating the returns for the sample of displaced workers we include the worker and firm-class fixed effects estimated on the full sample as regressors. We report more details on the sample construction in Appendix~\ref{sect:app_data}.}
\end{figure}

\subsection{Persistent effects of firm quality on workers' earnings}

The results presented in \textbf{Figure~\ref{fig:experience_profiles}} highlight how experience in differently productive firms persistently influences labor earnings. We now provide complementary evidence that this effect is present even when workers move to new employers from unemployment. While this context allow us to control for the effect of differences in bargaining across workers, it is still exposed to separate issues.
Conditioning on unemployment spell duration  might induce selection on workers' quality out of unemployment. Not controlling for it, on the other hand, might confound the effect with duration dependence dynamics. Workers with lower skills tend to experience greater unemployment duration and to work in lower quality firms. As their human capital might deteriorate in unemployment, we might spuriously measure lower wage effects for low productive firms only because the workers themselves have lost human capital.\footnote{Recent evidence on labor displacement from \cite{bertheau2023} shows that for the Italian labor market the loss in earnings is mostly driven by lower employment probabilities.}  

We thus use two different subsamples of the data to address selection and negative duration dependence \citep{kroft2013, schmieder2016a}.  First, we use the subsample of workers undergoing a displacement event, which is a standard way in the labor literature to identify plausibly exogenous separations. Second, we identify a subsample of workers undergoing an unemployment spell between two periods of \textit{stable} employment at \textit{different} firms: an employment-unemployment-employment (E-U-E) transition.\footnote{We follow \cite{herkenhoff2024} for the definition of E-U-E events. Workers within this subsample must be employed the whole year in years $t$ and $t+2$ in this small panel, and undergo at least one quarter of unemployment in year $t+1$. We exclude quits when identifying unemployment spells.} 

We estimate how the origin firm's quality predicts labor earnings after  an unemployment spell when the worker is re-employed:
\begin{equation}\label{eq:eue_reg}
\log(w_{i,t+\tau}) = \alpha\log(w_{i,t}) + \sum_{c=1}^5 \beta_c \mathbb{I}\{f(i,t)\in c\} + \mathbf{X}'_{i,t} \mathbf{\theta} + \varepsilon_{i,t},
\end{equation}
where $w_{i,t+\tau}$ are labor earnings for worker $i$, $\tau$-periods after the E-U-E transition, $w_{i,t}$ are earnings at the time of separation and $\mathbb{I}\{f(i,t)\in c\}$ is an indicator function that takes value one if the employer at the time of separation belonged to productivity quintile $c$. $\mathbf{X}_{i,t}$ are a set of controls that include sex, contract type, occupation and sector-time fixed effects. 
The main coefficients of interest are $\{\beta_c\}_{c=2}^5$, $c=1$ being the excluded category. They indicate how the quality of past employers influences future earnings at a specific horizon $\tau$ after unemployment spells. Controlling for past earnings and job characteristics allows us to indirectly control for the possibility that workers in unemployment still retain different bargaining positions by virtue of potentially different generosity of unemployment benefits, a feature absent in the empirical specification of the previous section.

\paragraph{Results.} Figure~\ref{fig:EUE_empirical} reports the main coefficients of interest for two horizons $\tau$: one and five years after a displacement event (left panel), or a short unemployment spell through stable jobs as defined in the previous section, (right panel).\footnote{In \textbf{Appendix~\ref{sect:empirics}} we report robustness results using firm cluster AKM fixed effects quintiles as measures of firm quality, and the tables of coefficient for all regressions.}

For both samples, two features are worth highlighting. First, the coefficients are monotonically increasing. This indicates that workers previously employed in relatively better firms, \textit{conditional} on their past earnings, obtain higher earnings after an unemployment spell. Second, even five years after unemployment, past firms have a significant effect on workers' earnings. The impact of past firm quality is very persistent across all firm types, and similar across subsamples. The stability and significance of our estimated results over time and across subsamples provides reassuring evidence that selection and duration-dependence cannot explain our effects.\footnote{An alternative mechanism that would explain these positive effects in past firm productivity is that unemployed workers' coming from the best firms just have greater reservation earnings. This would however plausibly imply that they wait longer to find new employment. We find that, controlling for the pre-E-U-E earnings, the quality of past employers is negatively associated with unemployment duration in our sample, as shown in \textbf{Table~\ref{tab:EUE_hor_wage}}. The evidence shows that workers coming from the most productive firms find a new job \textit{faster} and are paid \textit{more} regardless of the previous earnings.} 

Overall, we conclude that the quality of past employers bears a long-lasting effect on workers' earnings even when workers undergo involuntary unemployment spells, consistently with persistent on-the-job human capital accumulation. 

\begin{figure}
\centering
\caption{Persistent effect of past firms for workers\label{fig:EUE_empirical}}
\subcaptionbox{Displaced workers \label{fig:EUE_displ}}{\includegraphics[width=0.5\textwidth]{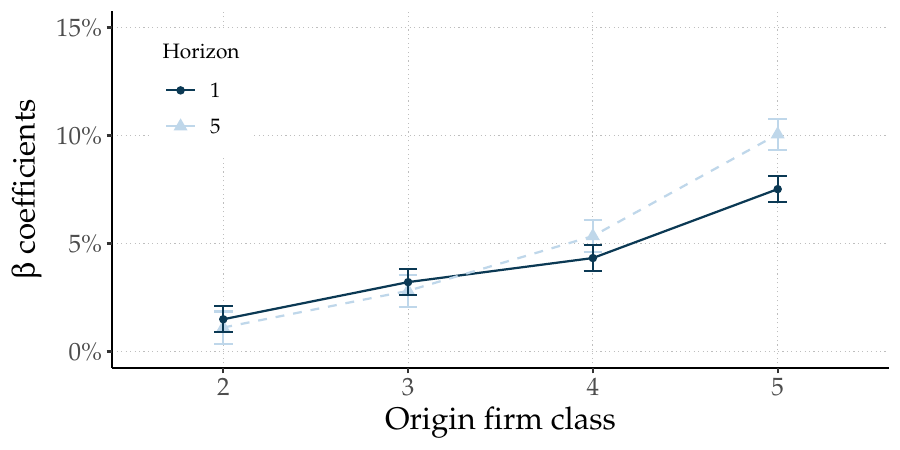}}%
\subcaptionbox{Short unemployment spells\label{fig:EUE_full}}{\includegraphics[width=0.5\textwidth]{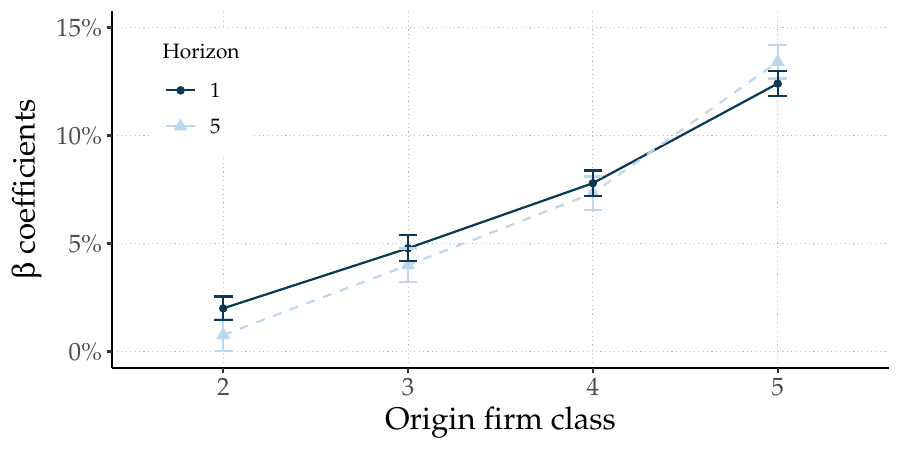}}
\vspace{-1em}
\floatfoot{\textbf{Note:} The figure reports the main coefficient of interest for the regression in \textbf{Equation~\ref{eq:eue_reg}}, error bands are 99\% confidence intervals. The displaced workers sample includes workers that make E-U-E transitions following mass layoffs, while the short unemployment spell sample identifies all E-U-E transitions in which workers are employed for at least 4 quarters after an unemployment spell of less than one year. Firm quality is defined as (yearly) quintiles of firm value added per employee. %
}
\end{figure}

\subsection{Worker-Firm Sorting and the Business Cycle}

Workers accumulate more human capital when employed in more productive firms. This creates a job ladder in firm quality, which workers would want to climb to improve their human capital: a ``human capital ladder''. \textbf{Figure \ref{fig:avg_sorting}} shows that such a ladder exists in the data. Each quarter, workers are grouped into quintiles of 6 years centered rolling-windows AKM person effects,  while firm quality is again measured using value-added per employee. We then calculate the quarterly job-to-job transitions. The figure shows the average firm class destination for every class of worker across our sample: more than 25\% of workers in the top quintile move towards firms in the top quintile, while less than 10\% of workers in the median quintile do. Conversely, more than 25\% of workers in the bottom quintile find jobs in the bottom quintile of firms.

\begin{figure}
\caption{Worker and firm characteristics for new matches and over the cycle} 
\centering
\subcaptionbox{Sorting for new matches\label{fig:avg_sorting}}{\includegraphics[width=0.4\textwidth]{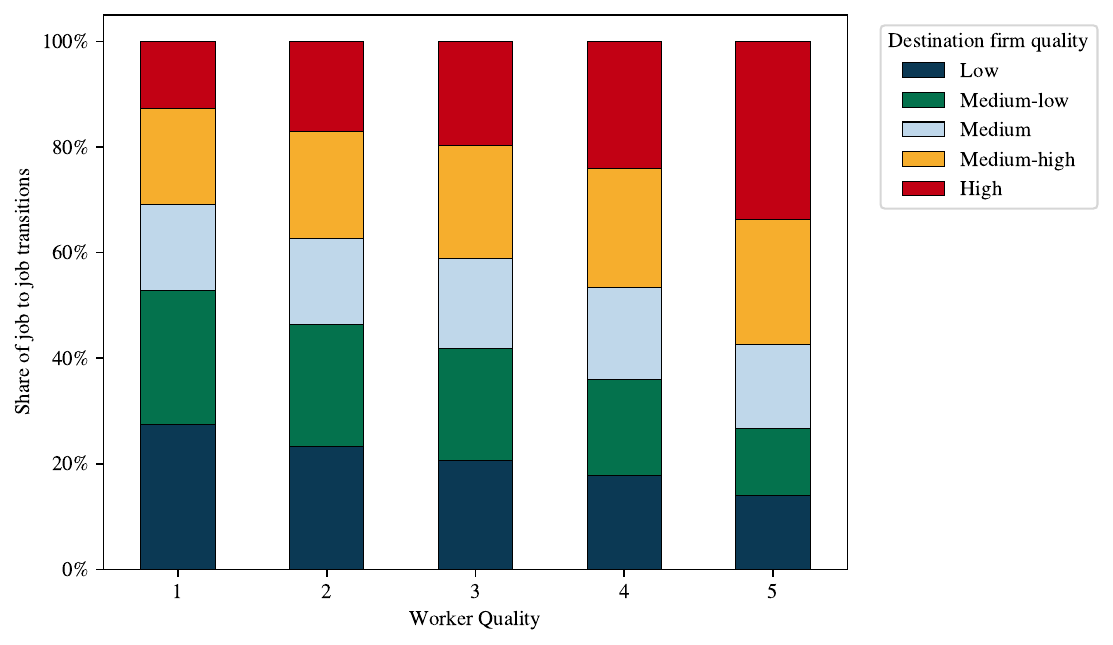}}%
\subcaptionbox{\footnotesize{New matches: Destination}\label{fig:sorting_cycle}}{\includegraphics[width=0.3\textwidth]{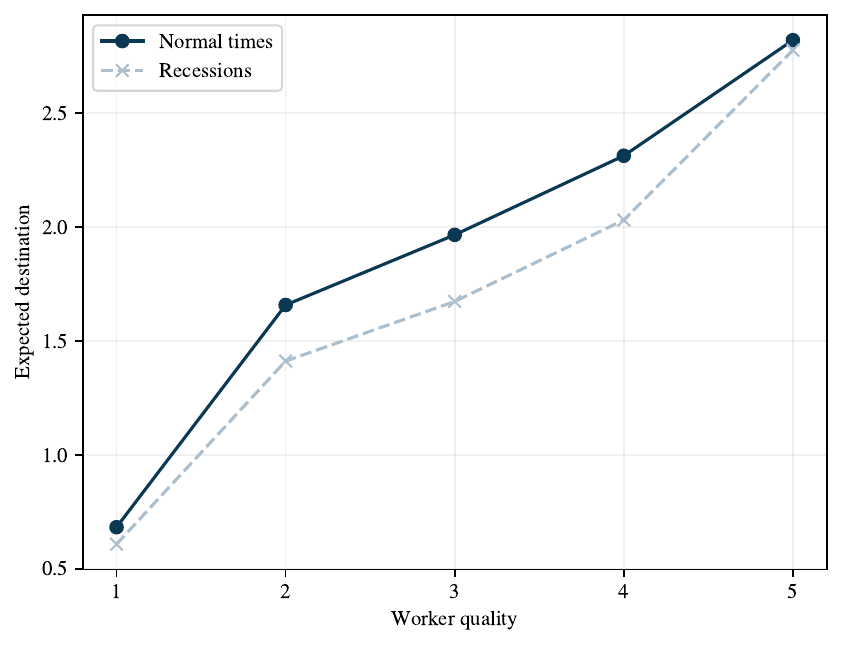}}%
\subcaptionbox{\footnotesize{New matches: Origin}\label{fig:sorting_cycle_origin}}{\includegraphics[width=0.3\textwidth]{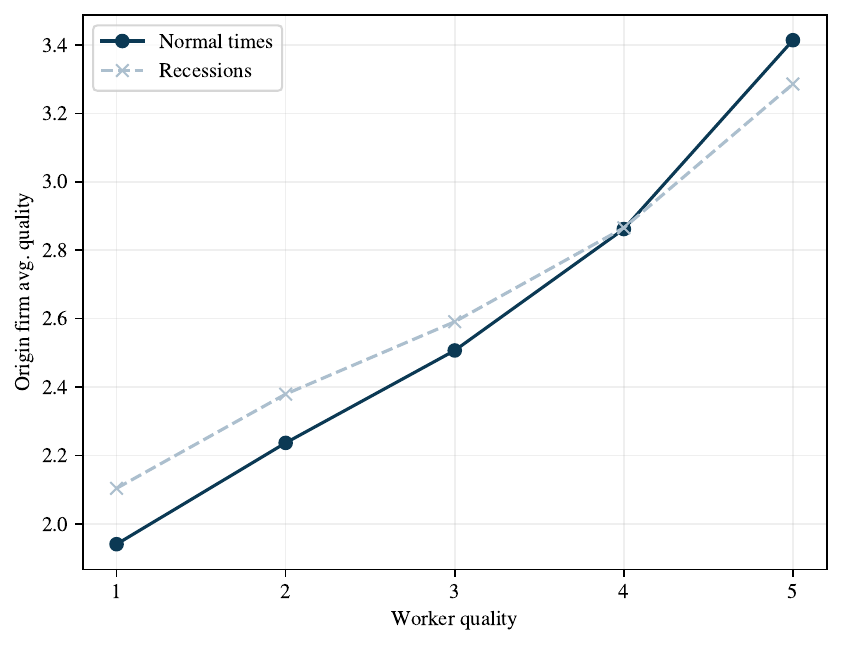}}%

\floatfoot{\textbf{Note:} The figure presents how job-to-job flows are directed towards different quality firms in the data (Panel a) and how workers expected firm quality of destination changes over the cycle (Panel b). Workers are divided into (quarterly) quintiles of own AKM fixed effect, while the measure of firm quality is (quarterly) quintile of value added per worker. In Panel (a): each stacked bar reports the shares of job-to-job transitions ending into each firm quality bin. In Panel (b): the average firm quality of destination by worker quality, distinguishing between normal times and recession periods. In Panel (c): the average origin firm quality of job to job and unemployment transitions, distinguishing between normal times and recessions periods. %
The figure uses quarterly rates from all pooled years, distinguishing quarters of OECD recession from normal times, in the matched employer-employee dataset.}
\end{figure}

While in general better workers will tend to move towards better firms during their career, the job ladder for each worker might be flatter in recessions. Because of the role of firms as learning environments, recessions would thus impact not only the allocations of workers to firms, but also the production of human capital in the aggregate.

 \textbf{Figure \ref{fig:sorting_cycle}} shows that this is indeed what we see in the data: workers have a harder time matching with good firms in recessions.\footnote{Recessions are identified according to the OECD definition. See \textbf{Figure~\ref{fig:ita_u_rate_q}}.}
 In the figure we consider all workers who are moving out of their current match - because of a job to job transition, a voluntary separation, a firing, or a dismissal. The expected firm class the worker is going to be employed at in their next employment spell is then displayed for different moments in the business cycle. During recessions firms ``upskill'' \citep{hershbein2018, modestino2020}, increase their skill requirements for a given position. This forces workers who are moving to new jobs to move to comparatively worse firms. In light of the results in the previous sections, recessions thus impact human capital accumulation, as getting a job of the same quality as in good times is comparatively harder for workers. While these patterns clearly reflect sullying forces, recessions may also weed out unproductive matches, generating a cleansing effect. To assess this mechanism, \textbf{Figure~\ref{fig:sorting_cycle_origin}} reports the average productivity of workers' \textit{origin} firms for (i) job-to-job transitions and (ii) transitions into unemployment, across the business cycle. The final panel paints a mixed picture. On the one hand, high quality workers (top quintiles of the worker fixed-effect distribution) disproportionately separate from less productive firms in downturns---consistent with reduced mismatch (cleansing). On the other hand, low quality workers are, on average, separating from more productive firms during downturns---consistent with a collapse of the job ladder. Taken together with the destination evidence, these origin patterns imply that the distribution of realized match quality shifts left in downturns.

A deeper understanding of the job ladder's cyclical dynamics comes from examining how poaching flows and hires from non-employment differ between firms at the top and bottom of the productivity distribution \citep{haltiwanger_cyclical_2018,haltiwanger_cyclical_2025}. A collapse of the job ladder should manifest itself as weaker net poaching of high- relative to low-productivity firms, where poaching is defined as hires from employment at other firms minus separations to other firms, alongside to a greater reliance on hires from non-employment. \textbf{Table~\ref{tab:HHMS}} reports estimates of how these high-minus-low differentials respond to labor-market slack.\footnote{We describe the empirical approach in detail in \textbf{Appendix~\ref{app:job_ladder_cyc}}.} Consistent with job-ladder compression, increases in unemployment are associated with net poaching shifting away from high-productivity firms toward low-productivity firms (negative coefficients in Column~1). Cleansing forces manifest themselves through a relative decline in hires from non-employment among low-productivity firms (positive coefficients in Column~2). Overall, downturns are associated with a significant compression of the job ladder in the Italian economy, with both cleansing at the bottom of the productivity distribution and sullying going up to the top.

Putting these results together, we highlight long-run costs of business cycles that operate through a collapse of the job ladder, which distorts worker--firm sorting and, in turn, human-capital accumulation. The next section introduces a theoretical framework to interpret these forces, and \textbf{Sections~\ref{sect:estimation_body}}--\textbf{\ref{sect:anatomy}} quantify their aggregate implications.

\begin{table}
\caption{Job ladder collapses in recessions\label{tab:HHMS}}
\include{figures/PaperFigures/Empirics/HHMS_reg.tex}
\vspace{-1.0em}
\floatfoot{\textbf{Note:} The table reports the cyclicality of the difference in Net poaching and Net nonemployment flows as in  \cite{haltiwanger_cyclical_2025} and \cite{haltiwanger_cyclical_2018}. Each cell presents results from a separate regression. The dependent variable is the differential worker flow rate between high- and low-productivity firms. The independent variables in each regression include a cyclical indicator as well as a linear and quadratic time trend and a constant, which are not reported. The sample is quarterly from 1996 to 2018.}
\end{table}

\section{Model \label{sect:model}}

We start by presenting the environment and the preference structure of workers. We discuss the features of the frictional labor market with directed search and finally we characterize the worker problem, the optimal contract and the equilibrium. 

\subsection{Environment}

Time is discrete, runs forever and is indexed by $t \in \mathbb{N}$. We denote future values in recursive expressions by adding a $^\prime$ to them, or index elements by $t$ in non-recursive ones. The economy is populated by a unit mass of $T \geq 2$ overlapping generations of finitely lived, hand-to-mouth, risk-averse workers and a continuum of risk-neutral entrepreneurs. %
All agents in the economy share the same discount factor $\beta \in (0,1)$.
Workers live for $T$ periods, with age $\tau \in \mathcal{T} \equiv \{1,2,3,\dots,T \}$, and %
maximize lifetime flow-utility from non-durable consumption:

\[
    \mathbb{E}_{t_0} \left( \sum_{\tau=1}^T \beta^\tau u(c_{\tau,t_0+\tau})\right) \text{, }
\]

\noindent where $t_0$ characterizes the labor market entry year, and $\tau$ characterizes the age of the agent. Consequently, $c_{\tau,t_0+\tau}$ refers to the consumption of workers of age $\tau$ in time $t_0 + \tau$.

Workers are characterized by heterogeneous human capital levels $h$, with $h \in \mathcal{H} \equiv [\underb{h},\overline{h}]$, and are heterogeneous also with respect to their formal education level $\iota \in \mathcal{I} \equiv \{\text{g,s}\}$, which indicates college and high school education, respectively. Both types enter the labor market with a baseline level of human capital drawn from type-specific exogenous continuous distributions. Upon entry in the labor market, $\mathbb{E}[h | \text{g}] > \mathbb{E}[h|\text{s}]$. To account for the different number of education years in the data, graduate workers entry to the labor market is delayed accordingly. Workers exit the labor force when their employment prospects deteriorate below a certain utility threshold, effectively distinguishing long-term unemployment from non-participation.

Firms are characterized by different levels of quality $ y \in \mathcal{Y} \equiv [\underb{y}, \overline{y}]$.  We model the human capital accumulation process assuming that it depends on the quality of the firm workers are matched with, and on their own initial level of human capital, $h$. 
Workers accumulate human capital only while employed and according to a law of motion that is match-specific: $h^\prime = \phi(h,y,\iota,\psi) = g_{\iota}(h,y)+\psi$, $ \phi: \mathcal{H} \times \mathcal{Y} \times \mathcal{I} \times{\Psi} \rightarrow \mathcal{H}$, where $g_{\iota}$ is the deterministic component of the human capital accumulation dynamics, and $\psi \in \Psi$ constitutes a stochastic component. The function $g_{\iota}$ is concave in both its arguments.\footnote{The deterministic component of human capital accumulation is akin to a ``catching-up" of the firm's quality \citep{lise2020}, up to a point when the worker will not be able to learn any more from the match. The only difference across education levels is the speed of the ``catching-up'', with college-educated workers catching up faster. Workers who match with a low-quality firm will see their ability deteriorating with the same $g_{\iota}$ function. We treat human capital as representing workers' production capacity, entering the production function as effective units of labor, but we remain agnostic as to the specific microfoundations underlying the primitive sources of human capital: it could reflect any factor that enhances a worker's contribution to production.}

Human capital accumulation is risky: at any period any employed worker is subject to the idiosyncratic human capital shock $\psi$, which enters additively with respect to the deterministic component.\footnote{The additive nature of the shock keeps the properties of monotonicity and uniqueness of workers' search strategies unaltered, which is essential for tractability.} The shock affects workers' ability and can amplify, shrink, or even reverse human capital accumulation. We further allow for the possibility that human capital deteriorates while workers are unemployed, according to an arbitrary process $g_u$.\footnote{This process might be without loss of generality deterministic or stochastic, and might or might not depend on current human capital $h$.} In order to take into account the presence of firm-specific human capital, we assume that a share $\tau_{eu}$ of human capital  depreciates upon job loss. Similarly, we allow the share of human capital to be retained after a job transition, $\tau_{ee}$, to be below 1. %

Firms are modeled as one worker--one job matches. Each job is characterized by a promised utility to the worker $V \in \mathcal{V}\equiv[\underb{v},\overline{v}]$. %
We group worker-specific characteristics in a tuple $\chi \in \mathcal{X} \equiv \{ \mathcal{H} \times \mathcal{T} \times \mathcal{I} \}$. The aggregate state of the economy $\Omega$ is characterized by the productivity level   $a \in \mathcal{A} \subseteq \mathbf{R}^+_0$ and by the distribution of agents across states $\mu \in \mathcal{M}$ : $\{W,U\} \times \mathcal{Y} \times \mathcal{X} \times \mathcal{V} \rightarrow [0,1]$.  Let $\Omega = (a,\mu) \in \mathcal{A} \times \mathcal{M} $  represent the aggregate state of the economy and let $ \mathcal{M} $ represent the set of distributions $\mu$ over the states of the economy. Then $ \mu^\prime = \Phi(\Omega, a^\prime)$ is the law of motion of the distribution.  Aggregate productivity is a stationary monotone increasing Markov process, namely $ a^\prime \sim F(a^\prime | a): \mathcal{A}\rightarrow\mathcal{A}$, with the Feller property.

\subsection{Labor markets}

Search is directed. Each labor market is organized as a continuum of submarkets indexed by the expected lifetime utility offered by firms of type $y$, $v_y \in \mathcal{V}$. %
Starting a firm amounts to posting a vacancy at a quality-specific cost $c(y)$, and will be described in \textbf{Section \ref{sect:free_entry}}. %

The matching function $ M(u,\nu) $ for each submarket has constant return to scale and is twice continuously differentiable. The tightness of each submarket in $\mathcal{X} \times \mathcal{V}$  is defined as $ \theta = \nu/u $, with $\theta(\cdot): \mathcal{X} \times \mathcal{V} \times \mathcal{A} \times \mathcal{M} \rightarrow \mathbf{R}^+_0 $. Job finding rates are defined as $p(\theta(\cdot))=M(u,\nu)/{u}$, where $p(\cdot): \mathbf{R}^+_0 \rightarrow [0,1]$ is a twice continuously differentiable, strictly increasing, and strictly concave function with $p(0)=0$, $\underset{\theta\rightarrow +\infty}{\lim\:p(\theta)} = 1$ and $p^\prime(0)<\infty$. The vacancy-filling probability is defined as $q(\theta(\cdot))=M(u,\nu)/\nu$ , where $q(\cdot): \mathbf{R}^+_0 \rightarrow [0,1]$ is twice continuously differentiable, strictly decreasing, and strictly convex, with $q(0)=1$, $\underset{\theta\rightarrow +\infty}{\lim\:q(\theta)}=0$ and $q^\prime(0)<0$, such that $q(\theta) = p(\theta)/\theta$, and $p(q^{-1}(\cdot))$ is  concave.

Upon matching, workers produce according to the twice-continuous increasing and concave production function $f(h,y;a) + x(a) : \mathcal{A} \times \mathcal{H} \times \mathcal{Y} \rightarrow \mathbf{R}^+_0$. The $x(a)$ component of the production function is a cost which can depend on the aggregate productivity realization.\footnote{This is a reduced-form way of incorporating financial frictions in the model, which make fixed costs loom larger over flow-production in downturns.} Workers' compensation is determined by dynamic contracts through which firms deliver a promised lifetime utility, as described in \textbf{Section \ref{sect:contract}}.

Workers search on the job with probability $\lambda_e$. Matches are destroyed each period at an exogenous rate, possibly varying by age, $ \delta_{\tau} $. Matches separate also if the worker moves to another firm (poachings), or if the firm unilaterally walks away from the match (firings). Lastly, unemployed workers whose expected value of re-employment falls below a threshold $\underb{p}$ are assumed to permanently exit the labor force.\footnote{Workers could also voluntarily decide to quit to unemployment. We experimented with allowing workers quits to unemployment and found that in the model, given plausible parameterizations and in absence of additional shocks, quits to unemployment are never optimal. For this reason and for simplicity we remove the possibility from the model presented in this paper.}  

Timing is represented in \textbf{Figure \ref{fig:timeline}}. At the beginning of each period an aggregate productivity shock is drawn; entrepreneurs open vacancies across submarkets and post their offers; workers search from unemployment or on-the-job, and move to a new job if the search is successful; production takes place; workers accumulate human capital depending on their employment status and idiosyncratic shock realization; an exogenous share of matches breaks down, while some firms endogenously exit. %

\begin{figure}
    \centering
        \begin{tikzpicture}[%
    every node/.style={
        font=\scriptsize,
        text height=1.0ex,
        text depth=.5ex,
        align = center
    },
    scale = 2.0,
]

\tikzset{label/.style={draw=gray, ultra thin, rounded corners=.25ex, fill=gray!20,text width=4.75cm, text badly centered,  inner sep=.5ex, above = 2em, anchor=west,rotate=45}}

\tikzset{tick/.style={below=3pt}}

\tikzset{lubel/.style={draw=gray, ultra thin, rounded corners=.25ex, fill=gray!20,text width=3.5cm, text badly centered,  inner sep=.5ex, above = 2em, anchor=west,rotate=45}}

\draw [dashed] (0,0) -- (0.5,0);
\draw [->] (0.5,0) -- (7.0,0);

\node[anchor=north] at (0,-0.25) {\textbf{t}};

\foreach \x in {0.5, 1.5, ...,3.5}{
    \draw (\x cm,3pt) -- (\x cm, 0pt);
}

\draw (5.25 cm,12pt) -- (5.25 cm, -6.0pt);

\draw (0.5,0) node(A1) [tick] {contracting} node (B1) [label]  {  firms  $\{ h, \tau, \iota, y\}$ offer  $\{ w_\tau (s^\tau )\}^T_{\tau = t}$};
\draw[blue] (B1.west) -- ++(0,-0.4);

\draw (1.5,0) node(A2) [tick] {search} node (B2) [label]  {workers $\{ h, \tau, \iota \}$ search in $\{ h, \tau, \iota, v\}$ };
\draw[blue] (B2.west) -- ++(0,-0.4);

\draw (2.5,0) node(A2) [tick] {production} node (B3) [label]  {  $\{ h, y \}$ matches produce $f(h,y;a)$};
\draw[blue] (B3.west) -- ++(0,-0.4);

\draw (3.5,0) node(A2) [tick] {learning} node (B4) [label]  { $\psi$ shocks realized, $h^\prime$ revealed};
\draw[blue] (B4.west) -- ++(0,-0.4);

\draw (4.5,0) node(A2) [tick] {separation} node (B5) [label]  { exogenous EU, firm exit};
\draw[blue] (B5.west) -- ++(0,-0.4);

\node[anchor=north] at (5.25,-0.25) {\textbf{t+1}};

\draw (6.0,0) node(A2) [tick] {shock resolution} node (B6) [lubel]  { agents learn about $\Omega$};
\draw[red] (B6.west) -- ++(0,-0.4);

\end{tikzpicture}

    \caption{Timeline of Worker--Firm Match}
    \label{fig:timeline}
\end{figure}

\subsection{Informational and contractual structure \label{sect:infospace}}

Firms post fully state-contingent contracts. Each contract prescribes an action for each realization of the history of the worker--firm match. The tuple defining productivity and worker characteristics in a match with any firm $y$  at time $t$ is defined by $s_{t} = (h_{t},\tau_{t},\iota, a^{t}, \mu^{t}) \in \mathcal{S}^t = \mathcal{X} \times \Omega^t = \mathcal{H} \times \mathcal{T} \times{I} \times \Omega^t$, that is the worker skill, age, formal education, the history of aggregate productivity shocks, and workers' distributions across their employment history. A given history of realizations between $t$ and $k$ periods ahead is thus $s^{t+k} = (s_t,s_{t+1}...., s_{t+k})$. The contract defines a transfer of utility from the risk-neutral firm to the risk-averse worker within the match for all future possible histories of shocks. We define $\tau_{t_0}$ as the age at which the worker is hired and $T$ is the retirement age. The history of realizations between $t_0$, the time of hiring of the worker, and $t_0 + (T-\tau_{t_0})$, the time of maximum duration of the match with the worker before retirement, is thus $s^{t_0 + (T-\tau_{t_0})}$.

Histories of workers and productivity shocks are common knowledge, and the future realizations of shocks are fully contractible. While the contract is state-contingent, workers' actions are private knowledge in the search stage, so firms are  unable to  counter outside offers. The contracts offered by firms are then  defined as:
\begin{equation}
    \mathcal{C}^{\tau_{t_0}} \coloneqq (\mathbf{w},\mathbf{\zeta})\text{ with } \mathbf{w}\coloneqq\{w_t(s^{\tau_t-\tau_{t_0}+t_0})\}_{t=t_0}^{t_0 + (T - \tau_{t_0})}\text{, and }\mathbf{\zeta}\coloneqq\{v_t(s^{\tau_t-\tau_{t_0}+t_0})\}_{t=t_0}^{t_0 + (T - \tau_{t_0})} \label{eqn:contract_set} .
\end{equation}
Firms promise a series of state-contingent wages defined by the series of utility values $v_t$ sought at each node of the history.\footnote{Similarly to \cite{menzio2010,tsuyuhara2016}, and \cite{balke2022},  to guarantee that the problem is well behaved and the firm profit function is concave, the contract will require a two-point lottery, which specifies probabilities over the actions prescribed. We omit it here for conciseness.} $\mathbf{\zeta}$ is the action suggested by the contract, which is bound to be incentive compatible for the worker. %
The resulting relationship between workers and firms is characterized by a contract with forward-looking constraints. The state space of the worker problem can be expressed in terms of their current lifetime utility, as in \cite{Spear1987}, so as to avoid having to keep track of all past histories $s^t$ at each period. The relevant state space is then $\mathcal{X} \times \mathcal{V} \times \Omega$.

\subsection{Worker problem\label{sect:workers}}

Given current lifetime utility $V$, job seekers with characteristics $\chi$ have to decide in which submarket to direct their search. Submarkets are indexed by worker type $\chi$ and by offered utility $v$ associated to firms' posted vacancies. As discussed in \textbf{Section \ref{sect:free_entry}}, the choice over $v$ will also indirectly determine which firm $y$ the worker matches with, and thus the implied human capital accumulation path. For now, let us assume this (conditional) mapping exists. This amounts to assuming that the function $v(y;\chi,V)$ is a bijective function $f_v: \mathcal{X}  \times \mathcal{V} \times \mathcal{Y} \times  \mathcal{A} \times \mathcal{M} \rightarrow\mathcal{V}$. Upon observing a job offer with utility $v$, a worker $\chi$ with current utility $V$ will be able to infer which firm type $y$ is posting the offer. %

A worker of type $(\chi,V)$  that enters the search stage has lifetime utility $V + \lambda_i\max\{0,R(\chi,V;\Omega\}$, where the second component of the expression embeds the option value of the search, with $R$ being the search value function, and $\lambda_i =1$ if the worker is unemployed or $\lambda_i = \lambda_e$ if they are employed. $R$ is defined as:
\begin{equation} \label{eqn:wrk_option}
R(\chi,V;\Omega) = \underset{v}{\sup}\;\Big[ p(\theta(\chi,v;\Omega))\big[v-V]\Big] \text{.}
\end{equation}
We denote the solution of the search problem as $v^* = v^*(\chi,V;\Omega)$, and $p^*(\chi,v^*;\Omega) = p(\theta(\chi, v^*;\Omega))$ as the associated optimal job-finding probability. %
The lifetime utility of an unemployed worker at the beginning of the production stage can be defined as

\begin{align}
U(h,\tau,\iota;\Omega)&=\;u(b(h,\tau))+\beta \mathbb{E}_{\Omega,\psi}\bigg(U(h^\prime,\tau+1, \iota;\Omega^\prime) \nonumber\\
&+  \max\{0,R(h^\prime,\tau+1,\iota,U(h^\prime,\tau+1,\iota;\Omega^\prime);\Omega^\prime)\} \bigg)\label{eqn:unempl_prob} \text{,}
\end{align}

\noindent where $b(h,\tau)$ is a skill and age dependent unemployment benefit.
Given finite workers' lives, $U(h,\tau, \iota; \Omega) = 0 \,\, \forall (\chi; \Omega) \in \mathcal{X} \times \mathcal{A} \times \mathcal{M}$ whenever $\tau>T$. The corresponding lifetime utility of a worker employed at firm $y$, with human capital $h$, age $\tau$, education $\iota$ and promised utility $V$ at the beginning of production stage can be expressed as:

\begin{align}
V(h,\tau, \iota;\Omega)&=\; u(w) +\beta \mathbb{E}_{\Omega,\psi}\bigg(\delta U( h^\prime,\tau+1, \iota;\Omega^\prime)
+\; (1-\delta)\Big[
V(h^\prime,\tau + 1, \iota;\Omega^\prime) \nonumber\\
&+\;  \lambda_{e} \max\{0,R(h^\prime,\tau+1,\iota, V(h^\prime,\tau + 1, \iota;\Omega^\prime);\Omega^\prime)\}\Big]\bigg) \text{,}  \label{eqn:empl_prob}
\end{align}

\noindent where $w$ is the promised wage, $\delta$ is the separation probability, and $V(h^\prime,\tau + 1, \iota;\Omega^\prime)$ is next period's state-contingent promised utility of remaining in the current firm, which becomes the outside option in the search problem.%
 Again,  $V(h,\tau, \iota; \Omega) = 0 \,\, \forall (\chi; \Omega) \in \mathcal{X} \times \mathcal{A} \times \mathcal{M}$ whenever $\tau>T$. %
Firms internalize incentives embedded in workers' strategies and post wages and utility offers to maximize profits by optimizing retention. This way, future promised utilities incorporate both future wages or option values of search. %
The policy functions are uniquely defined and identify target $y$ as long as a bijective mapping between the offered utility $v$ and $y$ given $\chi$ exists.%

\begin{defi}[Optimal retention probability and utility return]
Let $v^*(\chi,V;\Omega)\in\mathcal V$ denote the worker's optimal choice and 
$p^*(\chi,v;\Omega)\in[0,1]$. The solution to the worker's problem defines a retention function
$\widetilde{p}:\mathcal{X}\times\mathcal{V}\times\Omega \rightarrow [(1-\delta)(1-\lambda_e),1-\delta]$
and a utility return $\widetilde{r}:\mathcal{X}\times\mathcal{V}\times\Omega \rightarrow \mathbb{R}$:
\begin{align}
\widetilde{p}(\chi,V;\Omega)
&\equiv (1-\delta)\Big(1-\lambda_{e}\, p^*(\chi,v^{*}(\chi,V;\Omega);\Omega)\Big)\label{eqn:p_ret},\\
\widetilde{r}(\chi,V;\Omega)
&\equiv \delta U(\chi;\Omega)+(1-\delta)\Big[V + \lambda_{e}\max\{0,R(\chi,V;\Omega)\}\Big].
\end{align}
\end{defi}

\subsection{Vacancy creation and free entry \label{sect:free_entry}}

The economy is populated by a continuum of risk-neutral entrepreneurs. Each entrepreneur can invest to reach the desired level of firm quality $y$. The start-up costs of the firm are priced in terms of the consumption good and they coincide with vacancy posting costs in the frictional labor market. The cost of each vacancy is positively related to the quality of the firm being created through the cost function $c(y)$, which is increasing and strictly convex.\footnote{We assume that entrepreneurs can borrow from risk-neutral, deep-pocketed financiers to finance the vacancy. As in \cite{herkenhoff2019impact} this assumption implies that the cost of credit for entrepreneurs coincides with the risk-free rate.}

At a generic time $t$ each entrepreneur chooses in which submarket to post the vacancy by offering utility $W \in \mathcal{V}$. Each submarket is characterized by worker characteristics and current utility $(\chi,V) \in \mathcal{X} \times \mathcal{V}$, and we prove later in \textbf{Section \ref{sect:contract}} that the firm choice over $W$ uniquely maps upon vacancy posting into firms' qualities $y \in \mathcal{Y}$ conditional on the submarket's characteristics of choice.

We define $J(h,\tau,\iota,W,y;\Omega) \in \mathcal{X} \times \mathcal{V} \times \mathcal{Y} \times \Omega$ as the value function of a firm, which capitalizes all future profits from the match. As entrepreneurs choose the submarkets in which to open a vacancy, they face the following problem:
\begin{equation} \label{eqn:vacancy_open}
\Pi(h,\tau,\iota,W,y;\Omega) = \underset{h,\tau,\iota,W}{\sup}\:-c(y) +q(\theta(h,\tau,\iota,W;\Omega))[J(h,\tau,\iota,W,y;\Omega)]
\end{equation}
Given perfect competition, free entry and the possibility for all entrepreneurs to choose \emph{any} possible firm kind $y$, the expected profits from creating a vacancy are driven down to 0 in submarkets that actually open.%
This translates into a free entry condition:
\begin{align} \label{eqn:free_entry}
\Pi(h,\tau,\iota,W,y;\Omega) & \leq0\:\;\text{for}\:\:\forall\{h,\tau,\iota,W,y;\Omega\}\in\{ \mathcal{X} \times \mathcal{V}\times\mathcal{Y} \times \Omega \}
\end{align}
The equilibrium tightness in each open submarket is:
\begin{equation}
\theta (h,\tau,\iota,W; \Omega) = q^{-1} \left( \frac{c(y)}{J(h,\tau,\iota,W,y;\Omega)} \right). \label{eqn:eq_tight}
\end{equation}

\subsection{Firm problem \label{sect:contract}} 
The firm contract design plays out as a sequential equilibrium game with leader-follower dynamics, in which firms play as the principal/leader and workers are the agents/followers. Workers' limited commitment implies they will search for new jobs whenever they have the possibility to do so. Firms cannot observe poaching offers and thus cannot counter them. The sequence of past histories $s^{t}$ is common knowledge, and while the firm cannot observe any of actions of its workers, it has enough information to internalize their optimal search policy decisions.
As an additional constraint, wages are downward rigid for continuously employed workers: firms may adjust the growth path of wages within an ongoing match (e.g., slow or freeze wage increases), but they cannot implement nominal wage cuts as long as the worker remains employed at the firm. In absence of the last constraint, the wage would be fully state-contingent, with workers facing wage decreases instead of displacements. This wage-setting restriction is essential for the model to reproduce the cross sectional properties or the cyclical behavior of match separations, as discussed in \textbf{Section~\ref{validation}}.

The firm will decide whether to continue in each period according to the exit policy:
\begin{defi}[Exit policy] The following indicator takes a value of one if the firm decides to exit :
$$
 \eta(h,\tau,\iota,W,y;\Omega) = \left\{
 \begin{array}{l}
 0 \;\;\text{if firm continues} \\ %
 1 \;\;\text{if firm exits}
 \end{array}
 \right.
$$
\end{defi}

Define also $\widetilde{p}(\chi,V;\Omega)$ as the optimal retention function and $\widetilde{r}(\chi,V;\Omega)$ as the optimal continuation utility for workers from workers' solution to the on-the-job search problem.%
The firm chooses the wage(s) to be offered in the current period $w_i$, the utility promises $W_i^\prime$ and the probability $\pi_i$ in a two-point lottery. The value function of an incumbent firm $y$ in state $(h,\tau,\iota,W;\Omega)$ can be written recursively using the promised utilities as additional state variables, so that:
\begin{align}
J(h,&\tau,\iota,W,y; \Omega)= \nonumber \\ &\underset{\pi_i, \{\eta^\prime_i,w_i,W_{i}^\prime\}}{\sup} \sum_{i=1,2} \pi_i  \Bigg( f(y,h;a)-w_i \nonumber \\
+ & \beta \mathbb{E}_{\psi}\bigg[ (1 - \eta^\prime_i) \cdot \max\bigg\{0,\mathbb{E}_{\Omega} \bigg[\widetilde{p}(h^\prime,\tau+1,\iota,W_{i}^\prime;\Omega^\prime) J(h^\prime,\tau+1,\iota,W_{i}^\prime,y;\Omega^\prime)\bigg]\bigg\}\bigg] \Bigg)\label{eqn:firm_prob}%
\end{align}
subject to
\begin{align}
&W =  \sum_{i=1,2} \pi_i \Bigg( u(w_i)+ \beta \mathbb{E}_{\Omega,\psi} ( (1 - \eta^\prime_i)\widetilde{r}(h^\prime,\tau+1,\iota,W_{i}^\prime;\Omega^\prime)
+ \eta^\prime_i U(h^\prime,\tau+1,\iota;\Omega^\prime)  ) \Bigg),\label{eqn:pkeep}\\
&J(h^\prime,\tau+1,\iota,W_{i}^\prime,y;\Omega^\prime) < 0 \implies \eta^\prime = 1 \text{, otherwise } \eta^\prime = 0 ,\label{eqn:firmpc}\\
  & (w - w^\prime)(1- \eta^\prime) \leq 0 \label{eqn:wagecon},\\
&\sum_{i=1,2} \pi_i = 1,
\end{align}
where \textbf{Equation \eqref{eqn:pkeep}} is the promise keeping constraint ensuring that the current value of the contract is based on the current wage and future utility promises. \textbf{Equation \eqref{eqn:firmpc}} captures the firm side of the limited commitment friction: at any point the entrepreneur can walk away and open a new vacancy, so the outside option against which the continuation of the match is compared is the \textit{ex ante} value of opening a new vacancy, which is $0$ because of free entry. Finally,  \textbf{Equation \eqref{eqn:wagecon}} limits wage adjustments by requiring wage growth to be (weakly) positive.

Firms commit to deliver a utility value to workers, but exit when the present value of future profits becomes negative, i.e. when the constraint \eqref{eqn:firmpc} binds - then  $\eta^\prime = 1$.  Incumbent firms make their exit decisions before the realization of aggregate productivity but after the next period realization of idiosyncratic human capital shocks, as in \cite{gomes_financing_2001} and \cite{Xiaolan2014}.%
At the beginning of a period both firms and workers know the contingencies under which the firm will shut down. Exit is therefore state-dependent (see Appendix for details).

\subsection{Equilibrium definition}

\paragraph{Recursive Equilibrium.} Let $\Theta=\mathcal{A} \times \mathcal{M} \times \mathcal{H} \times \mathcal{T} \times \mathcal{I}$. A recursive equilibrium in this economy consists of a market tightness $\theta: \Theta \times \mathcal{V} \rightarrow \mathbb{R}_+$, a search value function $R:\Theta \times \mathcal{V} \rightarrow \mathbb{R}$, a search policy function $v^*:\Theta \times \mathcal{V} \rightarrow \mathcal{V}$, an unemployment value function $U:\Theta \rightarrow \mathbb{R}$, a firm value function, $J: \Theta \times \mathcal{V} \times \mathcal{Y} \rightarrow \mathbb{R}$, a series of contract policy functions $\{c_\tau\}_{\tau=1}^T: \mathcal{S}^\tau \times \mathcal{Y} \rightarrow \mathcal{C}^{\tau}$, a bijective mapping between firm qualities and promised utilities at hiring $f_v:\Theta \times \mathcal{V} \times \mathcal{Y}\rightarrow \mathcal V$, an exit threshold for aggregate productivity $a^*:\Theta \times \mathcal{V} \times \mathcal{Y} \rightarrow \mathcal{A}$, and a law of motion for the aggregate state of the economy $\Phi_{\Omega,a}: \mathcal{A} \times \mathcal{M} \rightarrow \mathcal{A} \times \mathcal{M}$ such that:
\begin{enumerate}
    \item Given the mapping $f_v$, market tightness satisfies \textbf{Equation~\eqref{eqn:eq_tight}}.
    \item The unemployment value function solves \textbf{Equation~\eqref{eqn:unempl_prob}}.
    \item Search value functions solve the search problem in \textbf{Equation~\eqref{eqn:wrk_option}} and $v^*$ is the associated policy function.
    \item Firm value functions and associated contract and exit policy functions solve \textbf{Equation~\eqref{eqn:firm_prob}} for each $\tau \leq T$.%
    \item The law of motion for the aggregate state of the economy respects the search and contract policy functions and the exogenous process of aggregate productivity.
\end{enumerate}

\begin{defi}[Block Recursive Equilibrium]\label{defi:bre_defi}
 A Block Recursive Equilibrium (BRE) is a recursive equilibrium such that the value and policy functions depend on the aggregate state only through aggregate productivity, $a \in \mathcal{A}$ and not through the distribution of agents across states $\mu \in \mathcal{M}$.
\end{defi}

\begin{ppert}[Existence of a BRE]\label{prop:BRE}
A block recursive equilibrium exists for the model.
\begin{proof}
See \textbf{Appendix Section~\ref{Appendix:BRE_existence}}.
\end{proof}
\end{ppert}

\subsection{Model Properties: Discussion  \label{sect:discussion_body}}

The objective of our model of dynamic sorting is to understand the properties of jobs creation and worker search in a setting with two-sided heterogeneity. The following properties guarantee a high degree of tractability.\footnote{We refer the reader to \textbf{Appendices~\ref{Appendix:wrk_problem}} and \textbf{\ref{Appendix:optimal_contract}} for a more in-depth discussion of the theoretical properties of the model, proofs of the uniqueness of workers' policy functions and all the other proofs of this section.}

\begin{ppert}[Unique Bijective Mapping]\label{ppert:mapping}
Upon matching, firm quality $y$ and utility promises in vacancy postings $v$ are related by a bijective mapping conditional on the aggregate state of the economy, $\Omega$, and workers characteristics $(\chi,V)$.
\end{ppert}

The previous proposition establishes that workers' directed search toward promised values is equivalent to directed search toward firms' types. We then focus on the properties of the search strategy to get a complete view of how sorting works in equilibrium. 
\begin{ppert}[Search Monotonicity and Uniqueness]\label{ppert:monoton}
The optimal search strategy when unemployed, conditional on age $\tau$, formal education $\iota$ and the aggregate state $\Omega$, is unique and weakly increasing in workers' human capital $h$. The optimal search strategy when employed, conditional on age $\tau$, formal education $\iota$ and the aggregate state $\Omega$, is unique and weakly increasing in workers' human capital $h$ and current level of lifetime utility $V$. 
\end{ppert}

\textbf{Property \ref{ppert:monoton}} guarantees that, abstracting from idiosyncratic as well as aggregate shocks, workers sort positively with respect to their human capital. \textbf{Property \ref{ppert:mapping}}, in turn,  guarantees that workers with same observable characteristics agree on firms' relative ranking. Firms are thus vertically differentiated, and there is a separating equilibrium whereby workers with different characteristics optimally search in distinct firms. 

Because we are interested in how aggregate fluctuations shape the distribution of matches, we now turn to considering how they affect search strategies. 

\begin{ppert}[Search in Good and Bad Times]\label{ppert:aggr}
The optimal search strategy is increasing in the aggregate productivity level, $a$. 
\end{ppert}

\textbf{Property \ref{ppert:aggr}} points to a key mechanism in the model: aggregate fluctuations modify sorting in the labor market. The value of vacancies posted by each firm in equilibrium changes with the business cycle, as submarkets become less productive and less tight in bad times. This also takes place due to the presence of increasing and convex vacancy costs, which do not vary with the cycle. Faced with a lower probability of successfully matching with the firm they would aim to match with in good times, risk-averse workers adjust their search downward. In turn, firms will adjust downwards their utility offers given the lower expected values of matches across the board.

Firms' offers will optimally respond to workers' incentives for on-the-job search. 

\begin{ppert}[Optimal Retention]\label{ppert:retention} Retention probabilities, $\widetilde{p}(h,\tau,\iota,W;\Omega)$ are: 
\begin{enumerate}[label=(\roman*)]
    \item  increasing in the value of promised utilities, $W$.
    \item  decreasing in aggregate productivity, $a$.
\end{enumerate}
\end{ppert}

Despite continuation values within the match being procyclical and workers searching more ambitiously in good times, matches will separate more often in expansions as workers transition to new jobs. This is consistent with employment-to-employment transitions being strongly pro-cyclical in the data. 

\textbf{Property \ref{ppert:retention}} highlights another aspect of the incentives that shape the contract designed by firms: retention grows in continuation values $W$. To close the model, we need a rule for surplus sharing between firms and workers, that is, a wage protocol for firms to deliver lifetime utility promises to workers. We focus here on the wage protocol for a firm for which $\eta^\prime = 0$, so that there is a wage to be paid in the future period.

\begin{ppert}[Wage Protocol]\label{ppert:wage_protocol} The optimal contract for the continuing firm delivers a wage growth rule that satisfies: \begin{equation} \label{eqn:euler}
\frac{\partial \log\,\widetilde{p}(\chi^\prime,W^\prime_{i};\Omega^\prime)}{\partial W_{i}^\prime}J(\chi^\prime,W^\prime_{i},y;\Omega^\prime) =  \frac{1}{u'(w^\prime_{i})} - \frac{1}{u'(w_{i})} \text{,}
\end{equation}
with $\chi^\prime \equiv (\phi(h,y,\iota,\psi),\tau+1,\iota)$ being the definition of individual characteristics and $w^\prime_{i}$ being the wage paid in the future state, conditional on realizations of idiosyncratic risk $\psi$ and aggregate risk $a^\prime$.
\end{ppert}

This result extends the wage equation in \cite{balke2022} to an environment with two-sided heterogeneity. Wage growth is  proportional to the residual continuation value of the match, $J$ and the semi-elasticity of the worker's retention probability to future value promised. Limited liability provides the rationale for inefficient separations. At the same time, it also gives rise to wage rigidity, as it ensures that both elements in Equation \ref{eqn:euler} are weakly positive if the firm does not close down.\footnote{Notice that, in the presence of risky human capital accumulation, $J$ will fluctuate together with the human capital levels of the worker even in the absence of aggregate fluctuations. However, because the contract provides insurance to workers, changes in their human capital will have asymmetric effects on wage growth, thus weakly increasing the labor share over time \citep{ai2021}.}

\begin{ppert}[Countercyclical Separations]\label{ppert:seps}
Conditional on the existing contract and on worker and firm types, there exists an aggregate state $a^*$ below which firms will not continue to operate. The threshold $a^*$ is, all things being equal, increasing in the value promised to workers, and decreasing in worker and firm types.
\end{ppert}  

A clear implication of \textbf{Property \ref{ppert:seps}} is that, at the onset of recessions, firms are significantly more likely to lay off workers. In addition, lower-skilled workers and low-productivity firms are more likely to separate in recessions. The counter-cyclicality of separations is a common feature in the data, together with the lower job security enjoyed by workers who are younger, less productive, or employed by less productive firms.\footnote{These relationships are observed in the data and explicitly modeled in \cite{jarosch2023}, but emerge endogenously in our framework, without the need of specifying job security as a contract characteristic. Abstracting from labor market institutions is not limiting the ability of our model to partly capture the observed duality of the labor market.} 

\subsection{Sorting in equilibrium}

The theory discussed in this section predicts that workers' search is monotonic in individual characteristics and in the aggregate state (see \textbf{Proposition \ref{ppert:monoton}}). 

\begin{figure}[t]
    \centering
    \caption{Search Policies}\label{fig:search}
   \includegraphics[width=0.7\textwidth]{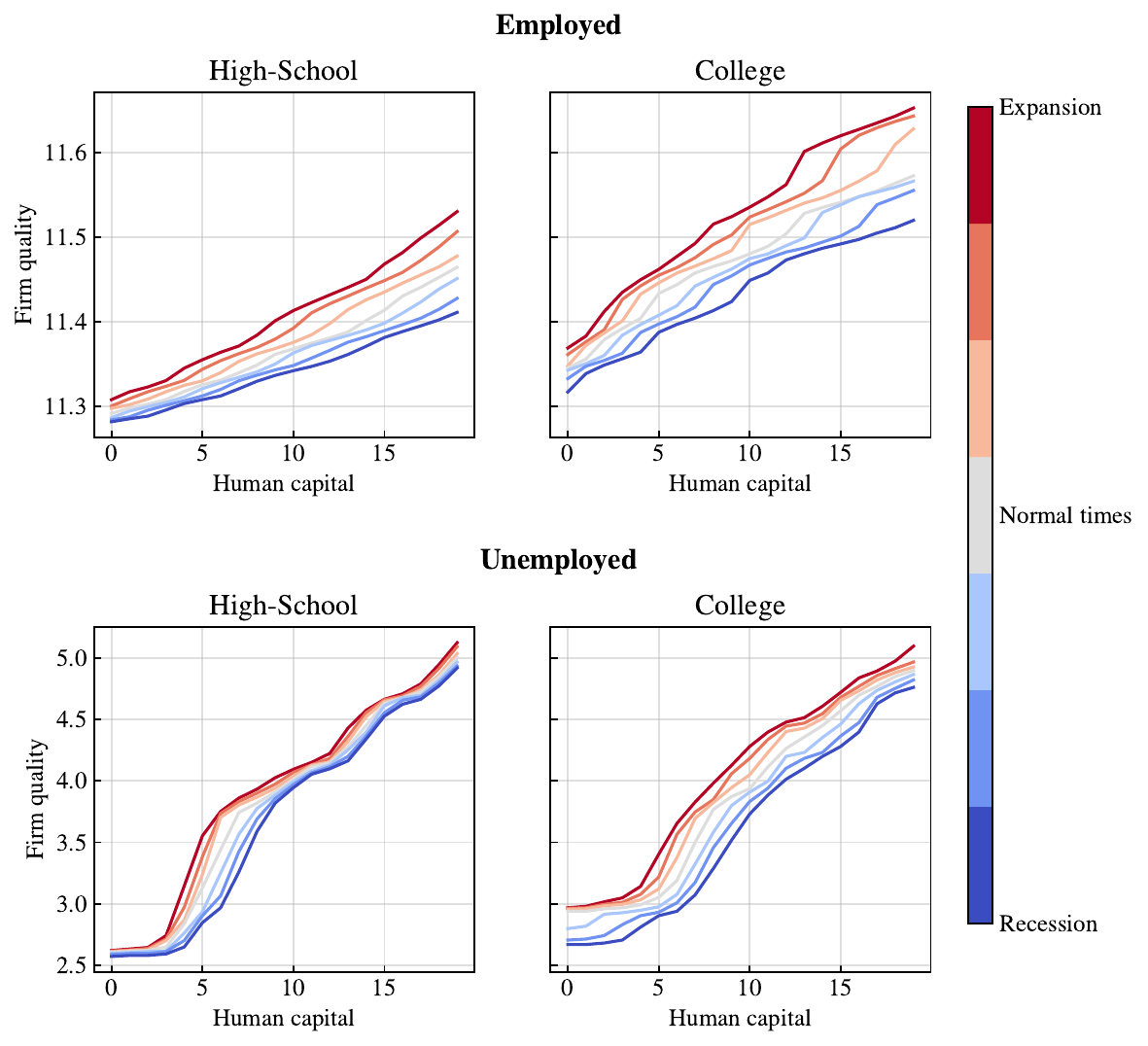}%
     \vspace{-1.0em}
    \floatfoot{\textbf{Note:} Search policy function by human capital level and aggregate state. For the employed, policies are averaged across labor market experience and wage promises.}
\end{figure}
In \textbf{Figure \ref{fig:search}} we plot the equilibrium mapping between workers' human capital and search behavior resulting from our estimation, for different realizations of the aggregate state. Search is monotonic in both dimensions. Stark differences in the set of submarkets targeted by employed versus unemployed workers are a standard feature of directed-search models. Employed workers target more productive but harder-to-access submarkets (lower job-finding probabilities), whereas unemployed workers concentrate on less productive submarkets where job finding is more likely. Search strategies of college-educated workers are more sensitive to shifts in aggregate conditions, because expected human capital accumulation has greater long term bearing on their careers. This also implies that more educated workers will adjust their search strongly in bad times, especially if unemployed, in order to minimize their unemployment risk and the consequent human capital deterioration. Vacancy creation thus tends to be of lower quality in recessions, which in turn increases misallocation and leads to a deterioration in sorting throughout the recovery. Our model endogenously rationalizes the empirical findings in \cite{haltiwanger_cyclical_2025}, who observe that recessions apparently feature an initial ``cleansing'' phase followed by a more prolonged ``sullying'' dynamics as the labor market re-builds its job ladder.

\section{Bringing the model to the data \label{sect:estimation_body}}

The model features internally and externally calibrated parameters. To estimate the first group of parameters, we target moments from Italian administrative data, provided by the Italian Social Security Institute (INPS), for all years between 1996 and 2018.%
To obtain model moments, we simulate a population of overlapping generations working for 45 years (180 quarters, from 18 to 63 years old, the legal retirement age for most years in our period of analysis). We then use a simulated method of moments (SMM) approach. This section will first present the quantitative setup of the model, then calibration choices for the parameters that are set externally, and finally our estimation results.\footnote{In the discussion of the model estimation and results we use earnings and wages interchangeably, as the model abstracts from hours variation.}   %

\subsection{Calibration and estimation}\label{calib_est}
 
\paragraph{Quantitative Setup.} \textbf{Table \ref{tab:FunForms}} collects all the functional form choices. We assume a Cobb-Douglas production function and allow for potentially cyclical maintenance costs, captured by the parameter $x$. We follow \cite{schaal2017} and \cite{menzio2010} in picking a CES function in market tightness. Vacancy creation imposes increasing costs in firm's quality $y$ and the type of labor market flow $j$ (EE or UE), according to the convex function $c_j(y)$. Workers are risk-averse with constant-relative-risk-aversion (CRRA) utility. The human capital production technology is concave in firm quality, $y$, which is scaled by a parameter $\xi$, and in the existing stock of human capital, $h$. Future human capital is also subject to additive i.i.d. shocks, $\psi \sim \mathcal{N}(0, \sigma_\psi)$. Home production is increasing in the stock of human capital according to the parameter $\xi_b$. Finally, we fix the ratio in separation rate between old and young workers following the data and we then estimate internally the separation rate for young workers, allowing the separation rate to be mildly age-dependent to capture age specific aspects of worker quality that are unrelated to business cycles but still empirically relevant.\footnote{The actual separation probability from the workers' perspective will thus be $ \delta = \max \{ \delta_{\tau,\iota}, \eta(\cdot) \}$.} 

\begin{table}[h!]
\centering
\caption{Functional Forms}\label{tab:FunForms}
\resizebox{0.6\textwidth}{!}{
    \begin{tabular}{ll}
        \toprule
        \textbf{Functions} &  \\
        \midrule
        Production function & $f(y,h;a) = A y^\alpha h^{1-\alpha} - x(A-1)$\\
        Job-finding probability & $p(\theta) = \theta(1 + \theta^\gamma)^{-\frac{1}{\gamma}} $\\
        Vacancy creation cost by flow & $c_j(y) =  \frac{y^\kappa}{\kappa}c_j$, for $j=\{EE, UE\}$\\
        Utility function & $U(c) = \frac{c^{1-\nu}}{1 - \nu}$\\
        Human capital accumulation & $g_{\iota}(h,y) = (\xi y)^{\phi_{\iota}} h^{1-\phi_{\iota}}$\\
        Home production & $b(h,\tau) = b + \xi_b h$\\
        \bottomrule
    \end{tabular}}
\end{table}

\paragraph{Calibration.} The model is characterized by 37 parameters: 15 externally calibrated and 22 jointly estimated.\footnote{Appendix Section~\ref{sect:app_computation} provides more details on the model solution and estimation procedure.} Preference parameters (discount factor $\beta$, and agents' risk aversion $\nu$), and the annualized risk-free rate $r_f$ are set in line with the literature. We calibrate the persistence and volatility of the aggregate shock, ($\rho_a$, $\sigma_a$) by estimating an AR(1) on the detrended series of Italian real total factor productivity (TFP). %
In addition, workers draw their innate ability and human capital upon entry into the market from two initial distributions depending on their educational attainment. %
For each type in the model, high--school or college, we pin down the initial distribution by fitting a weighted sum of Beta distributions on the distribution of initial earnings for workers without (high--school) and with (college) tertiary degrees in our sample.\footnote{Specifically, we target the [0,1] rescaled mean, p10, median, p90 and skeweness of the distribution to estimate $\mathcal{B}(\varrho_\iota,\{a_{j,\iota},b_{j,\iota}\}_{j=1,2})=\varrho_\iota\mathcal{B}(a_{1,\iota},b_{1,\iota})+(1-\varrho_\iota)\mathcal{B}(a_{2,\iota},b_{2,\iota})$, for each type $\iota$.} 
\begin{table}[t]
    \centering
    \caption{Parameter Values}\label{tab:calib}
    {\small
    \resizebox{0.8\textwidth}{!}{\input{figures/PaperFigures/CalibrationTable_oldExitRates.tex}
    }}
\end{table}
\paragraph{Estimation and Identification.} We estimate the remaining 22 parameters via SMM, targeting a set of standard labor market moments and to properly account for the empirical age distribution, we weigh each moment from simulated data according to the distribution of the Italian working-age population.\footnote{Age weights are constructed following the age distribution of the 2010 census from the website of the Italian National Institute of Statistics (ISTAT).} \textbf{Table~\ref{tab:calib}} reports estimated parameter values, while \textbf{Figure~\ref{fig:moments}} compares the simulated and empirical moments. 

As shown in \textbf{Figure~\ref{fig:moments}}, the model is able to fit well the profiles of transitions and separations for workers of different age and with different educational attainment. The model is also able to capture the differences in the cyclicality of labor market flows, capturing both the procyclicality of E-E transitions for High-School and College workers and the countercyclicality of E-U transitions for High-School workers, while matching the essentially acyclical separations for College graduates. Similarly, the model fits the earnings growth at the beginning and at the end of workers careers very well. The model undershoots the separation rate of young workers for both High school and College workers. A possible explanation for the undershoot is due to the fact that part of the observed separations in the data is linked to institutional factors (e.g. fixed-term contracts) and peculiarities of the Italian economy (such as the presence of sectors that strongly rely on seasonal workers) that the model is not meant to capture.\footnote{For example, the average share of temporary workers in Accommodation and food services and Agriculture between 2008 and 2018 is 31\% (\textit{Source}: Eurostat, LFS).}

To examine the role of each moment in identifying our parameters, we use two complementary approaches. First, we consider the series generated by our global solver--in which we draw 1,500 points from a Sobol sequence spanning the parameter space--and project each parameter on the simulated moments using a LASSO regression with a 0.8 penalty. For each parameter, non-zero coefficients indicate which moments (or groups thereof) exhibit variation most strongly associated with that parameter. We report the coefficients from this exercise in \textbf{Figure~\ref{fig:app_IDfigLasso}}. Second, we compute an average elasticity matrix from a local perturbation of our parametrization, and examine the moments-parameter linkages with the highest elasticities, we report this matrix in \textbf{Figure~\ref{fig:IDelas}}. 

In both cases, the results reveal intuitive patterns linking moments to parameters. The matching function elasticity ($\gamma$) and the scale of vacancy costs ($c_{ue}$, $c_{ee}$) are pinned down by the cyclicality of labor market flows and unemployment rates. The elasticity of vacancy costs ($\kappa$) and human capital retention upon job-to-job moves ($\tau_{ee}$) are identified by employment-to-employment (EE) transition rates, as these parameters determine the incentives for workers to climb the job ladder. Employment-to-unemployment (EU) and EE flows--particularly their cyclicality for high-school and college workers--identify the exogenous separation rates ($\delta$, $\delta_g$) and the probability of on-the-job search ($\lambda_e$). 
The variance of idiosyncratic human capital shocks ($\sigma_\psi$) and the unemployment benefit ($b$) are informed by the share of long-term unemployment and the cyclicality of EU transitions. Earnings growth profiles by education and experience identify the human capital accumulation process, specifically the scaling effect of firm quality ($\xi$), the accumulation rates ($\phi$, $\phi_g$), and the scaling in the upper bound of initial human capital ($\vartheta$). Cross-sectional moments, including worker-firm sorting and the value-added distribution, influence the production elasticity ($\alpha$), the lower bound of firm quality ($\underline{y}$), and the cyclical cost component ($x$). Finally, human capital losses during unemployment ($l_s$, $l_g$, $\tau_{eu}$) are identified by the persistence of earnings losses and unemployment moments, while the labor force exit threshold ($\underline{p}$) and the link between human capital and unemployment benefits ($\xi_b$) are governed by the observed unemployment durations and the cyclicality of labor market transitions for graduates.
\begin{figure}[!th]
\caption{Empirical and simulated moments \label{fig:moments}}
\includegraphics[width=\textwidth,height=0.4\textheight]{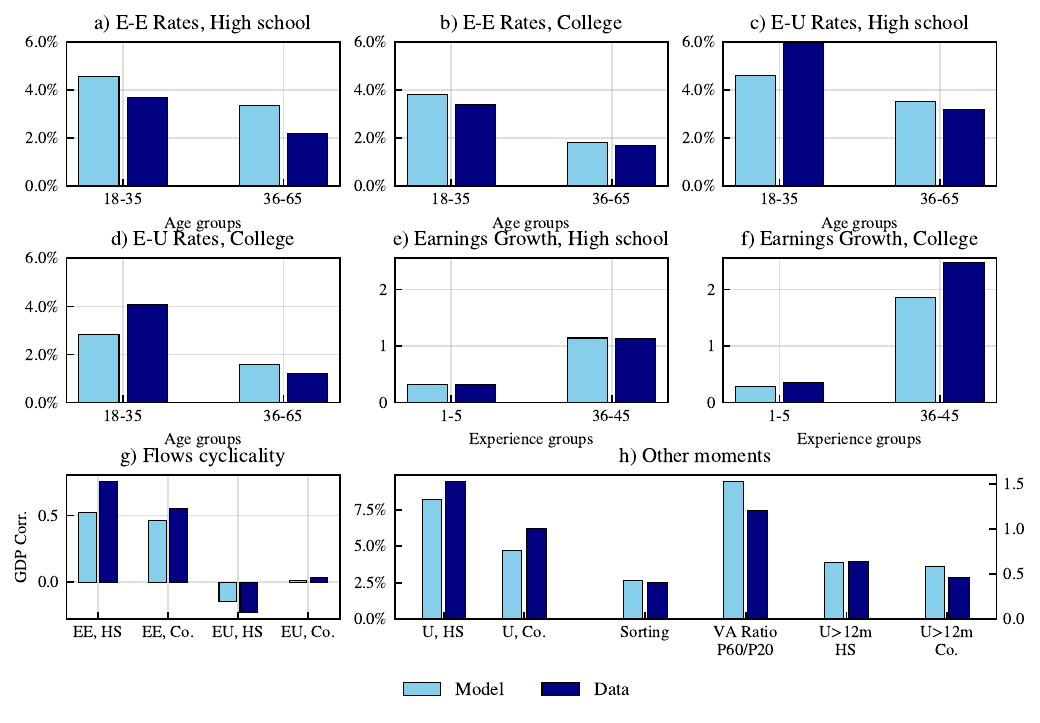}
\vspace{-2em}
\floatfoot{\textbf{Note:} The figure plots the target moments in the data and in model simulations.}
\end{figure}

\section{Model properties and validation}\label{validation}

To assess estimation quality, we evaluate the model on untargeted moments across the cross section and over time, at both the micro and macro levels. To isolate the role of the human-capital ladder, we also re-estimate an alternative specification in which human capital is independent of firm quality and evolves linearly, as $h_{\iota,t+1} = \phi_{\iota} + h_{\iota, t} + \epsilon_{\iota,t}$, on the same outcomes. We will refer to this alternative specification as ``\textit{the linear model}''.\footnote{We report comparisons with two additional alternative models, one with flexible wages and one with no endogenous separations, in \textbf{Appendix~\ref{sect:app_additional_tables_andfig}}.}  %

\subsection{Experience returns and firm quality} 
As discussed in \textbf{Section~\ref{sect2}}, the shape of experience returns and the significant association between previous employers and wages after E-U-E transitions are features that are consistent with workers accumulating human capital on-the-job. \textbf{Figure~\ref{fig:ABS_EUE_modelData}} provides a comprehensive validation of our baseline model along these two dimensions. 

From a qualitative standpoint, our baseline model accurately captures both the shape of experience returns across firm quality quintiles, shown in \textbf{Figure~\ref{fig:ABS_modelData}}, and the persistent effect of origin firms on post-unemployment wages, plotted in \textbf{Figure~\ref{fig:EUE_modelData}}. The linear model is not able to replicate both the shape of experience returns and the effect of previous employers on post-unemployment wages. 

\begin{figure}[h!]
\centering
\caption{Experience returns and persistent effect of origin firms: Model vs Data\label{fig:ABS_EUE_modelData}}
\subcaptionbox{Experience returns\label{fig:ABS_modelData}}{
\includegraphics[width=0.47\textwidth]{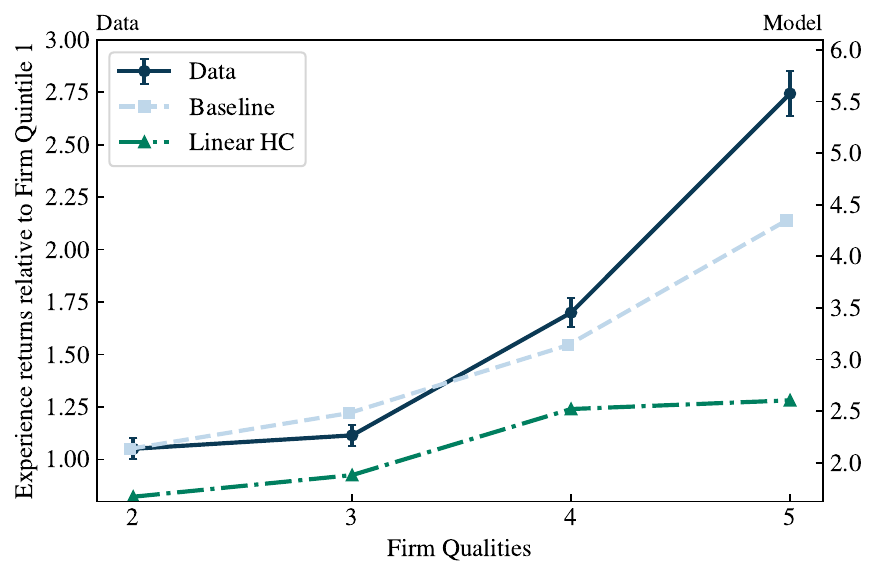}}%
\subcaptionbox{Persistent effect of the firm of origin\label{fig:EUE_modelData}}{\includegraphics[width=0.47\textwidth]{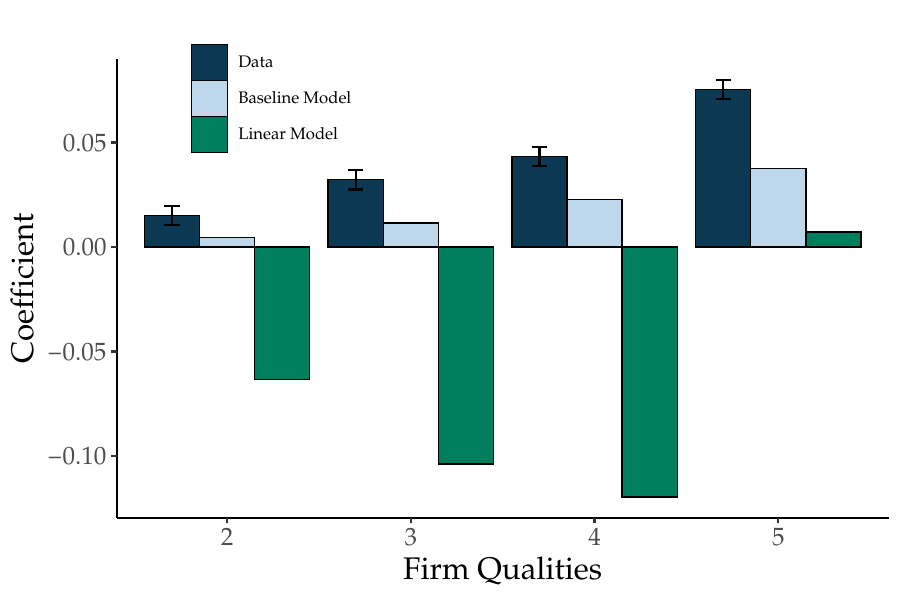}}
\floatfoot{\textbf{Note:} Panel (a) plots the yearly returns to experience for firms at different quintilies of the productivity distribution, \textit{Data} on the left-hand side axis and \textit{Models} on the right-hand side, scaled to align with the Data for the \textit{Baseline}. We run a version of \textbf{Equation~\ref{eqn:experience_profiles}} on our baseline model and two alternatives: one in which human capital follows a linear process, independent from firm quality and another one in which firms are able to freely adjust wages downward. In all, we do not distinguish experience accrued within and outside the firm class and report the average experience returns for all models and the estimates in our sample. Panel (b) reports a comparison of the specification in \textbf{Equation~\ref{eq:eue_reg}} on a sample of E-U-E transitions, at the 1 year horizon. We again compare the empirical estimates with model outcomes.}
\end{figure}

Returns to experience are strongly convex in the data: top-quintile firm experience yields roughly triple the returns of bottom-quintile experience. The baseline model reproduces this convexity, whereas the linear model delivers a concave pattern, since wage growth becomes tenure-driven when human capital gains do not depend on firm quality. 

In \textbf{Section~\ref{sect2}}, we show a monotonic relationship between origin-firm quality and post-E - U--E reemployment wages: workers from higher-productivity firms are rehired at higher wages. Only the baseline model captures this gradient, while the linear model does not show any origin-firm dependence. In the linear model, firm quality does affect wages contemporaneously through the match, but does not independently impact human capital accumulation. For this reason, the wage premium from working at a high-productivity firm disappears at separation. After displacement, high-productivity workers have strong incentives to exit unemployment quickly, and because a uniform human-capital technology reduces the gains from targeting top firms, they concentrate search on lower-productivity firms. Conditional on the past wage, this effect determines a negative (or insignificant, high up the ladder) relationship between the quality of past employment and the re-employment firm quality.

\subsection{Cross-sectional properties} While in the estimation we target the average unemployment rate, the model exhibits a good fit also for the age profile of the unemployment rate (see  \textbf{Figure~\ref{fig:unemployment}}). 
In addition, \textbf{Figure~\ref{fig:ue_model_data}} compares the implied unemployment-to-employment rates by age in the baseline model and in the data, highlighting how the baseline model is able to capture the age profiles of transitions out of unemployment well. The model with linear human capital accumulation struggles to match the unemployment rates, which leads to an overshoot of U-E rates. 
\begin{figure}[h!]
    \caption{Cross-Sectional Features, Model and Data}\label{fig:bigvalunemp}
    \subcaptionbox{Unemployment rates\label{fig:unemployment}}{\includegraphics[width=0.425\textwidth]{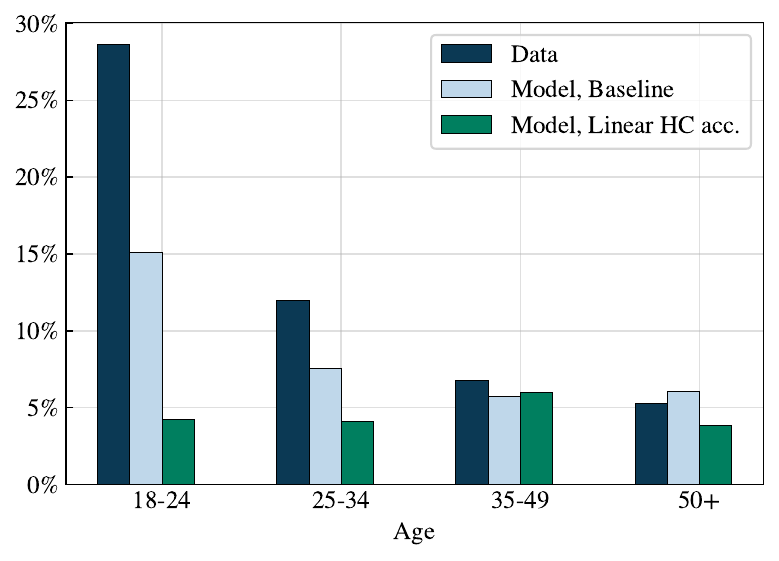}}
    \subcaptionbox{Unemployment-to-Employment rates\label{fig:ue_model_data}}
    {\includegraphics[width=0.425\textwidth]{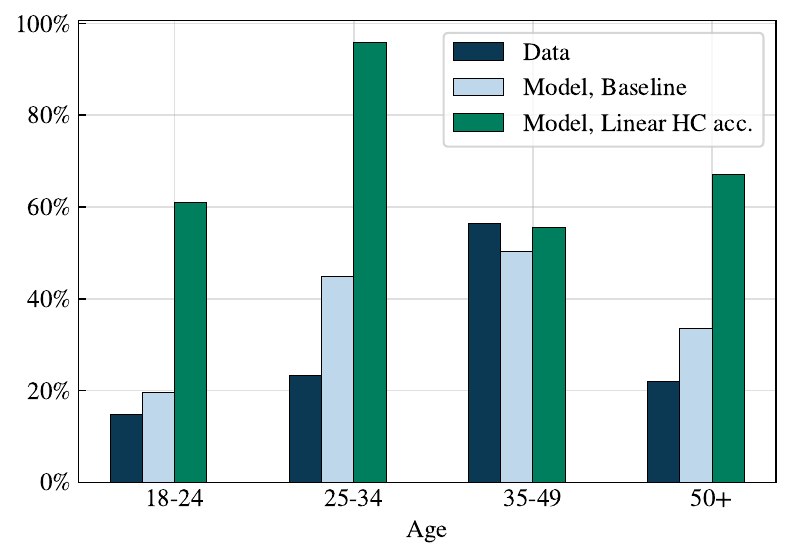}}
    \vspace{-1em}
    \floatfoot{\textbf{Note:}  Panel~(\textbf{a}) plots the unemployment rate by age groups in model simulations and in the data. 
    \textit{Sources:} unemployment rates are taken from the Italian National Statistical Agency (ISTAT). Panel~(\textbf{b}) reports the average UE rates by worker age.}
\end{figure}

Our baseline model is also able to qualitatively reproduce the distribution of earnings. \textbf{Figure~\ref{fig:wage_dist}} in the appendix, displays the cross-sectional distribution of earnings in the data and in the model. The empirical wage distribution is centered at slightly below \euro $2,000 $ and skewed to the left, with most observations below \euro $ 4,000 $. What the baseline and linear human capital model fail to generate is the long right tail of wages in the data, which corresponds mainly to managerial figures whose earnings command premia that our mechanism is not meant to capture.

\subsection{Business cycle properties} \paragraph{Aggregate dynamics.} Replicating aggregate time-series properties of the data provides additional validation of the channels in the model. We project the detrended quarterly series of Italian TFP on a discrete grid to simulate a series of discretized aggregate shocks. The simulated model tracks the empirical series of GDP, capturing peaks and troughs as well as the overall behavior of the empirical series (see appendix \textbf{Figure~\ref{fig:gdp_data_model}}). As shown in \textbf{Table~\ref{tab:validation_table}}, besides approximating well the volatility of output (the standard deviation of detrended log-output is close to 3\% both in the model and in the data) the model is able to generate also some volatility for the unemployment rate, which is 0.9\% in the model versus 1.4\% in the data.\footnote{To compute the volatility of unemployment we replicate the exercise in \cite{Shimer2005}. Specifically, we remove a slow moving trend from the time-series of unemployment, by filtering the quarterly series with an Hodrick-Prescott filter with smoothing equal to $10^5$.}

{
\begin{table}[h!]
\centering
\caption{Co-movements at business cycle frequency \label{tab:validation_table}}
\resizebox{0.6\textwidth}{!}{\input{figures/PaperFigures/ValidationTable_BaselineLinear.txt}}
\floatfoot{\textbf{Note:} \footnotesize{We report the correlation between model simulated series and their data counterparts (Panel~A), the correlations with aggregate output in the model and in the data (Panel~B) and the standard deviations of output and unemployment (Panel~C). Simulations are obtained by feeding the model with a TFP series  that matches the Italian TFP from 2000Q1 to 2019Q4. All series have been  detrended.  Panel~A and~B: Hamilton filter (4 lags and 8 leads); Panel~C: Hodrick-Prescott filter (smoothing equal to $10^5$, as in \cite{Shimer2005}). We report the correlations for our baseline model and for the alternative, in which human capital accumulation does not depend on firm quality and is given by $h_{\iota,t+1} = \phi_{\iota} + h_{\iota, t} + \epsilon_{\iota,t}$ (\textit{Linear Human Capital}). The alternative model has been re-estimated as described in \textbf{Section \ref{calib_est}}.}}
\end{table}
}

Further validation of our framework is given by the ability to replicate the long-run effects of business cycles on workers' career outcomes. In particular, we adapt the reduced-form models proposed in the literature on the effects of recessions on labor market entrants (\citealt{kahn2010}, \citealt{schwandt2019}) and we run it on both the Italian administrative data and on a model-simulated panel.\footnote{In these empirical specifications we control for age, period, and cohort effects, proxied by the cyclical realization of GDP at cohort entry. We report the empirical estimates in \textbf{Appendix~\ref{sect:app_additional_tables_andfig}}.} %
Entering the labor market in a downturn is associated with persistent losses in earnings in our framework. As shown in  \textbf{Figure~\ref{fig:scarring_model_data}}, our baseline model is able to generate scarring effects that are in line with those observed in the data. Baseline simulations match both the magnitude and the dynamics of the empirically observed scarring effects, while this ability is lost by the linear model, highlighting the importance of sorting into different early-career employers for future earnings trajectories.
The model also matches well the concentration of separation rates along the productivity distribution--as shown in \textbf{Figure~\ref{fig:separationsShares}}. As in the data, separations are concentrated  in firms on the lower rungs of the productivity ladder, with this pattern more pronounced in recessions.

\paragraph{Job ladder cyclicality.} \textbf{Table~\ref{tab:differentialnetjob}} reports results from the main specification following \cite{haltiwanger_cyclical_2018,haltiwanger_cyclical_2025} for our baseline model and for the linear model. The baseline model captures both the order of magnitude and sign of the correlation between job flows and changes in the unemployment rate (Panel B). Results for the linear model are, on the other hand, highly counterfactual with respect to what we see in the data (Panel C). Recessions do not seem to generate significant changes in reallocation patterns of labor along the productivity ladder. Absent firm-dependent learning, differential net poaching and flows from unemployment across the productivity distribution are statistically uncorrelated with the business cycle.%

\begin{table}[h!]
\centering
\caption{Cyclical job ladder collapse \label{tab:differentialnetjob}} {\resizebox{0.65\textwidth}{!}{
\input{figures/PaperFigures/HHMS_comparisonsSmall.tex} }}
\vspace{-1.0em}
\floatfoot{\textbf{Note:} The table reports the cyclicality of the difference in Net poaching and Net nonemployment flows as in  \cite{haltiwanger_cyclical_2025} and \cite{haltiwanger_cyclical_2018}. Each row presents results from a separate regression. The dependent variable is the differential worker flow rate between high- and low-productivity firms. The independent variables in each regression include a cyclical indicator, a linear and quadratic time trend, and a constant, which are not reported. We use quarterly data, from 1996 to 2018.}
\end{table}

\section{Anatomy of recessions}\label{sect:anatomy}

The model developed in \textbf{Section~\ref{sect:model}} can be used to analyze 1) how aggregate shocks propagate through the economy through changes in firm and worker sorting, and 2) what channels are responsible for their persistent effects on aggregate output.  

To illustrate these points we provide a a decomposition of the economy's cumulative impulse responses to a TFP shock. We do this by first computing generalized impulse response functions. We compare a series of simulations of the model with aggregate shocks to otherwise identical counterparts, for which the only difference is that, at some exogenous time chosen to be the same for all simulations, the economy experiences three consecutive negative realizations of the TFP process.\footnote{The cumulative drop in TFP is approximately $15\%$ (equally split in each quarter) and is consistent in magnitude to the TFP drop in the three quarters following the 2011 sovereign debt crisis.} %
We then focus on labor market outcomes of affected cohorts and the response of aggregate GDP throughout these specific recessions. We do this for our baseline model and the model where human capital accumulation is independent of firm quality. This allows to gauge how misspecifying the human capital accumulation process can lead to a misdiagnosis of the transmission channels of recessions. We illustrate how the shock propagates in \textbf{Figure~\ref{fig:recession_experiment}} and we then provide a decomposition of the response of aggregate output into sorting, human capital and displacement channels in \textbf{Figures~\ref{fig:decomp_irf_output}}~to~\textbf{\ref{fig:decompByHC_irf}}.

\begin{figure}[t]
    \centering
    \caption{Recession Experiment: Baseline and Linear Model}
    \label{fig:recession_experiment}
    \subcaptionbox{Firm quality\label{fig:rec_exp_fq}}{\includegraphics[width=0.33\textwidth]{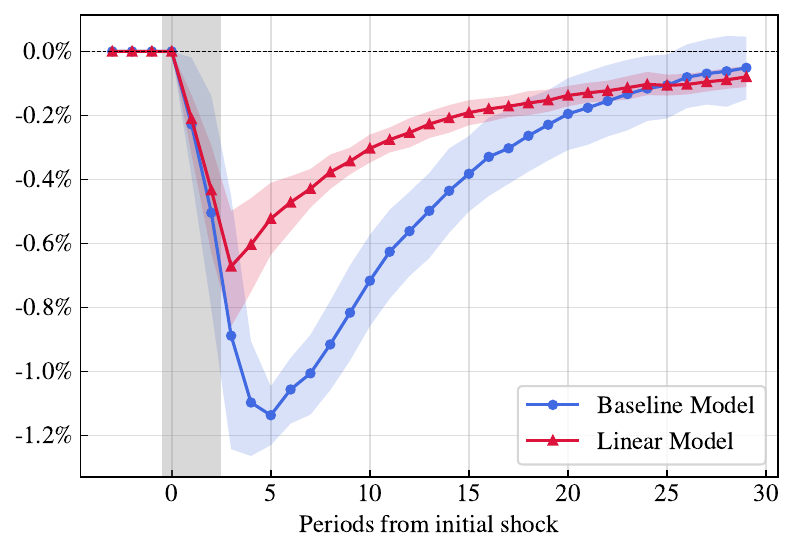} }%
    \subcaptionbox{Human capital\label{fig:rec_exp_hc}}{\includegraphics[width=0.33\textwidth]{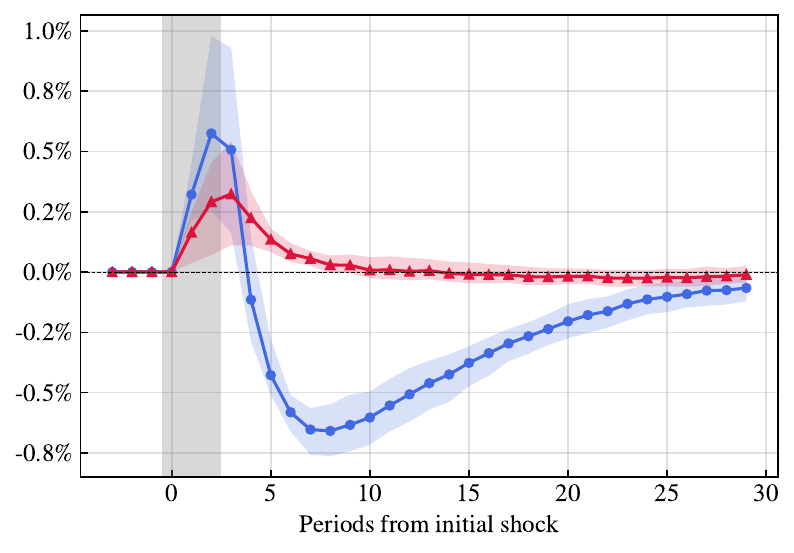} }%
    \subcaptionbox{Sorting \label{fig:rec_exp_sorting}}{\includegraphics[width=0.32\textwidth]{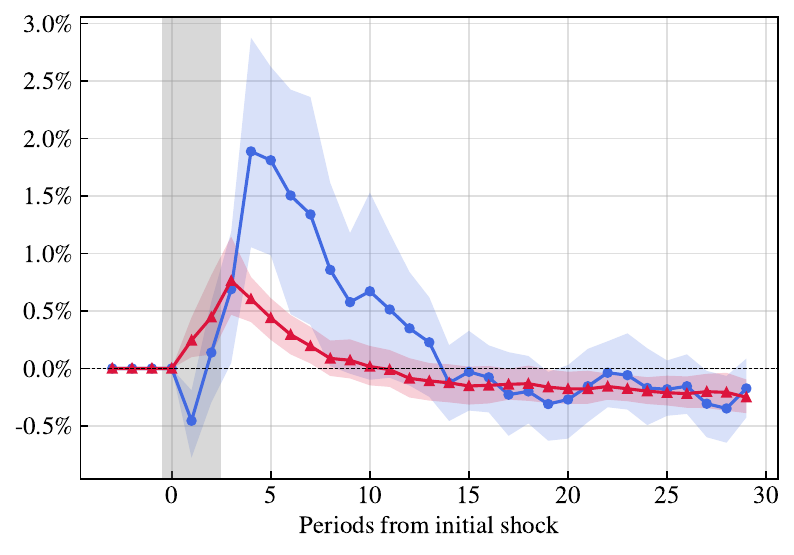}}%

    \subcaptionbox{Average wage\label{fig:rec_exp_Wageirf}}{\includegraphics[width=0.33\textwidth]{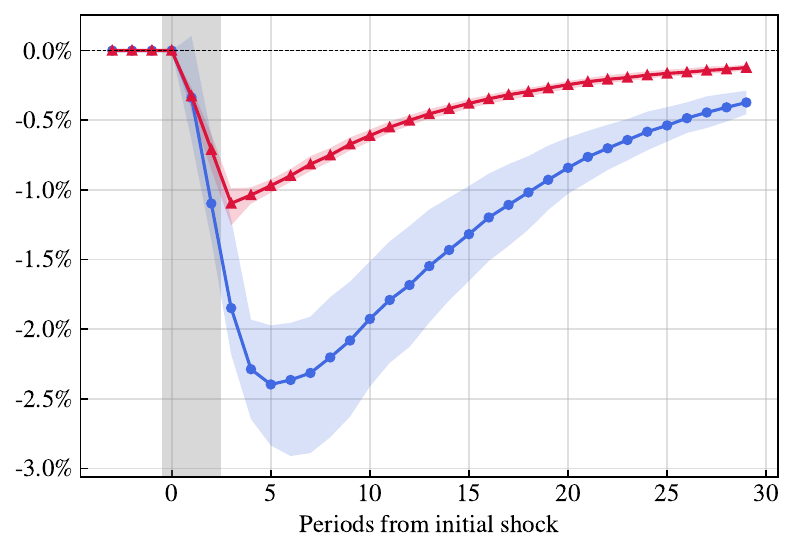}}%
    \subcaptionbox{Aggregate labor share\label{fig:rec_exp_LSirf}}{\includegraphics[width=0.33\textwidth]{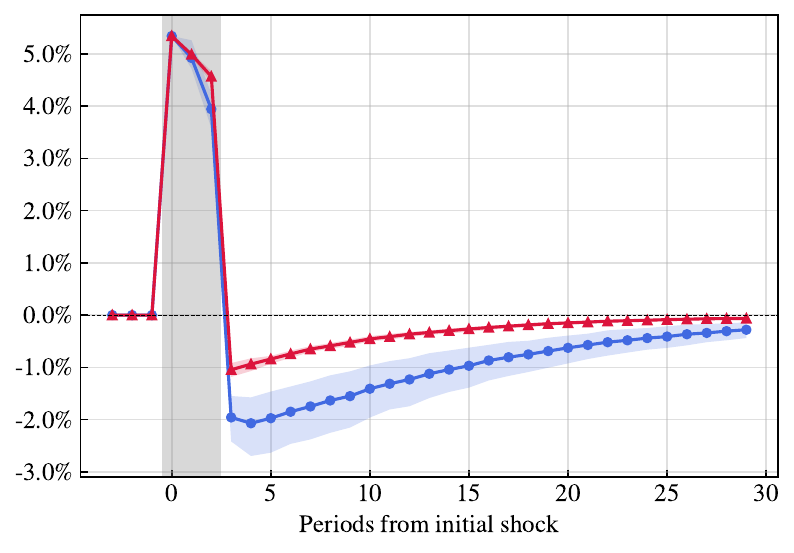}}%
    \subcaptionbox{Aggregate output\label{fig:rec_exp_Yirf}}{\includegraphics[width=0.33\textwidth]{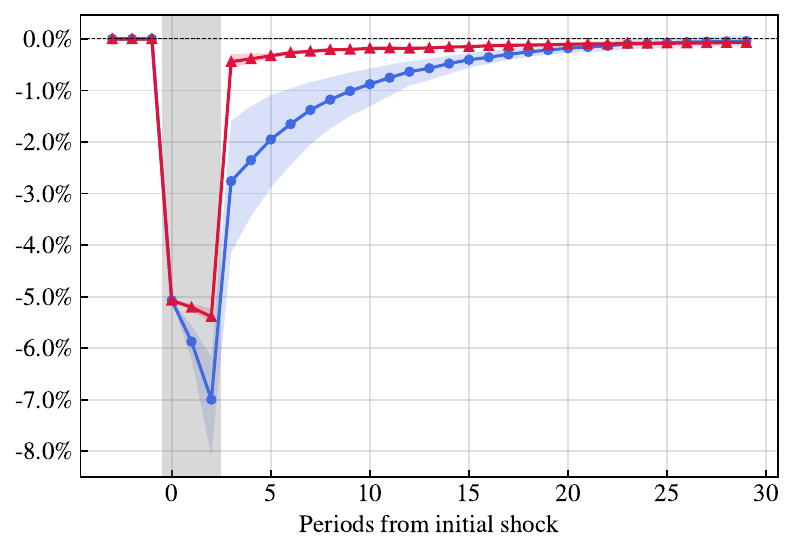}}%
    \vspace{-1em}
    \floatfoot{\textbf{Note:} The panels in the figure plot the ratios of the aggregate variables between an economy in which we impose a three-quarter negative TFP realization and an economy without aggregate shocks, that serves as a benchmark. The gray shaded area are the quarters in which TFP is below trend, while the shaded areas are the 90-10 quantile ranges across one hundred model simulations.
    }
\end{figure}

\paragraph{Shock propagation.} In \textbf{Figure~\ref{fig:rec_exp_fq}} and \textbf{Figure~\ref{fig:rec_exp_hc}} we show the generalized impulse response functions of average firm quality and human capital. The onset of a recession is characterized by a sort of ``Schumpeterian'' dynamics, as implied by the initial marginal increase in the levels of human capital. \textbf{Figure~\ref{fig:rec_exp_sorting}} shows our measure of sorting, the correlation of firms and workers' quality, relative to the baseline simulation. After a short-lived improvement, firm-worker sorting is reverts back to its pre-recession levels, albeit at very different levels of worker and firm quality as shown by the dynamics of average human capital and average firm quality. 
 A prolonged drop in the quality of factors of production, in fact, increases the persistence of the initial shock on output beyond the original duration of the recession. \textbf{Figure~\ref{fig:rec_exp_Yirf}} shows that aggregate output is still about {\color{black} 0.8\%} below its counterfactual level two years from the start of the recession. %

The recession has a strong effect on both the average wage and the average labor share in the economy, as shown in \textbf{Figure~\ref{fig:rec_exp_Wageirf}}~and~\textbf{\ref{fig:rec_exp_LSirf}} respectively. Due to the downward wage rigidty embedded in the contracts, average wages drop less than the average firm output. This allows for a brief spike in the labor share that quickly declines and remains below the counterfactual economy for a long time. The decline occurs for two reasons. First, the matches that form in the recovery period are subject to the sullying in firm and human capital qualities, lowering the average levels of starting wages. Second, due to the back-loading of compensation embedded in contracts, new contracts feature initially lower labor shares and greater profit shares for firms. %

The triangle-red lines in \textbf{Figure~\ref{fig:recession_experiment}} show the model with linear human capital accumulation. Its response to a TFP shock differs sharply from the baseline: the initial cleansing effect is re-absorbed as less productive workers who were initially displaced  exit unemployment. Employment losses in the linear model are less than one-third of those in the baseline, leaving too few unemployed workers requiring re-employment to mechanically push average human capital below the baseline. As a result, there is no aggregate loss in human capital in the transition. These muted human capital dynamics are mirrored in smaller declines in firm quality and substantially smaller losses in average wages and aggregate output. 

{\color{black} Why is human capital production is more sensitive to temporary shocks in the baseline model? Even though human capital grows faster at more productive firms, and those firms are less exposed to temporary revenue shocks, low-productivity firms remain central: young, low--human capital workers sort into below-median productivity, more volatile employers, which account for over half of aggregate learning. Top firms, instead, mostly employ seniors with little remaining accumulation. Changes in sorting -- in particular, changes in the pattern of job-to-job transitions -- within the volatile, lower-productivity firms drive business cycle human capital dynamics. This is consistent with evidence in \cite{gregory2023}, who finds that most human capital accumulation takes place in less productive firms.\footnote{\citeauthor{gregory2023}'s \citeyearpar{gregory2023} identification relies on the assumption that learning on the job is absent for older workers, an assumption validated by our results.} The linear specification masks this mechanism: among continuously employed workers, human capital accumulation is invariant to recessions, implying no sullying effects operating through job-to-job reallocation.}  
 
Finally, the baseline economy exhibits significant state dependence on the labor share---a 1pp higher average labor share is associated with a $\sim 2\text{pp}$ larger output loss in the first year (\textbf{Table~\ref{app_tab:state_dependence}} in the appendix)---whereas the linear specification displays much weaker state dependence, as reflected in the narrow swathes around the IRFs.

\subsection{Decomposing recessions} 

What explains the amplification and persistence of recessionary shocks? Different competing channels are at play. The first, which we call the human capital channel, captures the human capital accumulation that does not take place because of the recessionary event, keeping human capital stock as if the recession did not happen. The second, the sorting channel, amounts to the firm component of the different joint worker-firm distributions that emerge in the periods following the shock,  because of  different match formation along the cycle. The third, the search channel, is the impact of different search probabilities due to changes in aggregate and individual states. Finally, the standard displacement channel captures the job destruction that takes place because of the negative shock and its spillovers. We decompose the effects of recessions in the model economy by shutting down each channel at a time and then comparing the resulting dynamics to the one of a baseline recession.\footnote{Specifically, for the search component we run model simulations in which the post-recession job-finding probabilities are those associated with search in the baseline simulation. Similarly, for the firm quality component of the sorting channel we keep employed workers in the same firm quality they have in their baseline simulation. For the human capital channel we erase human capital losses by forcing workers' human capital in our counterfactual simulation to be the same as in the baseline one. The displacement channel is then obtained residually.}

\begin{figure}[h!]
    \centering 
    \caption{Decomposition of cumulative impulse response function of aggregate output\label{fig:decomp_irf_output}}
    \subcaptionbox{Baseline model\label{fig:decomp_irf_output_baseline}}{\includegraphics[width=0.45\textwidth]{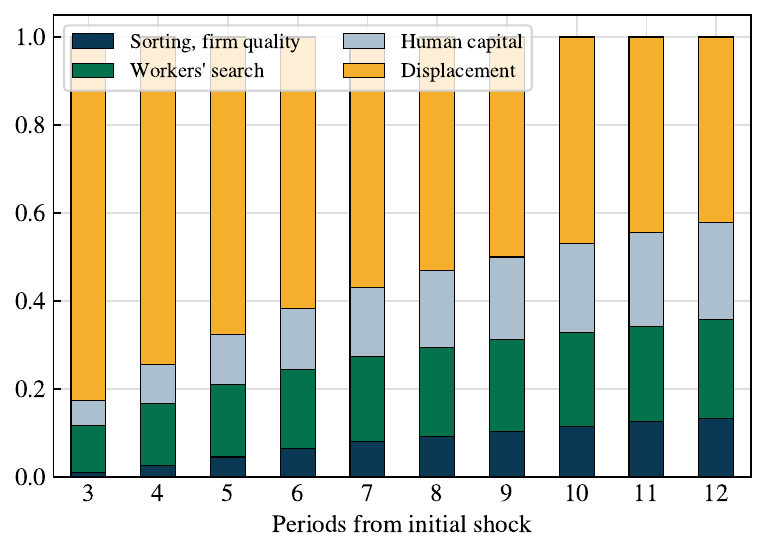}}%
    \subcaptionbox{Linear human capital model\label{fig:decomp_irf_output_linear}}{\includegraphics[width=0.45\textwidth]{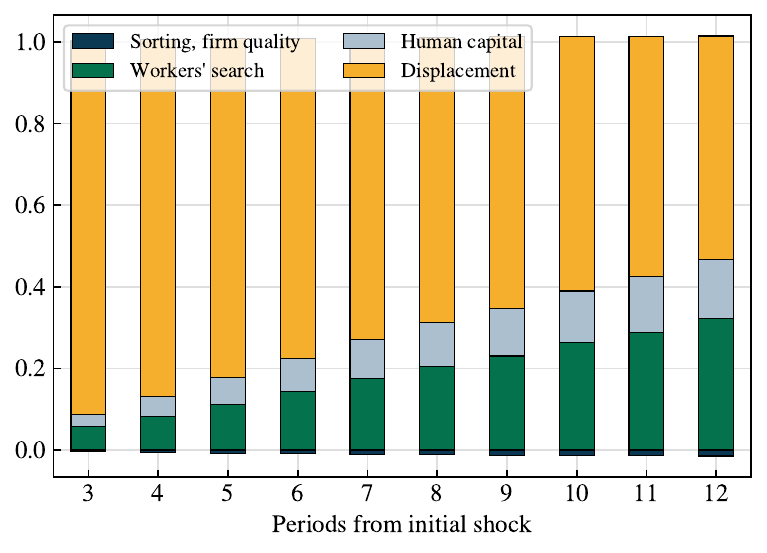}}%
    \floatfoot{\textbf{Note:} The figure shows the relative importance of each transmission channel compared to the baseline recession for the cumulative response of GDP in the three years after the onset of the recession.}
\end{figure}

\textbf{Figure \ref{fig:decomp_irf_output}} decomposes contributions to output after a shock into our four channels. The displacement channel is the main driver of the recession on impact, explaining the majority of output losses in the first year after the shock. During the recovery, the loss in output can be explained via a combination of a lower search quality from workers as well as a relatively lower firm quality for those that re-match during the transition. Recovery from displacement, while not immediate as search is frictional, is relatively fast, in part because unemployed workers have lower reservation wages. 

Taken together, the sorting and human capital channels explain about 35\% of cumulative output losses three years after the shock in the baseline model (\textbf{Figure~\ref{fig:decomp_irf_output_baseline}}). While negligible in the short-run, the importance of lower firm quality builds up over time, amounting to approximately 13\% after three years from the onset of the negative TFP shock. Human capital losses, instead, account for approximately 9\% in the short-run and up to 22\% at the three-years horizon. Worsened workers' search, finally, accounts for approximately 14\% of output losses in the first year of the recovery and up to 22\% three years after the start of the recession.

In the linear model the transmission of the negative TFP shock is significantly different (\textbf{Figure~\ref{fig:decomp_irf_output_linear}}) . The sorting and human capital channels explain roughly 14\% of cumulative output losses three years after the shock. Worsened firm quality is essentially unimportant (even marginally negative) for the dynamics of aggregate output after a negative TFP shock, with the largest effects coming from worsened workers' search--accounting for approximately 32\% after three years--and lower human capital accumulation amounting to approximately 15\% in the medium-run. 

\begin{figure}[h!]
    \centering
    \caption{Decomposing output's losses across the human capital distribution\label{fig:decompByHC_irf}}
    \subcaptionbox{Baseline model \label{fig:decompByHC_irf_output_baseline}}{
    \includegraphics[width=0.8\textwidth]{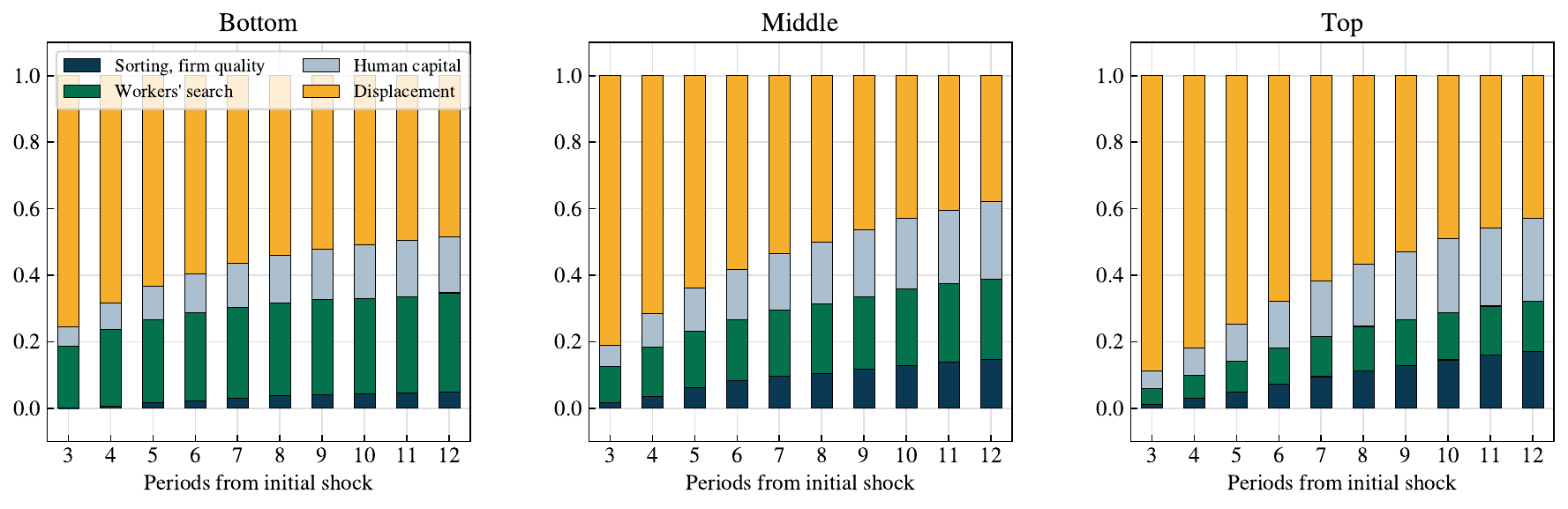}}
    
    \subcaptionbox{Linear human capital model\label{fig:decompByHC_irf_output_linear}}{\includegraphics[width=0.8\textwidth]{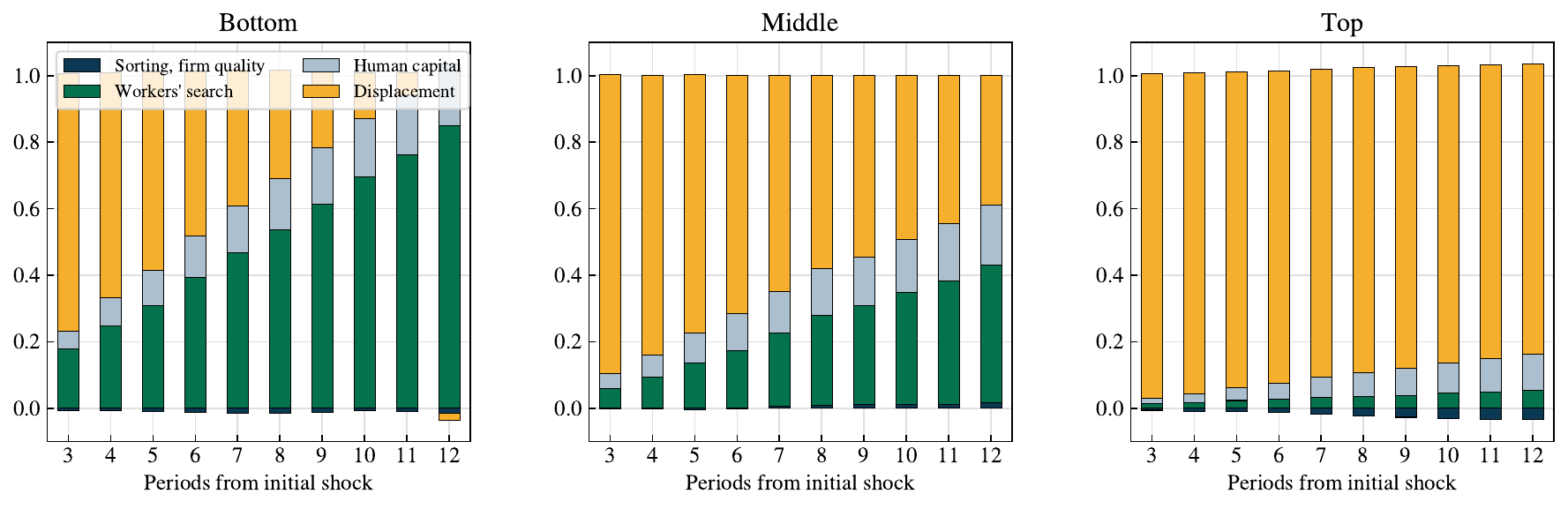}}
    \vspace{-1.0em}
    \floatfoot{\textbf{Note:} For each human capital group, Panel~(\textbf{a}) shows the relative importance of each transmission channel compared to the baseline recession for the cumulative impulse response of GDP across different age groups in the two years after the onset of the shock. Panel~(\textbf{b}) plots the same decomposition for the model with linear human capital accumulation.}
\end{figure}

\paragraph{The role of human capital heterogeneity.} These four channels vary in importance across workers'  human capital distributions, as shown in \textbf{Figure~\ref{fig:decompByHC_irf}}.\footnote{We report the heterogeneous decomposition by age in \textbf{Appendix~\ref{sect:app_additional_tables_andfig}}.} 
The two models diverge sharply across the human capital distribution. In the baseline model, different channels dominate at different rungs: search behavior drives losses for workers at the bottom, while human capital and firm quality channels dominate at the top. This pattern reflects that College workers---who populate higher human capital rungs---accumulate human capital faster in the model. When recessions compress the job ladder, these workers suffer disproportionately from reduced opportunities to climb the human capital ladder. This mechanism is absent in the linear model, where human capital grows independently of employer quality. There, non-employment is the only relevant channel for high human capital workers, while those at lower rungs suffer primarily from prolonged unemployment that worsens their human capital accumulation and subsequent job prospects.

These differences in how recessions propagate---and which workers are most affected through which channel---carry important policy implications. The linear model suggests interventions targeting search frictions, while the baseline model highlights the importance of preserving job quality and match formation. Misspecifying the source of human capital accumulation obscures these distinctions.%

\subsection{The costs of recessions}

\paragraph{Costs of Business Cycles.}  \cite{lucas1987} argues that welfare gains from reducing business cycle volatility are negligible, and quantifies them in less than $0.05$ percentage points of aggregate consumption. By doing an analogous calculation, we estimate the cost of business cycles to be, on average, close to $4$ percentage points,  quantitatively close to the one in \cite{barlevy2004}, who first theorized the potentially sullying effects of recessions. %
\begin{figure}[h!]
    \centering
    \caption{Business Cycle Costs}
    \subcaptionbox{Baseline model\label{fig:cost_businesscycles}}{\includegraphics[width=0.35\textwidth]{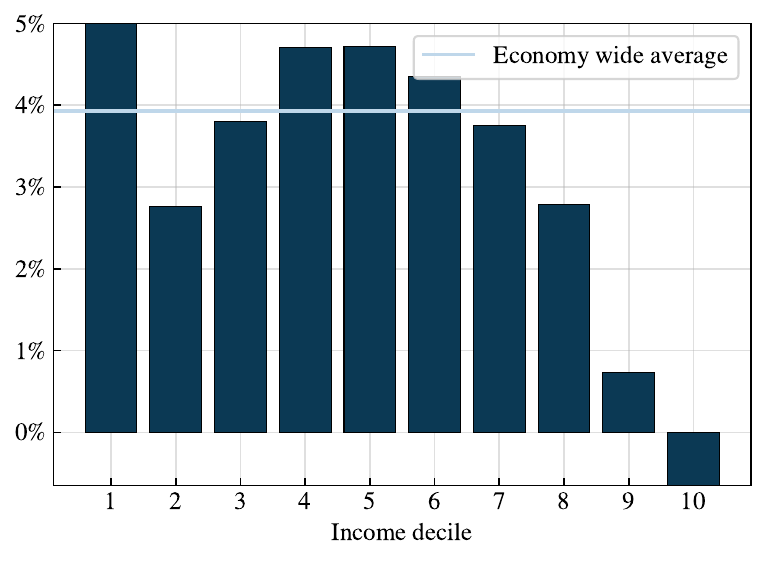}}%
    \subcaptionbox{Linear human capital model\label{fig:cost_businesscycle_age}}{\includegraphics[width=0.35\textwidth]{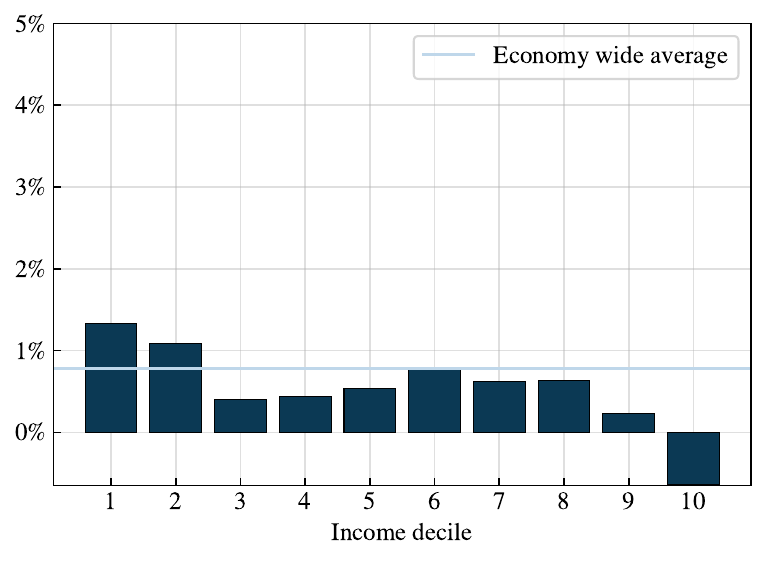}}
    \vspace{-1.0em}
    \floatfoot{\textbf{Note:} Panel~(\textbf{a}) and (\textbf{b}) plot the reduction in consumption-equivalent utility due to aggregate fluctuations by income deciles in the baseline model and in the model with linear human capital accumulation. Workers are assigned to income deciles in the simulations with aggregate fluctuations.}
\end{figure}
The rich heterogeneity in our model allows us to estimate how the welfare costs of business cycles vary along the income distribution, and analyze how they interact with income inequality: \textbf{Figure \ref{fig:cost_businesscycles}} plots the cost of business cycles for different income deciles in our baseline model.\footnote{ In \textbf{Appendix~\ref{sect:app_additional_tables_andfig}} we also show, consistently with \cite{heathcote2020}, that income inequality increases throughout recessionary episodes in our model simulations.}  Highest deciles bear virtually no costs from business cycle fluctuations. This is due to the fact that the very top of the distribution comprises the highest skill workers, who have no cyclical unemployment risk and have reached the maximum human capital attainable given firm opening costs and search frictions. The welfare costs of business cycles are, however, much larger for all other deciles. Business cycles costs are high for workers at the bottom of the income distribution, as those are workers that suffered the most aggregate fluctuations, via job losses and impaired human capital accumulation. Interestingly, the cost of business cycles remains particularly high also for workers up to the $8^{th}$ decile of the distribution, indicating that only the very top workers are sufficiently insulated by aggregate fluctuations.
In \textbf{Figure \ref{fig:cost_businesscycle_age}} instead, we report welfare costs in the model with linear human capital accumulation. Welfare cost of business cycles are significantly smaller and more uniform than those in the baseline model. Once human capital accumulation, and consequently workers' career prospects, are independent from the quality of employers workers along the entire income distribution are more insulated from aggregate fluctuations and the cyclical quality of job creation.

\section{Conclusions \label{sect:conclusion}}
We develop a novel framework of on-the-job search and human capital accumulation with two-sided heterogeneity to study the dynamics of human ladders. Search frictions and aggregate uncertainty prevent efficient worker--firm allocation, and because human capital accumulation depends on match quality, recessions induce persistent distortions in sorting. These distortions are slow to unwind, contribute to sluggish recoveries, and generate long-run changes in the joint distribution of firms and workers. Quantitatively, they account for about 35\% of the cumulative output loss three years after a recession.

The model is rich yet tractable and naturally suited for policy analysis. It implies that scarring and welfare effects of recessions are heterogeneous across education, skill, and age, reflecting differences in exposure to unemployment risk and human capital deterioration. The framework can be used to evaluate the aggregate and distributional effects of policies such as countercyclical unemployment benefits, short-time work, or employment retention schemes, which we leave for future research.

 \clearpage

%% file: draft_v9_abstract.tex
Using administrative data, we document that workers acquire more human capital at more productive firms. Recessions distort workers-firm sorting, flatten the job ladder and impact human capital accumulation, as workers match on average to worse firms. To quantify the aggregate relevance of these effects, we build a directed search model with aggregate risk and worker-firm heterogeneity, in which human capital accumulation depends on firm quality. We estimate the model and show that recessions have persistent negative effects on the productivity of worker-firm matches, with distortions in sorting and human capital accumulation accounting for approximately 35\% of cumulative output losses.
\bigskip
{ 
\newline \textbf{Keywords}: Human Capital Accumulation; Sorting; Scarring; 
\newline \textbf{JEL Codes}: J24; E32; E24; J63
}

%% file: draft_v9_intro.tex
Extensive research demonstrates that sorting -- the degree to which more productive workers are matched with more productive firms -- is highly cyclical. We introduce novel empirical evidence illustrating how changes in sorting can impact output dynamics across the business cycle. First, using matched employer-employee data from Italy, we establish a novel empirical finding: workers employed in more productive firms accumulate more human capital. Specifically, we show that experience-earnings profiles are steeper for individuals employed in more productive firms, and returns persist long after individuals have moved on from those firms. Over their careers, workers climb these ``human capital ladders''. Second, we show in the data that human capital ladders collapse in recessions: workers who start a new job are on average going to do so in a relatively less productive firm, and will be less able to move towards higher rungs of the firm productivity ladder. Cyclical changes in sorting can then have long-term effects on the stock of human capital in the economy.

The implications of these empirical findings for the allocative effects of recessions are not straightforward. While recessions affect workers by forcing displacements and impacting their lifetime human capital accumulation, they also lead to labor reallocation across firms. This reallocation can potentially result in efficiency gains for the overall economy if workers are induced to move towards better jobs by the cleansing of less productive firms. It is then natural to ask whether recessions induce sullying effects \citep{barlevy2002}, meaning persistent negative impacts on the productivity of workers and firms, or whether they promote long-term, efficiency-enhancing cleansing effects.

Evaluating the relative importance of sullying versus cleansing effects requires tracking the heterogeneity of workers in skill and career progression, their matching with firms of varying productivity, and how these distributions jointly evolve with business cycles. To this end, we develop a directed search model with heterogeneity among firms and workers, an overlapping generations structure, on-the-job search, and aggregate risk. We model human capital ladders: consistent with our empirical evidence, the influence of firms on workers' earnings profiles is captured by match-dependent human capital accumulation. We thus introduce a new channel that links on-the-job learning with labor market sorting and human capital accumulation.

We estimate the model and find that the persistent negative effects of recessions on the productivity of worker-firm matches outweigh any cleansing effects. Our findings are consistent with empirical evidence of initial cleansing effects, followed by persistent losses in productivity \citep{haltiwanger_cyclical_2025}. We find that recessions impair the sorting process and distort the paths of human capital accumulation: nearly one-third of the cumulative post-recession output losses are due to the disruption in the matching of workers to firms, a ``flattening'' of the job ladder, and the consequent erosion of aggregate human capital. 

While prior work has examined displacement costs and cyclical shifts in sorting (e.g., \citealt{lise2017}, \citealt{baley2022}), we provide the first analysis of how firm-dependent human capital ladders shape aggregate sorting over the business cycle. Models focused on job-loss costs (\citealt{jarosch2023}, \citealt{huckfeldt2022}) typically abstract from aggregate sorting implications. By bringing these elements together, we show that the sorting of workers along the productivity ladder varies importantly over the cycle, and that omitting firm-dependent human capital accumulation leads to a substantial understatement of long-run sullying effects.

A key feature of our theoretical framework hinges on the fact that human capital production is linked to the match quality between workers and firms. In our model workers employed by more productive firms experience faster human capital accumulation.\footnote{An expanding body of literature highlights the role of firms for workers' human capital accumulation and skill development (see \citealt{herkenhoff2024}; \citealt{jarosch2021}; \cite{mion2022dream}; \citealt{arellano2022}; \citealt{gregory2023}).} This modeling choice is supported by our novel findings regarding differential experience returns across firms of varying productivity. We establish that, controlling for worker quality, the wage increase attributed to one year of experience at a top productivity firm is over twice that at a firm at the lower end of the productivity spectrum. Furthermore, we find that the experience returns are persistent, maintaining their value through periods of unemployment or displacement, and that this effect is robust to various specifications and definitions of firm quality.\footnote{Changes in the firm class of employment imply earnings losses compatible with high but incomplete human capital portability across firms, or a limited role of firm-specific human capital \citep{kambourov2009}, a feature that we reproduce in the model.}

Workers' pay is shaped by more than human-capital accumulation. Profit-maximizing firms strategically design contracts to attract and retain productive workers. Because employed workers cannot credibly commit to stop searching on the job, optimal contract design must internalize how outside options evolve over the life of a match, especially as workers accumulate human capital and their search incentives change. Moreover, with risk-averse workers, there is a demand for insurance against fluctuations in match value.

The resulting state-contingent contracts front-load firms' profits and back-load workers' compensation, strengthening retention incentives while providing partial insurance against both idiosyncratic and aggregate shocks. Younger workers, who are generally less productive but highly mobile, receive steep wage growth as firms try to boost retention. As human capital accumulates, workers can target better opportunities, raising incentives to move up the job ladder. Job-to-job transitions therefore become a key engine of wage progression \citep{topel1992}, with workers moving from less to more productive firms that offer higher-paying contracts. Overall, life-cycle earnings dynamics reflect the interaction of human-capital accumulation, tenure effects, and ladder climbing.

Capturing cyclical displacements and endogenous layoffs, together with their heterogeneity across workers and firms, requires additional structure. With firm-provided insurance, the wage path specified in the contract is weakly increasing as long as the match has positive surplus, which effectively imposes a wage floor and limits downward adjustment. We also posit that wages are downward rigid, meaning that adverse shocks could lead to separations that are inefficient despite both parties being fully informed about the contract's terms in advance.\footnote{A significant body of research highlights strong resistance to cuts in nominal or real wages, suggesting that layoffs are a more probable reaction to negative productivity shocks (\citealt{altonji1999extent}; \citealt{agell2003survey}; \citealt{grigsby2021aggregate};  \citealt{bertheau2023}). Our analysis further shows that a model without this feature cannot accurately reproduce the observed joint dynamics of output and unemployment or the cyclical nature of firm layoffs.} With these assumptions, the model successfully matches the empirical patterns of separations due to firm layoffs, considering variations across age, worker, and firm productivity levels. Displacements occur more frequently in less productive firms, typically employing younger and less skilled workers, and these firms are also more likely to enact layoffs during recessions. The simulated output and unemployment volatilities in the model match their empirical counterparts, a challenging result for standard search and matching models. %

Two main features ensure the tractability of our framework. First, we prove that workers can map the productivity of each firm to the value of the posted vacancy. This ensures that the optimal search strategy is unique, and that target firm productivity is monotonously increasing in workers' human capital level, reducing the number of states of the problem and excluding the presence of multiple optima in search strategies. Second, the choice of modeling search as directed and the assumption of free entry allow us to prove that the model features a block recursive equilibrium \citep{menzio2010}. The model can then be solved without needing to track the distribution of agents as state variables in the optimization problem.

The model is used as a laboratory to study the interactions between the sorting of workers and firms and the business cycle. The key innovation of our framework is the existence of human capital ladders, that is the full integration of human capital accumulation as a consequence of, and motivating factor for, job mobility. This innovation has important consequences for the understanding of individual and aggregate dynamics of recessions.

Following a recessionary shock, the scarring effects on workers' careers stem from a decline in their outside options and an endogenous reduction in the economy's overall productive capacity. Multiple channels contribute to this outcome. First, many workers are displaced throughout a recession, and they need to restart climbing the job ladder from the bottom.  Second, wage growth decreases for all surviving matches, as workers' outside options deteriorate. Third, the ability of workers moving across jobs and up the job ladder if employed, or towards any job if unemployed, diminishes.

The driving force behind the contraction in job mobility is the flattening of the job ladder for \emph{all} workers, given a decrease in job openings.\footnote{The pro-cyclicality of high-quality jobs has been documented, among others, by \cite{moscarini2016did}.} This effect is stronger at higher rungs of the ladder, where vacancies are more expensive to create and exhibit greater sensitivity to economic cycles. Not only labor demand falls during recessions, but also firms increase their skill requirements for any vacancy type \citep{hershbein2018, modestino2020}, which makes it harder for any worker to climb the job ladder. A flatter job ladder impairs workers' productivity growth and their further potential upward mobility.  Workers match up the ladder with firms of lower quality, form less productive matches and thus accumulate less human capital. Therefore, changes in worker-firm matching amplify and prolong the impact of recessions on earnings and aggregate output by altering their ability to climb their human capital ladder, leading to sullying of the productivity distribution.

The relative importance of distortions in sorting and human capital accumulation can be recovered by decomposing the cumulative output loss of recessions into four channels. Three years after a negative productivity shock, 35\% of losses stems from a worsened sorting between workers and firms and human capital losses: 13\% is caused by the decrease in firm quality in matches formed throughout the recessionary and recovery periods (\emph{firm quality} sorting channel), and 22\% arises from the decrease in human capital accumulation (\emph{human capital} channel). In addition, distortions in workers' job finding probabilities with respect to normal times (\emph{search} channel) account for approximately 22\%. The remaining portion is ascribed to direct displacement effects (the \emph{displacement} channel). This exercise implies that recessions have a sullying effect. 

Incorporating firm-dependent human capital accumulation alters the interpretation of cleansing effects during the initial stages of recessions. While the exit of less productive firms is conventionally seen as cleansing, it predominantly affects younger workers in less productive matches, leading to displacements that have enduring effects (see \citealt{huckfeldt2022, jarosch2023}). Cleansing in the short term may actually hinder long-term growth, as displacements and lower quality employment-to-employment transitions impede human capital accumulation, trapping certain workers at lower levels of the job ladder. 

Because human-capital ladders are central to our results, we validate their role by showing that the model reproduces the untargeted empirical patterns we document, and by benchmarking these outcomes against a standard alternative with stochastic, firm-independent on-the-job human-capital accumulation. Workers have steeper experience-earnings profiles at more productive firms, and these returns rise monotonically along the productivity distribution. The gains persist after any unemployment spell in both model and data, and pre-unemployment firm productivity predicts re-employment wages beyond what pre-unemployment wages imply. The model also reproduces the cyclical dynamics of net poaching and hiring from unemployment flows across the productivity distribution, as in \cite{haltiwanger_cyclical_2025}. The model without firm-dependent human-capital accumulation misses the experience-return gradient, implies counterfactual effects of pre-unemployment productivity on re-employment wages, and fails to capture the cyclical regularities about cleansing and scarring reallocation in \cite{haltiwanger_cyclical_2025}.

Omitting human-capital ladder dynamics therefore understates the importance of matching along the firm-productivity distribution. As expected, a negative productivity shock in an economy without the ladder yields much smaller output effects and less persistent changes in productivity and human-capital distributions. Consequently, any decomposition of post-recession output losses downweights the role of human capital and firm quality, especially for high-skill, highly educated workers.

Finally, we estimate the welfare cost of business cycles. %
With respect to the canonical \cite{lucas1987} estimate (0.05\%), our estimate of the costs is almost two full orders of magnitude larger on average (4\%). The incidence of these costs varies markedly across the income distribution. We find them to be greatest for low and middle-income workers, and not monotonically decreasing over the income distribution. For low--skill workers, which tend to have lower wages, costs are driven by displacement and lower job-finding probabilities. For higher--skill workers, on the other hand, the distortions in firm quality matching and human capital accumulation matter more than the immediate displacement effect. Once again, a model without the human capital ladders delivers way smaller, and more homogeneous values. \footnote{These results for welfare echo and extend the results for earnings found by \cite{heathcote2020} and \cite{doniger2023}, who find that recessions amplify income inequality and tend to have immediate stronger effects for lower income workers.}%

\paragraph{Related literature.} Our work relates to three strands of empirical research. 

First, we contribute to the existing body of empirical evidence on the importance of worker-firm matching in explaining earnings dispersion. Our novel contribution is to show that workers accumulate more human capital when matched in more productive firms. Our empirical exercise is close in spirit to \cite{arellano2022}, who provide empirical methods to classify firms based on firm-dependent earnings profiles. Other studies leverage specific firms characteristic determining differential human capital accumulation within firms: co-workers quality (\citealt{herkenhoff2024}, \citealt{jarosch2021}, \citealt{akerman2025}), international outreach \citep{mion2022dream}, location within cities \citep{roca2017learning},  or specific returns to formal education  (\citealt{engbom2017returns}, \citealt{deming2023wages}). Our paper provides extensive evidence of heterogeneous firm dependent human capital accumulation, increasing in firm productivity, and estimates persistent firm-specific returns to experience on the universe of Italian private employment over 25 years of data and various business cycle conditions. We show that despite this finding, and consistently with what \cite{gregory2023} finds for Germany, in our model most of the human capital accumulation of workers takes place in relatively low productivity firms. This finding emerges endogenously as an equilibrium outcome of the model, absent any ad-hoc restriction on age-dependent learning ability. %

Second, we relate to the empirical literature analyzing the impact of business cycles on workers' earnings, exploring how recessions impact workers both via worse career prospects and by forcing displacements (\citealt{kahn2010}; \citealt{oreopoulos2012}; \citealt{arellano-bover2022}; \citealt{schmieder2020} ).\footnote{Our results also reproduce, as untargeted moments, earning and employment dynamics of workers following displacement events \citep{jacobson1993,schwandt2019,schmieder2020,bertheau2023}.} Our model accurately characterizes the so-called ``scarring effects'' of recession for workers, and provides an evaluation of their aggregate effects for output and productivity. %

Third, we relate to the literature analyzing the effects of recessions on productivity, debating whether recessions tend to have long-lasting negative effects on the productivity distribution (\citealt{barlevy2002}; \citealt{haltiwanger_cyclical_2025}; \citealt{crane2023}; \citealt{bajo2025}) or, by freeing up resources for more efficient uses they generate cleansing effects (\citealt{davis1996}). Our findings align with empirical evidence showing initial cleansing effects, overshadowed by a sustained loss in productivity in the medium run (\citealt{haltiwanger_cyclical_2025}). We use the insights from this empirical literature to calibrate our model and benchmark our untargeted results in the estimation, and then use the model to analyze the productivity long run dynamics after a recessionary period.
 
Our model of the labor market builds on the directed search models developed in \cite{menzio2010, menzio2016}. We adopt a standard wage setting protocol from the literature on dynamic contracts as in \cite{Thomas1988} and, more recently, \cite{balke2022}. In both cases, we extend their framework to explicitly account for two-sided heterogeneity and on-the-job human capital accumulation to study how business cycles interact with workers' decisions, firms' optimal retention policies, and influence labor market sorting.

Related work shows that business cycles reshape sorting and human-capital dynamics through job ladders and on-the-job search (\citealt{lise2017, baley2022, herkenhoff2019, gulyas2023, adda2023, carrillo2023, carrillo2023cyclical, blanco2023}). By linking human-capital accumulation to firm quality, we extend this literature by quantifying feedbacks between sorting and aggregate output. In \cite{carrillo2023} and \cite{carrillo2023cyclical}, workers build occupation-specific human capital and the cyclical pattern of mobility interacts with unemployment duration and careers; downturns slow upward moves, lowering returns to occupational upgrading and making ``sullying'' persistent. \cite{baley2022} study cyclical worker--occupation/task mismatch in a directed-search model with skill learning, where career switching is central. They show that recessions combine cleansing (mismatch falls as underqualified workers separate) with sullying (mismatch among new hires rises via overqualification). Their mechanism emphasizes occupational reallocation, while we quantify a sullying/hysteresis channel in which downturn-induced downgrading along the firm ladder reduces human-capital growth and leaves persistent scars on productivity and output. The approaches are complementary for understanding cleansing and sullying over the cycle.

\cite{jarosch2023} studies a search model where workers climb a ``slippery'' job ladder to gain job security while accumulating human capital on the job; persistence arises because jobs differ exogenously in productivity and layoff risk. In our framework, life-cycle unemployment risk is instead an endogenous by-product of sorting between workers and firms of different quality.\footnote{Given the model's treatment of pay dynamics, our research is also related to a strand of literature in labor and finance analyzing firms' management of liquidity and compensation dynamics (\citealt{Xiaolan2014}; \citealt{favilukis2020}; \citealt{acabbi2022b}; \citealt{acabbi2023}).} Ignoring our empirical findings would mischaracterize scarring effects and output and unemployment dynamics. In particular, without match-dependent human-capital accumulation, the model would miss output and unemployment volatilities, separation patterns, life-cycle experience returns, and cyclical reallocation dynamics. Our results highlight that within-spell pay dynamics---shaped by retention incentives and human capital accumulation---are crucial for matching layoffs and worker mobility.

%% file: figures/PaperFigures/Empirics/HHMS_reg.tex
\resizebox{0.8\textwidth}{!}{\begin{tabular}{l*{2}{c}}
\toprule
                    &\multicolumn{1}{c}{(1)}&\multicolumn{1}{c}{(2)}\\
                    &\multicolumn{1}{c}{Net poaching}&\multicolumn{1}{c}{Net nonemployment}\\
\midrule
$\Delta$ unempl. rate &      -1.366 &       1.530\\
                    &     (0.557)         &     (0.540)         \\
Unempl. rate (HP filtered)&      -0.492 &       0.540 \\
                    &     (0.215)         &     (0.259)         \\[1ex]
\midrule
Observations        &          92         &          92         \\
\bottomrule
\multicolumn{3}{p{0.8\linewidth}}{\footnotesize Robust standard errors in parentheses.}
\end{tabular}}

%% file: figures/PaperFigures/CalibrationTable_oldExitRates.tex
\begin{tabular}{llc}
						\toprule
						\textbf{Parameter} & \textbf{Description} & \textbf{Value} \\
						\midrule
														& \textbf{Externally Calibrated} &  \\
						\midrule
						$\nu$ & Risk aversion 	&2.000\\
						$\beta$ & Discounting 	&0.990\\
						$r_f$ & Real interest rate 	&0.011\\
						$(\varrho,\{\mu,\sigma\}_{1,2})$ & Betas' share and shapes of initial human capital dist., High-School 	&(0.57,3.47,7.86,8.89,0.11)\\
						$(\varrho_g, \{\mu_g,\sigma_g\}_{1,2})$ & Betas' share and shapes of initial human capital dist., College    	&(0.52,4.62,8.72,9.24,0.08)\\
						$(\rho_A,\sigma_A)$ & Mean and std of TFP process & (0.94, 0.009)\\
						\midrule
														& \textbf{Jointly Estimated} &  \\
						\midrule
						$\alpha$ & Production function elasticity to firm quality 	&0.411\\
						$\gamma$ & Matching function 	&1.067\\
						$\phi$ & Human capital adjustment rate, High School 	&0.154\\
						$\phi_g$ & Human capital adjustment rate, College 	&0.287\\
						$b$ & Unemployment benefit 	&0.489\\
						$\delta_b$ &  Exogenous separation prob., High School - old 	&0.028\\
						$\delta_{g,b}$ & Exogenous separation prob., College - old	&0.013\\
						$\kappa$ & Vacancy cost, elasticity 	&2.710\\
						$c_{ee}$ & Vacancy cost, scale for E-E flows 	&0.556\\
						$c_{ue}$ & Vacancy cost, scale for U-E flows	&0.703\\
						$\lambda_e$ & On-the-job-search prob. 	&0.722\\
						$\xi$ & Scaling factor in human capital accumulation 	&0.411\\
						$\xi_b$ & UB dependence on human capital 	&0.082\\
						$l$ & Linear loss of human capital while unemployed, High School 	&0.114\\
						$l_g$ & Linear loss of human capital while unemployed, College 	&0.259\\
						$\tau_{ee}$ & Human capital retention after EE 	& 0.862\\
						$\tau_{eu}$ & Human capital retention after EU 	& 0.874\\
						$x$ & Cyclical component of cost function 	& -4.337\\
						$\underbar{p}$ & Out of labor force threshold & 0.071\\
						$\sigma_{\psi}$ & Std of idiosyncratic human capital shock & 0.668\\
						$\vartheta$ & Upper bound on initial human capital & 0.980\\
						$\underbar{y}$ & Lower bound of firm quality & 2.451\\
						\bottomrule
						\end{tabular}

%% file: figures/PaperFigures/ValidationTable_BaselineLinear.txt
\begin{tabular}{lccc}
  \multicolumn{4}{l}{\textbf{(a)} {Correlation between Model and Data Series} } \\  
  \midrule 
  {} & & \multicolumn{2}{c}{Model} \\
  \cmidrule{3-4}               
   {} & & Baseline & Linear Human Capital Acc. \\
  \midrule
    Aggregate output & & 0.81 & 0.82      \\
    Unemployment     & & 0.44   & 0.01       \\
  \midrule
  \\[0.25em]  %
  \multicolumn{4}{l}{\textbf{(b)} {Correlation with aggregate output} } \\
  \midrule
  {} & Data & \multicolumn{2}{c}{Model} \\
  \cmidrule{3-4}
  {} & {} & Baseline & Linear Human Capital Acc. \\
  \midrule
    Unemployment & -0.66 & -0.56 & -0.01 \\
  \\
 \midrule
  \\[0.25em]  %
  \multicolumn{4}{l}{\textbf{(c)} {Unemployment and Output Volatility} } \\
  \midrule
 {} & Data & \multicolumn{2}{c}{Model} \\
  \cmidrule{3-4}
  {} & {} & Baseline & Linear Human Capital Acc. \\
  \midrule
    Aggregate Output SD & 2.4\% & 3.7\% & 2.5\% \\
    Unemployment SD     & 1.4\% & 0.9\% & 0.3\% \\
  \midrule
  \end{tabular}
  

%% file: figures/PaperFigures/HHMS_comparisonsSmall.tex
\begin{tabular}{l*{2}{c}}
\toprule
                    &\multicolumn{1}{c}{(1)}&\multicolumn{1}{c}{(2)}\\
                    &\multicolumn{1}{c}{Net poaching}&\multicolumn{1}{c}{Net nonemployment}\\
\midrule
\multicolumn{3}{l}{\textit{Panel A: Data}}\\[0.5ex]
Delta. unempl. rate &      -1.366 &       1.530\\
                    &     (0.557)         &     (0.540)         \\

\multicolumn{3}{l}{\textit{Panel B: Baseline Model}}\\[0.5ex]
Delta. unempl. rate &      -1.599 &       0.452\\
                    &     (0.415)         &     (0.124)         \\

\multicolumn{3}{l}{\textit{Panel C: Linear Model}}\\[0.5ex]
Delta. unempl. rate &      -0.251 &       0.035\\
                    &     (0.186)         &     (0.116)         \\

\midrule
Observations (Data)        &          92         &          92         \\
Observations (Models)        &          418         &          418         \\
\bottomrule
\multicolumn{3}{p{0.8\linewidth}}{\footnotesize Robust standard errors in parentheses.}
\end{tabular}

%% file: main.bbl
\ifdefined\printappendixreferences


\else


\fi

%% file: draft_v9_FullAppendix.tex
\input{appendix_Data}

\input{appendix_Empirics}
\input{Appendix_PropositionsAndProofs}
\input{AppendixBRE_short}

\input{appendix_SolutionAndSMM}

\input{appendix_AdditionalFig}

%% file: appendix_Data.tex
\clearpage 

\section{Data construction and sample selection}\label{sect:app_data}

We rely on two main data sources provided by INPS through the VisitINPS Program: i) data on employment relationships in Italy from 1996 to 2018 (\textit{Uniemens} dataset, \citealt{inps_data}) and ii) balance sheet data for incorporated Italian firms from 1994 to 2019 (\textit{Cerved} dataset, \citealt{cerved}).

Starting from data on virtually the universe of Italian private employment relationships at the contract level, we calculated for each worker their yearly gross real earnings and for the workers with multiple contracts we selected the information associated to the highest paid contract in the year and we find the associated annualized income using information on the number of actual weeks worked. For workers that experience a job flow or are promoted after 2009 we obtain the education level from the \textit{Comunicazioni Obbligatorie} dataset, provided to INPS by the Ministry of Labor \citep{comob}, and identify graduates and non-graduates. We label as graduates in the data both workers getting undergraduate education or above, or workers getting specialized diplomas after high-school.    

We restrict our focus on workers employed under either full-time or part-time working schemes between 16 and 65 years old. Moreover, in order to compute AKM fixed effects and maintain the estimation computationally feasible, we analyze the connected set of firms and workers across firms with more than 15 employees. As in \cite{bonhomme2019} we first cluster firms by means of a weighted K-means clustering based on deciles of their yearly wage distribution. The weight for the clustering is the yearly number of employees.

From \textit{Cerved}, we have access to firm balance sheet data. From this dataset we obtain information of value added, overall labor costs and calculate quantiles of value added per employee. All monetary values (wages, value added, total compensation) are trimmed at the 1\% level and deflated by the consumer price index for Italy. 

We use two alternative classification for firm qualities: i) quantiles based on yearly value added per worker, and ii) firm-specific fixed effects estimated following \cite{bonhomme2019} and \cite{Abowd1999a}. As in \cite{bonhomme2019}, we first cluster firms by means of a weighted K-means clustering based on deciles
of their yearly wage distribution, weighting for the yearly number of
employees, and then run a two-way fixed effect regression on workers and these firm clusters.

When looking at worker flows, our preferred specification uses time-varying productivity measures because we are aiming to capture firms' relative ranking as perceived by workers at the time of their employment decisions.  From the perspective of a job seeker, a firm's current productivity---rather than an average over windows that might include past history or future values---is the relevant signal for both expected wages and learning opportunities.

To understand the representativeness our data, it is useful to report some statistics on informality. Historically, informality hovered around $10-12\%$  until 2020 but declined to $8-10\%$ in the post-pandemic period (see \textit{Labour Force Statistics (LFS) database, ILOSTAT}).  While data linking firm productivity to informality is not available for Italy, \cite{dalla2025non} provide regional estimates according to which we can observe the reliance on informality to be larger in regions with less productive firms. In particular, informality in the North is reported to be between $6\%$ and $7.5\%$, while in the informality in the South is between $10\%$ and $13.5\%$. 

We take as recession OECD recession dates, the shadowed areas in \textbf{Figure~\ref{fig:ita_u_rate_q}}, where we plot the quarterly unemployment rate in Italy from 1998-Q1 to 2018-Q4. Aside from the anomaly represented by the 2001 episode, our identification of recessions mainly comes from the great financial crisis of 2008-2009, the sovereign debt crisis of 2011 and the recession at the end of the 90s. For this reason we benchmark our recessions experiments in the results section of the paper to the measured loss in TFP during the sovereign debt crisis.

\begin{figure}[!h]
  \caption{Unemployment rate and OECD recessions\label{fig:ita_u_rate_q}}
  \includegraphics[width=.48\linewidth]{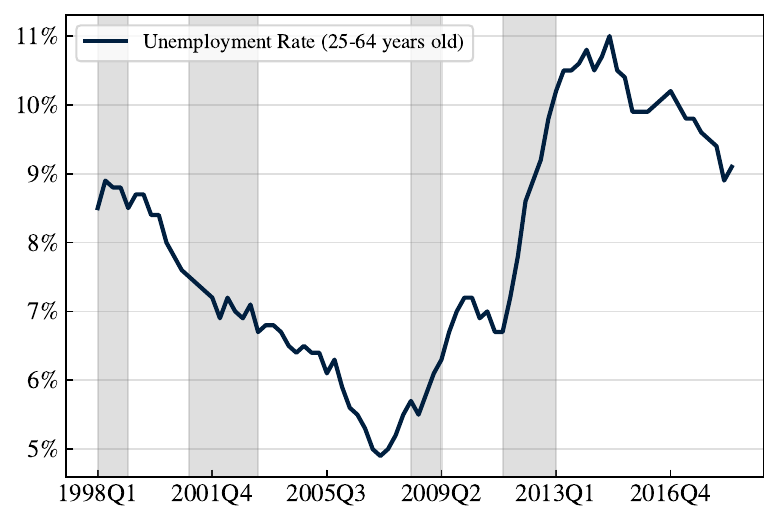}
  \floatfoot{\textbf{Note}: The figure shows the quarterly unemployment rate in Italy and the OECD recession dates (gray areas).}
\end{figure}

%% file: appendix_Empirics.tex
\section{Additional Figures and Tables, Empirics}\label{sect:empirics}

\subsection{Definition of displacement \label{sect:displ}}

We identify displacement events as a decrease in employment beyond 30\% for establishment with more than 49 employees. Displaced workers are separating workers younger than 50 years old, with at least 3 years of tenure within the firm. Displaced workers cannot be recalled, and we perform additional checks to control for false establishment closures (mergers, acquisitions, reorganizations within the same firm/group). We consider as ``false closures'' all instances when more than 20\% of employment is transferred to the same receiving firm, are excluded.

\subsection{Returns to Experience}

We discussed the role of firms for human capital accumulation in \textbf{Section \ref{sect2}}. Here, we present additional evidence that link firms to the on-the-job learning. 
What we want to test in this section is whether firms that are overall more productive are also better learning environments. To do so, we construct firm classes in two alternative ways: by quintile of firm-level value-added per employee, and by quintile of AKM fixed effects.
Then, we construct a measure of experience in each firm-class for each worker $i$. 

We can then estimate the following wage equation,
\begin{equation}\label{app_eqn:experience_profiles_abs}
\log(w_{i,t}) = \alpha_i + \alpha_{j(i,t)} + \sum_{c = 1}^5 \beta_c e^c_{i,t} + \mathbf{X}'_{i,t} \mathbf{\theta} + \varepsilon_{i,t},
\end{equation}

where $w_{i,t}$ is a monthly wage, $\alpha_i$ and $\alpha_{j(i,t)}$ are worker and firm-class fixed effects, $e_{i,t}^c$ are the number of years worker $i$ has worked in firms belonging to class $c$ up to year $t$, and $\mathbf{X}_{i,t}$ are a set of controls that include age, sector-year and contract type (temporary versus open-ended, full-time versus part-time) fixed effects. A challenge to the identification of experience effects is the presence of firm-specific skills, outside offers and bargaining, match effects, firm productivity shocks, or seniority based pay schemes. However, since these should not impact post-displacement wages, we leverage the
population-level coverage of both datasets and extend our results by presenting the same specification restricted to the sample of workers affected by firm closure and mass layoff events. 

\begin{figure}[t]
\centering 
\caption{Experience returns by firm type\label{abs_fig}}
\subcaptionbox{Full sample}{\includegraphics[width=0.5\textwidth]{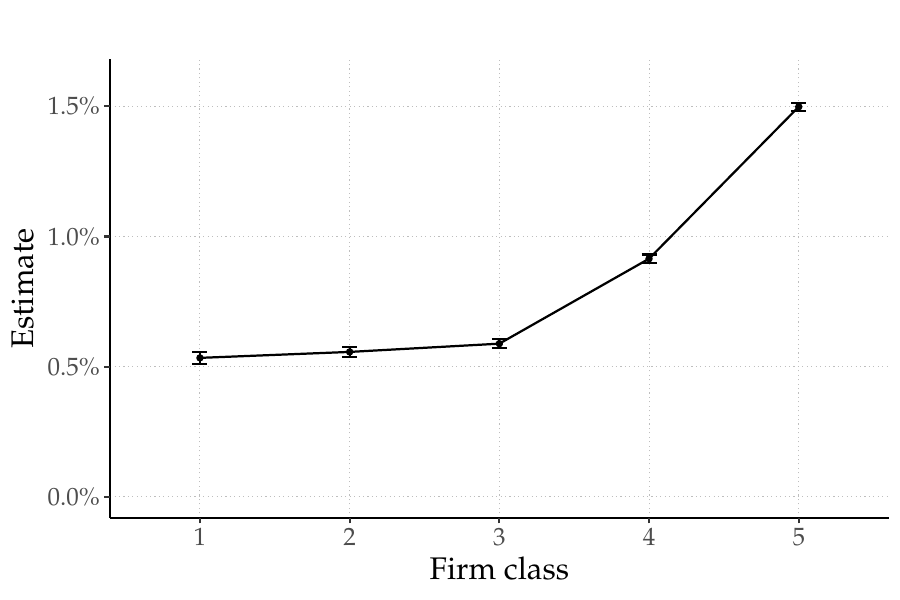}}%
\subcaptionbox{Displaced}{\includegraphics[width=0.5\textwidth]{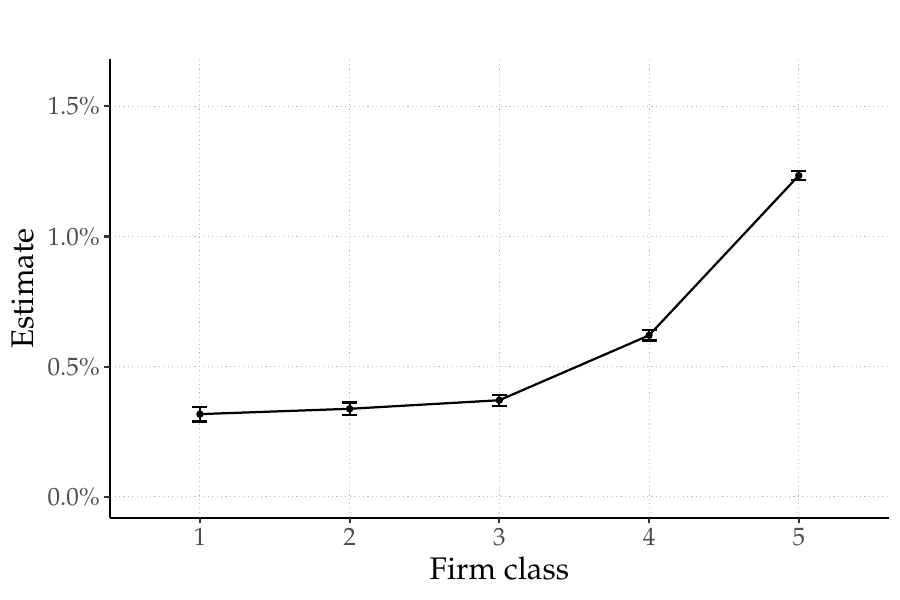}}
\floatfoot{\textbf{Note:} The figure reports the returns the returns to experience estimated from Equation~\eqref{app_eqn:experience_profiles_abs}. The full sample includes all workers employed in private firms in Italy. Firm classes are based of quintiles of value added per employee. Displaced workers are identified from mass layoffs. When estimating the returns for the sample of displaced workers we include the worker and firm-class fixed effects estimated on the full sample as regressors. Contract FEs include controls for temporary versus open-ended, full-time versus part-time, and wage clusters fixed effects. }
\end{figure}

\begin{figure}
\centering 
\caption{Experience returns by firm type, AKM FEs\label{abs_fe_fig}}
\subcaptionbox{Full sample, firm class based on AKM FE}{\includegraphics[width=0.5\textwidth]{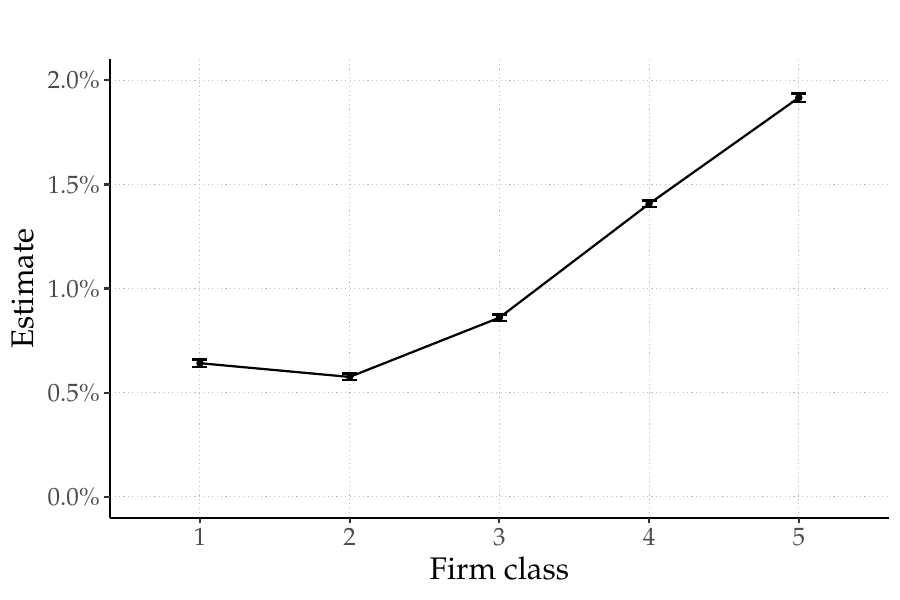}}%
\subcaptionbox{Displaced, firm class based on AKM FE}{\includegraphics[width=0.5\textwidth]{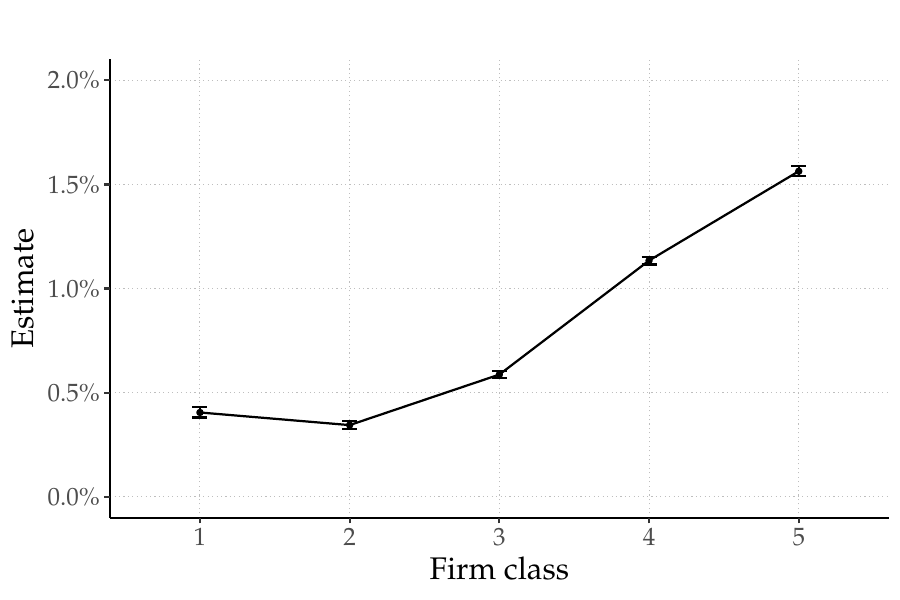}}
\floatfoot{\textbf{Note:} The figure reports the returns the returns to experience estimated from Equation~\eqref{app_eqn:experience_profiles_abs}. The full sample includes all workers employed in private firms in Italy. Firm classes are based on quintiles of wage clusters AKM FEs. Displaced workers are identified from mass layoffs. When estimating the returns for the sample of displaced workers we include the worker and firm-class fixed effects estimated on the full sample as regressors. Contract FEs include controls for temporary versus open-ended, full-time versus part-time, and wage clusters fixed effects. }
\end{figure}

\textbf{Figure \ref{abs_fig}} presents returns to heterogeneous experiences resulting from the estimation of \textbf{Equation \ref{eqn:experience_profiles}} for both the full sample and the sample of displaced workers' first post-displacement observation.

\begin{table}
\centering
\caption{Experience returns\label{tab:controlling_unemp}}
\begin{tabular}{lccc}
\toprule
& \multicolumn{2}{c}{Displaced} & Full \\
\cmidrule{2-3}
& (1) & (2) & (3) \\
\midrule 
Value Added Q1 (In) & 0.003 & 0.004 & 0.006\\
 & (0.000) & (0.000) & (0.000)\\
Value Added Q1 (Out) & 0.003 & 0.003 & 0.005\\
 & (0.000) & (0.000) & (0.000)\\
Value Added Q2 (In) & 0.004 & 0.004 & 0.006\\
 & (0.000) & (0.000) & (0.000)\\
Value Added Q2 (Out) & 0.002 & 0.002 & 0.005\\
 & (0.000) & (0.000) & (0.000)\\
Value Added Q3 (In) & 0.004 & 0.004 & 0.006\\
 & (0.000) & (0.000) & (0.000)\\
Value Added Q3 (Out) & 0.003 & 0.003 & 0.005\\
 & (0.000) & (0.000) & (0.000)\\
Value Added Q4 (In) & 0.007 & 0.007 & 0.010\\
 & (0.000) & (0.000) & (0.000)\\
Value Added Q4 (Out) & 0.006 & 0.006 & 0.009\\
 & (0.000) & (0.000) & (0.000)\\
Value Added Q5 (In) & 0.012 & 0.012 & 0.015\\
 & (0.000) & (0.000) & (0.000)\\
Value Added Q5 (Out) & 0.011 & 0.011 & 0.012\\
 & (0.000) & (0.000) & (0.000)\\
AKM FE Firm & 0.961 & 0.961 & \\
 & (0.002) & (0.002) & \\
AKM FE Worker & 1.105 & 1.105 & \\
 & (0.001) & (0.001) & \\
Unempl. duration &  & 0.000 & \\
 &  & (0.000) & \\
\midrule
\textit{Fixed effects} &  &  &  \\
Worker FE              &  &  & Y \\
Contract type          & Y & Y & Y \\
Age                    & Y & Y & Y \\
Sector $\times$ Year   & Y & Y & Y \\ 
\midrule
$R^2$ & 0.85 & 0.85 & 0.85 \\
N & 3,907,678 & 3,907,678 & 98,441,671\\
\bottomrule
\end{tabular}
\floatfoot{\textbf{Note:} The table reports the estimated experience profiles based on \textbf{Equation~\ref{eqn:experience_profiles}} and an extension in which we control for unemployment duration in the sample of the displaced. Contract FEs include controls for temporary versus open-ended, full-time versus part-time, and wage clusters fixed effects. Standard errors clustered at the worker level.}
\end{table}

We find that more productive firms increasingly contribute to the accumulation of skills, resulting in higher ex-post wages. Results are consistent across specifications, consistent with heterogeneous returns being driven by accumulation of portable skills. \textbf{Table~\ref{tab:controlling_unemp}} shows this result is robust to controlling for the duration of unemployment of displaced workers as well. Finally, since the elasticity of human capital to firm quality is different for college graduates and non-college graduates in the calibrated model, we present evidence of this feature in the data. We extend our baseline empirical specification in \textbf{Equation \ref{eqn:experience_profiles}} to allow the returns to experience to vary by firm-productivity class and education. Specifically, we estimate the experience returns for college graduates and non-college workers and recover the experience profiles across firm-productivity quintiles for each group. To do this, we run:

\begin{equation}\label{eqn:experience_profiles_educ}
\log(w_{i,t}) = \alpha_i + \alpha_{j(i,t)} + \sum_{c = 1}^5 \beta_c e^c_{i,t} + \sum_{c = 1}^5 \gamma_c e^c_{i,t} \times \mathds{1}\{ i\;\text{has College Deg.}\; \} + \mathbf{X}'_{i,t} \mathbf{\theta} + \varepsilon_{i,t},
\end{equation}

\textbf{Figure \ref{fig:educ_abs}} shows that profiles of experience returns across firm-productivity quintiles differ systematically between college and non-college workers, and the implied sensitivity of experience payoffs to firm quality is larger for college workers, in line with the calibration.

We also control for the portability of our estimated human capital gains by focusing on workers that switch sector or employer.\footnote{Absent data on occupation, using sectoral switches is the best strategy to account for occupational experience à la \cite{kambourov2009occupationalresd}.} We use 2 digits sectors for this exercise. For job stayers, our estimates can naturally capture both firm-specific and portable components of experience. For job switchers, by contrast, firm seniority in the new employer is reset to zero, so any effect of past experience that operates across productivity classes is more plausibly interpreted as reflecting portable (rather than strictly firm-specific, or sector-specific) human capital.
Concretely, we distinguish in the regression identifying (i) workers who remain with the same employer (or sector) and (ii) workers observed after a job (or sector) change. For switchers, we relate their wages in the new firm to their prior cumulative experience and to the productivity class of the current firm. We then run:

\begin{equation}\label{eqn:experience_profiles_switch}
\log(w_{i,t}) = \alpha_i + \alpha_{j(i,t)} + \sum_{c = 1}^5 \beta_c e^c_{i,t} + \sum_{c = 1}^5 \gamma_c e^c_{i,t} \times \mathds{1}\{f(i,t) \neq f(i,t-1) \} + \mathbf{X}'_{i,t} \mathbf{\theta} + \varepsilon_{i,t},
\end{equation}

As indicated by \textbf{Figure~\ref{fig:abs_switchers}}, we continue to find that workers obtain greater experience wage returns in more productive firms, even when we condition on switching employers. This pattern indicates that the higher returns to experience in high-productivity firms are not driven solely by firm-specific tenure. Instead, a substantial component of experience appears to be portable, and its payoff is higher in high-productivity firms.

\begin{figure}[t]
\centering
\caption{Returns on experience by education level}\label{fig:educ_abs}
\subcaptionbox{Full sample}{%
  \includegraphics[width=.45\linewidth]{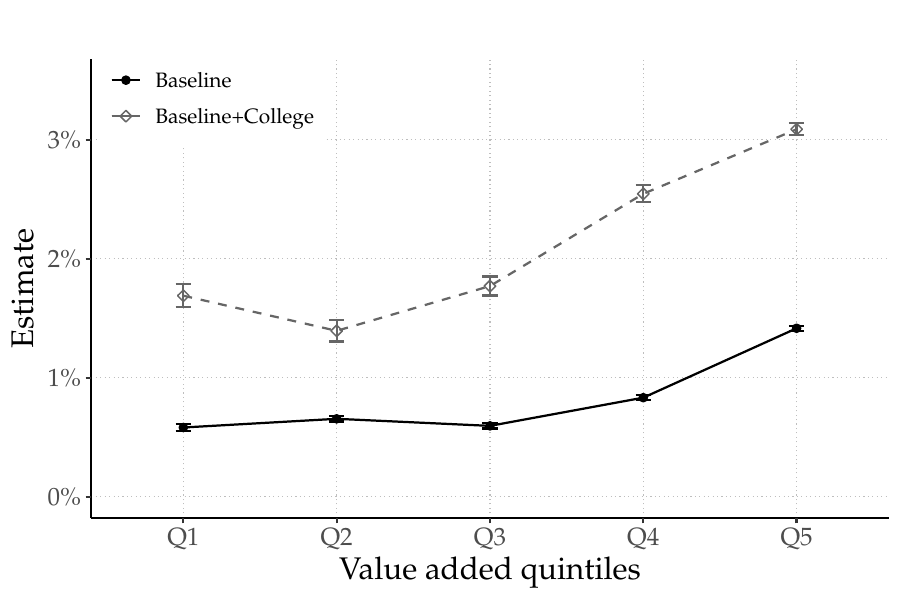}}%
\hfill
\subcaptionbox{Displaced}{%
  \includegraphics[width=.45\linewidth]{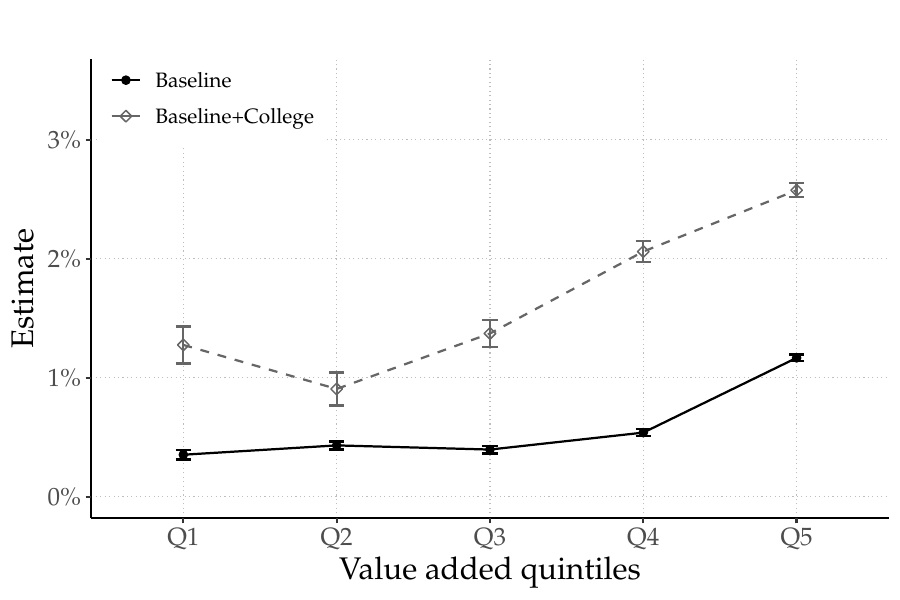}}%
\floatfoot{\textbf{Note}: The figure reports the returns to experience in firms across the productivity distribution for college and non-college graduates following Equation~\eqref{eqn:experience_profiles_educ}. Panel (a) shows the estimates on the full sample while Panel (b) reports them for the sample of displaced workers.}
\end{figure}

\begin{figure}[t]
\caption{Experience returns by sectoral and firm switchers\label{fig:abs_switchers}}
\includegraphics[width=\textwidth]{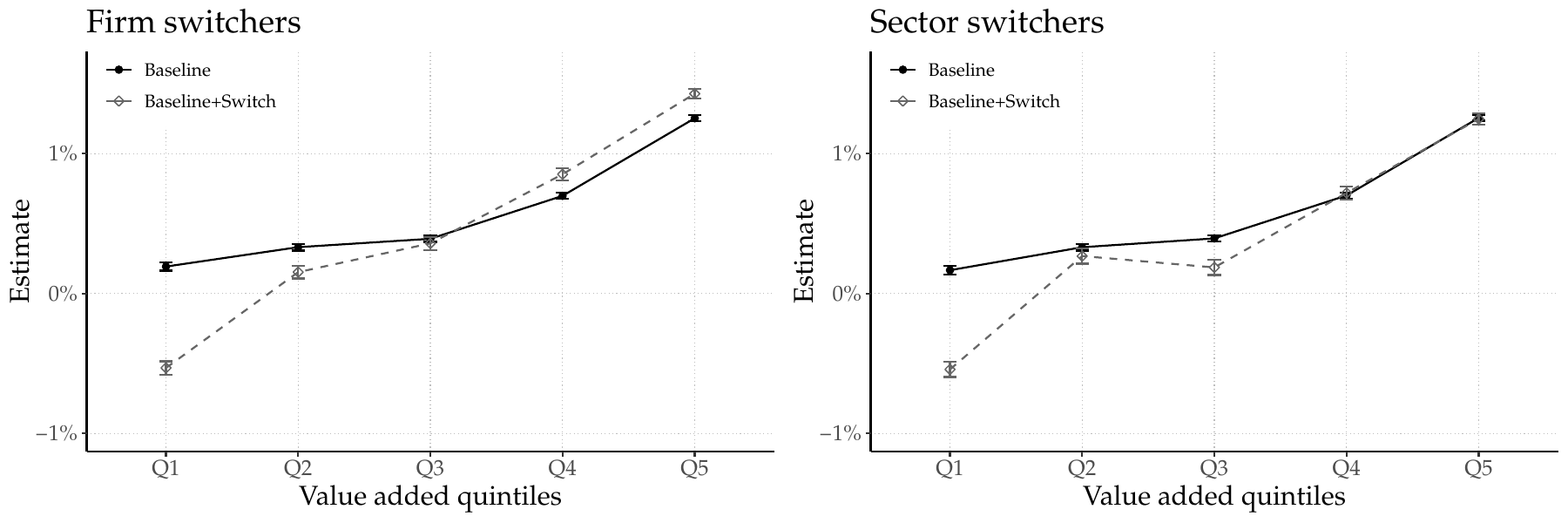}
\floatfoot{\textbf{Note}: The figure reports the estimates of the returns to experience across the firm productivity distribution for workers that switch jobs (\textit{Firm switchers}) and workers that switch sectors (\textit{Sector switchers}). For switchers, the presented effects are already the sum of the baseline and interaction coefficients.}
\end{figure}

Another implication of our modelling approach is that the accumulation of skills at a certain firm cannot go on indefinitely: as tenure in a firm of a certain productivity increases, the skill gains progressively decrease. A simple way to test this hypothesis is to introduce a quadratic experience term, to verify that the class-specific experience profile is concave, that is running:

\begin{equation}\label{eqn:experience_profiles_abs_sq}
\log(w_{i,t}) = \alpha_i + \alpha_{j(i,t)} + \sum_{c = 1}^5 \beta_c e^c_{i,t} + \sum_{c = 1}^5 \gamma_c \left( e^c_{i,t} \right)^2 + \mathbf{X}'_{i,t} \mathbf{\theta} + \varepsilon_{i,t},
\end{equation}

Results for \textbf{Equation \ref{eqn:experience_profiles_abs_sq}} are presented in \textbf{Table \ref{abs_sq_table}}. The squared term is consistently negative or very close to 0, while the linear term remains positive and increasing in firm class. This confirms the initial hypothesis of decreasing returns to experience from spending time in a given firm class, and is consistent with the model intuition that, for workers to increase earnings over the life cycle, they need to climb a `firm ladder' between jobs of different quality.

\begin{table}
\begin{center} 
\caption{Quadratic wage experience profiles by firm type}
\label{abs_sq_table}
\begin{tabular}{lcc|cc}
\toprule
 & Full & Displaced & Full & Displaced \\
 & (1) & (2) & (3) & (4) \\
\midrule
Exp. in quantile 1      & 0.006 & 0.004 & 0.006 & 0.004 \\
                        & (0.000) & (0.000) & (0.000) & (0.000) \\
Exp$^2$ in quantile 1   & -3e-5 & -9e-5 & 5e-5 & 2e-5 \\
                        & (1e-5) & (2e-5) & (<1e-5) & (1e-5) \\
Exp. in quantile 2      & 0.005 & 0.001 & 0.004 & 0.001 \\
                        & (0.000) & (0.000) & (0.000) & (0.000) \\
Exp$^2$ in quantile 2   & 7e-5 & 0.0002 & 0.0001 & 0.0001 \\
                        & (1e-5) & (1e-5) & (<1e-5) & (1e-5) \\
Exp. in quantile 3      & 0.007 & 0.003 & 0.009 & 0.005 \\
                        & (0.000) & (0.000) & (0.000) & (0.000) \\
Exp$^2$ in quantile 3   & -5e-5 & 9e-5 & -1e-5 & 7e-5 \\
                        & (<1e-5) & (1e-5) & (<1e-5) & (1e-5) \\
Exp. in quantile 4      & 0.010 & 0.007 & 0.020 & 0.015 \\
                        & (0.000) & (0.000) & (0.000) & (0.000) \\
Exp$^2$ in quantile 4   & -0.0001 & -9e-5 & -0.0003 & -0.0002 \\
                        & (<1e-5) & (1e-5) & (<1e-5) & (1e-5) \\
Exp. in quantile 5      & 0.019 & 0.016 & 0.035 & 0.027 \\
                        & (0.000) & (0.000) & (0.000) & (0.000) \\
Exp$^2$ in quantile 5   & -0.0003 & -0.0002 & -0.0009 & -0.0006 \\
                        & (<1e-5) & (1e-5) & (<1e-5) & (1e-5) \\
 &  &  &  &  \\
N & 98,447,028 & 3,907,678 & 98,447,028 & 3,907,678 \\
$R^2$ & 0.86 & 0.85 & 0.86 & 0.85 \\
\midrule
Worker FEs & \checkmark & \checkmark & \checkmark & \checkmark \\
Contract FEs & \checkmark & \checkmark & \checkmark & \checkmark \\
Sector-Year FEs & \checkmark & \checkmark & \checkmark & \checkmark \\
Quant. type & VA & VA & AKM FE & AKM FE \\
\bottomrule
\end{tabular}
\end{center}
\floatfoot{\textbf{Note:} The table reports the returns the returns to experience estimated from \textbf{Equation~\eqref{app_eqn:experience_profiles_abs}}. The full sample includes all workers employed in private firms in Italy. Displaced workers are identified from mass layoffs. When estimating the returns for the sample of displaced workers we include the worker and firm-class fixed effects estimated on the full sample as regressors. Contract FEs include controls for temporary versus open-ended, full-time versus part-time, and wage clusters fixed effects. }
\end{table}

Finally, because human capital is inherently unobservable and we cannot directly observe learning on the job, it is reassuring to document sources of wage growth that are less naturally attributed to alternative mechanisms---such as learning about one's own skills or occupational sorting. To this end, we show that within-spell wage growth is systematically related to the relative position of workers and firms in the productivity distribution, as shown in \textbf{Figure \ref{fig:japan}}. Conditional on worker and firm fixed effects, sector-by-year controls, experience, and contract characteristics, workers employed at firms that are higher-ranked relative to their own position experience faster wage growth, consistent with a gradual ``catch-up'' process within matches. This feature is central to our contribution: regardless of the primitive source of learning, firms act as differential learning environments, and access to higher-quality firms accelerates within-spell wage growth.

\begin{figure}[!h]
\centering 
\caption{Within spell wage growth and differences between firm and worker qualities\label{fig:japan}}
\subcaptionbox{Wage growth and rank differences \label{fig:japan_rank}}{\includegraphics[width=0.5\textwidth]{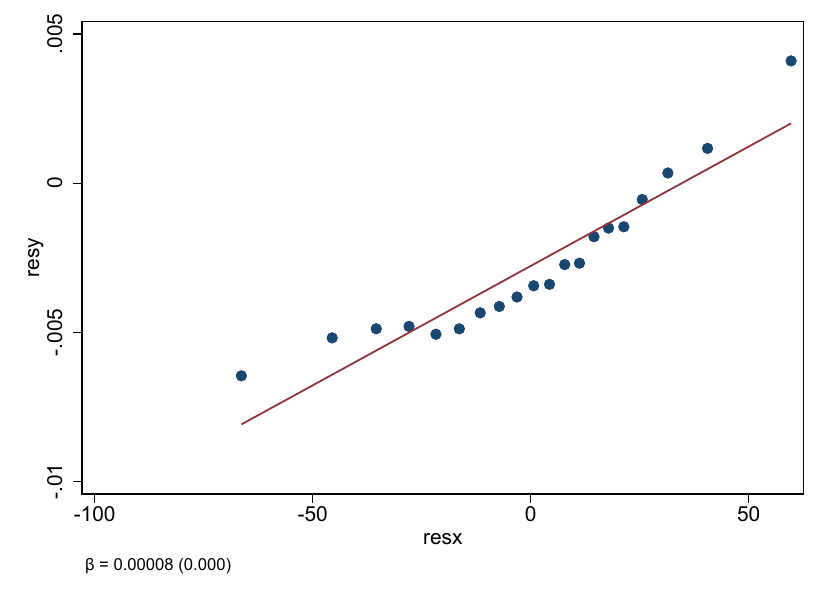}}%
\subcaptionbox{Wage growth and AKM FE differences\label{fig:japan_fes}}{\includegraphics[width=0.5\textwidth]{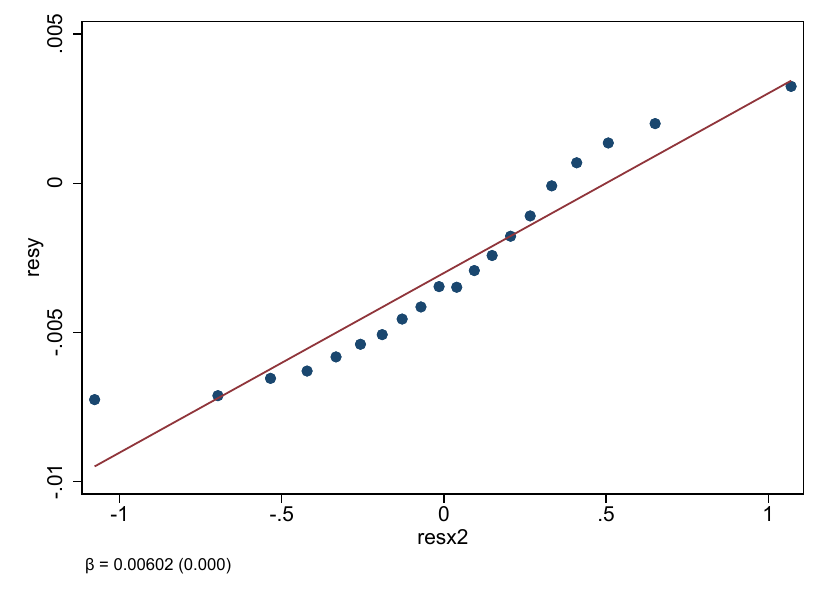}}
\floatfoot{\textbf{Note}: The figure plots the residual average wage growth within an employment spell (on the y-axis) against the difference between firm and worker qualities at the beginning of the spell (on the x-axis). Panel (a) uses the difference between the firm's rank in the value added per employee distribution and the rank of the worker in the distribution of estimated AKM fixed effects as a measure of qualities. Panel (b) instead uses the differences in the estimated AKM firm and worker effects. Both quality measures and the wage growth are residualized by sector-year, sex, local labor market, contract type, qualification and education level fixed effects. Each bin represents 5\% of the underlying data, the fitted line is based on the underlying data.}
\end{figure}

An alternative explanation for finding increasing returns to experience in higher firm classes is that better firms are also better at screening out bad workers (also once the job starts). If this were the case, then the amount spent in each firm class would be a function of match quality, which would be difficult to control for in our specification.

\textbf{Figure~\ref{fig:seprates}} below plots the average separation rates--unconditional (panel a) and conditional on match characteristics (panel b)--along the worker and firm quality distributions. The figure allows us to see directly whether high-productivity firms exhibit systematically higher separation rates for low-quality workers with respect to other workers, and are more selective than correspondingly lower productivity firms---what we would expect if more productive firms mainly operated through very strong ex-post screening of bad matches. We do not find evidence of such a pattern: separation rates do not spike for low-quality workers in high-productivity firms; if anything, separation rates tend to be lower and relatively similar in more productive firms across all worker types. Thus, the raw data do not suggest that short employment spells in high-productivity firms are concentrated among low-quality workers. If anything, lower productivity firms seem to be more selective in screening out bad matches, as relative separation rates between low- and high-productivity workers are greater in these firms.

This insight is largely confirmed once we control for worker characteristics. In the right panel of \textbf{Figure \ref{fig:seprates}} we report normalized separation rates after estimating a linear probability model of separation and controlling for contract type, qualifications, age, sex, and sector-year fixed effects. It becomes apparent that higher-productivity firms are slightly more selective in separating the lowest-quality workers relative to higher-quality ones. This greater selectivity, however, is mostly absent for other worker groups. The ratio of separation rates for lowest- to highest-quality workers is very similar across low- and high-productivity firms, and is \textit{lower} in high-productivity firms for all workers except the lowest-quality group. Overall, the most productive firms do not appear to shed low-quality workers relative to high-quality workers at markedly different rates than low-productivity firms.

\begin{figure}[!h]
\centering
\caption{Separation rates \label{fig:seprates}}
\subcaptionbox{Unconditional}{%
  \includegraphics[width=.45\linewidth]{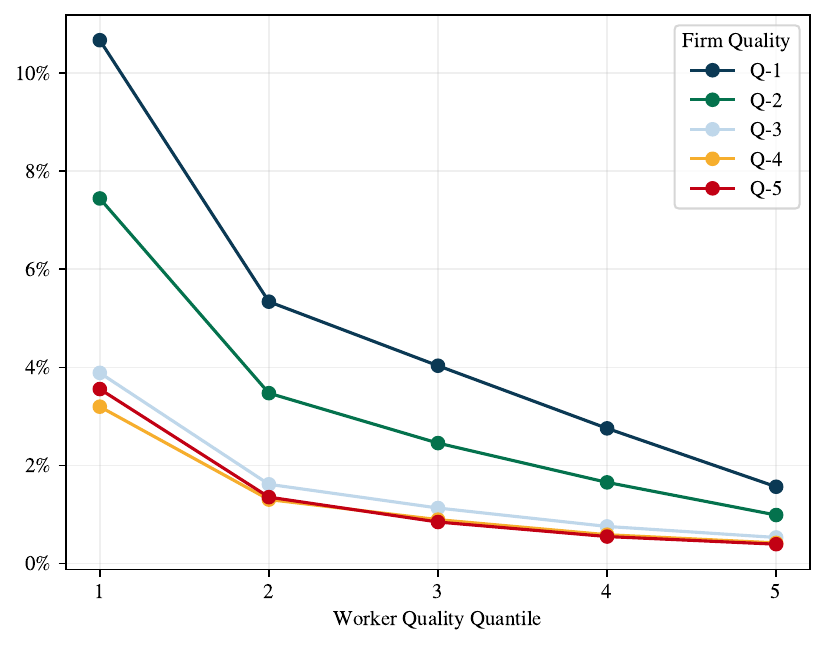}}%
\hfill
\subcaptionbox{Conditional}{%
  \includegraphics[width=.45\linewidth]{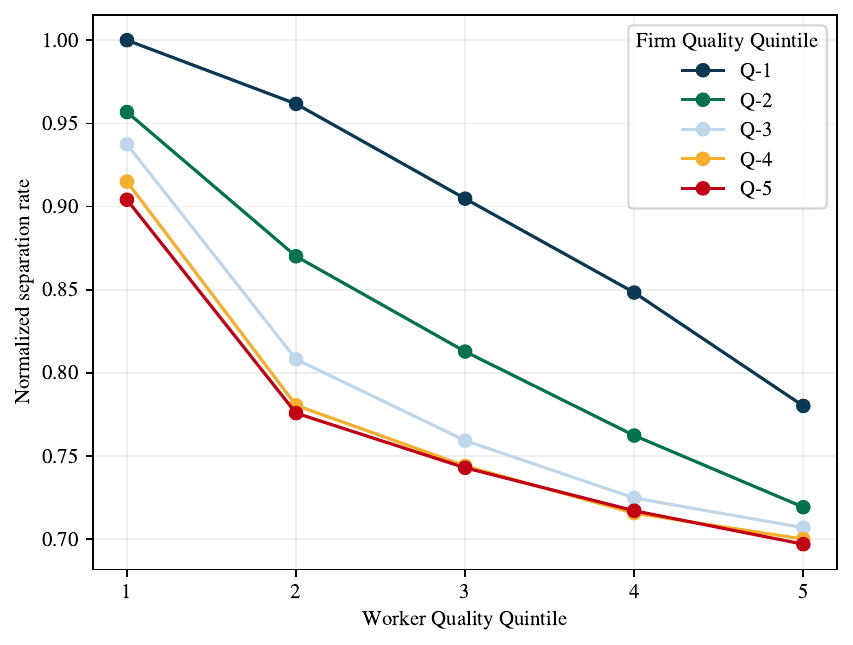}}%
\floatfoot{\textbf{Note}: The figure shows the average separations by firm and worker qualities in our sample. Panel (a) reports the unconditional average of employment--unemployment transition in our sample, by firm quality, for each type of worker. Panel (b) instead reports the conditional separation probabilities (normalized for Worker Q1 and firm Q1), residualizing for contract type, qualifications, age, sex and sector-year fixed effects the probability of separation in a linear probability model. In this panel all probabilities are relative to the separation rates of the lowest quality workers in the lowest productivity firms.}
\end{figure}

\subsection{Persistent effects of firm quality on workers' earnings}

In this section we report the coefficients for the specification in \textbf{Equation \ref{eq:eue_reg}} and \textbf{Figure \ref{fig:EUE_displ}} in \textbf{Table \ref{tab:EUE_empirical}}. In \textbf{Figure \ref{fig:EUE_empirical_akm}} and \textbf{Table \ref{tab:eue_table_fe}} we report an analogous robustness specification, in which the firm class is a quintile of estimated AKM fixed effects on firm cluster types as in \cite{bonhomme2015}. We also provide additional evidence that within-spell wage growth is systematically related to the relative position of workers and firms in the productivity distribution in \textbf{Figure~\ref{fig:japan}}.

\begin{figure}[t]
\centering
\caption{Persistent effect of past firms for workers\label{fig:EUE_empirical_akm}}
\subcaptionbox{Displaced workers \label{fig:EUE_displ_fe}}{\includegraphics[width=0.5\textwidth]{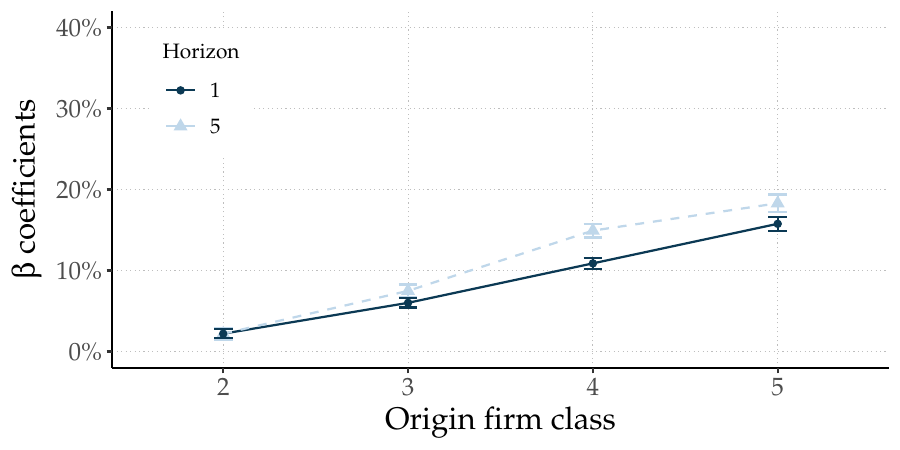}}%
\subcaptionbox{Short unemployment spells\label{fig:EUE_full_fe}}{\includegraphics[width=0.5\textwidth]{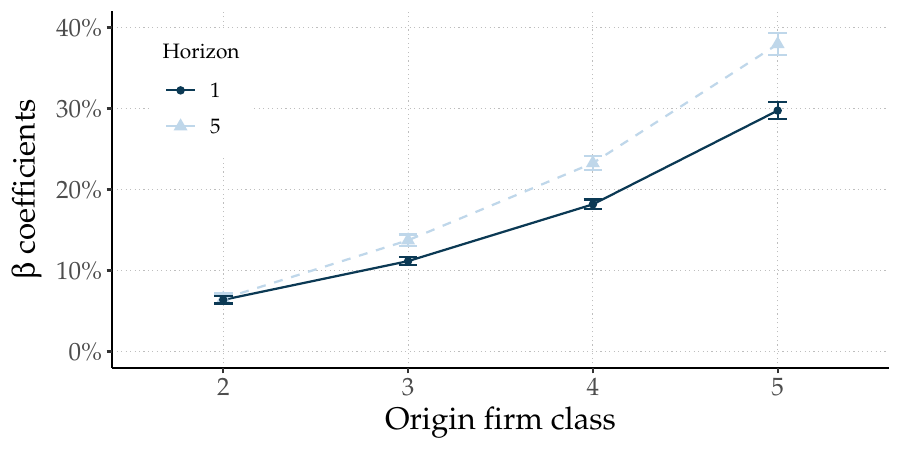}}
\vspace{-1em}
\floatfoot{\textbf{Note:} The table figure reports the main coefficient of interest for the regression in \textbf{Equation~\ref{eq:eue_reg}}. The displaced workers sample includes workers that make E-U-E transitions following mass layoffs, while the short unemployment spell sample identifies all E-U-E transitions in which workers are employed for at least 4 quarters after an unemployment spell of less than one year. Firm quality is defined as (yearly) quintiles of firm cluster AKM fixed effect.}
\end{figure}

\begin{table}
\caption{Importance of firm quality for workers' earnings after E-U-E transitions}\label{tab:EUE_empirical}
\begin{tabular}{lcccc}
\toprule
            & \multicolumn{2}{c}{Displaced workers} & \multicolumn{2}{c}{Short unemployment spells}   \\
\cmidrule(lr){2-3} \cmidrule(lr){4-5}
        & Log-Wage$_{t}$ & Log-Wage$_{t+5}$  & Log-Wage$_{t}$ & Log-Wage$_{t+5}$  \\
        & (1) & (2)  & (3) & (4)  \\
\midrule
Firm Prod. Q2   & 0.015 & 0.011 & 0.020 & 0.008\\
                & (0.002) & (0.003) & (0.002) & (0.003)\\
Firm Prod. Q3   & 0.032 & 0.028 & 0.048 & 0.040\\
                & (0.002) & (0.003) & (0.002) & (0.003)\\
Firm Prod. Q4   & 0.043 & 0.053 & 0.078 & 0.073\\
                & (0.002) & (0.003) & (0.002) & (0.003)\\
Firm Prod. Q5   & 0.075 & 0.101 & 0.124 & 0.134\\
                & (0.002) & (0.003) & (0.002) & (0.003)\\
\midrule
Controls & Yes & Yes & Yes & Yes \\
$R^2$  & 0.75 & 0.69 & 0.53 & 0.46 \\
N  & 394,821 & 312,387 & 417,958 & 211,043 \\
\bottomrule
\end{tabular}
\floatfoot{\textbf{Note}: Standard errors clustered at the worker level in parenthesis. The table reports the main coefficient of interest for the regression in \textbf{Equation~\ref{eq:eue_reg}}. The displaced workers sample includes workers that make E-U-E transitions following mass layoffs, while the short unemployment spell sample identifies all E-U-E transitions in which workers are employed for at least 4 quarters after an unemployment spell of less than one year. Firm quality is defined as (yearly) quintiles of firm value added per employee.}
\end{table}

\begin{table}
\centering
\caption{E-U-E wage regressions by past firm AKM firm cluster FE}
\label{tab:eue_table_fe}
\begin{tabular}{lcccc}
\toprule
 & \multicolumn{2}{c}{Displaced} & \multicolumn{2}{c}{Short unemployment spells} \\
 & Log-wage$_{t+1}$ & Log-wage$_{t+5}$ & Log-wage$_{t+1}$ & Log-wage$_{t+5}$ \\
 & (1) & (2) & (3) & (4) \\
\midrule
AKM FE Q2       & 0.022 & 0.022 & 0.064 & 0.065\\
                & (0.002) & (0.003) & (0.002) & (0.002)\\
AKM FE Q3       & 0.060 & 0.075 & 0.112 & 0.137\\
                & (0.002) & (0.003) & (0.002) & (0.003)\\
AKM FE Q4       & 0.109 & 0.149 & 0.182 & 0.233\\
                & (0.003) & (0.003) & (0.002) & (0.003)\\
AKM FE Q5       & 0.158 & 0.183 & 0.298 & 0.380\\
                & (0.003) & (0.004) & (0.004) & (0.005)\\
 \midrule
Controls & Yes & Yes & Yes & Yes \\
$R^2$ & 0.75 & 0.69 & 0.54 & 0.47\\
N & 394,821 & 312,387 & 417,961 & 211,044\\
\bottomrule
\end{tabular}
\floatfoot{\textbf{Note}: Standard errors clustered at the worker level in parenthesis. The table reports the main coefficient of interest for the regression in \textbf{Equation~\ref{eq:eue_reg}}. The displaced workers sample includes workers that make E-U-E transitions following mass layoffs, while the short unemployment spell sample identifies all E-U-E transitions in which workers are employed for at least 4 quarters after an unemployment spell of less than one year. Firm quality is defined as (yearly) quintiles of firm cluster AKM fixed effects.}
\end{table}

\paragraph{The role of workers' beliefs.} Clearly, the mere existence of a relationship between past wage and unemployment duration would not be at odds with our main assumptions, unless it is information \textit{on the past employer} that drives longer unemployment spells, suggesting a change of behavior that is independent of inherent worker characteristics. A first way to reassure the reader that this is not the case is to look jointly at how earnings and employer characteristics pre-separation determine unemployment duration. To this end, we run:

\begin{equation}\label{eq:coh_wage}
\text{Unemp.\;Duration}_{i} = \alpha\log{w_i} + \sum_{j=1}^5\beta_j\mathbb{I}\{\text{Origin Employer}_i \in Q_j\} + \mathbf{X}'_i \boldsymbol{\theta} \ + \varepsilon_i
\end{equation}

\textbf{Table \ref{tab:EUE_hor_wage}} shows that, while higher wage \textit{per se} increases unemployment duration, workers coming from very productive firms are the ones who are re-employed faster. Of course, both the firm characteristic and the past wage at this stage contain information on the characteristic of the worker, on their past tenure, and their beliefs, so we cannot directly interpret the results of this exercise as denying the possibility of workers' beliefs as driving these results. However, this could be concerning if we saw evidence of past job characteristics being overall associated with longer unemployment spells, which we do not find. What we do find, on the other hand, is that \textit{controlling for their past wage}, workers coming from the most productive firms find a new job \textit{faster} and are paid \textit{more}.

To further strengthen our results on this front against concerns of selection into and out of unemployment, we separate the results on E-U-E transitions according to the length of the unemployment spells. This amounts to estimating \textbf{Equation 2} in the main text \textit{separately} by cohort of re-entry in the labor market, that is unemployment duration.

Results are remarkably similar to the specifications in which cohorts are pooled and effects are measured at different horizons. Importantly, given the stability of the coefficients we can be confident that the direction of the association between the quality of the firm of origin and future wages after an unemployment spell is not driven by duration dependence effects on the pool of workers that belong to each cohort. However, we cannot take a stance on whether the effects should be monotonic also \textit{between} each cohort, as we cannot exclude that there might be other strong compositional effects due to many competing factors originating from workers' or employers' characteristics. While interesting, we believe that  modeling and measuring how these potential countervailing effects stack up against each other is beyond the scope of the present paper, but we leave it for future research.

\begin{table}[h!]
\centering
\caption{Previous wage and firm quality as predictors of unemployment duration in EUE transitions}
\label{tab:EUE_hor_wage}
\input{figures/PaperFigures/Empirics/EUE_hor_wage}
\floatfoot{\textbf{Note}: The table reports estimated coefficients based on Equation~\eqref{eq:coh_wage} on a sample of E-U-E transitions. We include sector-year, sex, contract type and qualification fixed effect.}
\end{table}

\begin{figure}[!h]
\caption{EUE by cohorts of unemployment duration}
\includegraphics[width=0.55\textwidth]{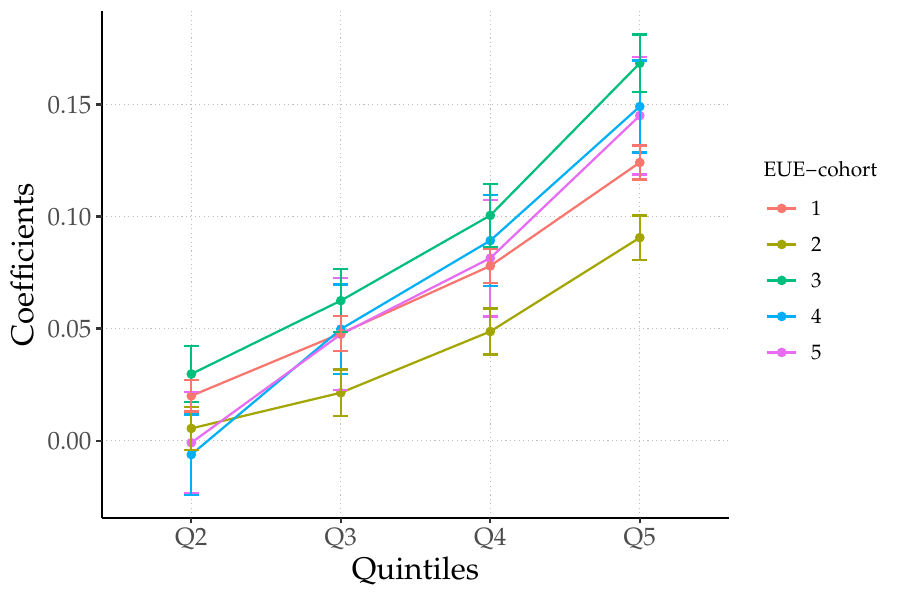}
\floatfoot{\textbf{Note}: The figure plots the associations between past firm quality, measured as quintiles of value added per employee, and wages after an Employment-Unemployment-Employment transition. In all specifications we control for the last wage pre E-U-E transition. We report the coefficients for progressive cohorts of workers, implicitly accounting for the unemployment duration in the E-U-E transition. Specifically, Cohort~1 is based on workers that have one year of non-employment before finding a new job, similarly Cohort~5 is based on workers that have five years of non-employment before finding a new job.}
\end{figure}

\subsection{Scarring Effects}

In order to estimate the effect of entering the labor market in a recession we use an age-cohort-period model in which we break the collinearity among the three set of fixed effects by proxying the cohort fixed effects with the cyclical component of real GDP (Hamilton filtered).
In particular, we estimate the following yearly model:
\begin{equation}
\log(w)_{t,c,e} = \Phi_t  +\Phi_e + \beta_e \widetilde{Y}_{c} \times \Phi_e + u_{t,c,e}, 
\end{equation}
where $\Phi_t, \Phi_e$, are dummies for calendar years and labor market experience, and $\widetilde{Y}_c$ is the cyclical realization of real GDP for cohort $c$ at time of their labor market entry. The set of coefficients $\beta_e$, therefore, estimate the effect of aggregate conditions on real wages at each year of labor market experience. 

\begin{table}
    \centering
    \caption{Effects of initial aggregate conditions along the experience profile and experience growth profile}
    \label{tab:reg_scarring}
\resizebox{0.7\textwidth}{!}{
\begin{tabular}{l cc}
\toprule
Dep.Variable: Log-Wage & Experience $\times$ Cycle & Experience \\
\midrule
Experience Dummy & & \\
0	&1.968 &	\\
	&(0.452) &	\\
1	&1.463 &	0.150 \\
	&(0.238)&	(0.011) \\
2	&1.473 &	0.246 \\
	&(0.283) &	(0.014) \\
3	&1.239 &	0.304 \\
	&(0.343) &	(0.014) \\
4	&1.250 &	0.342 \\
	&(0.349) &	(0.015) \\
5	&1.239 &	0.375 \\
	&(0.349) &	(0.015) \\
6	&1.301 &	0.399 \\
	&(0.287) &	(0.015) \\
\midrule
$R^2$	&0.89&	0.89 \\
N 	&254,000,000&	254,000,000\\
\midrule
Age FE & \checkmark & \checkmark \\
Year FE &\checkmark & \checkmark \\
Sex FE & \checkmark & \checkmark \\
LLM FE	& \checkmark & \checkmark \\
\bottomrule
\end{tabular}}
\floatfoot{\textbf{Note:} The table reports regression coefficients for the empirical estimates in the data used to construct the profiles in Figure \ref{fig:scarring_model_data}.}   
\end{table}

\subsection{Job ladder cyclicality}\label{app:job_ladder_cyc}
We document the extent of cleansing vs. sullying forces in our data. We first rank firms in our sample by productivity within sector using {\color{black} value added per employee} in rolling windows of 5 years. Then separate firms into {\color{black} 5 quintiles}, and define as high productivity the top 2 quintiles, and as low productivity the bottom 2. Consider \citeauthor{haltiwanger2022}'s  \citeyear{haltiwanger2022} measure of net job flows rates, where $H$ is the total number of hires, and $S$ is the total number of separations for firms of productivity class $j$ in time period $t$:
\begin{equation*}
    \text{NJF}_{j,t} = \sum_{i\in\left\lbrace p, n \right\rbrace}  \frac{H^{i,j}_t - S^{i,j}_t}{ E^{j}_{t-1}} \quad j = l, h
\end{equation*}
$i = \left\lbrace p, n \right\rbrace$ indicates the type of flow: poaching from other firms or from non-employment (for $H$), separation towards other firms or into non-employment (for $S$). This measure can then be used to see how workers reallocate across the productivity ladder. In particular:

\begin{equation*}
    \lambda^j_t  = \frac{H^{p,j}- S^{p,j}}{ E^{j}_{t-1}} \text{ ;}  \quad \quad   \delta^j_t  = \frac{H^{n,j}- S^{n,j}}{ E^{j}_{t-1}}  
\end{equation*}

The series $\lambda$ and $\delta$ represent net poaching and hiring from unemployment for firms of class $j$; two useful measures of cleansing vs sullying are then $\Lambda_t = \lambda^{h}_t - \lambda^{l}_t $ and $\Delta_t = \delta^{h}_t - \delta^{l}_t $. 

We can look at the cyclical properties of $\Lambda_t$ and $\Delta_t$. To do so, we use both the unemployment rate in first differences and the cyclical component of unemployment (filtering with HP filter). Additional results on this are shown in \textbf{Table~\ref{app_tab:differentialnetjob}}.

%% file: figures/PaperFigures/Empirics/EUE_hor_wage.tex
\begin{tabular}{lc}
\toprule
Variable & Unemployment Duration \\
\midrule
Log-Wage & 0.012\\
 & (0.003)\\
Value added quintile 2 & -0.092\\
 & (0.004)\\
Value added quintile 3 & -0.135\\
 & (0.004)\\
Value added quintile 4 & -0.177\\
 & (0.004)\\
Value added quintile 5 & -0.159\\
 & (0.004)\\
\midrule
Experience controls & Y\\
\midrule
Observations & 1,146,681 \\
R-squared & 0.077 \\
\bottomrule
\multicolumn{2}{p{0.8\linewidth}}{\footnotesize Standard errors in parentheses.}
\end{tabular}

%% file: Appendix_PropositionsAndProofs.tex
\section{Discussion and Proofs \label{sect:app_discussion}}
In this section we discuss the properties of the equilibrium of the model economy developed in the previous sections. All propositions and corresponding proofs related to the properties discussed in \textbf{Section~\ref{sect:discussion_body}} are reported in \textbf{Appendix~\ref{Appendix:wrk_problem}} and \textbf{\ref{Appendix:optimal_contract}}.

\subsection{Workers optimal behavior\label{Appendix:wrk_problem}}
For compactness of notation, we omit the dependence on education level, $\iota$, which is a fixed characteristic, and the idiosyncratic human capital shock, which is additive, from the proofs in Appendices. The logic of the proofs follows without loss of generality.

The following propositions characterize the properties of workers' optimal search strategies that solve the search problem in \eqref{eqn:wrk_option}, restated here for convenience:
\begin{equation} \label{app_eqn:wrk_option}
R(h,\tau,V;\Omega) = \underset{v}{\sup}\;\Big[ p(\theta(h,\tau,v;\Omega))\big[v-V]\big]\Big].
\end{equation}
\begin{lem}\label{app_cor:p_properties}
The composite function $p(\theta(h,\tau,v;\Omega))$ is strictly decreasing and strictly concave in $v$.
\end{lem}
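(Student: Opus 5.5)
The plan is to read off $p(\theta(h,\tau,v;\Omega))$ from the free-entry condition \eqref{eqn:eq_tight}. Then decompose it as the composition of $p\circ q^{-1}$ with the break-even vacancy-filling rate the submarket must deliver, and sign the first and second derivatives of each piece.

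\emph{Step 1: the break-even rate.} By \textbf{Property~\ref{ppert:mapping}}, a submarket $(h,\tau,v;\Omega)$ that opens is served by a unique firm type $y^*(v)$. Free entry then gives $q(\theta)=x(v)$, where
\[
x(v)\equiv \min_{y}\frac{c(y)}{J(h,\tau,v,y;\Omega)} ,
\]
with the minimum attained at $y^*(v)$. Hence $p(\theta(v))=\Gamma(x(v))$ with $\Gamma\equiv p\circ q^{-1}$. Since $q^{-1}$ is strictly decreasing and $p$ strictly increasing, $\Gamma$ is strictly decreasing. It is concave by the maintained assumption that $p(q^{-1}(\cdot))$ is concave.

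\emph{Step 2: properties of $J$ in the promised value.} I would show from \eqref{eqn:firm_prob}--\eqref{eqn:pkeep} that $J$ is strictly decreasing and concave in $W$. Raising $W$ tightens the promise-keeping constraint, so the envelope theorem gives
\[
\partial J/\partial W=-1/u'(w)<0 .
\]
Concavity follows because delivering utility through a strictly concave $u$ has a convex resource cost. The two-point lottery over $(w_i,W_i')$ convexifies the constraint set, which is exactly why it was introduced, so $J$ is concave in $W$. This is a fixed-point argument: the Bellman operator preserves the class of functions that are decreasing and concave in $W$, and the firm's walk-away option $\max\{0,\cdot\}$ is handled by the lottery as well.

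\emph{Step 3: convexity of $x(v)$, and conclusion.} For fixed $y$, positivity and concavity of $J$ make $1/J$ convex and strictly increasing in $v$. So each $c(y)/J(v,y)$ is strictly increasing and convex. Strict monotonicity of $x$ then follows by the envelope theorem at $y^*(v)$:
\[
x'(v)=c(y^*)\,\frac{1}{J^2}\,\frac{1}{u'(w)}>0 .
\]
Convexity is the main obstacle, because a minimum of convex functions need not be convex. My first attempt would be to use the bijection of \textbf{Property~\ref{ppert:mapping}} together with the strict convexity of $c(y)$. The goal is to show that $y^*(v)$ moves smoothly, so that the second-order term
\[
x''=\partial_{vv}(c/J)-\frac{\bigl(\partial_{vy}(c/J)\bigr)^2}{\partial_{yy}(c/J)}
\]
stays nonnegative. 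Failing that, I would redefine the submarket value jointly over $(v,y)$ with lotteries, so that the lottery convexifies $x$ directly, as in \cite{menzio2010,tsuyuhara2016}. Once $x$ is convex and strictly increasing, the composition rule gives the result: a concave, strictly decreasing $\Gamma$ composed with a convex, strictly increasing $x$ is strictly decreasing and concave. Strict concavity comes from the strict convexity of $u$'s cost, $1/u'$, passing through to $x$, or from strict concavity of $p$ and $q$.
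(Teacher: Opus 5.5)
Your Step 2, the fixed-$y$ observation that opens Step 3, and your composition rule are exactly the ingredients of the paper's proof. As written, though, the proposal has a genuine gap. You make convexity of the lower envelope $x(v)=\min_y c(y)/J(h,\tau,v,y;\Omega)$ the crux and never establish it, and this is not a routine computation left undone. Because $\Gamma=p\circ q^{-1}$ is decreasing, $\Gamma(x(v))=\max_y \Gamma\bigl(c(y)/J(v,y)\bigr)$ is an upper envelope of functions that are each concave in $v$. Such envelopes are not concave in general.

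Your first route needs $\partial_{vv}g\,\partial_{yy}g\geq(\partial_{vy}g)^2$ at the minimizer, for $g=c(y)/J(v,y)$. That is local joint convexity of $g$ in $(v,y)$. It does not follow from convexity of $c$, concavity of $J$ in $v$, and the bijection: curvature in each argument separately says nothing about the cross term, and the model's primitives do not sign $\partial_{vy}g$. Your second route, lotteries over firm types, modifies the environment rather than proving the lemma for it.

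The paper never confronts this difficulty. It suppresses the dependence on $y$, writes $c$ for $c(y)$, and works along a fixed firm type, showing
\[
\frac{c}{J(\alpha v_1+(1-\alpha)v_2)}\leq\frac{c}{\alpha J(v_1)+(1-\alpha)J(v_2)}<\alpha\frac{c}{J(v_1)}+(1-\alpha)\frac{c}{J(v_2)} .
\]
It then uses that $p\circ q^{-1}$ is strictly decreasing and concave. If you drop the minimization over $y$, your argument closes and coincides with the paper's. The variation of $y$ with $v$ that you correctly flag is not handled by the paper's argument either; it would need an extra assumption.

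You should also fix your account of strictness. It comes from the strict convexity of $z\mapsto c/z$ on $z>0$, together with $J(v_1)\neq J(v_2)$ (strict monotonicity of $J$ in $v$) and the strict decrease of $\Gamma$. Neither source you name works:
\begin{itemize}
\item the lottery makes $J$ only weakly concave in general;
\item $q$ is convex, not concave;
\item strict curvature of $p$ and $q$ separately does not make $p\circ q^{-1}$ strictly concave. For example, $p(\theta)=\theta/(1+\theta)$ gives $p(q^{-1}(z))=1-z$, which is linear.
\end{itemize}
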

\begin{proof}
For this proof we follow closely \cite{menzio2010}, Lemma 4.1 (ii). From the properties of the matching function we know that $p(\theta)$ is increasing and concave in $\theta$, while $q(\theta)$ is decreasing and convex.
Consider that the equilibrium definition of $\theta(\cdot)$ is
\begin{equation*}
\theta(h,\tau,v; \Omega) = q^{-1} \left( \frac{c(y)}{ J(h,\tau,v,y;\Omega)} \right), \label{app_eqn:eq_tight}
\end{equation*}
and that the first order condition for the wage and the envelope condition on $V$ of the optimal contract problem in \eqref{eqn:firm_prob} implies
$$ \frac{\partial J(h,\tau,v,y;\Omega)}{\partial v} = -\frac{1}{u'(w)}.$$
so that as $u'(\cdot) > 0$, $J(\cdot)$ is decreasing in $v$.

From the equilibrium definition of $\theta(\cdot)$ and noting that $q^{-1}(\cdot)$ is also decreasing due to the properties of the matching function we have that
\[ \frac{\partial \theta(h,\tau,v; \Omega)}{\partial v} = \left.\frac{\partial q^{-1}(\xi)}{\partial \xi}\right|_{\xi = \frac{c(y)}{ J(h,\tau,v,y;\Omega)}} \cdot \left(-\frac{\partial J(h,\tau,v,y;\Omega)}{\partial v}\right) \cdot \frac{c(y)}{ (J(h,\tau,v,y;\Omega))^2} < 0,\]
which, in turn, implies that
\[ \frac{\partial p(\theta(h,\tau,v; \Omega))}{\partial v} = \left.\frac{\partial p(\theta)}{\partial \theta} \right|_{\theta = \theta(h,\tau,v; \Omega)} \cdot \frac{\partial \theta(h,\tau,v; \Omega)}{\partial v} < 0.\]

Suppressing dependence on the states $(h,\tau,y,\Omega)$ for readability, to prove that $p(\theta(v))$ is concave, consider that $J(v)$ is concave\footnote{$J(\cdot)$ concave give the two-point lottery in the structure of the contract. See \cite{menzio2010} Lemma F.1.} and a generic function $\frac{c}{v}$ is strictly convex in $v$. This implies that with $\alpha \in [0,1]$ and $v_1, v_2 \in \mathcal{V}$, $v_1 \neq v_2$:
$$\frac{c}{J(\alpha v_1 + (1-\alpha)v_2)} \leq \frac{c}{\alpha J(v_1) + (1-\alpha)J(v_2)} < \alpha \frac{c}{J(v_1)} + (1-\alpha) \frac{c}{J(v_2)}.$$
As  $p(q^{-1}(\cdot))$ is strictly decreasing the inequality implies that
$$
\begin{array}{ll}
p\left(q^{-1}\left(\frac{c}{J(\alpha v_1 + (1-\alpha)v_2)}\right)\right) &\geq p\left(q^{-1}\left(\frac{c}{\alpha J(v_1) + (1-\alpha)J(v_2)}\right)\right) \\
 &> \alpha p\left(q^{-1}\left(\frac{c}{J(v_1)}\right)\right) + (1-\alpha) p\left(q^{-1}\left(\frac{c}{J(v_2)}\right)\right),
\end{array}
$$
and as $\theta(v) = q^{-1}(\frac{c}{J(v)})$:
$$ p(\theta(\alpha v_1+(1-\alpha)v_2)) > \alpha p(\theta(v_1)) + (1-\alpha)p(\theta(v_2))$$
so that $p(\theta(v))$ is strictly concave in $v$.
\end{proof}

In the following proposition we summarize the main results regarding the behavior of the workers and their objective functions.

\begin{prop}\label{main_prop:wrk_char}
Given the worker search problem, the following properties hold:
\begin{enumerate}[label=(\roman*)]

\item The returns to search, $p(\theta(h,\tau,v;\Omega))\big[v-V \big]$, are strictly concave with respect to promised utility, $v$.

\item The optimal search strategy
$$v^*(h,\tau,V;\Omega) \in \arg \max_{v} \left\{ p(\theta(h,\tau,v;\Omega))\big[v-
V\big] \right\}$$
is weakly increasing in $V$, Liptschitz continuous and unique.

\item For all promised utilities, the search gain $R(h,\tau,V;\Omega)$ is positive, weakly decreasing in $V$.

\item The survival probability of the match, given the optimal choice of the worker, is increasing in the value of current and future promised utilities, so $\widetilde{p}_t(h,\tau,v;\Omega)$ is increasing in $v$ and $V$.

\item Search strategies are increasing in workers' human capital, $h$.

\end{enumerate}
\end{prop}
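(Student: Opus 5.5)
The plan is to obtain each item from Lemma~\ref{app_cor:p_properties}, following the logic of \cite{menzio2010} and \cite{schaal2017}. Write $P(v)\equiv p(\theta(h,\tau,v;\Omega))$, which the lemma shows is strictly decreasing and strictly concave in $v$.

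For (i), the objective $P(v)(v-V)$ has derivative $P'(v)(v-V)+P(v)$ and second derivative $P''(v)(v-V)+2P'(v)$. On the relevant region $v\ge V$, which is the only region where the objective can be nonnegative, both terms are negative because $P''<0$ and $P'<0$. Hence the objective is strictly concave there. Where $P$ is only concave and not twice differentiable (the two-point lottery case), I would use a chord argument on the product instead.

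For (ii), uniqueness follows from strict concavity on a compact $\mathcal V$. For monotonicity, the cross-partial of the objective in $(v,V)$ is $-P'(v)>0$, so the objective is supermodular. Topkis' theorem then gives that $v^*$ is weakly increasing in $V$. For Lipschitz continuity, I would use the first-order condition $P'(v^*)(v^*-V)+P(v^*)=0$. The implicit function derivative is $\partial v^*/\partial V = P'/(P''(v^*-V)+2P')\in(0,1/2]$, which is bounded, so $v^*$ is Lipschitz with constant below one.

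For (iii), $R\ge0$ because choosing $v=V$, or a closed submarket with $P=0$, yields zero. By the envelope theorem, $\partial R/\partial V=-P(v^*)\le 0$.

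For (iv), $\widetilde p=(1-\delta)(1-\lambda_e P(v^*(V)))$. Since $v^*$ is increasing in $V$ by (ii) and $P$ is decreasing, $\widetilde p$ is increasing in $V$. Monotonicity in the future promise $v$ follows by applying the same argument to next period's promised utility.

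Item (v) is the main obstacle, because $h$ enters $P$ through $J$, that is, through the firm's value and the bijection of Property~\ref{ppert:mapping}. I would establish that the cross-partial $\partial^2 [P(v)(v-V)]/\partial v\,\partial h\ge 0$. A sufficient condition is that $\theta$, or equivalently $c(y)/J$, has increasing differences in $(v,h)$. This in turn reduces to showing that $J$ is increasing in $h$ and that $-\partial J/\partial v=1/u'(w)$ is nonincreasing in $h$ at a fixed $v$.

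The first claim follows by backward induction on age $\tau$ from $J=0$ for $\tau>T$. The inductive step uses $f$ increasing in $h$, $g$ increasing in $h$, and the additive shock, which preserves stochastic dominance. The second claim follows because a higher $h$ raises $J$ and allows the promise $v$ to be delivered with a flatter, back-loaded wage path, so $1/u'(w)$ does not rise.

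If the direct supermodularity argument fails, I would fall back on the single-crossing (Milgrom--Shannon) version applied to the first-order condition. The unemployed case follows by setting $V=U(h,\tau)$ and noting that $U$ increases in $h$ more slowly than the search target does.
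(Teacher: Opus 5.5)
Items (i)--(iv) go through and follow the paper's proof in substance. The paper's chord argument for (i) needs $v\ge V$, just as your second-derivative computation does. Its revealed-preference inequality $0\ge (P(W_2)-P(W_1))(V_2-V_1)$ is the discrete form of your Topkis step, and (iii)--(iv) use the same envelope and composition arguments.

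The one difference is Lipschitz continuity. Differentiating the first-order condition presumes $P\in C^2$ and an interior optimum, but $J$ is only known to be concave (via lotteries). Also, on $V\ge v(a)$ the paper sets $v^*=V$, which has slope one, so the global constant is $1$, not below one. The paper's derivative-free Menzio--Shi argument proves $v^*(V_2)-v^*(V_1)\le V_2-V_1$ directly and avoids both issues.

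The genuine gap is in (v). Your route needs $J_h>0$ and $J_{vh}\ge 0$ at a fixed promise, and neither is established.
\begin{itemize}
\item Your inductive step for $J_h>0$ uses only $f_h>0$, $g_h>0$ and dominance of the additive shock. But a higher $h'$ also moves retention $\widetilde p(h',W')=(1-\delta)(1-\lambda_e P(h',v^*(h',W')))$ at a given continuation promise. A more skilled worker may find jobs more easily at her optimal target, which lowers the firm's continuation profit. The sign of $\frac{d}{dh}P(h,v^*(h,V))$ is not pinned down by the primitives you list, and it involves (v) itself at age $\tau+1$. You would therefore need a joint induction plus an extra condition.
\item The claim that $1/u'(w)$ is nonincreasing in $h$ is an intuition about the re-optimized contract, not an argument. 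By the wage Euler equation, $1/u'(w)=1/u'(w')-J'\,\partial\log\widetilde p/\partial W'$, and every term on the right moves with $h$.
\item You drop the dependence of the posting type $y$, and hence of $c(y)$, on $(h,v)$ along the bijection between promises and firm types.
\end{itemize}

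Your intermediate step is also mis-signed. With $x=c/J$, the conditions $J_h>0$ and $J_{vh}\ge 0$ give \emph{decreasing} differences of $x$, not increasing. They also do not give increasing differences of $\theta=q^{-1}(x)$: $q^{-1}$ is convex (it is the inverse of a decreasing convex function), so it contributes the negative term $(q^{-1})''x_vx_h$. The reduction only works if you go straight through $P=\Phi(x)$ with $\Phi=p\circ q^{-1}$ decreasing and concave, which is a maintained assumption. That gives $P_{vh}=\Phi''x_vx_h+\Phi'x_{vh}\ge 0$ and $P_h=\Phi'x_h>0$.

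For comparison, the paper does not derive (v) from primitives either. It applies the implicit function theorem to the first-order condition and reduces $\partial v^*/\partial h>0$ to $(v^*-V)P_{vh}+P_h>0$, that is, $P_{vh}>P_vP_h/P$ at the optimum, taking $P_h>0$ as given. This log-supermodularity condition is weaker than the global supermodularity you target, because its right-hand side is negative. It is exactly your single-crossing fallback, so the defensible version of your (v) is that conditional statement.

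For the unemployed, no ``slower growth'' condition is needed: $dv^*/dh=v^*_h+v^*_VU_h\ge v^*_h$ by (ii), once $U_h\ge 0$ is shown.
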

\begin{proof}
The proofs follow closely \cite{shi2009}, Lemma 3.1 and \cite{menzio2010}, Corollary 4.4.
More formally, for each triplet $(h,\tau,\Omega)$ given at each search stage, we can re-define the search objective function as $K(v,V) = p(\theta(v))(v - V)$ and $v^*(V) \in \arg \max_{v} K(v,V)$ as the function that maximises the search returns (i.e. the optimal search strategy of the worker) and prove the following
\begin{enumerate}[label=(\roman*)]
\item To show that $K(v,V)$ is strictly concave in $v$ consider two values for $v$, $v_1,v_2 \in \mathcal{V}$ such that $v_2 > v_1$ and define $v_\alpha = \alpha v_1 + (1-\alpha)v_2$ for $\alpha \in [0,1]$.

Then by definition:
\begin{align*}
    K(v_\alpha,V) &= p(\theta(v_{\alpha}))(v_\alpha - V) \\
    &\geq [\alpha p(\theta(v_1)) + (1-\alpha)p(\theta(v_2))][\alpha (v_1 - V) + (1-\alpha) (v_2 - V)] \\
    & = \alpha K(v_1,V) + (1-\alpha)K(v_2,V) + \alpha (1-\alpha) [(p(\theta(v_1)) - p(\theta(v_2))](v_2 - v_1)  \\
    &> \alpha K(v_1,V) + (1-\alpha)K(v_2,V)
\end{align*}
where the first inequality follows from the concavity of $p(\theta(\cdot))$ (this is true if $J(\cdot)$ concave with respect to $V$) and the second inequality stems from the fact that $p(\theta(\cdot))$ is strictly decreasing hence  $\alpha (1-\alpha) [(p(\theta(v_1)) - p(\theta(v_2))](v_2 - v_1) >0$.

\item  \textbf{Weakly Increasing.} Consider a worker employed in a job that gives lifetime utility $V$. Given that $v \in \left[\underb{v},\overline{v}\right]$, and that submarkets are going to open depending on realizations of the aggregate productivity, $a$, there is only one region in the set of promised utilities where the search gain is positive. This set is $[V,v(a)]$ with $v(a)$ being the highest possible offer that a firm makes in the submarket for the worker $(h,\tau)$. Any submarket that promises higher than $v(a)$ is going to have zero tightness. Therefore, the optimal search strategy for $V \geq v(a)$ is $v^*(V) = V$, as $K(V,v(a))=K(V,V)=K(\overline{v},V)=0$ (the search gain is null given the current lifetime utility $V$).  For $V \in [V, v(a)]$, instead, as $K(v,V)$ is bounded and continuous, the solution $v^*(V)$ has to be interior and therefore respect the following first order condition
\begin{equation} \label{app_eqn:search_foc}
V = v^*(V) + \frac{p(\theta(v^*(V)))}{p'(\theta(v^*(V)) \cdot \theta'(v^*(V))}.
\end{equation}

Now consider two arbitrary values $V_1$ and $V_2$, $V_1 < V_2 < \overline{v}$ and their associated solutions $W_i = v^*(V_i)$ for $i = 1,2$. Then, $V_1$ and $V_2$ have to generate two different values for the right-hand side of \eqref{app_eqn:search_foc}. Hence, $v^*(V_1) \cap v^*(V_2) = \emptyset $ when $V_1 \neq V_2$.

This also implies that the search gain evaluated at the optimal search strategy is higher than the gain at any other arbitrary strategy so that $K(W_i,V_i) > K(W_j,V_i)$ for $i \neq j$. This implies that
\begin{align*} 0 \geq &[K(W_2,V_1) - K(W_1,V_1)] + [K(W_1,V_2) - K(W_2,V_2)] \\
&=(p(\theta(W_2)) - p(\theta(W_1)))(V_2 -
V_1),
\end{align*}
thus, $p(\theta(W_2)) \leq p(\theta(W_1))$. As $p(\theta(\cdot))$ is strictly decreasing (see Lemma~\ref{app_cor:p_properties}), then $v^*(V_1) \leq v^*(V_2)$.

\textbf{Lipschitz continuous.} We follow \cite{menzio2010} and show that, with generic points $V_1 \leq V_2$, $v^*(V_2) - v^*(V_1) \leq V_2 - V_1$. Given that in this case we know that $v^*(V_1) \leq v^*(V_2)$, then $v^*(V_2) - v^*(V_1) \geq 0$. If $v^*(V_2) - v^*(V_1) = 0$ then the claim is trivially satisfied. If $v^*(V_2) - v^*(V_1) > 0$, then let's consider a generic real number $\delta \in (0,(v^*(V_2) - v^*(V_1))/2)$. Given the definition of $v^*(\dot)$, we know that $K(v^*(V_i),V_i) \geq K(v^*(V_i)+\delta,V_i)$. From this inequality and the definition for $K(\cdot,\cdot)$ it follows that 
$p(\theta(v^*(V_1)))(v^*(V_1) - V_1) \geq p(\theta(v^*(V_1)+\delta))(v^*(V_1)+\delta - V_1)$ and similarly $p(\theta(v^*(V_2)))(v^*(V_2) - V_2) \geq p(\theta(v^*(V_2)-\delta))(v^*(V_2)\delta - V_2)$. Then
\begin{align*}
v^*(V_1) - V_1 &\geq \frac{p(\theta(v^*(V_1)+\delta))\delta}{p(\theta(v^*(V_1))) - p(\theta(v^*(V_1)+\delta))} \\
v^*(V_2) - V_2 &\leq \frac{p(\theta(v^*(V_2)-\delta))\delta}{p(\theta(v^*(V_2)-\delta)) - p(\theta(v^*(V_2)))}.
\end{align*}

Knowing that $p(\theta(\cdot))$ is a decreasing function, and that $v^*(V_1) + \delta \leq v^*(V_2) - \delta$ given the domain of $\delta$, then $p(\theta(v^*(V_1)+\delta) \geq p(\theta(v^*(V_2)-\delta)$. Also, given that $v^*(V_2) - v^*(V_1) > 0$ then $p(\theta(v^*(V_1))) - p(\theta(v^*(V_1)+\delta)) \leq p(\theta(v^*(V_2)-\delta)) - p(\theta(v^*(V_2)))$. This implies that we can rewrite the inequalities above as 
\begin{align*}
v^*(V_2) - V_2 &\leq \frac{p(\theta(v^*(V_2)-\delta))\delta}{p(\theta(v^*(V_2)-\delta)) - p(\theta(v^*(V_2)))} \\
 &\leq \frac{p(\theta(v^*(V_1)+\delta)}{{p(\theta(v^*(V_2)-\delta)) - p(\theta(v^*(V_2)))}} \\
 &\leq \frac{p(\theta(v^*(V_1)+\delta)}{p(\theta(v^*(V_1))) - p(\theta(v^*(V_1)+\delta))} \leq v^*(V_1) - V_1
\end{align*}
which implies $v^*(V_2) - v^*(V_1) \leq V_2 - V_1$.

\textbf{Unique.} Uniqueness follows directly from strict concavity shown in (i).

\item The Bellman equation for the search problem is:
$$
R(h,\tau,V;\Omega) = \underset{v}{\sup}\;\Big[ p(\theta(h,\tau,v;\Omega))\big[v-V\big]\Big]
$$
hence a simple envelope argument shows that
$$ \frac{\partial R(h,\tau,V;\Omega)}{\partial V}= - p(\theta(h,\tau,v;\Omega)) \leq 0,  $$
as the job finding probability is weakly positive for all utility promises.

As $p(\theta(\cdot)) \geq 0$, $v^*(\cdot) \in \left[\underb{v},\overline{v}\right]$ then $R(\cdot) \geq 0 $.

\item Given the optimal search strategy, $v^{*}(h,\tau,V;\Omega)$, we can define the survival probability of the match as in \textbf{Equation~\ref{eqn:p_ret}}:
$$
\widetilde{p}(h,\tau,V;\Omega) \equiv(1-\delta)(1-\lambda_{e} p(\theta(h,\tau,v^*;\Omega))).
$$
Then, given $(h,\tau,\Omega)$
$$
\frac{\partial \widetilde{p}(V) }{\partial V} = - (1 - \delta) \lambda_e \left. \frac{\partial p(\theta) }{\partial \theta } \right|_{\theta = \theta(v^*)} \left. \frac{\partial \theta(v)}{\partial v}\right|_{v = v^*(V)} \frac{\partial v^*(V) }{\partial V } > 0,
$$
because $p(\cdot)$ and $v^*(\cdot)$ are both increasing functions while $\theta(\cdot)$ is a decreasing function in promised utilities.

\item We want to show that the optimal search strategy $v^*$, is increasing in human capital, so that $\frac{\partial v^*(h)}{\partial h} > 0$. 

Knowing that the optimal search strategy has to satisfy the following first order condition: 
$$ \frac{\partial p }{\partial v} (v^* - V) = 0,$$
we can use the implicit function theorem to get 
\begin{equation}\label{app_eqn:search_in_h}
\frac{\partial v^*}{\partial h} = - \frac{(v^* - V)\frac{\partial^2 p}{\partial v \partial h} + \frac{\partial p}{\partial h}}{(v^* - V)\frac{\partial^2p}{\partial v^2} + 2 \frac{\partial p}{\partial v}}.
\end{equation}

As $(v^* - V)>0$, and $p(\cdot)$ decreasing and concave in $W$ so that $\frac{\partial^2p}{\partial W^2},\frac{\partial p}{\partial W}<0$, the denominator of  \textbf{Equation~\ref{app_eqn:search_in_h}} is negative. Therefore for the RHS \textbf{Equation~\ref{app_eqn:search_in_h}} to be positive, it has to be that $(v^* - V)\frac{\partial^2 p}{\partial W \partial h} + \frac{\partial p}{\partial h} > 0$. This, in turn, implies that 
$$ \frac{\partial^2 p}{\partial W \partial h} > \frac{\partial p}{\partial W}\frac{\partial p}{\partial h}\frac{1}{p},$$
where the left hand side is negative as $p(\cdot)$ is increasing in $h$ but decreasing in $W$.

\end{enumerate}
\end{proof}

The first statement implies that the marginal returns of searching towards better firms are decreasing. The intuition is that as workers search for work at firms granting better values, their job-finding probability decreases as better employment prospects are also subject to higher competition.

As a consequence of the strict concavity established in the first statement, workers' optimal search strategy is unique. The search strategy is also (weakly) increasing in the value of lifetime utilities granted by the current contract, which is the outside option for the worker.

The third statement follows from the fact that marginal returns to search are decreasing and the set of feasible utility promises is compact. The intuition is that employees at firms with higher utility promises have a relatively fewer chances of improving their position. Given a high outside option, the utility gain from moving is relatively lower, whereas the probability of matching with any firm does not depend on the \textit{current} utility promise per se, but on the future promise offered by the vacancy.

The fourth statement finally follows from considering the implication of the previous ones. Given that the optimal search strategy is increasing in $V$ workers' probability of leaving the firm at any time ends up depending negatively on $V$. This guarantees a longer expected duration of the match at higher current promised utility $V$, thus retention probabilities that are increasing in promised utilities $v$.

As human capital accumulation is tightly linked to the quality of the employer, workers that are able to start their working careers in good times have a greater chance of finding themselves on an higher path of human capital growth. As worker careers are limited and human capital accumulation follows a slow-moving process, business cycle effects on human capital quality fade only slowly and the quality of initial matches, both in terms of lifetime utility and firm quality, bears a long-standing effect on workers' careers.

\subsection{Characteristics of the optimal contract\label{Appendix:optimal_contract}}

The optimization in the contracting problem balances a trade-off between insurance provision and profit maximization for firms. The contract implicitly takes into account workers' search incentives and their inability to commit to stay. The following propositions characterizes workers' incentives along the business cycle from the firms' standpoint.

\begin{lem} \label{lem:jconcw}
The Pareto frontier $J(h,\tau,W,y;\Omega)$ is concave in $W$.
\end{lem}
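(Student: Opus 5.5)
The plan is to prove concavity of $J(h,\tau,W,y;\Omega)$ in $W$ by backward induction on age $\tau$, using the two-point lottery to convexify the problem at each stage, as in \cite{menzio2010}, Lemma F.1.

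\textbf{Base case.} Since $J\equiv 0$ and all values vanish for $\tau>T$, at $\tau=T$ the firm solves
\[
\sup_{\pi_i,w_i}\sum_i \pi_i\bigl(f(y,h;a)-w_i\bigr)
\quad\text{subject to}\quad
\sum_i \pi_i u(w_i)=W .
\]
Without the lottery the value is $f-u^{-1}(W)$. This is concave because $u$ is strictly increasing and concave, so $u^{-1}$ is convex. The lottery can only weakly improve on this, and it returns the concave envelope of the no-lottery value; here that envelope coincides with $f-u^{-1}(W)$.

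\textbf{Inductive step.} Suppose $J(\cdot,\tau+1,\cdot,y;\Omega')$ is concave in $W'$ for every $(h',\Omega')$. Define the no-lottery value $\hat J(W)$ as the supremum over a single $(w,W',\eta')$ satisfying the promise-keeping constraint \eqref{eqn:pkeep} with $\pi_1=1$, together with \eqref{eqn:firmpc} and \eqref{eqn:wagecon}.

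Rather than showing that $\hat J$ itself is concave, the key structural observation is that the lottery in \eqref{eqn:firm_prob} makes $J$ the smallest concave function lying above $\hat J$. To see this, take $W=\alpha W_1+(1-\alpha)W_2$. Choosing $\pi_1=\alpha$ and pairing it with the optimal single-branch contracts that deliver $W_1$ and $W_2$ is feasible for the lottery problem. Its payoff is $\alpha\hat J(W_1)+(1-\alpha)\hat J(W_2)$, so $J\ge$ every such convex combination.

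To conclude that $J$ is concave, I still need the reverse containment: the two-point problem must already attain the concave hull. Because the promise-keeping constraint is scalar, Carathéodory's theorem in one dimension guarantees that two points suffice to represent any point of the hull. Since $J$ is therefore the upper concave envelope of $\hat J$ over a convex domain $\mathcal V$, it is concave. This holds as long as $\hat J$ is bounded above, which follows from boundedness of $f$ and $\mathcal V$.

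The inductive hypothesis is used only to ensure the continuation objective is well defined and bounded. It also delivers the envelope result $\partial J/\partial W=-1/u'(w)$ used in Lemma~\ref{app_cor:p_properties}.

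\textbf{Main obstacle.} The hard part is the non-convexities in the single-branch problem: the product $\widetilde p\,J$, the $\max\{0,\cdot\}$ exit option, the indicator $\eta'$, and the wage-rigidity constraint \eqref{eqn:wagecon}, which ties $w'$ to $w$ across periods. These prevent a direct proof that $\hat J$ is concave, and this is exactly why I argue through the concave envelope instead.

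The step requiring care is checking that mixing two branches respects \eqref{eqn:wagecon} and \eqref{eqn:firmpc} separately within each branch. This works because those constraints are imposed branch by branch (indexed by $i$), so each branch's feasibility is unaffected by the mixing. I would verify this explicitly and, if needed, restate \eqref{eqn:wagecon} per lottery realization, as the notation $w_i$ suggests.
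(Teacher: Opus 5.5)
Your proposal is correct and follows the same route as the paper, which proves the lemma in one line by invoking the two-point lottery argument of Menzio and Shi (2010), Lemma F.1: because each branch's constraints are imposed separately and only the promise-keeping constraint links the branches, the lottery makes $J$ the upper concave envelope of the single-branch value, so the non-convexities in the branch problem never need to be addressed. Your write-up just spells this out; the two-point sufficiency follows most directly from the fact that a linear program with the two equality constraints $\sum_k\lambda_k=1$ and $\sum_k\lambda_k W_k=W$ has an optimal solution with at most two positive weights, rather than from Carath\'eodory's theorem on the line.
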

\begin{proof}
This is a direct consequence of using a two-point lottery for $\{w_i, W_{i}^\prime\}$ as shown by \cite{menzio2010}, Lemma F.1.
\end{proof}

\begin{lem} \label{lem:y_incr}
The Pareto frontier $J(h,\tau,W,y;\Omega)$ is increasing in $y$.
\end{lem}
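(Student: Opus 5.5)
We prove that $J(h,\tau,W,y;\Omega)$ is increasing in $y$ by backward induction on age $\tau$, combined with a revealed-preference (feasibility) argument in the contracting problem \eqref{eqn:firm_prob}. The key observation is that $y$ enters the firm's problem in only two places. It enters the flow payoff through $f(y,h;a)$, which is increasing in $y$ by assumption. It also enters the law of motion of human capital $h'=g(h,y)+\psi$, where $g$ is increasing in $y$. The worker-side objects $\widetilde{p}$ and $\widetilde{r}$ and the unemployment value $U$ depend on $y$ only through $h'$. So the plan is to show that a higher-$y$ firm can replicate the contract of a lower-$y$ firm and do at least as well.

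\textbf{Step 1 (terminal age).} At $\tau=T$ the continuation is zero. Hence $J(h,T,W,y;\Omega)=f(y,h;a)-u^{-1}(W)$ (up to the lottery), which is strictly increasing in $y$.

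\textbf{Step 2 (inductive step).} Suppose $J(\cdot,\tau+1,\cdot,\cdot;\cdot)$ is increasing in $y$ and also nondecreasing in $h$. Fix $y_2>y_1$ and let $\{\pi_i,w_i,W_i',\eta_i'\}$ be optimal (or $\varepsilon$-optimal) for $y_1$. We offer the same policy at $y_2$, adjusting continuation promises state by state so that the worker's continuation value $\widetilde r$ (or $U$ on exit) is unchanged at each $\psi$-realization. The promise-keeping constraint \eqref{eqn:pkeep} is then satisfied with the same current wage, and the wage constraint \eqref{eqn:wagecon} is unaffected. The flow payoff rises because $f$ is increasing in $y$. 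For the continuation term, the $y_2$ firm's worker has $h_2'\ge h_1'$ pathwise in $\psi$, by the additivity of the shock.

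\textbf{Step 3 (the main obstacle).} We need to establish that
\[
\widetilde p(h_2',\tau+1,\tilde W_i')\,J(h_2',\tau+1,\tilde W_i',y_2)\ \ge\ \widetilde p(h_1',\tau+1,W_i')\,J(h_1',\tau+1,W_i',y_1).
\]
This requires monotonicity of $J$ in $h$ (a companion induction hypothesis, following from $f$ increasing in $h$ and $g$ increasing in $h$). It also requires controlling the retention effect. A higher $h'$ raises the worker's search target (Proposition~\ref{main_prop:wrk_char}(v)) and may lower $\widetilde p$ at a given promise. We first try the simplest route, choosing $W'$ directly and keeping $W_i'$ unchanged where possible. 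If that fails, we use the envelope identities $\partial J/\partial W=-1/u'(w)$ and $\partial R/\partial V=-p$ to show that the firm can raise the promise just enough to restore retention. The cost of doing so is bounded by the output gain. Because $U$ is increasing in $h$, this step is where convexity of $c(y)$ plays no role but the concavity from Lemma~\ref{lem:jconcw} matters. Concavity guarantees that the adjustment is well defined and that lotteries can be replicated.

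\textbf{Step 4 (exit and closing the argument).} The exit choice under \eqref{eqn:firmpc} only helps: the $\max\{0,\cdot\}$ is monotone, so $\eta'$ can be chosen optimally at $y_2$. Hence the $y_2$ value weakly dominates the $y_1$ value, strictly so through $f$. The induction closes, with the Feller property ensuring the expectations over $a'$ preserve monotonicity.
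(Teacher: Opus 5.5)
\textbf{Gap in Step~3.} Your Step~3 is the whole content of the lemma, and the proposal does not close it.

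In Step~2 you keep the current wage and choose $\tilde W_i'$ state by state so that $\widetilde r(h_2',\tau+1,\tilde W_i')=\widetilde r(h_1',\tau+1,W_i')$. Under your own hypotheses $\widetilde r$ rises with $h'$ (through $U$ and $R$), and $\partial\widetilde r/\partial W=\widetilde p>0$, so this forces $\tilde W_i'\le W_i'$. Your induction hypotheses then give $J(h_2',\tau+1,\tilde W_i',y_2)\ge J(h_1',\tau+1,W_i',y_1)$. Retention, however, moves the wrong way. Since $\widetilde p$ is increasing in the promise, it falls with $\tilde W_i'\le W_i'$. A higher $h'$ also improves the worker's outside offers, which you note may lower it further. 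The product $\widetilde p\,J$ is therefore unsigned.

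Your remedy, raising the promise ``just enough to restore retention,'' contradicts Step~2. It overdelivers relative to the equality in \eqref{eqn:pkeep}, so the excess must be clawed back through the current wage. Nothing in your hypotheses bounds that cost by the one-period gain $f(y_2,h;a)-f(y_1,h;a)$, because the retention loss is a discounted stream over the rest of the match. This is the general-training problem: a firm that makes its worker more poachable can lose more through separations than it gains in output. A pure replication argument therefore cannot deliver monotonicity without an extra ingredient.

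Two further problems:
\begin{itemize}
\item Your companion hypothesis that $J$ is nondecreasing in $h$ is asserted rather than proved, and it faces the same retention problem.
\item In exit states the worker receives $U(h_2')\ge U(h_1')$, which the firm cannot ``adjust,'' so promise keeping fails there too.
\end{itemize}

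\textbf{How the paper signs the retention term.} The paper differentiates at fixed policies,
\[
\frac{dJ}{dy}=\frac{\partial f}{\partial y}+\beta\,\mathbb{E}_\Omega\Big[\frac{\partial\widetilde p}{\partial y}\,J'\Big],
\qquad
\frac{\partial \widetilde p}{\partial y}\propto-\frac{\partial p}{\partial\theta}\frac{\partial\theta}{\partial y},
\]
and splits on the sign of $\partial\theta/\partial y$. If it is nonpositive, retention rises and both terms are positive. If it is positive, the free-entry expression $\theta=q^{-1}(c(y)/J)$ with $q'<0$ requires $c'(y)J-\frac{\partial J}{\partial y}c(y)<0$, i.e.\ $\frac{\partial J}{\partial y}>c'(y)J/c(y)>0$.

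So the paper signs the retention channel through equilibrium tightness and the vacancy cost $c(y)$ (its level and slope, not its convexity). Your proposal explicitly sets this ingredient aside. The paper's argument is itself heuristic, since it treats the continuation $J'$ as constant in $y$, but it does handle the term your Step~3 leaves open. To complete your route you would need either this equilibrium step or an explicit primitive condition under which the output gain dominates the retention loss.
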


\begin{proof}
The intuition for this proof follows the fact that a higher $y$ firm, once the match exists, can always deliver a certain promise $V$ and have resources left over. Within a dynamic contract, future retention is already optimized as the match is formed. This means that the promise $V$ can be delivered by the greater capacity on the part of producing with respect to a close $y$ firm. In presence of human capital accumulation, the worker is compensated through greater option values in the future, which again means that, even with lower retention, the firm cashes in more profits while decreasing wages (and respecting the $V$ promise).%

One can get to the same conclusion by starting from time $T$, noticing that the function $J$ is trivially increasing in $y$ in the last period, and the stepping back. At $T-1$, given $V$, any higher $y$ function can make greater profits with the same delivery of value $V$, given the contract's optimal promise, which is a fortiori true with human capital accumulation (the option value is greater, so the firm can decrease $w$ as a response).
A more formal argument goes as follows: start from the optimal policies, as per Eq. \eqref{eqn:firm_prob} of a firm that has $y = \bar{y}$ and assume that one could exogenously increase its installed capital to $\bar{y} + \varepsilon$. We want to know whether, keeping policies constant, this would increase the flow of profits while keeping the worker indifferent. If this is true, then \textit{a fortiori} it will be true that the firm value function $J$ will be increasing in $y$. With a slight abuse of notation and for conciseness, we refer to the future $J$ in Eq. \eqref{eqn:firm_prob} as $J^\prime$.

This amounts to calculating:\footnote{Without loss of generality, we assume that $J^{\prime}$ is constant with respect to $y$. Alternatively, one may start proving the result for contracts offered to workers just one period before retirement (for which $\frac{\partial J^{\prime}}{\partial y}$ is trivially positive), and generalize the result with backward induction.}
$$
   \left.\frac{d \bar{J}}{d y} \right|_{W,\{\pi_i, w_i,W_{i}^\prime\}} = \frac{\partial f(\cdot)}{\partial y} + \beta \mathbb{E}_\Omega \left[ \left.\frac{\partial \tilde{p}(\cdot)}{\partial y}\right|_{W,\{\pi_i, w_i,W_{i}^\prime\}}  \bar{J}^{\prime} \right]
$$

This first order condition presents the trade-off discussed in words above, namely that an increase in $y$ will be instantaneously beneficial to production, but might also potentially have a longer term adverse effect on profits through decreased retention. Finding the sign of the derivative on the LHS hinges on understanding the sign of the derivative of the second element on the RHS, since $\frac{\partial f(\cdot)}{\partial y} > 0$ given the properties of the production function $f(\cdot)$. The change in $y$ would affect search objective $v_y$ through the variation in $h$ due to the human capital accumulation dynamics, even taking the current firm policies as given.

Notice that:

\begin{equation*}
   \frac{\partial \tilde{p}(\cdot)}{\partial y} = - \delta  \lambda_E \frac{\partial p(\cdot)}{\partial \theta} \frac{\partial \theta(\cdot)}{\partial y}
\end{equation*}

Remember that $\frac{\partial p(\cdot)}{\partial \theta}>0$ due to the properties of the job-finding probability function $p(\cdot)$.
Assume first the case in which $\frac{\partial \theta(\cdot)}{\partial y} \leq 0$. In this case $\frac{\partial \tilde{p}(\cdot)}{\partial y}>0$, which in turn implies, as argued, that $J$ has to be increasing in $y$.

Now consider the second case, namely $\frac{\partial \theta(\cdot)}{\partial y} > 0$. By the free entry condition, we obtain:

\begin{equation*}
    \frac{\partial \theta(\cdot)}{\partial y} =  \frac{\partial q^{-1}(c(y)/J(y))}{\partial y}= \frac{1}{q^{\prime} \left( q^{-1} \left( c(y)/J(y)  \right) \right) } \frac{c^{\prime}(y) J(y) - \frac{\partial J^(y)}{\partial y} c(y)}{J(y)^2}
\end{equation*}

The first term in the result is negative, given the properties of function $q(\cdot)$. Given our assumption on $\frac{\partial \theta(\cdot)}{\partial y}$  the second term in the result has to be negative as well. This requires: $c^{\prime}(y) J(y) - \frac{\partial J(y)}{\partial y} c(y) <0 $, or $\frac{\partial J(y)}{\partial y} > \frac{c^{\prime}(y) J(y)}{c(y)} > 0$.
\end{proof}

\begin{prop}\label{main_prop:PF_incrina}
The Pareto frontier $J(h,\tau,W,y;a,\mu)$ is increasing in the aggregate productivity shock $a$, while retention probabilities, $\widetilde{p}(h,\tau,W;a,\mu)$ decrease in aggregate productivity.
\end{prop}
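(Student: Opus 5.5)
The plan is to prove the two claims in sequence. Monotonicity of the Pareto frontier in $a$ comes first, by backward induction on age. Retention then follows from the search-side properties in Proposition~\ref{main_prop:wrk_char} combined with the free entry condition \eqref{eqn:eq_tight}. Throughout I work in the block recursive equilibrium of Property~\ref{prop:BRE}, so $\mu$ drops out and only $a$ matters. I use that $a' \sim F(\cdot|a)$ is a monotone (first-order stochastically increasing) Markov process with the Feller property.

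\textbf{Step 1: monotonicity of $J$ in $a$.} At $\tau=T$ the problem is static, $J = f(y,h;a) - w$, and $w$ is pinned down by $u(w)=W$ independently of $a$. Since $f$ is increasing in $a$, $J$ is increasing in $a$. (The $x(a)$ term enters as $-x(A-1)$ with $x<0$ in the estimates; I will assume, as in the body, that net output $f(y,h;a)+x(a)$ is increasing in $a$.)

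For the induction step, suppose $J(\cdot,\tau+1,\cdot,\cdot;a')$ is increasing in $a'$. Take the optimal policy $\{\pi_i,w_i,W_i',\eta_i'\}$ at $a_1$ and use it as a feasible policy at $a_2>a_1$. The promise-keeping constraint \eqref{eqn:pkeep} depends on $a$ only through the distribution of $a'$, via $\widetilde r$ and $U$. To keep the policy feasible I fix the state-contingent promises $W_i'(a')$, so that delivered utility is unchanged once the wage term is adjusted; continuation promises are the choice variables, so this is without loss. The flow term $f(y,h;a)$ rises with $a$. The continuation term $\max\{0,\mathbb{E}_{\Omega}[\widetilde p J]\}$ must be increasing in $a$ under FOSD. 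That requires the integrand $\widetilde p(\cdot;a')J(\cdot;a')$ to be increasing in $a'$, and this is exactly where the retention result bites back.

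\textbf{Step 2: retention is decreasing in $a$.} By \eqref{eqn:eq_tight}, $\theta = q^{-1}(c(y)/J)$ with $q^{-1}$ decreasing, so a higher $J$ (Step 1, applied to the poaching firms' frontier at the same worker state) raises tightness in every submarket. Hence $p(\theta(\chi,v;a))$ is pointwise increasing in $a$ for each $v$. The function $K(v,V;a)=p(\theta(v;a))(v-V)$ then satisfies
\[
R(\chi,V;a_2)\ge R(\chi,V;a_1).
\]
To sign $p^*(\chi,v^*(\chi,V;a);a)$ I argue as in the monotonicity part of Proposition~\ref{main_prop:wrk_char}(ii), swapping $V$ for $a$. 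If the log-derivative $\partial_v\log p(\theta(v;a))$ is weakly increasing in $a$, then $K$ has increasing differences in $(v,a)$ and $v^*$ rises with $a$ (consistent with Property~\ref{ppert:aggr}). Whether $v^*$ rises or not, the realized job-finding probability at the optimum, $p(\theta(v^*(a);a))$, is pinned down by the first-order condition \eqref{app_eqn:search_foc}. I will show it is weakly higher at $a_2$ by combining the pointwise upward shift of $p$ with strict concavity from Lemma~\ref{app_cor:p_properties}. Then \eqref{eqn:p_ret} gives $\widetilde p$ decreasing in $a$.

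\textbf{Main obstacle.} The difficulty is the interaction between the two claims inside the continuation term of Step 1. Higher $a'$ raises $J'$ but lowers $\widetilde p'$, so the product $\widetilde p J$ is not obviously monotone. This is the same trade-off as in Lemma~\ref{lem:y_incr}, and I would handle it the same way.

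First I would use the envelope condition $\partial J/\partial W=-1/u'(w)$ together with the wage protocol \eqref{eqn:euler}. The firm can re-optimize $W'$ upward at higher $a'$, which restores retention through Proposition~\ref{main_prop:wrk_char}(iv). Re-optimization can only weakly increase $J$ relative to the fixed-policy comparison. So it suffices to show that \emph{some} feasible adjustment keeps $\widetilde p J$ from falling.

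If that direct approach fails, I would fall back on a case split mirroring Lemma~\ref{lem:y_incr}. If the poaching-side tightness falls with $a$, retention rises and monotonicity is immediate. Otherwise free entry forces $\partial_a J/J$ to exceed the relative increase in $c$-adjusted tightness, which bounds the fall in $\widetilde p$ by the rise in $J$. This second route is the step I expect to need the most care. It also likely requires an extra regularity assumption on the elasticity of $p(q^{-1}(\cdot))$.
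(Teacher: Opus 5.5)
Your skeleton matches the paper's: backward induction from the static age-$T$ problem, a fixed-policy perturbation in $a$, and free entry linking $J$ to tightness and hence to retention. But the step you flag as the main obstacle is left open, so the induction step is not established.

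\textbf{How the paper closes that step.} The paper does not prove the two claims in sequence. It holds $J'$ fixed in $a$ (a footnoted simplification), so the fixed-policy derivative is $\partial_a f+\beta\,\mathbb{E}_\Omega[\partial_a\widetilde p\,J']$. It then uses $\theta=q^{-1}(c(y)/J)$ to argue that $\partial p/\partial a$ and $\partial J/\partial a$ share a sign, so $\partial\widetilde p/\partial a$ has the opposite sign. Suppose $J$ were not increasing in $a$. Then retention would be weakly increasing, and the fixed-policy derivative would be at least $\partial_a f>0$, a contradiction. So $J$ rises and $\widetilde p$ falls jointly, and the fall in $\widetilde p$ never has to be compared with the rise in $J$.

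\textbf{Why that shortcut is not available to you.} The contradiction works only because the $J$ governing tightness and the $J$ being signed are treated as one equilibrium object. You separate the poaching firms' $J$ at age $\tau+1$ from the incumbent's $J$ at age $\tau$. Having done so, you must either justify that identification or supply the missing bound, and neither fallback does this. Raising $W'$ increases $\widetilde p$ but lowers $J'$ and changes the promise-keeping wage, so it does not sign $\widetilde p\,J$. The case-split route rests on an elasticity condition you neither state nor verify.

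\textbf{Step 2 also fails as written.} A pointwise upward shift of $v\mapsto p(\theta(\chi,v;a))$ plus strict concavity does not imply that the job-finding probability at the optimum rises. Since $v^*$ increases with $a$ (Property~\ref{ppert:aggr}) and $p$ is decreasing in $v$, the two effects on $p(\theta(v^*(a);a))$ have opposite signs. The first-order condition and concavity only bracket $p^*$ between $\tfrac{1}{2}p(\theta(V;a))$ and $p(\theta(V;a))$, and a shift concentrated at high $v$ can lower $p^*$. For example, with $V=0$, replacing $\min\{1,11-10v\}$ by the pointwise higher $\min\{1,\tfrac{3}{2}-\tfrac{v}{2}\}$ moves the optimum from $(v^*,p^*)=(1,1)$ to $(\tfrac{3}{2},\tfrac{3}{4})$, and this survives smoothing the kinks. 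The paper never does comparative statics at the optimum. It signs $\partial p/\partial a$ as the direct tightness effect $\frac{\partial p}{\partial\theta}\frac{\partial\theta}{\partial J}\frac{\partial J}{\partial a}$ in the targeted submarket. To conclude that $\widetilde p$ is decreasing in $a$, you need either that convention or an explicit condition ruling out a dominant movement in $v^*$.
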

\begin{proof}
We proceed by backward induction.\footnote{For compactness of notation, we omit without loss of generality the two-point lottery in the equations in the proof.} Following the logic of the proof of \textbf{Lemma \ref{lem:y_incr}}, the proposition is trivially true for workers $T$ periods old. Given that the firm increases its production while keeping the worker at least indifferent, $J$ is at least weakly increasing in $a$. However, the firm can also feasibly  increase the worker's wage by $\varepsilon$, with $ \varepsilon < \frac{\partial f(\cdot)}{\partial a}$. $J$ is thus strictly increasing in $y$. 

Consider now a worker who is $T-1$ periods old. A firm matched to a worker in submarket $\{ h, T -1,y,W \}$ will face the following Pareto frontier
\begin{align*}
    J(h, T - 1 ,y,W_{y};& a,\mu)=  \underset{w,W^\prime}{\sup}  \Big( f(h,y;a) - w \nonumber \\
    + & \mathbb{E}_\Omega\left[\widetilde{p}(h^\prime,T,W^\prime;a^\prime,\mu^\prime)(  f(h^\prime,y;a^\prime)- w^\prime )\right] \Big) \\
\end{align*}

Analogously to the proof in \textbf{Lemma \ref{lem:y_incr}}, assume that aggregate productivity increases from $\bar{a}$ to $\bar{a}+\varepsilon$. Assume that the firm keeps its policies constant once again. We aim at proving that, even in such a case, firm value increases while keeping the worker at least indifferent. If this is the case, it is \textit{a fortiori} true that $J$ increases in $a$ after reoptimizing firms' policies. 

We are now interested in the sign of:\footnote{We assume that $J^\prime = f(h^\prime,y;a^\prime)- w^\prime$ is constant with respect to $a$. It is possible to prove, by backward induction, that this assumption is without loss of generality for the sake of the proof.}
$$
    \left.\frac{\partial \bar{J}}{\partial a} \right|_{W,\pi, w,\{W^\prime\}} = \frac{\partial f(\cdot)}{\partial a} + \beta \mathbb{E}_\Omega \left[ \left.\frac{\partial \tilde{p}(\cdot)}{\partial a}\right|_{W,\pi_i, w,\{W^\prime\}}  \bar{J}^{\prime} \right]
$$

Now notice that, in equilibrium,

\begin{equation*}
\frac{\partial \widetilde{p}(\theta)}{\partial a} \propto -\frac{\partial p(\theta)}{\partial a}  \text{,} \quad
    \text{ with} \quad \frac{\partial p(\theta)}{\partial a} = \underbrace{\frac{\partial p(\theta)}{\partial \theta}}_{>0} \cdot \underbrace{\frac{\partial \theta}{\partial J(\cdot)}}_{>0} \cdot \frac{\partial J(\cdot)}{\partial a}
\end{equation*}
\noindent where the sign of the second derivative on the right hand side comes from the free entry condition and the properties of vacancy filling probability function $q(\cdot)$. Given this, it has to be that $ \frac{\partial p(\theta)}{\partial a}$ and $\frac{\partial J(\cdot)}{\partial a}$ have the same sign in equilibrium. Now, if both are strictly positive, both statements of our proposition are immediately true. Let's now assume they are both negative or zero. If this is the case, then $\frac{\partial \tilde{p}(\cdot)}{\partial a} \geq 0$. But this implies $\frac{\partial \bar{J}}{\partial a}>0$, which is a contradiction.
\end{proof}

The intuition behind this proposition relies on the observation that higher productivity realization are associated not only with better outcomes on impact but also to better future prospects, given that the productivity process is an increasing Markov chain.

A key property of the model is that it allows to characterize the workers' optimal behaviour along the business cycle. The following proposition summarizes how the search strategy changes depending on the aggregate productivity realization.

\textbf{Proposition \ref{main_prop:PF_incrina} and Corollary \ref{main_prop:searchs_incrina}} have an important implication regarding firms' vacancy posting  and workers' search decisions. The fact that at the posting stage profits $J$ are increasing in aggregate productivity implies that more entry will take place in good times, and ceteris paribus more entrepreneurs will open up vacancies across the whole firms' distribution.\footnote{In our model a better firm is a more productive firm. We do not specifically model the determinant of quality heterogeneity but we take the existence of profound differences in firm quality as a fact \citep{arellano2020a,arellano2022}.} The resulting higher tightness impacts workers' optimal search behaviour as the job finding probability increases in all submarkets. As a consequence, workers respond optimally to the productivity increase searching in submarkets that guarantee higher lifetime utility promises.

Firms utility promises depend on the structure of the optimal contract. The contract provides insurance to workers through wage paths that are downward rigid, and at the same time allows firms to profit as wages only partially adjust to productivity realizations.

The following propositions provide a clear picture of the growth path prescribed by the optimal contract for a continuing firm. First, let us define the productivity threshold that determines whether a worker-firm match does not survive.

\begin{cor}
There exists a productivity threshold $a^*(h,\tau,W,y)$ below which firms will not continue the operate.
\end{cor}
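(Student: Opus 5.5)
The plan is to read the threshold off Proposition~\ref{main_prop:PF_incrina}. Exit is governed by the firm participation constraint \eqref{eqn:firmpc}: the firm sets $\eta^\prime=1$ exactly when the continuation value $J(h,\tau,W,y;a)$ is negative. Under the block recursive equilibrium of Property~\ref{prop:BRE}, $J$ depends on the aggregate state only through $a$. So the goal is to show that, for fixed $(h,\tau,W,y)$, the set $\{a\in\mathcal A: J(h,\tau,W,y;a)\ge 0\}$ is an upper interval of $\mathcal A$.

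The argument then runs in three steps.

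\textbf{Step 1 (monotonicity).} By Proposition~\ref{main_prop:PF_incrina}, $J$ is increasing in $a$, strictly so by the $\varepsilon$-wage argument used in its proof. Consequently, if $J(\cdot;a_1)\ge 0$ and $a_2>a_1$, then $J(\cdot;a_2)\ge 0$.

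\textbf{Step 2 (definition of the threshold).} Define
\begin{equation*}
a^*(h,\tau,W,y)=\inf\{a\in\mathcal A: J(h,\tau,W,y;a)\ge 0\},
\end{equation*}
with the conventions $a^*=\inf\mathcal A$ if the set is all of $\mathcal A$, and $a^*=\sup\mathcal A$ (permanent exit) if the set is empty. By Step 1, for every $a<a^*$ we have $J<0$, so $\eta^\prime=1$ and the firm exits. For every $a>a^*$ the firm continues.

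\textbf{Step 3 (closure of the set).} To show the infimum is attained, so that the threshold is a genuine cutoff, I would use continuity of $J$ in $a$. This follows from continuity of $f(h,y;a)+x(a)$ in $a$ and from the Feller property of $F(a^\prime\mid a)$, which propagates continuity through the expectation in \eqref{eqn:firm_prob}. The argument proceeds by backward induction on age, starting from $\tau=T$, where $J$ is just the flow profit. If $\mathcal A$ is a finite grid, as in the quantitative model, this step is trivial.

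Timing needs a remark. Exit is decided before $a^\prime$ is realized but after $\psi$ is drawn. So, strictly, the relevant object is $\mathbb E_\Omega[\widetilde p\,J]$ conditional on the current $a$. This expectation is also increasing in $a$: $J$ is increasing in $a^\prime$, and $F$ is a monotone increasing Markov process, so it first-order stochastically dominates in $a$.

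The main obstacle is precisely this product $\widetilde p\,J$. By Proposition~\ref{main_prop:PF_incrina}, $\widetilde p$ is \emph{decreasing} in $a$, so monotonicity of the product is not immediate. I would handle it by mirroring the contradiction argument in the proof of Proposition~\ref{main_prop:PF_incrina}. Since the firm's expected continuation profit already nets out retention, it suffices to show that $\max\{0,\mathbb E[\widetilde p J]\}$ inherits the monotonicity of the Pareto frontier, which is what the envelope argument in that proof delivers. Failing that, I would define the threshold directly through the sign of $J$, for which $\widetilde p\ge(1-\delta)(1-\lambda_e)>0$ ensures that the sign of $\widetilde p\,J$ coincides with the sign of $J$.
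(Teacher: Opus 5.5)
Your Steps 1--2, together with the stochastic-dominance step in your timing remark (applied to $J$), are the paper's proof. The paper invokes Proposition~\ref{main_prop:PF_incrina} and the timing of exit, writes the continuation condition as $\mathbb{E}_{\Omega}[J(h^\prime,\tau+1,W^\prime,y;a^\prime)\mid a]\geq 0$, and defines $a^*$ as the zero of that expectation. Your infimum construction is slightly more careful, since it does not presuppose that a zero exists, and Step 3 is not needed for the statement.

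The proposal breaks at the obstacle you single out. The paper never deals with $\widetilde{p}$: it writes the exit rule with $\mathbb{E}[J\mid a]$, not with the $\mathbb{E}[\widetilde{p}J\mid a]$ that sits inside the $\max\{0,\cdot\}$ in \eqref{eqn:firm_prob}. If you want the latter, neither of your fixes works.

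Fix (b) conflates pointwise and expected signs. $\widetilde{p}>0$ gives $\mathrm{sign}(\widetilde{p}J)=\mathrm{sign}(J)$ for each $a^\prime$. But $\widetilde{p}$ is decreasing in $a^\prime$, so it overweights the low states where $J<0$, and the two expectations can have opposite signs and hence different cutoffs. For example, take two states with $J=(-1,1)$, $\widetilde{p}=(0.9,0.5)$ and probability $0.55$ on the high state: then $\mathbb{E}[J]=0.10$ while $\mathbb{E}[\widetilde{p}J]=-0.13$.

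Fix (a) is not what Proposition~\ref{main_prop:PF_incrina} delivers. The only $a$-derivative of the continuation term computed in its proof is $\beta\mathbb{E}[(\partial\widetilde{p}/\partial a)\bar{J}^\prime]$, which is \emph{negative} whenever $\bar{J}^\prime>0$. Monotonicity of $J$ there is carried by the flow term $\partial f/\partial a$, which plays no role in the exit decision.

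What your sign observation does give is that $\widetilde{p}J$ is single-crossing in $a^\prime$, from negative to positive, with the same zero as $J$. On $\{J<0\}$ it is even increasing, being $-\widetilde{p}\,|J|$ with both factors positive and decreasing. That settles the state-contingent reading of \eqref{eqn:firmpc} directly.

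For a cutoff in the current $a$ based on $\mathbb{E}[\widetilde{p}J\mid a]$, you need one more ingredient, and there are two candidates:
\begin{itemize}
\item $\widetilde{p}J$ nondecreasing also on $\{J>0\}$, i.e.\ $\partial_{a^\prime}\log J\geq-\partial_{a^\prime}\log\widetilde{p}$ there. First-order stochastic dominance then finishes the argument.
\item A monotone-likelihood-ratio (TP2) property of $F(a^\prime\mid a)$. Under it, single crossing of the integrand survives the expectation by Karlin's variation-diminishing property; mere first-order dominance does not guarantee this.
\end{itemize}
Neither is established in the paper. Short of that, adopt the paper's criterion $\mathbb{E}[J\mid a]\geq 0$, for which your Step 1 plus stochastic dominance is the complete argument.
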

\begin{proof}
The proof follows immediately from \textbf{Proposition~\ref{main_prop:PF_incrina}} and the timing of the shock. Given the timing of the shock, exit is fully determined by the current productivity shock and incumbent firms know in advance whether they are willing to produce in the next period.

Therefore, as the Pareto frontier is strictly increasing in $a$, firms are willing to continue the contract if $ \mathbb{E}_{\Omega}[J(h^\prime,\tau+1,W^\prime,y;a^\prime,\mu^\prime)|h,\tau,W,y,a,\mu]\geq 0 $, so that the threshold that determines exit is
$$ a^*(h,\tau,W,y) : \mathbb{E}_{\Omega}[J(h^\prime,\tau+1,W^\prime,y;a^\prime,\mu^\prime)|h,\tau,W,y,a,\mu] = 0.$$
\end{proof}
The intuition of why this has to be the case is linked to the fact that the Pareto frontier is strictly increasing in $a$ and decreasing in the level of promised utilities to the worker. Hence once the aggregate state realizes a firm is able to perfectly predict whether next period it will exit the market or stay in (given the timing, the decision is based on expected profits, and is thus \textit{not} state-contingent to next period's productivity). The choice is taken \textit{before} new realizations of productivity, so it is possible that a firm makes negative profits for at most one period.

\begin{cor}
The productivity threshold $a^*(h,\tau,y,W)$ below which firm $y$ in match with worker $(h,\tau)$ and promised utility $W$ exits the market in the aggregate state $\Omega$ is increasing in $y$.
\end{cor}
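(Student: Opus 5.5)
The plan is to characterize $a^*(h,\tau,W,y)$ implicitly through the exit condition of the previous Corollary,
$$\Gamma(a^*,y)\equiv\mathbb{E}_{\Omega}\big[J(h^\prime,\tau+1,W^\prime(y),y;a^\prime)\,\big|\,h,\tau,W,y,a^*\big]=0,$$
and then sign $\partial a^*/\partial y=-\Gamma_y/\Gamma_a$ by the implicit function theorem. Block recursivity (Property~\ref{prop:BRE}) lets me drop $\mu$ throughout. The denominator is straightforward. Aggregate productivity is an increasing Markov process with the Feller property, and Proposition~\ref{main_prop:PF_incrina} says $J$ is strictly increasing in $a^\prime$. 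Hence $\Gamma_a>0$. The claim therefore reduces to showing $\Gamma_y<0$ at the threshold.

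The key point is that $W^\prime$ is not held fixed when $y$ changes. Lemma~\ref{lem:y_incr} gives $\partial J/\partial y>0$ only at a fixed promise. If the continuation promise were held fixed, this would push $a^*$ down, so the claimed monotonicity cannot come from the direct effect. I will therefore write
$$\Gamma_y=\mathbb{E}\Big[\underbrace{\partial_y J}_{>0}+\underbrace{\partial_{h^\prime}J\,\partial_y\phi}_{\text{HC channel}}+\underbrace{\partial_{W^\prime}J}_{=-1/u^\prime(w^\prime)}\,\frac{dW^\prime}{dy}\Big].$$
I will show that the last term dominates.

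The first step is to sign $dW^\prime/dy$. Property~\ref{ppert:wage_protocol} ties wage growth to $J\,\partial\log\widetilde p/\partial W^\prime$. A higher $y$ raises $h^\prime=\phi(h,y,\psi)$, since $g$ is increasing in $y$. By Proposition~\ref{main_prop:wrk_char}(v), this raises the worker's search target and outside option, which lowers retention at any given $W^\prime$. To restore retention, the optimal contract back-loads more, so the promised continuation $W^\prime$ increases with $y$. The same conclusion follows from the bijection in Property~\ref{ppert:mapping} applied to the next-period submarket, in which higher-$y$ firms sit at higher promised values.

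The second step is to bound the two positive terms. I will use the free-entry identity $q(\theta)J=c(y)$, which holds with equality in open submarkets. Differentiating it along the equilibrium locus gives $\frac{d}{dy}J=c^\prime(y)/q+\ldots$. Strict convexity of $c(y)$ then means that, along that locus, the surplus a higher-$y$ firm retains grows slowly relative to the promise it must deliver. This should give $\partial_y J+\partial_{h^\prime}J\,\partial_y\phi<\frac{1}{u^\prime(w^\prime)}\frac{dW^\prime}{dy}$.

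The main obstacle is this second step: turning the back-loading argument into a strict inequality that holds uniformly at $a=a^*$. My first attempt will be backward induction from $\tau=T$. In the last period no promise remains, so I must handle the base case separately, using the downward wage rigidity constraint~\eqref{eqn:wagecon}. That constraint forces $w^\prime\ge w$, and this binds harder for higher-$y$ contracts, which started with steeper wage profiles. I will then propagate the inequality backward, as in the proof of Proposition~\ref{main_prop:PF_incrina}, treating the continuation value as fixed in $a$ and $y$ at each step.
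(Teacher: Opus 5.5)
There is a genuine gap, and it starts with the direction you set out to prove. The paper's proof is a one-line mimicking argument. Take $y_1<y_2$ and let $a_1^*$ be the state at which $y_1$'s continuation value is zero. If $y_2$ copies $y_1$'s contract, Lemma~\ref{lem:y_incr} gives $y_2$ strictly positive value at $a_1^*$, so $y_2$ does not exit there. Since $J$ is increasing in $a$ (Proposition~\ref{main_prop:PF_incrina}), this yields $a^*(h,\tau,W,y_2)\le a^*(h,\tau,W,y_1)$: the threshold is \emph{decreasing} in $y$. Property~\ref{ppert:seps} in the main text says the same (``decreasing in worker and firm types''), and the model's account of layoffs concentrated in low-productivity firms requires it. The word ``increasing'' in the corollary is therefore a sign slip. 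Your own remark that, at a fixed promise, Lemma~\ref{lem:y_incr} ``would push $a^*$ down'' is exactly the paper's argument, and it is the correct conclusion. The statement holds $W$ fixed, and the $J$ in Lemma~\ref{lem:y_incr} is the \emph{optimized} value at that promise, which already incorporates the firm's choice of $(w,W^\prime)$. Hence $da^*/dy=-J_y/J_a<0$, and there is no separate $dW^\prime/dy$ term to offset $J_y$. Revealed preference makes the same point: firm $y_2$ can always copy $y_1$'s policy, so re-optimizing $W^\prime$ cannot make $y_2$ exit where $y_1$ survives.

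The steps you use to reverse the sign also fail on their own terms. Step~1 signs $dW^\prime/dy$ via Property~\ref{ppert:mapping} and Lemma~\ref{app_lem:qJ_Wconc}. Those results concern the promise $W^*(y)$ posted at the vacancy stage across submarkets, not the continuation promise of an ongoing contract with given $W$. Your retention and back-loading story is also not derived from the wage Euler equation, which would move $J$ and $\partial\log\widetilde{p}/\partial W^\prime$ at the same time. Step~2 differentiates the free-entry identity $q(\theta)J=c(y)$, but that identity pins down $J$ only for newly posted vacancies. An incumbent at its exit threshold has $J$ close to zero, so the identity fails there and cannot bound $\partial_y J+\partial_{h^\prime}J\,\partial_y\phi$; the inequality after ``This should give'' is simply asserted. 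Finally, your backward induction breaks at the base case. At $\tau=T$ the contract gives $J(h,T,W,y;a)=f(y,h;a)-u^{-1}(W)$, so the zero-value threshold solves $f(y,h;a^*)=u^{-1}(W)$. Because $f$ is increasing in both $y$ and $a$, this threshold falls with $y$. No continuation promise or wage-rigidity margin remains to offset this, so stepping backward propagates the decreasing sign, not the increasing one.
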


\begin{proof}
Consider two firms characterized by $y_1,y_2$ with $y_1<y_2$. Consider the threshold for firm $y_1$, $a_1^* = a^*(h,\tau,,W_{y_1},y_1)$. Firm $y_1$ makes 0 profits if state $a_1^*$ materializes next period. Consider firm $y_2$ trying to mimic the current contract offered by $y_1$ to $(h,\tau)$. We know that $J$ is increasing in $y$ from \textbf{Lemma \ref{lem:y_incr}}, which implies that the firm is making a profit at $a_1^*$. This completes the proof.
\end{proof}

\begin{lem} \label{app_lem:jconcy}
The Pareto frontier $J(h,\tau,W,y;\Omega)$ is strictly concave in $y$.
\end{lem}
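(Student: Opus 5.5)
I will prove the strict concavity of $J(h,\tau,W,y;\Omega)$ in $y$ by backward induction on age $\tau$, mirroring the structure used in Lemma~\ref{lem:y_incr} and Proposition~\ref{main_prop:PF_incrina}. The idea is to show that, holding the promised utility $W$ fixed, a firm of intermediate quality $y_\alpha=\alpha y_1+(1-\alpha)y_2$ can replicate a convex combination of the contracts offered by firms $y_1$ and $y_2$. In doing so it earns strictly more than the corresponding convex combination of profits, because the flow production term is strictly concave in $y$.

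\textbf{Base case ($\tau=T$).} Here there is no continuation value, so $J(h,T,W,y;\Omega)=\max_{\pi_i,w_i}\sum_i \pi_i(f(h,y;a)-w_i)$ subject to $W=\sum_i\pi_i u(w_i)$. The wage side does not depend on $y$, so $J$ equals $f(h,y;a)-u^{-1}(W)$, up to the lottery. This is strictly concave in $y$, since $f=Ay^\alpha h^{1-\alpha}-x(A-1)$ with $\alpha\in(0,1)$ is strictly concave in $y$.

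\textbf{Inductive step.} Assume $J(\cdot,\tau+1,\cdot,y;\cdot)$ is concave in $y$, and let $(\pi^k_i,w^k_i,W'^k_i,\eta'^k_i)$ be the optimal policies for $y_k$, $k=1,2$, both delivering the same $W$. For firm $y_\alpha$ I will build a feasible contract using the two-point lottery. Concretely, mix the two policy sets with weights $\alpha$ and $1-\alpha$, which exploits the same convexification device as Lemma~\ref{lem:jconcw}. Promise keeping \eqref{eqn:pkeep} then holds with value $W$ by linearity in the lottery weights, and the wage monotonicity constraint \eqref{eqn:wagecon} holds branch by branch. The flow-profit term of this mixture is $f(h,y_\alpha;a)-\sum(\cdot)w$, which is strictly greater than $\alpha f(h,y_1;a)+(1-\alpha)f(h,y_2;a)$ minus the same wage bill. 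The continuation term $\widetilde p\, J(h',\tau+1,W',y;\Omega')$ then needs to be bounded below by the convex combination of the continuation terms. There are two pieces to this. First, $J(\cdot,\tau+1,\cdot,y)$ is concave in $y$ by induction and increasing by Lemma~\ref{lem:y_incr}. Second, $h'=g_\iota(h,y)+\psi$ is concave in $y$, and $J$ is presumably increasing in $h$. Together these imply that the future value is concave in $y$ along each branch. The $\max\{0,\cdot\}$ in \eqref{eqn:firm_prob} preserves convexity rather than concavity, so I will handle it by using the exit policy of \eqref{eqn:firmpc}: on branches where either firm exits, the mixed contract can set $\eta'=1$ with zero value. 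Finally, optimizing over the policies at $y_\alpha$ can only raise profits, giving $J(y_\alpha)>\alpha J(y_1)+(1-\alpha)J(y_2)$.

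\textbf{Main obstacle.} The hard part is the retention factor $\widetilde p(h',\tau+1,W';\Omega')$. It depends on $y$ through $h'$, and hence through the worker's search strategy (Proposition~\ref{main_prop:wrk_char}(v)). The product $\widetilde p\cdot J$ need not be concave even when each factor is. My first attempt is the envelope shortcut used in the proof of Lemma~\ref{lem:y_incr}: hold $\widetilde p$ at its value under the replicated policies and argue that the strict concavity from the flow term $f$ dominates. If that fails, I would treat the product $\widetilde p J$ as the effective continuation value, show it is the value of a problem that is linear in lottery probabilities, and invoke the lottery convexification of \cite{menzio2010}, Lemma F.1, in the $y$ dimension. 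This would deliver weak concavity of the continuation term, with strictness inherited from $f$.
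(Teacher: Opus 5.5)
Your base case is fine, but the inductive step has a genuine gap, and the gap is not confined to the retention factor you flag.

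In the mixed contract, branch $k$ gives the type-$y_\alpha$ firm the continuation $\widetilde p\,J(h'_\alpha,\tau+1,W'^{k},y_\alpha)$. The bound you need is against $J(h'_k,\tau+1,W'^{k},y_k)$. Concavity of $J(\cdot,\tau+1,\cdot,y)$ in $y$ at a fixed $(h',W')$ does not give $\alpha[\Phi_1(y_\alpha)-\Phi_1(y_1)]\ge(1-\alpha)[\Phi_2(y_2)-\Phi_2(y_\alpha)]$ for two \emph{different} functions $\Phi_k(y)=\widetilde p\,J(\cdot,W'^{k},y)$. With $y_1<y_\alpha<y_2$ one branch gains and the other loses. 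Comparing them requires joint concavity in $(W',y)$, or in $(h',W',y)$ once human capital moves, and that is not in your induction hypothesis.

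Put differently, $J$ is a supremum over contracts of payoffs that are each concave in $y$, and a pointwise supremum of concave functions need not be concave. The lottery of Menzio and Shi's Lemma F.1 convexifies the promise dimension. A type-$y_\alpha$ firm cannot randomize over being type $y_1$ or $y_2$, so your fallback is not available either.

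The problem already bites at $\tau=T-1$ without human capital (deterministic $a$, no exit). There $J=\max_{W'}\{(1+\beta\widetilde p(W'))f(y)-C(W')\}$, with $C(W')=u^{-1}(W-\beta\widetilde r(W'))+\beta\widetilde p(W')u^{-1}(W')$. This is a maximum of affine functions of $f(y)$, hence convex in $f$. So $J_{yy}=(1+\beta\widetilde p^{*})f_{yy}+\beta\,(d\widetilde p^{*}/df)\,f_y^{2}$. The last term is nonnegative because the optimal promise, and hence retention, is nondecreasing in $f(y)$. Whether $f_{yy}$ ``dominates'' is therefore an assumption, not a step.

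With human capital, two more things break. First, the replicated contract is evaluated at $h'_\alpha=g_\iota(h,y_\alpha)+\psi\neq h'_k$, which shifts $\widetilde r$, $U$, $\widetilde p$ and the exit set, so promise keeping does not hold ``by linearity''. Second, $\eta'$ is pinned by the participation constraint, so you cannot set $\eta'=1$ at will without changing the worker's value.

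The paper does not prove the lemma from primitives either, and comparing routes shows what is missing. It differentiates $J$ twice, using the envelope condition to ignore the response of the controls to $y$.
\begin{itemize}
\item Without human capital, this yields $J_{yy}=f_{yy}+\widetilde p\,J'_{yy}$ and an induction. That step in fact discards the nonnegative control-response term computed above.
\item With human capital, it expands $J_{yy}$ into terms in $g_{yy}\,\partial_{h'}(\widetilde pJ')$, $g_y^{2}\,\partial^{2}_{h'}(\widetilde pJ')$ and the cross term $2g_y(\widetilde p_{h}f_y+\widetilde p f_{hy})$. Under Cobb--Douglas the cross term is positive unless retention falls fast enough in $h$.
\end{itemize}
It then states sufficient conditions on equilibrium objects and verifies them numerically. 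These are, e.g.\ , $\widetilde p_h f_y+\widetilde p f_{hy}\le 0$, an elasticity of substitution at most one, and $\alpha\ge -h\widetilde p_{hh}/\widetilde p_h$.

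So the missing ingredient is a curvature hypothesis of this type, not a cleverer replication. Your claim that concavity of $g_\iota$ plus monotonicity of $J$ in $h$ makes the continuation concave in $y$ would need two things. One is joint concavity of $\widetilde p\,J'$ in $(h',y)$; the other is a bound on how optimal promises respond to $y$. Neither follows from the model's primitives.
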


\begin{proof}

\textbf{No human capital accumulation.} The proof without human capital accumulation is straightforward. Assuming there is no dependency of $h^\prime$ on $y$, one can write:
$$
\frac{d J}{d y} = f_y + \widetilde{p}\frac{d J^\prime}{d y}
$$
$$
\frac{d^2 J}{d y^2} = f_{yy} + \widetilde{p}\frac{d^2 J^\prime}{d y^2}
$$
where we take $J^\prime$ to represent next period's firm value $J(\cdot)$, and the dependence of all controls on $y$ is ignored by virtue of the envelope condition. One can readily observe by induction that, given the concavity of $f(\cdot)$ in $y$, $\frac{d^2 J}{d y^2}<0$. 

\textbf{Human capital accumulation.} With human capital accumulation the concavity of the function $J$ on $y$ depends on further assumptions on the concavity of the human capital accumulation function $g(\cdot)$. As these assumptions involve equilibrium objects, we state them here and verify ex-post numerically that they are always respected in our setting. 
$$
\frac{d J}{d y} = f_y + g_y \frac{\partial \widetilde{p}J^\prime}{\partial h} + \widetilde{p}\frac{d J^\prime}{d y}
$$
$$
\frac{d^2 J}{d y^2} = f_{yy} + \underbrace{g_{yy}}_{<0} \frac{\partial \widetilde{p}J^\prime}{\partial h^\prime} + g_y^2 \frac{\partial^2 \widetilde{p}J^\prime}{\partial h^{\prime2}} + \underbrace{2g_y}_{>0}\left( \frac{\partial \widetilde{p}}{\partial h^\prime}f_y + \widetilde{p}f_{hy}  \right) + \widetilde{p}\frac{d^2 J^\prime}{d y^2}
$$

Sufficient conditions for the previous expression to be true at $T-1$ would jointly be:

\begin{equation}
    \tilde{p}_h (f - w) + f_h \tilde{p} \geq 0 
\end{equation}
\begin{equation}
    \tilde{p}_h f_y + \tilde{p}f_{h,y} \leq 0
\end{equation}
\begin{equation}
     \tilde{p}_{h,h} (f - w) + \tilde{p} f_{h,h} + 2 \tilde{p}_h f_h \leq 0 
\end{equation}

Rearranging and combining the first two would lead to:

\begin{equation}
   (f - w)  \leq   \frac{ f_y f_h }{ f_{h,y}} \Rightarrow  f   \leq   \frac{ f_y f_h }{ f_{h,y}} + w
\end{equation}

Since $w>0$, a more stringent condition would be:

\begin{equation}
 \frac{f f_{h,y}}{ f_y f_h }  \leq 1 \rightarrow ES \leq 1 
 \end{equation}

This condition is not generally too restrictive, and holds with equality in the Cobb-Douglas case.

Going back to the third condition, and combining it with the first one, one gets

\begin{equation}
   \frac{\tilde{p}}{\tilde{p}_h}\left( f_{h,h} - \frac{\tilde{p}_{h,h}}{\tilde{p}_h} f_h  \right) +  2  f_h \geq 0 
\end{equation}

A sufficient condition for that would be

\begin{equation}
    h\frac{f_{h,h}}{f_h} \leq h\frac{\tilde{p}_{h,h}}{\tilde{p}_h}  
\end{equation}

In the Cobb-Douglas case this condition amounts to

\begin{equation}
    \alpha \geq -h\frac{\tilde{p}_{h,h}}{\tilde{p}_h}  
\end{equation}

which bounds the returns on human capital. The economic interpretation is simple. For $J$ to be concave even with endogenously increasing human capital, it has to be the case that the marginal product of human capital is not too big in the production function. For this to be the case, the effect of the relative flattening of the production function and decrease in marginal product of human capital has to be dominated by the relative increase in the marginal effect of the decrease in retention.

\end{proof}

\begin{prop}\label{main_prop:EE_wages}
For each state in which the firm is willing to continue the contract, the solution to the firm problem delivers the wage Euler equation:
\begin{equation} \label{eqn:euler_app}
\frac{\partial \widetilde{p}(\Theta)}{\partial W^\prime_{i}}\frac{J^\prime(\Theta)}{\widetilde{p}(\Theta)}  =  \frac{1}{u^\prime(w_{i})} - \frac{1}{u^\prime(w)}
\end{equation}
with $\Theta \equiv (\phi(h,y),\tau+1,W^\prime_i;\Omega^\prime)$ being the definition of the relevant state and $w_{i}$ is the wage paid in the future state.
\end{prop}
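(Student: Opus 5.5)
The plan is the standard Lagrangian argument of \cite{balke2022} and \cite{menzio2010}, applied to the recursive firm problem in \textbf{Equation~\eqref{eqn:firm_prob}}. Fix a state $(h,\tau,W,y;\Omega)$ and a lottery branch $i$ with $\pi_i>0$. Attach a multiplier $\mu$ to the promise-keeping constraint \textbf{Equation~\eqref{eqn:pkeep}}. We restrict attention to continuation states with $\eta^\prime_i=0$ in which the wage constraint \textbf{Equation~\eqref{eqn:wagecon}} does not bind. The exit constraint \textbf{Equation~\eqref{eqn:firmpc}} is then slack, and the $\max\{0,\cdot\}$ operator can be dropped locally. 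The Lagrangian for branch $i$ is then
\begin{equation*}
\pi_i\Big(f-w_i+\beta\,\mathbb{E}\big[\widetilde{p}(\Theta)J(\Theta)\big]\Big)+\mu\,\pi_i\Big(u(w_i)+\beta\,\mathbb{E}\big[\widetilde{r}(\Theta)\big]\Big),
\end{equation*}
where $\Theta=(\phi(h,y),\tau+1,W_i^\prime;\Omega^\prime)$.

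\textbf{Step 1 (first-order condition in the current wage).} Differentiating in $w_i$ gives $\mu=1/u^\prime(w_i)$.

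\textbf{Step 2 (first-order condition in the state-contingent promise).} Differentiating in $W_i^\prime$ state by state gives
\begin{equation*}
\frac{\partial \widetilde{p}}{\partial W^\prime_i}J^\prime+\widetilde{p}\,\frac{\partial J^\prime}{\partial W^\prime_i}+\mu\,\frac{\partial \widetilde{r}}{\partial W^\prime_i}=0 .
\end{equation*}

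\textbf{Step 3 (derivative of $\widetilde{r}$).} Apply the envelope theorem to the worker's search problem, using \textbf{Proposition~\ref{main_prop:wrk_char}}(iii), which gives $\partial R/\partial V=-p(\theta(v^*))$. This yields
\begin{equation*}
\frac{\partial \widetilde{r}}{\partial W^\prime}=(1-\delta)\big(1-\lambda_e p^*\big)=\widetilde{p},
\end{equation*}
so the worker-side derivative equals the retention probability.

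\textbf{Step 4 (derivative of $J^\prime$).} Apply the envelope condition to the next-period firm problem. Differentiating the continuation Lagrangian with respect to its promised value $W^\prime_i$ gives $\partial J^\prime/\partial W^\prime_i=-\mu^\prime=-1/u^\prime(w^\prime_i)$, where the last equality is Step 1 applied one period ahead. This is the same relation already used in \textbf{Lemma~\ref{app_cor:p_properties}}.

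\textbf{Step 5 (assembly).} Substituting Steps 1, 3 and 4 into Step 2 and dividing by $\widetilde{p}>0$ gives
\begin{equation*}
\frac{\partial \widetilde{p}}{\partial W^\prime_i}\frac{J^\prime}{\widetilde{p}}=\frac{1}{u^\prime(w^\prime_i)}-\frac{1}{u^\prime(w)},
\end{equation*}
which is \textbf{Equation~\eqref{eqn:euler_app}}; it coincides with \textbf{Equation~\eqref{eqn:euler}} after writing $\partial\log\widetilde{p}$.

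The main obstacle is justifying the envelope and differentiability steps. I would do this in two parts:
\begin{itemize}
\item \emph{Differentiability of $J$.} Concavity of $J$ in $W$ (\textbf{Lemma~\ref{lem:jconcw}}, via the two-point lottery) yields one-sided derivatives, and the Benveniste--Scheinkman argument with the strictly concave $u$ delivers differentiability.
\item \emph{Differentiability of $\widetilde{p}$.} This follows from the Lipschitz continuity and uniqueness of $v^*$ in \textbf{Proposition~\ref{main_prop:wrk_char}}(ii), together with the smoothness of $p(q^{-1}(\cdot))$.
\end{itemize}
I would also remark that the equation characterizes interior continuation states only. When the downward-rigidity constraint binds, a nonnegative multiplier enters the right-hand side and the equality becomes an inequality; when $\eta^\prime=1$, the equation is vacuous.
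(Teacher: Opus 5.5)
Your Steps 1--5 reproduce the paper's core computation: the first-order conditions in $w_i$ and $W_i^\prime$, the Benveniste--Scheinkman envelope $\partial\widetilde r/\partial W=\widetilde p$, and $\partial J/\partial W=-\mu$. The gap is that you obtain the equation only after \emph{assuming} that the downward-rigidity constraint \eqref{eqn:wagecon} is slack. You then state that where it binds the equality becomes an inequality.

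The proposition claims the equation in \emph{every} state in which the firm continues. As written, your argument proves a strictly weaker result and leaves open exactly the case the statement excludes. Willingness to continue ($\eta^\prime=0$) says nothing by itself about \eqref{eqn:wagecon}. You need an argument that this constraint, and the participation constraint \eqref{eqn:firmpc}, carry zero multipliers whenever $\eta^\prime=0$.

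That argument is the substantive part of the paper's proof. The paper keeps the multipliers $\theta$ on \eqref{eqn:firmpc} and $\tilde\psi$ on \eqref{eqn:wagecon}, so the assembled condition reads $(1+\tilde\psi^\prime)\frac{1+\theta}{u^\prime(w_i^\prime)}-\frac{1+\tilde\psi}{u^\prime(w)}=\frac{\partial\widetilde p(\Theta)}{\partial W_i}\frac{J(\Theta)}{\widetilde p(\Theta)}$. It then removes the multipliers by backward induction, starting at $\tau=T-1$, where retirement is certain and $\tilde\psi^\prime=0$. The steps are:
\begin{enumerate}
\item A binding participation constraint ($J(\Theta)=0$, $\theta>0$) could only be sustained by cutting the wage, which \eqref{eqn:wagecon} forbids. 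In those states the firm therefore exits, so $\eta^\prime=0$ implies $\theta=0$.
\item With $\theta=0$, the relaxed first-order conditions give $\frac{1}{u^\prime(w_i^\prime)}-\frac{1}{u^\prime(w)}=\frac{\partial\widetilde p}{\partial W_i^\prime}\frac{J(\Theta)}{\widetilde p}\geq 0$. The sign holds because $J(\Theta)\geq 0$ in continuation states by \eqref{eqn:firmpc}, and $\widetilde p$ is increasing in $W$ by Proposition~\ref{main_prop:wrk_char}(iv).
\item By strict concavity of $u$, this gives $w_i^\prime\geq w$. The relaxed optimum therefore already satisfies \eqref{eqn:wagecon}, is the constrained optimum, and has $\tilde\psi=0$.
\end{enumerate}
Adding this step closes the gap. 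It is also what underwrites the paper's claim that wage growth is weakly positive in continuing matches. You should drop your closing remark about an inequality in binding states, since within the model that case does not arise when $\eta^\prime=0$.
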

\begin{proof}
Consider the firm problem in Equation \eqref{eqn:firm_prob}, restated here for convenience
\begin{align*}
J(h,&\tau,\iota,W,y; \Omega)= \nonumber \\ &\underset{\pi_i, \{\eta^\prime_i,w_i,W_{i}^\prime\}}{\sup} \sum_{i=1,2} \pi_i  \Bigg( f(y,h;a)-w_i \nonumber \\
+ & \beta \mathbb{E}_{\psi}\bigg[ (1 - \eta^\prime_i) \cdot \max\bigg\{0,\mathbb{E}_{\Omega} \bigg[\widetilde{p}(h^\prime,\tau+1,\iota,W_{i}^\prime;\Omega^\prime) J(h^\prime,\tau+1,\iota,W_{i}^\prime,y;\Omega^\prime)\bigg]\bigg\}\bigg] \Bigg) \\
\nonumber\\
s.t.\; &[\mu]:  W =  \sum_{i=1,2} \pi_i \Bigg( u(w_i)+ \beta \mathbb{E}_{\Omega,\psi} ( (1 - \eta^\prime_i)\widetilde{r}(h^\prime,\tau+1,\iota,W_{i}^\prime;\Omega^\prime)  \nonumber\\
 &+ \eta^\prime_i U(h^\prime,\tau+1,\iota;\Omega^\prime)  ) \Bigg),\\
\nonumber \\
 &[\theta]:  J(h^\prime,\tau+1,\iota,W_{i}^\prime,y;\Omega^\prime) < 0 \implies \eta^\prime = 1 \text{, otherwise } \eta^\prime = 0 ,\\
\nonumber \\
  &[\psi]: (w - w^\prime)(1- \eta^\prime) \leq   0\\
 \nonumber \\
&\sum_{i=1,2} \pi_i = 1 \text{, }
\end{align*}
For $i=1,2$, the first order conditions with respect to the wage and the promised utilities are:
\begin{align}
    &[w_i]:\;  -1 + \mu u^\prime (w) - \tilde{\psi}= 0 \\
    &[W_{i}]:\;  \tilde{p}(\cdot) \cdot \frac{\partial J(W^\prime)}{\partial W^\prime }  +  J(W^\prime) \frac{\partial \tilde{p}}{\partial W^\prime } + \mu \frac{\partial \tilde{r}}{\partial W^\prime } +  \theta \frac{\partial J(W^\prime)}{\partial W^\prime }  = 0.\label{app_eqn:V_foc}
\end{align}
where $ \tilde{\psi} =  \psi (1- \eta^\prime) $.  Note that by definition,
\begin{align*}
\widetilde{r}(h,\tau,V;\Omega) & \equiv\delta U(h,\tau;\Omega)+(1-\delta)\Big[V + \lambda_{e}\max\{0,R(h,\tau,V;\Omega)\}\Big]
\end{align*}
therefore we can use the envelope theorem as in \cite{Benveniste1979}, Theorem 1 and the definition in Equation \textbf{Equation 8} to derive an expression for the derivative of the employment value in $t+1$ as the period ahead of the following:
$$\frac{\partial \widetilde{r}(h,\tau,W;\Omega)}{\partial W} = \widetilde{p}(h,\tau,W;\Omega). $$
Similarly, using the envelope condition on the firm problem and the first order condition for the wage, we can establish that
\begin{equation} \label{app_eqn:env_firm}
 \frac{\partial J(h,\tau,y,W;\Omega))}{\partial W} = -\mu \; \therefore \; \frac{\partial J(h,\tau,W,y;\Omega))}{\partial W} = -\frac{1 + \tilde{\psi}}{u^\prime(w)}.
 \end{equation}
Moving these two expressions one period ahead, substituting them in \eqref{app_eqn:V_foc}, focusing on $\tilde{p}(\cdot) > 0$ and $\pi_i > 0$ and rearranging we have that:
$$
(1 + \tilde{\psi}^\prime) \frac{1 + \theta}{u^\prime(w_i^\prime)} - \frac{1 + \tilde{\psi}}{u^\prime(w)} = \frac{\partial \widetilde{p}(\Theta)}{\partial W_i}\frac{J(\Theta)}{\widetilde{p}(\Theta)} ,$$
with $\Theta \equiv (h^\prime,\tau+1,W;\Omega^\prime)$ and where $w^\prime$ is the wage next period in state $\Omega'$.
We now turn to the behavior of the multipliers. To do so, let's focus on a contract at age $T-1$, for which retirement is certain at age $T$ and hence $ \eta^{\prime\prime} = 1 \implies \tilde{\psi}^\prime = 0 $. If $\eta^\prime = 0$, can the firm participation constraint be binding? When this is the case, $J(\Theta) = 0$ and $\theta > 0$. But this implies the wage constraint will be binding, so $\psi > 0$. It is then easy to show that, for values of $h^\prime,\tau+1,W;\Omega^\prime$ for which the firm participation constraint is binding, $\eta^\prime = 1$ and the firm exits. The argument can then be applied backwards for any value $\tau<T$ to show that if the firm state is such that the firm continues this period, then:

$$
  \frac{1}{u^\prime(w_i^\prime)} - \frac{1}{u^\prime(w)} = \frac{\partial \widetilde{p}(\Theta)}{\partial W_i}\frac{J(\Theta)}{\widetilde{p}(\Theta)} ,$$

\end{proof}

The optimal contract links the wage growth to the realization of firms profits. The right hand side of Equation \ref{eqn:euler_app} shows that, in providing insurance to the worker, the firm links wage growth to profits and to the incentive to maximize retention, incorporated in $\frac{\partial \log \widetilde{p}}{\partial W}$, the semi-elasticity of the retention probability to the utility offer. As the production stage takes place \emph{after} exit choices are taken by the incumbent firms, the wage growth related to the continuation value of the contract is bound to be (weakly) positive, hence workers enjoy a non-decreasing wage profile under the optimal contract.

A feature that the optimal contract derived in our model shares with the literature on long-term contracts with lack of commitment on the worker side is thus the backloading of wages.\footnote{See for instance, \cite{Thomas1988,tsuyuhara2016} and \cite{balke2022}.} Workers in our model make search decisions that affect the survival probability of the match. They do not however appropriate the full future value of the current match while making these search decisions (unless the firm makes zero profits). This makes it optimal for the firm to front-load profits and back-load wages. The reason is that the firm provides insurance and income smoothing to the worker, but given its risk neutrality it prefers to front-load its profits while providing an increasing compensation path to maximize retention. The contract thus optimally balances the consumption smoothing motives (i.e. the insurance provision of the contract) with the commitment problem of the worker.

\paragraph{Special case with log-utility.} The wage Euler equation discussed in \textbf{Proposition~\ref{main_prop:EE_wages}} can be simplified to a more intuitive interpretation in the log-utility case. In case of log-utility, in fact, $u'(w_{i,\Omega})=\frac{1}{w_{i,\Omega}}$. Multiplying and dividing by wage levels and rearranging, we can express the elasticity of retention probability to offered utility as
\begin{equation}\label{fluidity_equation}
    \varepsilon_{\widetilde{p},W_y} = \underbrace{ \frac{(w_{i}-w)}{w}}_{\text{Wage growth}}  \underbrace{\frac{w}{J(\Theta)}}_{\substack{\text{Ratio of wage} \\\text{to match value}}}.
\end{equation}
with $\varepsilon_{\widetilde{p},W} \equiv \frac{\partial \widetilde{p}(\Theta)}{\partial W_{i} \widetilde{p}(\Theta)}$.

The interpretation of this result is of interest to analyses that relate labor market dynamism to wage dynamics, like \cite{engbom2020}. This is because $\varepsilon_{\widetilde{p},W_y}$, being a function of structural parameters of the matching technology, $\gamma$, search frictions $\lambda_e$, and measures of labor market tightness $\theta$, provides us with a good proxy of labor market fluidity. The right hand side of \eqref{fluidity_equation}, is composed entirely of observable quantities, as the  ratio of wages to match value is a function of factor shares in value added. The quantity can then be used to compare the dynamism of different local, regional or national labor markets. \\
The next proposition, instead, confirms our initial conjecture that in equilibrium firm qualities and utility promises are related to a one-to-one mapping.

\begin{lem}\label{app_lem:qJ_Wconc}
$\Pi(y,h,\tau,W;\Omega)$ is concave in $W$ and optimal utility promises are unique and increasing in $y$ given worker characteristics $(h,\tau)$ and aggregate state $\Omega$.
\end{lem}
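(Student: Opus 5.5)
The approach is to treat the entrepreneur's objective in a given submarket as
\[
\Pi(y,h,\tau,W;\Omega) = -c(y) + q\big(\theta(h,\tau,W;\Omega)\big)\,J(h,\tau,W,y;\Omega),
\]
and to exploit the free-entry condition \eqref{eqn:eq_tight}. The argument has two parts: concavity in $W$, then uniqueness and monotonicity in $y$.

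\textbf{Concavity in $W$.} Concavity comes from Lemma~\ref{lem:jconcw}, which gives concavity of $J$ in $W$ (via the two-point lottery). It also uses the envelope result $\partial J/\partial W=-1/u'(w)<0$. Since $c(y)$ does not depend on $W$, only the term $q(\theta(W))J(W)$ matters. For a vacancy choosing $W$ while taking the tightness schedule as given, this term is the product of a nonnegative decreasing concave function $J$ with $q(\theta(W))$. By Lemma~\ref{app_cor:p_properties}, $\theta$ is decreasing in $W$, so $q(\theta(W))$ is increasing in $W$.

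I would handle the product with the same convex-combination argument used in the proof of Lemma~\ref{app_cor:p_properties}. Write $qJ=p\,J/\theta$. Then use concavity of $p(q^{-1}(\cdot))$ together with the convexity of $c/J$, which the paper already exploits there. The goal is to show that the composite is concave, in the same way the paper shows $p(\theta(v))$ is concave. Strict concavity then gives a unique maximizer $W^*(y;h,\tau,\Omega)$.

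\textbf{Monotonicity in $y$.} I would characterize $W^*$ by its first-order condition,
\[
q'(\theta)\,\theta_W\,J + q\,J_W = 0 .
\]
Then I would apply the implicit function theorem: $\partial W^*/\partial y$ has the sign of the cross-partial $\partial^2\Pi/\partial W\partial y$, because $\Pi_{WW}<0$ by the first step.

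By Lemma~\ref{lem:y_incr}, $J_y>0$. A higher-$y$ firm has larger $J$ at every $W$, so the marginal value $-J_W=1/u'(w)$ of promising utility is cheaper relative to surplus. Equivalently, for given $W$ the wage needed is a smaller share of output, and the benefit of raising $W$ through a higher fill rate $q$ is scaled by the larger $J$. I would show that the term $q'\theta_W J_y$ dominates $q J_{Wy}$, using the concavity of $J$ in $y$ from Lemma~\ref{app_lem:jconcy} and the convexity of $c(y)$. Together these imply a single-crossing (supermodularity) property of $\Pi$ in $(W,y)$. An alternative route, if the cross-partial is messy, is a revealed-preference argument in the style of the proof of Proposition~\ref{main_prop:wrk_char}(ii): compare the optimal choices $W_1,W_2$ for $y_1<y_2$, and add the two optimality inequalities to obtain $(W_2-W_1)\cdot(\text{something increasing in } y)\ge 0$.

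\textbf{Main obstacle.} I expect the hardest step to be signing the cross-partial $J_{Wy}$, since $J_W=-1/u'(w)$ depends on $y$ only through the optimal wage. I would first try to show that $w$ is weakly decreasing in $y$ at fixed $W$, because higher $y$ raises human capital accumulation and option values and so substitutes for current pay, as argued in Lemma~\ref{lem:y_incr}. If that holds, $J_{Wy}\ge0$ and monotonicity follows. Strict monotonicity, and hence the bijection of Property~\ref{ppert:mapping}, would then follow from the strict inequality $J_y>0$.
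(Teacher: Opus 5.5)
\textbf{Concavity.} This step does not go through as you describe it. Write $Q(W)\equiv q(\theta(W))$. Then $(QJ)''=Q''J+2Q'J_W+QJ_{WW}$.
\begin{itemize}
\item The last two terms are nonpositive, by Lemma~\ref{lem:jconcw}, $J_W<0$ and $Q'>0$.
\item The first term is the problem: $Q''=q_{\theta\theta}\theta_W^2+q_\theta\theta_{WW}$ is not signed by anything in the primitives, and convexity of $q$ makes its first piece positive.
\end{itemize}
So concavity hinges entirely on the curvature of the fill rate along the tightness schedule.

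The convex-combination argument of Lemma~\ref{app_cor:p_properties} cannot supply this. That argument shows $c(y)/J$ is strictly convex in $W$. Composing with the concave decreasing map $p\circ q^{-1}$ gives concavity of $p(\theta(W))$. For the fill rate, however, the same representation gives $Q(W)=c(y)/J(W,y)$, which is \emph{convex}.

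Worse, if the tightness is the free-entry one built from the firm's own $J$, then $Q(W)J(W,y)\equiv c(y)$. Hence $\Pi\equiv 0$: the objective is flat and has no unique maximizer. If instead you take $\theta(\cdot)$ as a given schedule generated by whichever types post at each $W$, the $c/J$ representation is unavailable. The $J$ inside $\theta$ is then $J(W,y(W))$, whose curvature in $W$ is unknown. Rewriting $qJ=pJ/\theta$ does not change this.

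The paper does not derive concavity from earlier lemmas. It writes out the second-order condition and isolates exactly the term $(q_{\theta\theta}\theta_W+q_\theta\theta_{WW}/\theta_W)\theta_WJ$. It then imposes a sufficient condition on the curvature of the tightness schedule ($\theta_{WW}$, equivalently of $q^{-1}$) and verifies it numerically for its matching function and parametrization. You need an assumption of this kind. Given it, uniqueness and continuity of $W^*$ follow as you say.

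\textbf{Monotonicity in $y$.} Your implicit-function skeleton is the paper's, but the ingredients you invoke for single crossing play no role. The cross-partial is $\Pi_{Wy}=q_WJ_y+qJ_{Wy}$. Since $c(y)$ is additively separable from $W$ and $J_{yy}$ does not enter this expression, neither Lemma~\ref{app_lem:jconcy} nor convexity of $c$ bears on it.

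Everything rests on $J_{Wy}$. Your plan is to get $J_{Wy}\ge 0$ from $\partial w/\partial y\le 0$ at fixed $W$. That is an additional comparative static of the optimal contract, not established anywhere.

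The paper handles this term differently. It writes $J_{Wy}=\Gamma W_y$ with $\Gamma=\frac{u''}{(u')^2}\frac{\partial w}{\partial W}<0$, i.e.\ it treats $y$ as moving the wage only through the promise. It folds $q\Gamma$ into the already negative coefficient on $W_y$ and concludes $W_y>0$ from $q_WJ_y>0$ (Lemma~\ref{lem:y_incr}).

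If you keep your own route, note that the first-order condition gives $q_W=-qJ_W/J$, so $\Pi_{Wy}=q\,(J_{Wy}-J_WJ_y/J)$. What you actually need is therefore $J\,J_{Wy}\ge J_WJ_y$, i.e.\ $J_W/J$ increasing in $y$. This is weaker than $J_{Wy}\ge 0$, because $J_WJ_y<0$. It is also exactly your verbal intuition that larger surplus makes promised utility relatively cheaper.

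The revealed-preference alternative does not bypass this. In the worker problem $V$ enters linearly against $p$, so adding the two optimality inequalities signs the comparison directly. Here, adding them implies $W_2\ge W_1$ only if $qJ$ has increasing differences in $(W,y)$, which is the same cross-condition.
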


\begin{proof}

Assuming the same $(h,\tau,y)$, the entrepreneur then chooses the optimal value $W^*(y):=\text{arg}\max_W \Pi(W;\Omega)$ to deliver in the contract. 
For the rest of the proof we consider as given the dependence of the functions on $(h,\tau,\Omega)$ and consider directly the composite function $q(\theta(W))$ as $q(W)$.

The optimization requires that $W^*$ satisfies the following first order condition:
\begin{align} \label{app_eqn:foc_free}
    q_W(W^*)J(y,W^*) + q(W^*)J_W(y,W^*) = 0.
\end{align}
For $W^*$ to be a unique maximum, $\Pi(\cdot)$ has to be concave hence the second-order condition must be negative. Taking into account explicitly of the composite function $q(\theta(W))$ implies the following second-order condition:
\begin{align}
& q_{\theta,\theta} (\theta_W)^2 J + q_{\theta} \theta_{W,W} J + 2 q_{\theta} \theta_W J_W + q J_{W,W} < 0 \\
& (q_{\theta,\theta} \theta_W + q_{\theta} \frac{\theta_{W,W}}{\theta_{W}}) \theta_W J + 2 q_{\theta} \theta_W J_W + q J_{W,W}<0\\
&(q_{\theta,\theta} \theta_W + q_{\theta} \frac{\theta_{W,W}}{\theta_{W}}) \underbrace{\theta_W J}_{<0} + \underbrace{2 q_{\theta} \theta_W J_W + q J_{W,W}}_{<0} <0 \label{app_eqn:soc_free}
\end{align}
where the inequalities in \textbf{Equation~\ref{app_eqn:soc_free}} follow from knowing that $J(\cdot)$ is concave and decreasing in $W$, $J_W, J_{W,W} \leq 0$, $q(\cdot)$ is increasing in $W$, and $\theta(\cdot)$ is decreasing in promises as discussed in \textbf{Lemma~\ref{app_cor:p_properties}}. Note that if $q_{\theta,\theta} \theta_W + q_{\theta} \frac{\theta_{W,W}}{\theta_{W}} > 0$ then the inequality in \ref{app_eqn:soc_free} would be respected and $\Pi(\cdot)$ would be concave. 
For this term to be positive it has to be that:
\begin{align}
&\frac{q_{\theta,\theta}}{q_{\theta}} \theta_W > - \frac{\theta_{W,W}}{\theta_W}. \label{app_eqn:soc_free_condition}
\end{align}
As $q(\cdot)$ is decreasing and convex, $q_{\theta,\theta} > 0,q_{\theta}<0$ and $\theta_W<0$, the left-hand side of \ref{app_eqn:soc_free_condition} is positive. If $\theta_{W,W} < 0$ the right-hand side would be negative and the inequality would always be respected. We know from \textbf{Lemma~\ref{app_cor:p_properties}} that $\theta_{W} = - q^{-1}_{\xi}J_W\frac{c}{J^2}$, so that we can write $\theta_{W,W} = \frac{c(cJ_W^2q^{-1}_{\xi,\xi} - J^2J_{W,W}q^{-1}_{\xi}  +2 J J^2_W q^{-1}_{\xi})}{J^4}$. As the cost function and the value of the contract are positive,  $\theta_{W,W} < 0$ if the numerator is  negative, so if 
$cJ_W^2q^{-1}_{\xi,\xi} - J^2J_{W,W}q^{-1}_{\xi}  +2 J J^2_W q^{-1}_{\xi} < 0$.

This is true if
$$ \frac{q^{-1}_{\xi,\xi}}{q^{-1}_{\xi}} > \left(\frac{J_{W,W}}{J_{W}}\frac{J}{J_W} - 2 \right) \frac{J}{c}.$$

As the right-hand side of the condition above is negative given the properties of $J$ and $q^{-1}_{\xi} < 0$, the inequality would always be respected if $q^{-1}_{\xi,\xi}<0$. We verify that the conditions holds for our matching function and parametrization.

The uniqueness of the optimum for the firm problem at object implies, given the properties of $\mathcal{V}$ and the theorem of the maximum, $W^*(y)$ is also continuous.

To show that $W_y^*(y)>0$ note that, by the implicit function theorem, the derivative of \textbf{Equation \ref{app_eqn:foc_free}} with respect to $y$ is:
\begin{align}
    (q_{W,W}J + 2q_WJ_W + qJ_{W,W})W_y + q_WJ_y + qJ_{W,y} = 0.
\end{align}

The first term in parenthesis is negative, as per the second order condition in \textbf{Equation~\ref{app_eqn:soc_free}}. The second term is positive, given that $J_y$ is positive  (\textbf{Lemma \ref{lem:y_incr}}) and $q_W$ is positive as well. From \textbf{Equation~\ref{app_eqn:env_firm}} it is possible to show that $J_{W,y} = \Gamma W_y$, with $\Gamma = \frac{u''}{(u')^2}\frac{\partial w}{\partial W}<0$ as $\frac{\partial w}{\partial W}\geq0$ and $u''<0$. Therefore, 
$$\underbrace{(q_{W,W}J + 2q_WJ_W + qJ_{W,W} + q\Gamma)}_{<0}W_y + q_WJ_y = 0.$$

This means that, in order for the equality to be respected, $W^*_y>0$.

\end{proof}

\begin{prop} \label{main_prop:mapping}
The mapping defined by the function $W: \mathcal{Y} \rightarrow \mathcal{V}$ is bijective for each worker characteristic $(h,\tau)$ and aggregate state $\Omega$.
\end{prop}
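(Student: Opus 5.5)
The plan is to build the bijection from Lemma~\ref{app_lem:qJ_Wconc} and then establish injectivity and surjectivity onto the relevant image separately. Fix worker characteristics $(h,\tau)$ and the aggregate state $\Omega$, and define $W^*:\mathcal{Y}\rightarrow\mathcal{V}$ by $W^*(y)=\arg\max_W \Pi(y,h,\tau,W;\Omega)$. Lemma~\ref{app_lem:qJ_Wconc} gives two facts we will use directly. First, $\Pi$ is concave in $W$, so the maximizer is unique and $W^*$ is a well-defined function. Second, by the theorem of the maximum applied to the compact set $\mathcal{V}$, $W^*$ is continuous in $y$.

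\emph{Injectivity.} We rely on the implicit-function computation in Lemma~\ref{app_lem:qJ_Wconc}, which yields $W^*_y>0$. That computation combines three ingredients: the strictly negative second-order term $(q_{W,W}J+2q_WJ_W+qJ_{W,W}+q\Gamma)$, the sign $J_y>0$ from Lemma~\ref{lem:y_incr}, and $q_W>0$. Together they make $W^*$ strictly increasing on $\mathcal{Y}=[\underb{y},\overline{y}]$, so $y_1\neq y_2$ implies $W^*(y_1)\neq W^*(y_2)$. The lemma's argument is local, via the derivative, so we plan to add one short step. Since $J$ is strictly concave in $y$ (Lemma~\ref{app_lem:jconcy}) and twice differentiable along the interior of the open submarkets, $W^*$ is differentiable there. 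A function that is continuous on an interval with strictly positive derivative on its interior is strictly monotone on the whole interval, and this gives global injectivity.

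\emph{Surjectivity.} We restrict the codomain to the set of promises actually posted in open submarkets. This is the relevant set, because by the free-entry condition \eqref{eqn:free_entry} submarkets outside it have $\theta=0$, and workers never direct search there (Proposition~\ref{main_prop:wrk_char}(ii)). On this set we take $\mathcal{V}_{(h,\tau,\Omega)}=W^*(\mathcal{Y})=[W^*(\underb{y}),W^*(\overline{y})]$; the intermediate value theorem shows this image is an interval. Every posted promise in an open submarket is generated by some entrant type $y$, and this gives surjectivity. A continuous, strictly increasing map from $[\underb{y},\overline{y}]$ onto its image is a bijection with continuous, strictly increasing inverse. This inverse is precisely the object that lets a worker infer $y$ from an observed offer $v$, as assumed in Section~\ref{sect:workers}.

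\emph{Main obstacle.} The delicate step is the sign of the cross-partial $J_{W,y}$ used in Lemma~\ref{app_lem:qJ_Wconc}, namely $J_{W,y}=\Gamma W_y$ with $\Gamma<0$. It comes from differentiating the envelope condition \eqref{app_eqn:env_firm} and needs $\partial w/\partial W\ge0$ in the optimal contract. We would verify this monotonicity from the wage FOC $-1+\mu u'(w)-\tilde\psi=0$ together with $\mu=-J_W$ and the concavity of $J$ in $W$ (Lemma~\ref{lem:jconcw}): a higher $W$ raises $\mu$, which lowers $u'(w)$ and therefore raises $w$. A further difficulty is that Lemma~\ref{app_lem:jconcy} with human capital accumulation relies on conditions that are only verified numerically. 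We would therefore state the bijectivity as conditional on those conditions, which hold under the Cobb--Douglas parametrization in Table~\ref{tab:FunForms}.
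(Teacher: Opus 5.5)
Your argument is the paper's. It cites Lemma~\ref{app_lem:qJ_Wconc} for uniqueness, continuity and strict monotonicity of $W^*(y)$ and reads off the bijection. Your local-to-global monotonicity step and your explicit restriction of the codomain to $W^*(\mathcal{Y})$ only make precise what the paper's two-line proof leaves implicit; that proof never establishes surjectivity onto all of $\mathcal{V}$.

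Two side remarks should be fixed:
\begin{itemize}
\item Differentiability of $W^*$ comes from the implicit function theorem applied to the first-order condition, using the strictly negative second-order condition in $W$. It does not come from concavity of $J$ in $y$, so Lemma~\ref{app_lem:jconcy} and its numerically verified conditions are not needed here.
\item A cross-partial $J_{W,y}$ taken at fixed $W$ cannot contain $W_y$, so the relation $J_{W,y}=\Gamma W_y$ you inherit from the lemma is a conflation. The sign that actually drives $W^*_y>0$ is that of $J_{W,y}=\frac{u''(w)}{u'(w)^2}\,\partial w/\partial y$ at fixed $W$, not that of $\partial w/\partial W$. The relevant single-crossing condition is $J_{W,y}\ge 0$, i.e.\ higher-$y$ firms pay a lower current wage for the same promise.
\end{itemize}
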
 
\begin{proof}
As shown in \textbf{Lemma~\ref{app_lem:qJ_Wconc}}, the optimal promise for each entrepreneur with quality $y$ is unique, continuous and monotonically increasing. Free entry implies that 
$\Pi(W^*(y)) = c(y)$, pinning down $y$ in equilibrium.
As $W^*(y)$ is unique and increasing, for each $y$ there can be only one $W^*(y)$ so that the mapping between firm qualities and optimal promises is bijective. 
\end{proof}

\begin{cor}\label{main_prop:searchs_incrina}
The optimal search strategy of the workers and vacancy posting of the firms is increasing in aggregate productivity.
\end{cor}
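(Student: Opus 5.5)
The plan is to derive the corollary from Proposition~\ref{main_prop:PF_incrina} combined with the comparative statics already established for the worker's search problem (Proposition~\ref{main_prop:wrk_char}) and for the entrepreneur's posting problem (Lemma~\ref{app_lem:qJ_Wconc}). Fix worker characteristics $(h,\tau)$ and the current value $V$. Under block recursivity (Property~\ref{prop:BRE}), $a$ is the only relevant aggregate argument, so we may treat $a$ as a parameter in every problem.

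\textbf{Step 1: tightness and job-finding probabilities rise with $a$.} By Proposition~\ref{main_prop:PF_incrina}, $J(h,\tau,W,y;a)$ is increasing in $a$ for every $(W,y)$. The equilibrium tightness in \eqref{eqn:eq_tight} is $\theta=q^{-1}(c(y)/J)$. Because $c(y)$ does not depend on $a$ and $q^{-1}$ is decreasing, $\theta(h,\tau,v;a)$ is increasing in $a$ submarket by submarket. It follows that $p(\theta(v;a))$ is increasing in $a$. This is the same chain of signs used inside the proof of Proposition~\ref{main_prop:PF_incrina}.

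\textbf{Step 2: vacancy posting.} Apply the implicit function theorem to the entrepreneur's first-order condition \eqref{app_eqn:foc_free}, $q_W J + q J_W=0$, now differentiating with respect to $a$ instead of $y$. The coefficient on $\partial W^*/\partial a$ is the negative second-order term from \eqref{app_eqn:soc_free}. The remaining terms are $q_{W,a}J+q_W J_a + q_a J_W + q J_{W,a}$. Using $J_a>0$ and $q_W>0$, and arguing exactly as in Lemma~\ref{app_lem:qJ_Wconc} that $J_{W,a}$ inherits its sign through $-1/u'(w)$ (higher $a$ raises the wage that delivers a given $W$), the plan is to sign this bracket as positive, which gives $W^*_a>0$. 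Free entry then raises entry in every $y$-submarket. Combined with the bijection of Proposition~\ref{main_prop:mapping}, this shows posted offers shift up for every firm quality.

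\textbf{Step 3: search strategy.} Write $K(v,V;a)=p(\theta(v;a))(v-V)$. I would show that $K$ has increasing differences in $(v,a)$, i.e.\ that $\partial_a \log p(\theta(v;a))$ is nondecreasing in $v$. Given strict concavity in $v$ (Proposition~\ref{main_prop:wrk_char}(i)), uniqueness, and the interior FOC \eqref{app_eqn:search_foc}, Topkis' theorem (or the implicit function theorem applied to \eqref{app_eqn:search_foc}) then delivers $\partial v^*/\partial a\ge 0$.

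The main obstacle is this single-crossing condition, together with the sign of the cross-partial $J_{W,a}$ in Step 2. Neither follows from the earlier results alone. For Step 3 I would first try writing $\log p(\theta)$ in terms of $\log J$: higher submarkets have smaller $J$, so a proportional increase in $J$ moves $\theta$ further along the steeper part of $q^{-1}$. I would check this explicitly for the CES matching function $p(\theta)=\theta(1+\theta^\gamma)^{-1/\gamma}$, and fall back on verifying it numerically, as the paper does for the second-order conditions in Lemma~\ref{app_lem:qJ_Wconc}.
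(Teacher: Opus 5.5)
\textbf{The gap is in what Steps 2 and 3 deliver.} The paper proves the corollary as a statement about the equilibrium assignment $y=\xi(h;a)$. It shows that the firm quality a worker of type $h$ targets, which is also the quality of the vacancies firms post for that worker, rises with $a$. This is the content used elsewhere in the paper, e.g.\ lower-quality vacancy creation in recessions.

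Your plan stays in promise space. It yields, conditionally, $v^*_a\ge 0$ and $W^*_a(y)>0$ for every $y$. But a promise $v$ is tied to a firm quality only through the bijection of Proposition~\ref{main_prop:mapping}, and that bijection itself moves with $a$. Define the targeted quality by $W^*(y^*(a);a)=v^*(a)$; then $y^*_a=(v^*_a-W^*_a)/W^*_y$. So your Step 2 works \emph{against} Step 3. Nothing in the plan compares $v^*_a$ with $W^*_a$, so even granting both of your unverified inequalities, the sign of the sorting response is undetermined.

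The missing idea is to bypass promise space. Use the entrepreneur's optimal choice of quality under free entry, $c'(y)=qJ_y$, treat $W$ as already optimized, and differentiate in $a$. This gives $\partial y/\partial a=(qJ_{ya}+J_yq_a)/(c''-qJ_{yy})$. The denominator is positive by convexity of $c$ and Lemma~\ref{app_lem:jconcy}. The whole corollary then reduces to the single condition $qJ_{ya}>-J_yq_a$: complementarity between $y$ and $a$ in match value must outweigh the congestion effect $q_a<0$ that your Step 1 identifies. The paper verifies this condition numerically.

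\textbf{Step 2 has two further problems.} First, the bracket $q_{W,a}J+q_WJ_a+q_aJ_W+qJ_{W,a}$ contains $q_{W,a}J$, which you do not mention and which no earlier result signs. Second, your rationale for $J_{W,a}$ has the sign reversed. Since $J_W=-1/u'(w)$, one has $J_{W,a}=u''(w)\,w_a/u'(w)^2$. If higher $a$ \emph{raised} the wage needed to deliver a given $W$, then $J_{W,a}<0$, which pushes $W^*_a$ down. You need the opposite: higher continuation values letting the firm deliver $W$ with a lower current wage. That requires its own argument.

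Falling back on a numerically checked inequality is not itself a weakness relative to the paper, which closes its proof the same way. The problem is that your inequalities, once checked, still would not imply the comparative static the corollary is about.

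Your Step 3 condition is also more tractable than you expect. With the CES matching function, $p=(1-(c/J)^\gamma)^{1/\gamma}$, so $\partial_a\log p=\frac{z}{1-z}\,\frac{J_a}{J}$ with $z=q(\theta)^\gamma$ increasing in $v$. Single crossing therefore holds whenever $J_a/J$ is nondecreasing in $v$.
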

\begin{proof}
We start from the free entry condition, ignoring without loss of generality $\tau,\mu$ from the notation, and ignoring the indirect dependence of $q(\cdot)$ on $\theta(\cdot)$:

\[
q(h,W;a)J(h,W,y;a) = c(y)
\]

We know from \textbf{Lemma~\ref{app_lem:qJ_Wconc}} and \textbf{Proposition~\ref{main_prop:mapping}} that the optimal $W^*$ choice is a unique, monotonically increasing function $W(y)$ given workers characteristics $(h,\tau)$ and aggregate state $\Omega$. Given \textbf{Proposition~\ref{main_prop:wrk_char}}, as the worker search policy is continuous, unique and increasing in $h$, we can also conclude that firms and workers qualities feature positive assortative matching for any aggregate state, according to a continuous function $y = \xi(h;a)$. We thus want to find conditions for which $\xi_a > 0$, which would imply that in equilibrium the ``assignment'' (search) firm quality $y$ for worker quality $h$ is higher when the state of the economy $a$ is better. We write $\xi_a$ simply as $\frac{\partial y}{\partial a}$ in the proof.

We obtain the partial derivative of the free entry condition by $y$:

\begin{equation}\label{eq:marginal_fe}
    c'(y) = q\frac{\partial J}{\partial y}
\end{equation}

We then write the total derivative of of this equation by $a$, ignoring terms featuring $W$ by envelope condition:

\begin{equation}\label{eq:totaldiff_a}
   q \left(\frac{\partial^2 J}{\partial y \partial a} + \frac{\partial^2 J}{\partial^2 y}\frac{\partial y}{\partial a}\right) + \frac{\partial J}{\partial y}\frac{\partial q}{\partial a} = c''(y) \frac{\partial y}{\partial a}  
\end{equation}

which, in turn, leads to:

\begin{equation}\label{eq:dyda}
    \frac{\partial y}{\partial a} = \frac{q\frac{\partial^2 J}{\partial y \partial a} + \frac{\partial J}{\partial y}\frac{\partial q}{\partial a}}{c''(y) - \frac{\partial^2 J}{\partial ^2 y}}
\end{equation}

The sign of the denominator is positive given the properties of the function $c(\dot)$ and \textbf{Lemma~\ref{app_lem:jconcy}}, hence it is sufficient to sign the numerator, and $\frac{\partial y}{\partial a} > 0$ amounts to

\[
q\frac{\partial^2 J}{\partial y \partial a} > -\frac{\partial J}{\partial y}\frac{\partial q}{\partial a}
\]

We check for this condition in the solution and find it to be respected in the state space given our calibration. 

\end{proof}

Firm value $J(\cdot)$ is increasing in $a$, while retention is decreasing. As firm make more profits, they can marginally increase offered utilities to workers to maximize retention. The last inequality implies that for the workers to improve their target firm quality in better times it must be true that the first marginal effect, the increase in firm profits from better matches, must dominate the second, the relative decrease in retention, when aggregate productivity changes. %

%% file: AppendixBRE_short.tex
\section{Existence of a Block Recursive Equilibrium}\label{Appendix:BRE_existence}
In order to show that a Block Recursive Equilibrium (BRE) exists in our model we need to show that the equilibrium contracts, the workers' and the entrepreneurs value and policy functions do not depend on the distribution of employed and unemployed workers.
This implies that the only element of the aggregate state that matters for a firm when making an hiring decision is the state of aggregate productivity but not the distribution of worker types (e.g. employed vs unemployed).
\begin{prop}
A Block Recursive Equilibrium as defined in Definition~\ref{defi:bre_defi} exists.
\end{prop}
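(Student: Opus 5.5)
The plan follows the standard Menzio--Shi strategy. I will construct, by backward induction on age $\tau$, a candidate set of value and policy functions that depend on the aggregate state only through $a$. Then I will verify that, together with the law of motion $\Phi$ generated by these policies, they satisfy every condition of the recursive equilibrium. The finite horizon is what makes backward induction the natural route: at $\tau>T$ all values are zero by construction, so $U$, $V$ and $J$ are trivially independent of $\mu$ at the terminal date. I will then show that if every continuation object at age $\tau+1$ is a function of $(h,\tau+1,\iota,V;a')$ only, the same holds at age $\tau$.

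The inductive step proceeds in a fixed order.

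First, the firm problem \eqref{eqn:firm_prob}. Given continuation objects $\widetilde p$, $\widetilde r$, $U$ and $J$ at $\tau+1$ that depend only on $a'$, the right-hand side involves the following ingredients:
\begin{itemize}
\item production $f(h,y;a)$;
\item the expectation over $(a',\psi)$, which uses only $F(a'|a)$ and the distribution of $\psi$;
\item the constraints \eqref{eqn:pkeep}--\eqref{eqn:wagecon}.
\end{itemize}
None of these involve $\mu$, because $\mu$ enters only through the conditional law of $\Omega'$. Since $\mu'$ is irrelevant for next-period objects, the expectation collapses to one over $a'$. So the sup defines $J(h,\tau,\iota,W,y;a)$, and the contract and exit policies depend on $a$ only. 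Lemma~\ref{lem:jconcw} keeps this sup well-behaved, and the Feller property of $F$ together with compactness of $\mathcal V$ and $\mathcal H$ delivers continuity, so the sup is attained.

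Second, free entry \eqref{eqn:eq_tight}. With $c(y)$ exogenous, $\theta(h,\tau,\iota,W;a)=q^{-1}(c(y)/J)$ inherits independence from $\mu$. The bijective mapping of Proposition~\ref{main_prop:mapping} identifies the $y$ operating in each submarket using only $(h,\tau,\iota,W,a)$. The key economic point to stress is that directed search indexes submarkets by worker type and promised utility. As a result, a vacancy's filling probability and its value are pinned down by that submarket alone, never by the unemployed/employed mix elsewhere.

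Third, the worker problem \eqref{eqn:wrk_option}, \eqref{eqn:unempl_prob}, \eqref{eqn:empl_prob}. Given $\theta(\cdot;a)$, the search value $R$ and the policy $v^*$ depend only on $a$, and they are unique by Proposition~\ref{main_prop:wrk_char}. The resulting $U$, $\widetilde p$ and $\widetilde r$ at age $\tau$ are then functions of $a$ only. This closes the induction.

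Finally, I will define $\Phi$ by pushing $\mu$ forward through these policies and the exogenous shocks. Because the policies do not depend on $\mu$, $\Phi$ is well defined and consistent, and all five equilibrium conditions hold with $\mu$-independent value functions. This gives a BRE in the sense of Definition~\ref{defi:bre_defi}.

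The main obstacle is the first step. I must justify that the expectation over $\Omega'$ in \eqref{eqn:firm_prob} and \eqref{eqn:pkeep} reduces to one over $a'$. This needs the continuation objects to be $\mu'$-independent, and it needs $\mu'$ never to enter a payoff directly: there must be no aggregate prices, and unemployment benefits $b(h,\tau)$ must be individual. I would verify explicitly that nothing in production, costs or benefits depends on aggregates of $\mu$; this is the role of the free-entry and directed-search assumptions.

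A secondary difficulty is existence at each stage. It involves two issues:
\begin{itemize}
\item the exit indicator and the wage-rigidity constraint \eqref{eqn:wagecon} introduce a discontinuity and a dependence on past wages;
\item closed submarkets must be handled.
\end{itemize}
For the first, I would add the current wage floor to the individual state, which still does not involve $\mu$, and rely on the two-point lottery for convexification. For the second, I would set $\theta=0$ wherever $c(y)>\max J$.
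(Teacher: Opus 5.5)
Your proposal is correct and takes essentially the same route as the paper. Both argue by backward induction on age from the terminal period, showing at each step that the contract value $J$, the free-entry tightness $\theta=q^{-1}(c(y)/J)$, the search value $R$ and the worker values depend on $\Omega$ only through $a$, because directed search indexes submarkets by individual states and $\mu'$ never enters continuation payoffs.

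Your additional care goes beyond the paper's proof, which simply asserts $\mu$-independence at each stage. In particular, the paper does not address attainment of the sup, closed submarkets, carrying the lagged wage as an individual state for the downward-rigidity constraint, or explicitly constructing $\Phi$.
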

\begin{proof}
We follow the approach in \cite{menzio2016,herkenhoff2019} and prove the existence of a BRE using backward induction. %

Consider the lifetime values of an unemployed and an employed worker before the production stage in the last period of households lives with $\tau = T$ :
\begin{align}
&U(h, T, \iota;\Omega) = u(b(h,T))\\
&V(h, T, \iota;\Omega) = u(w(a)),
\end{align}
their values trivially do not depend on the distribution of types as both valuations are $0$ from $T+1$ onward. Hence, $ U(h, T, \iota; \Omega) = U(h, T, \iota; a)$ and $V(h, T, \iota; \Omega) = V(h, T, \iota; a)$.

The optimal contract for agents aged $\tau = T$, instead, solves the following problem
\begin{equation*}
J(h,T,W,y; \Omega)=  \underset{w}{\sup} [f(y,h;a)-w] \quad
s.t.\; W =  u(w),
\end{equation*}
that clearly does not depend on the distribution of worker types due to the directed search protocol and where the aggregate state only affects the promised utility and the optimal wage through realization of the aggregate productivity processes. Therefore, $J(h,T,\iota,W,y; \Omega) = J(h,T,\iota,W,y;a)$.

This also implies that the equilibrium market tightness
$$ \theta(h,T,\iota,W; \Omega) = q^{-1}\left( \frac{c(y)}{J(h,T,\iota,W,y;a)} \right)$$
is independent from the distribution of worker types and it is only affected by realization of aggregate productivity, so $\theta(h,T,W;a)$.

This in turn implies that the search problem workers face at the beginning of the last period of their lives depends on the aggregate state only through aggregate productivity $a$:
\begin{equation*}
R(h,T,\iota,V;a) = \underset{v}{\sup}\;\Big[ p(\theta(h,T,\iota,v;a))\big[v-V]\big]\Big],
\end{equation*}
does not depend on the distribution of worker types.

Stepping back at $\tau = T-1$, the value functions for the unemployed and the employed agents are solutions to the following dynamic programs
\begin{align*}
&\underset{v}{\sup} \; u(b(h,T-1))+ \beta \mathbb{E}_{\Omega,\psi}\bigg(U(h^{\prime},T,\iota;a^\prime) + p(\theta(h,T,\iota,v;a^\prime))\big[v-U(h^\prime,T,\iota,;a^\prime)\big]\bigg) \\
&u(w) +\beta \mathbb{E}_{\Omega,\psi}\left(\begin{array}{c}
\delta U( h^\prime,T,\iota;a^\prime)
+ \beta(1-\delta)W
+ \\ + \beta(1-\delta)\lambda_{e} \max(0,R(h^\prime,T,\iota,,W);a^\prime)]\big]\Big]
\end{array}
\right),
\end{align*}
where both do not depend on the distribution of worker types.

The optimal contract at this step is a solution to
\begin{align*}
J_{t}(h,T-1,\iota,V,y;& a)=  \underset{\{\pi_i,w_i,W_{i}\}}{\sup} \sum_{i = 1,2} \pi_i \Big( f(y,h;a)-w_i \nonumber \\
+ & \mathbb{E}_{\Omega,\psi}\left[\widetilde{p}(h^\prime,T,W_{i,\Omega^\prime};a^\prime)(J(h^\prime,T,y,W_{i};a^\prime)\right] \Big) \\
\nonumber\\
s.t.\; &V = \sum_{i = 1,2} \pi_i \left( u(w_i)+ \mathbb{E}_{\Omega,\psi} \widetilde{r}(h^\prime,T,W_{i};a^\prime)\right), \; h^\prime = \phi(h,y,\iota,\psi) \\
&\mathbb{E}_{\Omega,\psi} \sum_{i = 1,2} \pi_i  \left( \mathbb{E}_{\Omega,\psi} J(h^\prime,T,\iota,W_{i},y;a^\prime) \right)\geq0\,\,\text{and}\,\,t\leq T
\end{align*}
which does not depend on types distribution.

Therefore, also the equilibrium tightness and the search gain at $T-1$ are independent from types' distributions, as
\begin{align*}
    &\theta(h,T-1,\iota,W;a) = q^{-1}\left( \frac{c(y)}{J(h,T-1,\iota,
    W,y;a)} \right) \\
    &R(h,T-1,\iota,V;a) = \underset{v}{\sup}\;\Big[ p(\theta(h,T-1,\iota,v;a))\big[v-V]\big]\Big].
\end{align*}
Stepping back from $\tau = T-1,...,1$ and repeating the arguments above completes the proof.
\end{proof}

%% file: appendix_SolutionAndSMM.tex
\section{Model solution and estimation}\label{sect:app_computation}
\defcitealias{2020SciPy-NMeth}{Virtanen et al., (2020)}

\paragraph{Model solution.}
For each education level, we solve the model by backward induction on a regular grid of human capital, wage multiplier levels, firm quality and aggregate shocks. In particular, starting from the last period of workers lives we compute the terminal wage in each state, then solve the search problems for both the employed and the unemployed, which allows us to compute the market tightness in each submarket as well as the value of the contract and of unemployment. With these objects in hand we proceed backwards until workers labor market entry.

In the model simulations, we populate each cohort of agents with 42 individuals, 30 non-graduates and 12 graduates, consistent with the share of graduates in the Italian economy. Graduates enter the labor market in a staggered manner every quarter for the first three years of their life. We then simulate the model for 600 periods (we take 181 periods as burn in) and construct a panel with 419 periods and $7,560$ individuals per period. In the model, we consider one period as one quarter in the data.\footnote{We use a grid of $20 \times 15 \times 25 \times 7$ for each education level and we consider 180 quarters of workers life. After numba accelerations, solving and simulating the model takes approximately 3 minutes with Python 3.12/Fortran95 on a Linux HPC cluster with 64 cores and 120G of RAM.} 

\paragraph{Estimation.}
To pin down the 22 internal parameters in the model, $\boldsymbol{\theta}$, we minimize the Euclidean distance in the scaled arc-percent deviation $e(\boldsymbol{\theta}) = \frac{m(\boldsymbol{\theta}) - \mathbf{d}}{0.5(|m(\boldsymbol{\theta})|+|\mathbf{d}|)+0.25}$ between model moments, $m(\boldsymbol{\theta})$, and their empirical counterparts, $\mathbf{d}$.\footnote{This specific formulation of the error metric helps to deal with potential issues that could arise from the large variation in the scales of the moments \citep{guvenen_what_2021}.} More formally, we pick the vector of parameters 
$$ \boldsymbol{\theta}^* = \arg\min_{\boldsymbol{\theta}} \left\{|| e(\boldsymbol{\theta})' W e(\boldsymbol{\theta})||) \right\}.$$
 
In $\mathbf{d}$ and $m(\boldsymbol{\theta})$  we stack: the profiles of E-E and E-U transitions by age (2 age groups) and education level; earnings growth relative to the initial earnings by education level and experience (2 experience groups); the unemployment rate by education level; the correlations of labor market flows with the cyclical component of aggregate output by education level; average sorting, measured as the average correlation between AKM fixed effects in the data and as the correlation of worker and firm qualities in the model; the ratio of the third and second quintiles of value added per employee; and the share of workers unemployed by more than 12 months by education level.

Jointly, these moments deliver 22 restrictions for 22 parameters. In model simulations, we construct each moment weighting by age from Italian population weights. 
We solve for $ \boldsymbol{\theta}^*$ numerically adopting two complementary strategies. In the first, we use a two step procedure and minimize the distance between model-generated and data-generated moments using a global solver on the largest feasible domain. For this, we rely on the differential evolution algorithm contained in the SciPy library developed by \citetalias{2020SciPy-NMeth}. In the second, we solve and simulate the model over a sparse grid of the parameter space.\footnote{The advantage of selecting the parameters with this procedure is that the grid is built so that the function domain is optimally covered with the least amount of possible points compared to other forms of approximation (e.g. equi-spaced or random grids). The sparse grid is built using the``Tasmanian" libraries \citep{stoyanov2015tasmanian}.} In both cases, we refine the estimation using the best combination of parameters as the starting point of a local minimization routine, for this we rely on the Nelder-Mead algorithm in the SciPy library.

We also use the series generated by the differential evolution--1,500 points from a Sobol sequence spanning the parameter space--to examine the role of each moment in identifying our parameter values. We then project each parameter $k$ on the simulated moments using a LASSO regression with 0.8 penalty,
\begin{equation}\label{eq:lasso_id}
\hat{\beta} = \arg\min_{\beta} \left\{ \sum(\theta_{i} - M(\theta_i)'\beta)^2 + \lambda \sum |\beta_j| \right\},
\end{equation}
and then normalize the non-zero coefficients that we report in Figure~\ref{fig:app_IDfigLasso}. 

We also complement this global approach with a local perturbation of our parametrization. Specifically, we perturb each parameter by $\pm 2.5\%$ and re-simulate the model to derive the associated elasticities for each target moment. We then compute the average of these elasticities and collect them in an average elasticity matrix and verify that this Jacobian is invertible, confirming local identification. To provide some intuition, in \textbf{Figure~\ref{fig:IDelas}} we report the heatmaps for two normalizations of this matrix: Panel~(a) shows the normalization by column maxima, highlighting for each parameter the moment that most discipline it, while Panel~(b) reports the normalization by row maxima, identifying for each moment the parameter to which it is most informative. 

Unemployment durations, business cycle correlations of transition rates, Earnings growth by education, and firm productivity distribution measures emerge as the dominant sources of identification. Transition rates (EE, EU) exhibit somewhat weaker relationships with parameters than stock-based moments like unemployment rates, reflecting the long-run consistency between stocks and flows: while individual flow moments may show attenuated elasticities, they discipline the model through their joint determination of equilibrium stocks. Targeting both flows and stocks ensures internal consistency and sharpens identification. Reassuringly, the LASSO-based and local elasticity analyses yield consistent findings---parameters showing strong local elasticities are recovered using similar moments in the global LASSO exercise.

\begin{figure}
\caption{Normalized LASSO coefficient from regressing parameters on moments}\label{fig:app_IDfigLasso}
\includegraphics[width=0.8\textwidth]{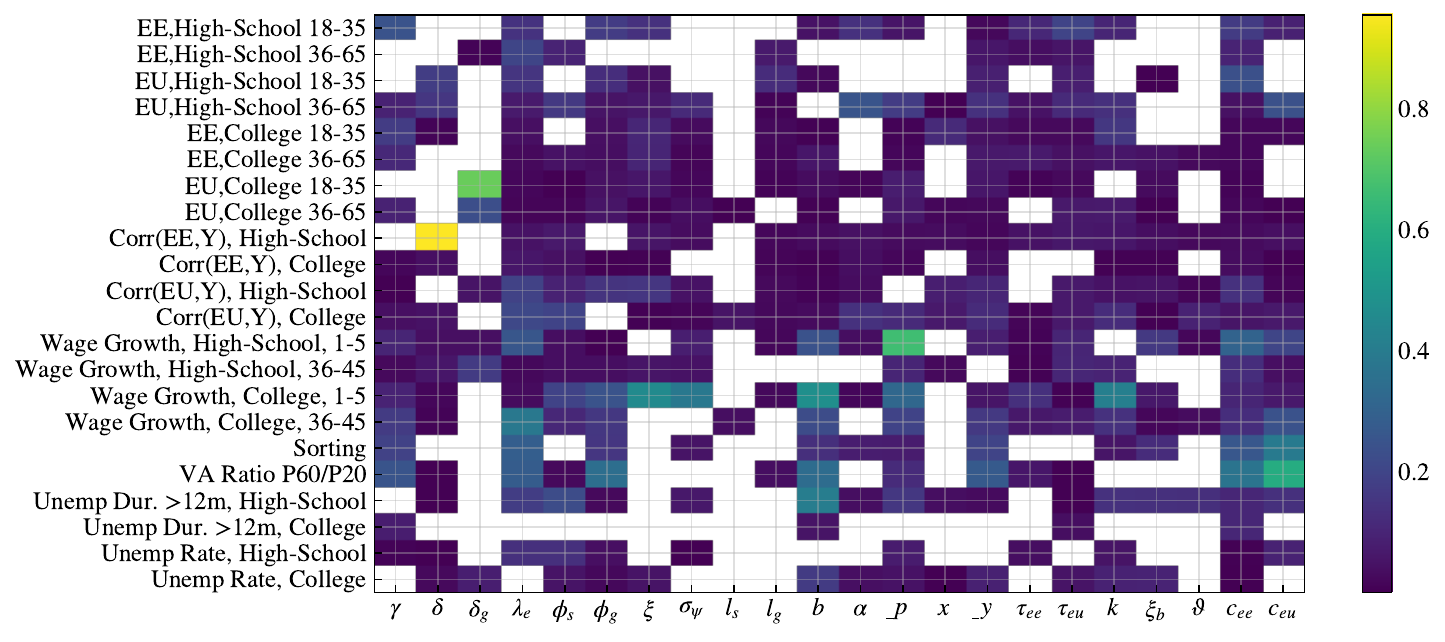}
\floatfoot{\textbf{Note:} The figure presents the heatmap of non-zero LASSO coefficients obtained from Equation~\ref{eq:lasso_id}. Each regression of parameters on moments uses a regularization hyper-parameter of 
$\lambda=0.8$. The estimation sample consists of 1,500 parameter points generated via a Sobol sequence and excluding simulations that generate errors in the construction of the target moments.}
\end{figure}

\begin{figure}
\caption{Average elasticity matrices}\label{fig:IDelas}
\subcaptionbox{Parameters to Moments\label{fig:IDfigCol}}{\includegraphics[width=0.5\textwidth]{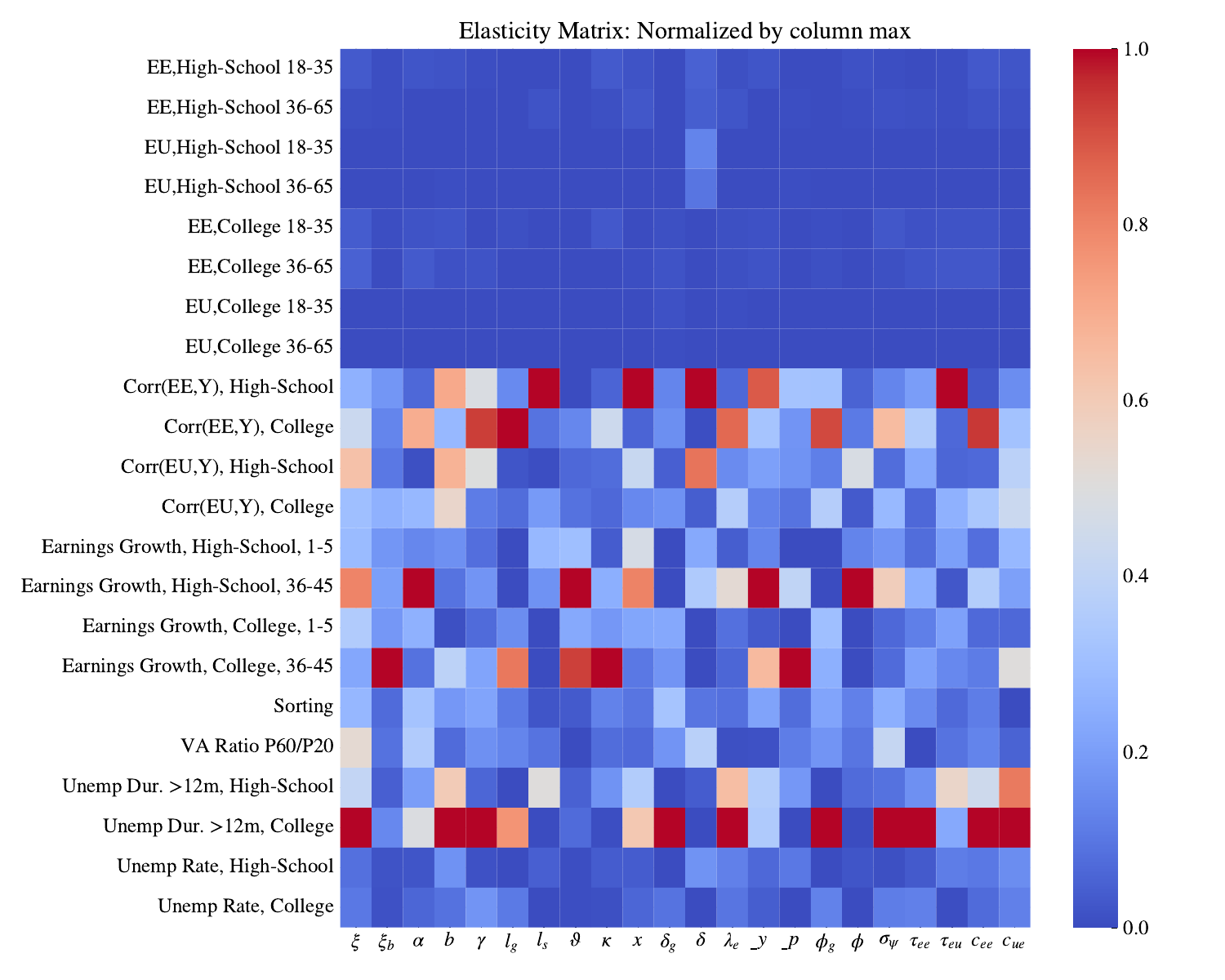}}%
\subcaptionbox{Moments to Parameters\label{fig:IDfigRow}}{\includegraphics[width=0.5\textwidth]{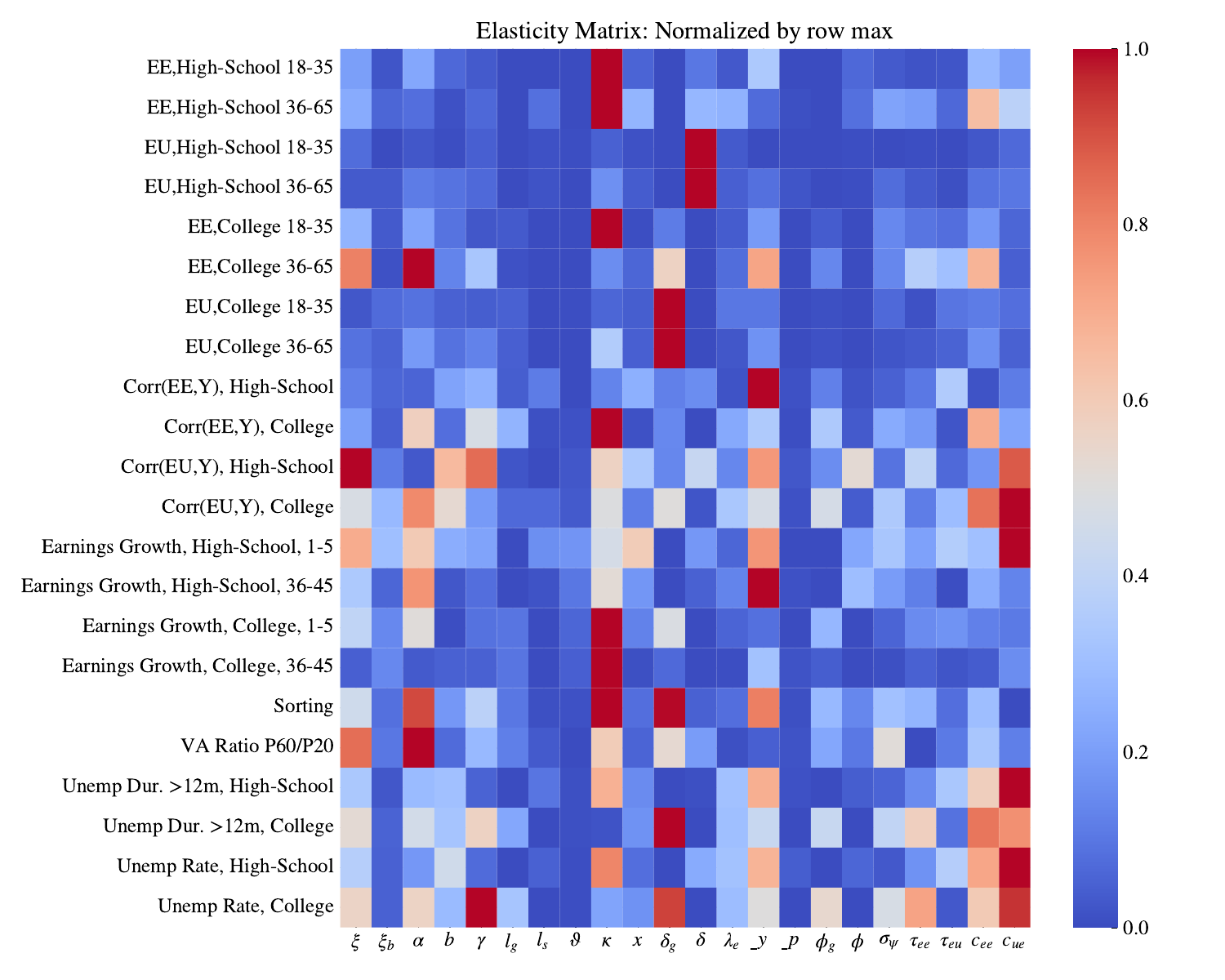}}%
\floatfoot{\textbf{Note:} Each panel reports the average local elasticity of model moments to parameter perturbations of $\pm 2.5\%$. Panel (a) normalizes elasticities by column maxima and highlights, for each parameter, the moments that most strongly discipline it. Panel (b) normalizes elasticities by row maxima and highlights, for each moment, the parameter to which it is most informative.}
\end{figure}

\begin{table}[h]
\caption{Target moments\label{app_tab:targets}}
\resizebox{0.7\textwidth}{!}{\input{figures/PaperFigures/MomentsTable.txt}}
\end{table}

%% file: figures/PaperFigures/MomentsTable.txt
\begin{tabular}{lcc}
\toprule
\textbf{Moment} & \textbf{Data} & \textbf{Model} \\
\midrule
\multicolumn{3}{l}{\textbf{A. E-E Transition Rates, High School}} \\
\midrule
18-35 & 3.7\% & 4.6\% \\
36-65 & 2.2\% & 3.3\% \\
\midrule
\multicolumn{3}{l}{\textbf{B. E-E Transition Rates, College}} \\
\midrule
18-35 & 3.4\% & 3.8\% \\
36-65 & 1.7\% & 1.8\% \\
\midrule
\multicolumn{3}{l}{\textbf{C. E-U Transition Rates, High School}} \\
\midrule
18-35 & 6.0\% & 4.6\% \\
36-65 & 3.2\% & 3.5\% \\
\midrule
\multicolumn{3}{l}{\textbf{D. E-U Transition Rates, College}} \\
\midrule
18-35 & 4.1\% & 2.8\% \\
36-65 & 1.2\% & 1.6\% \\
\midrule
\multicolumn{3}{l}{\textbf{E. Earnings Growth rel. to entry, High School}} \\
\midrule
1-5 & 0.32 & 0.32 \\
36-45 & 1.13 & 1.14 \\
\midrule
\multicolumn{3}{l}{\textbf{F. Earnings Growth rel. to entry, College}} \\
\midrule
1-5 & 0.36 & 0.29 \\
36-45 & 2.48 & 1.85 \\
\midrule
\multicolumn{3}{l}{\textbf{G. Flows Cyclicality (GDP Correlation)}} \\
\midrule
E-E, High School & 0.76 & 0.53 \\
E-E, College & 0.55 & 0.46 \\
E-U, High School & -0.23 & -0.15 \\
E-U, College & 0.03 & 0.01 \\
\midrule
\multicolumn{3}{l}{\textbf{H. Other Moments}} \\
\midrule
Unemployment Rate, High School & 9.4\% & 8.2\% \\
Unemployment Rate, College & 6.2\% & 4.7\% \\
Sorting & 0.40 & 0.42 \\
VA Ratio (P60/P20) & 1.20 & 1.52 \\
Share of Unemployed $>$12 months, High School & 63.6\% & 63.3\% \\
Share of Unemployed $>$12 months, College & 45.8\% & 58.0\% \\
\bottomrule
\end{tabular}

%% file: appendix_AdditionalFig.tex
\section{Additional Figures and Tables, Model}\label{sect:app_additional_tables_andfig}

\subsection{Inequality dynamics} \cite{heathcote2020} show that recessions have a persistent effect on inequality, affecting the earnings of workers in the left tail of the income distribution the most. \textbf{Figure~\ref{fig:HPV_fig}} illustrates the dynamics of inequality around business cycles by displaying the pattern of losses across the  earnings distribution. Recessions hit the poorest workers the hardest, and worsening job prospects push some of them out of the labor force. 

A prediction of the model that departs from existing literature is that the persistence of earnings losses varies across the distribution: while workers with less human capital display more volatility in earnings (mostly due to displacement), the impact on workers with high human capital is dampened but quite persistent. The presence of limited commitment to job matches by firms creates a subset of workers who regularly undergo shorter spells of employment. Their layoffs exhibit strong counter-cyclical behavior (a pattern highlighted in \citealt{jarosch2023}). 

Due to match-specific human capital accumulation, the opportunity for workers to recover from low levels of human capital depends on climbing the job ladder. However, falling off the lowest rungs of the job ladder is more likely than from the highest rungs. In the data, losing a job implies that on average workers will land a worse job after the unemployment spell. Displacement knocks workers down to lower rungs of the firm productivity ladder and makes transitions to higher-productivity firms less likely, effectively flattening the job ladder. We estimate an event-study specification à la  \cite{jacobson1993} where the outcome is the quality of the employer, measured as the log of value added per worker of the worker's current firm. We compare displaced workers to similar non-displaced workers, and trace the coefficients on event time from three years before displacement to seven years after displacement and show results in \textbf{Figure \ref{fig:jls}}. The model generates an analogous flattening of the job ladder, with a realistic cross-sectional distribution of the cross-section of separations.

Consequently, the model endogenously generates a dual economy with workers whose earnings cyclicality depends on the extensive margin of employment (the lowest deciles of income distribution), and higher income workers whose earnings cyclicality depend on the quality of their employment over time \citep[see][]{doniger2023}.

\begin{figure}
    \centering 
    \caption{Inequality dynamics  \label{fig:HPV_fig}}
    \includegraphics[width=0.55\textwidth]{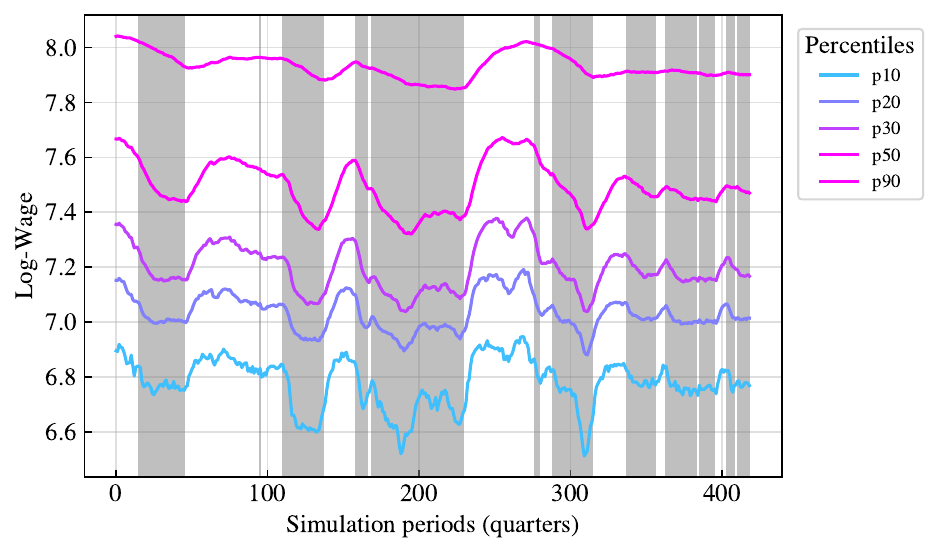}
    \vspace{-1.0em}
    \floatfoot{\textbf{Note}:The figure reports the dynamics of income for different
    percentiles of the income distribution.}
\end{figure}

\begin{figure}[!h]
\captionsetup{type=figure}
\caption{Firm quality after displacement}\label{fig:jls}
\centering
\includegraphics[width=0.48\textwidth]{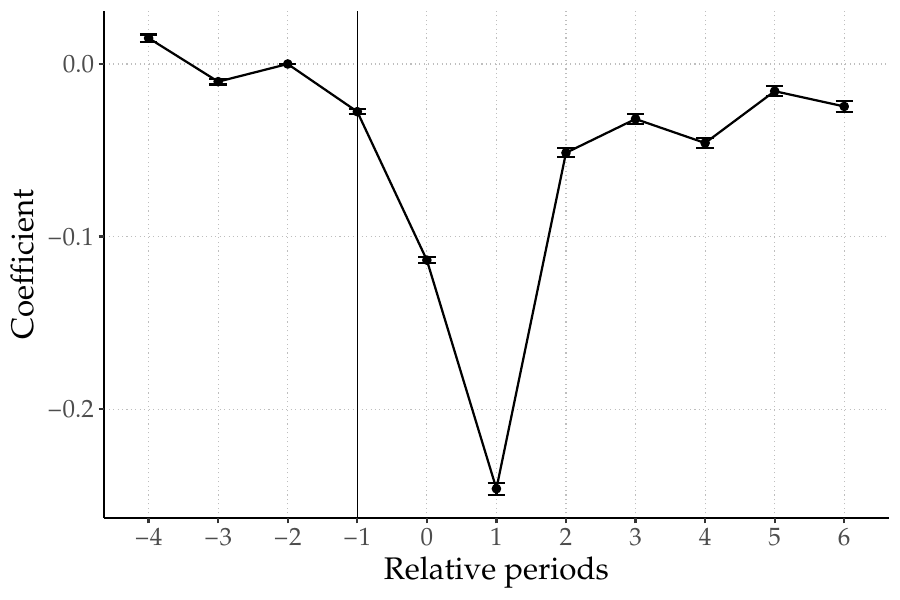}
\floatfoot{\textbf{Note}: The figure plots the change in the average quality of employers -- measured as the log of value added per employee -- \textit{after} a displacement event \citep{jacobson1993}.}
\end{figure}

\subsection{State dependence}
We check for state dependence by running the following regression across different model simulations 
\begin{equation*}
\resizebox{0.95\textwidth}{!}{$
Y_{\text{Post}} = \alpha + \sum_{j=1}^{4} \beta_{j}\underbrace{\mathbb{E}[FQ^{j}_\text{Pre}]}_{\text{Firm Quality}} + \gamma_j \overbrace{\mathbb{E}[HC^{j}_\text{Pre}]}^{\text{Human Capital}} + \delta_j\underbrace{\mathbb{E}[W^{j}_\text{Pre}]}_{\text{Wages}} + \eta_j\overbrace{ \mathbb{E}[LS^{j}_\text{Pre}] }^{\text{Labor Share}} + \theta \underbrace{\mathbb{C}(FQ_\text{Pre},HC_\text{Pre})}_{\text{Sorting}} + \varepsilon,
$} 
\end{equation*}
in which $\mathbb{E}[X^{j}_\text{Pre}]$, denotes the $j^{th}$ moments of an endogenous, cross-sectional distribution before the shock, and $Y_{\text{Post}}$ are model outcomes after the shock. Results are displayed in \textbf{Table \ref{app_tab:state_dependence}}.

\begin{table}[h!]
\caption{State Dependence\label{app_tab:state_dependence}}

\subcaptionbox{Mean}{
\resizebox{0.43\textwidth}{!}{
\input{figures/PaperFigures/StateDepTable_raw_avg.tex}
}
}%
\subcaptionbox{Variance}{
\resizebox{0.43\textwidth}{!}{
\input{figures/PaperFigures/StateDepTable_raw_var.tex}

}
}

\vspace{0.5em}
\subcaptionbox{Skewness}{
\resizebox{0.43\textwidth}{!}{
\input{figures/PaperFigures/StateDepTable_raw_ske.tex}

}
}%
\subcaptionbox{Kurtosis}{
\resizebox{0.43\textwidth}{!}{
\input{figures/PaperFigures/StateDepTable_raw_kur.tex}

}
}
\floatfoot{\textbf{Note:} Standard-errors in parenthesis. 
\noindent The table reports the coefficients of regressing model outcomes after the shock on moments of relevant endogenous variables before the shock (one year average). The results are scaled to report the effect of changing the regressors by 1pp. Sorting is computed every quarter as the correlation between firm and worker quality and reported in the same table of the means even if 
technically is not a moment of a distribution. Model outcomes are: i) short-term cumulative output response (1 year after the shock); ii) the medium-term cumulative output response (3 years after the shock); and iii) the persistence of the shock (number of quarters before the output IRF is back at zero). For ease of exposition, the results are grouped by moments but the coefficients are computed including all moments in the same regression. The simulations are those underlying \textbf{Figure~\ref{fig:recession_experiment}}.}
\end{table}

\subsection{Additional validation and results}
We report additional evidence on the ability of the baseline model to replicate relevant data features and we augment the evidence from Section~\ref{validation} with two additional models: one in which we prohibit endogenous separations and one in which we remove the downward wage rigidity. In both cases the models fail to replicate all the key features of the data that our baseline model manages to capture.

\begin{figure}
\centering
\includegraphics[width=0.5\linewidth]{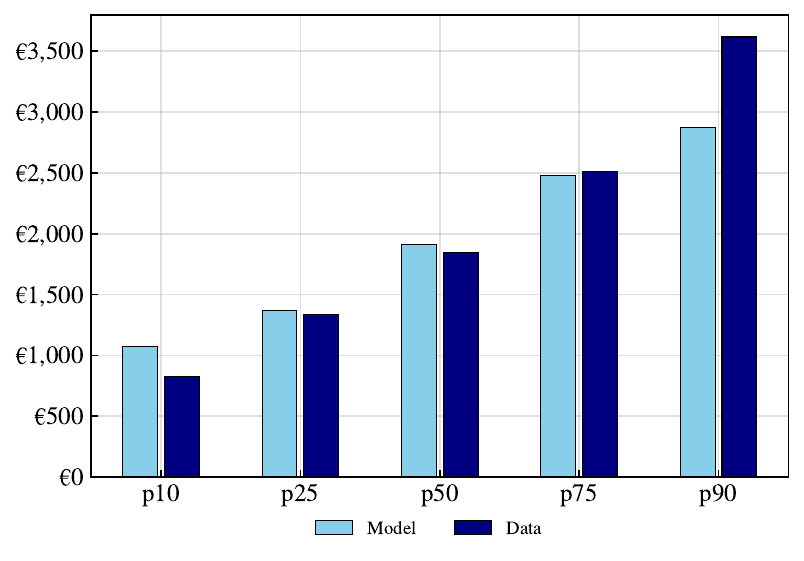}
\caption{Wage distribution\label{fig:wage_dist}}
\floatfoot{\textbf{Note:} The figure reports the income distribution in our sample and in model simulations.}
\end{figure}

\begin{figure}[h]
     \centering
     \caption{GDP: Model and Data}\label{fig:gdp_data_model}
     \includegraphics[width=0.5\textwidth]{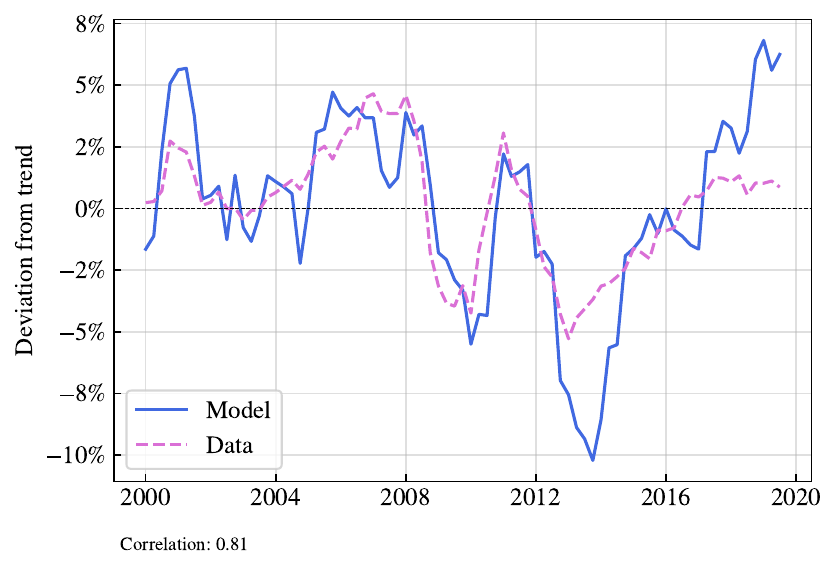}    
     \vspace{-1.0em}
     \floatfoot{\textbf{Note:} The figure plots the cyclical components of real GDP for Italy and for a model simulation in which the TFP process is matched to the Italian TFP realizations from 2000 to 2019, both series are detrended using an Hamilton filter (4 lags and 8 leads) and their correlation is robust to the choice of the filter.}
\end{figure}

\begin{figure}[h]
 \centering 
 \caption{Scarring Effect of Recessions}\label{fig:scarring_model_data}
 \subcaptionbox{\footnotesize Data}{\includegraphics[width=0.3\textwidth]{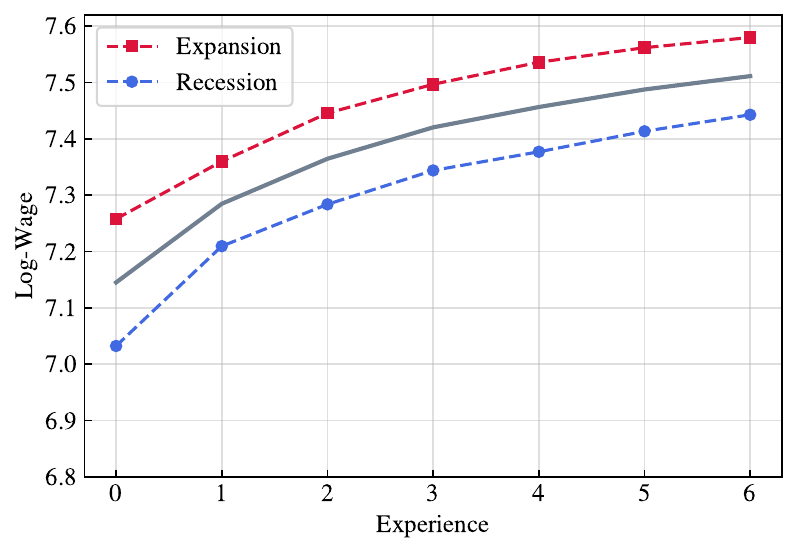}}
 \subcaptionbox{\footnotesize Model: Baseline}{\includegraphics[width=0.3\textwidth]{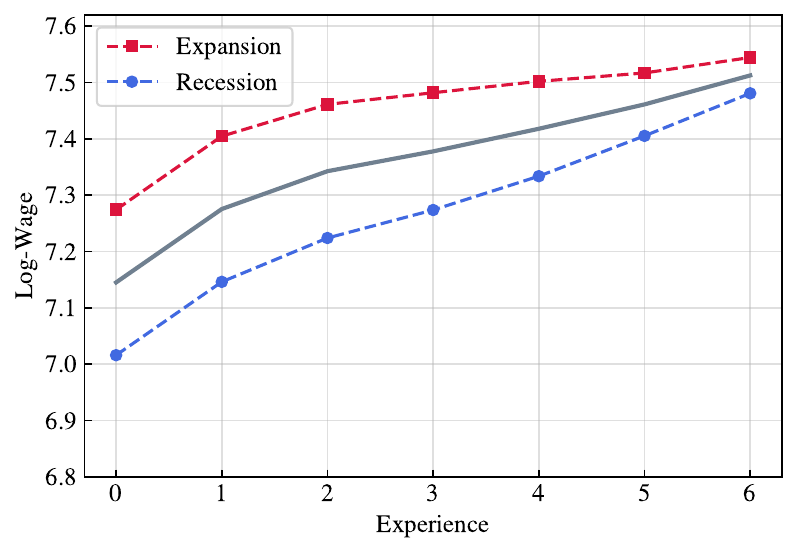}}
 \subcaptionbox{\footnotesize Model: Linear HC acc.}{\includegraphics[width=0.3\textwidth]{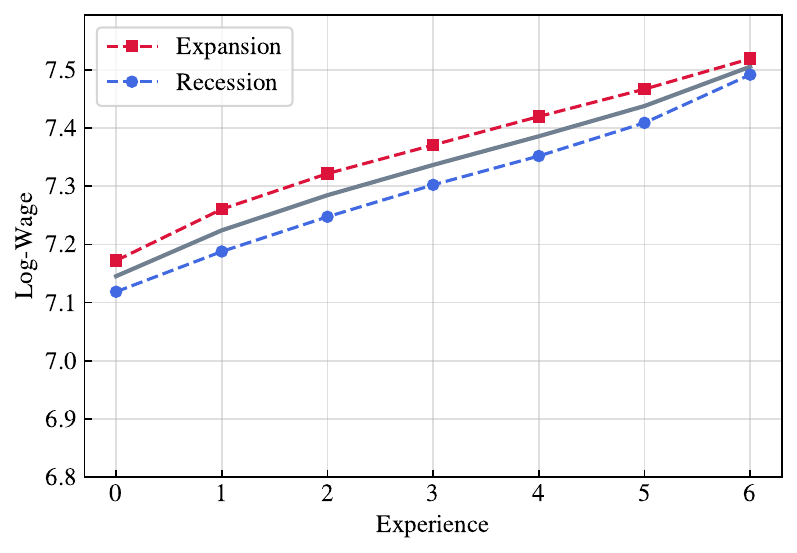}}
 
 \floatfoot{\textbf{Note}: The figure plots the wage profiles estimated on the data and on model simulations for cohorts of workers entering the labor market. The counterfactual profiles for expansions (recessions) are obtained considering a positive (negative) two standard deviation realization of cyclical GDP.}
\end{figure}

\begin{figure}[!h]
    \centering
    \caption{Shares of E--U transitions across firms' productivity distribution \label{fig:separationsShares}}
    \includegraphics[width=\textwidth]{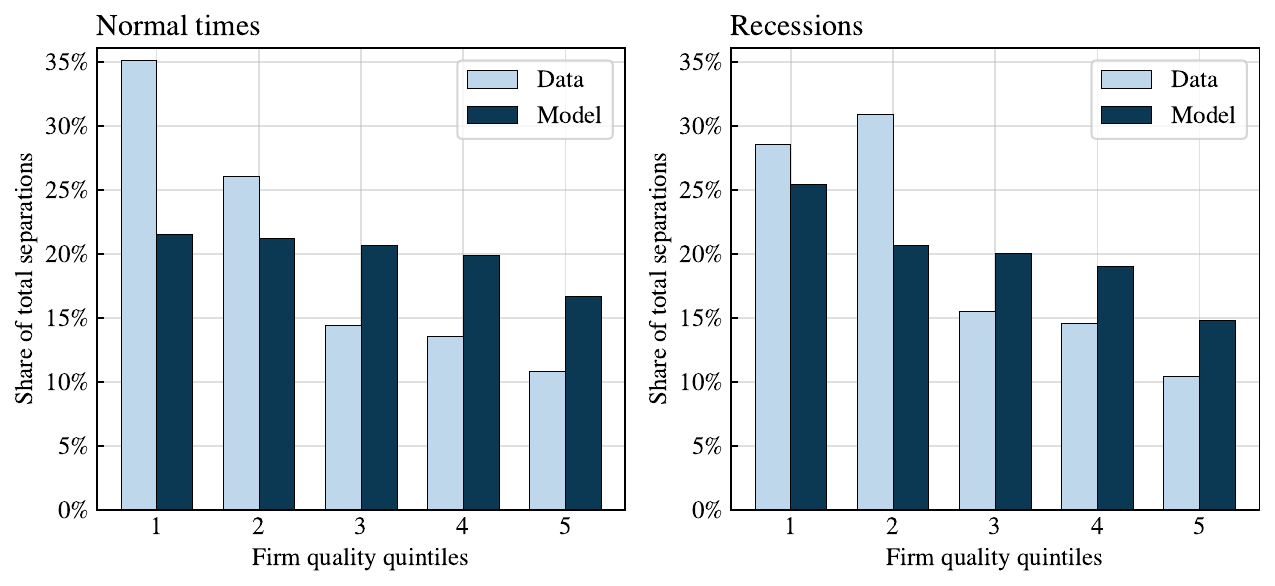}
\vspace{-2em}    
\floatfoot{\textbf{Note:} The figure reports share of separations in each quintile of the firms' productivity distributions. In the data, recessions are OECD recession periods, while in the model they are periods with at least three quarters of negative GDP growth.}
\end{figure}

\begin{figure}%
    \caption{Cross-Sectional Features, Model and Data}\label{app_fig:bigvalunemp}
    \subcaptionbox{Unemployment rates\label{app_fig:unemployment}}{\includegraphics[width=0.49\textwidth]{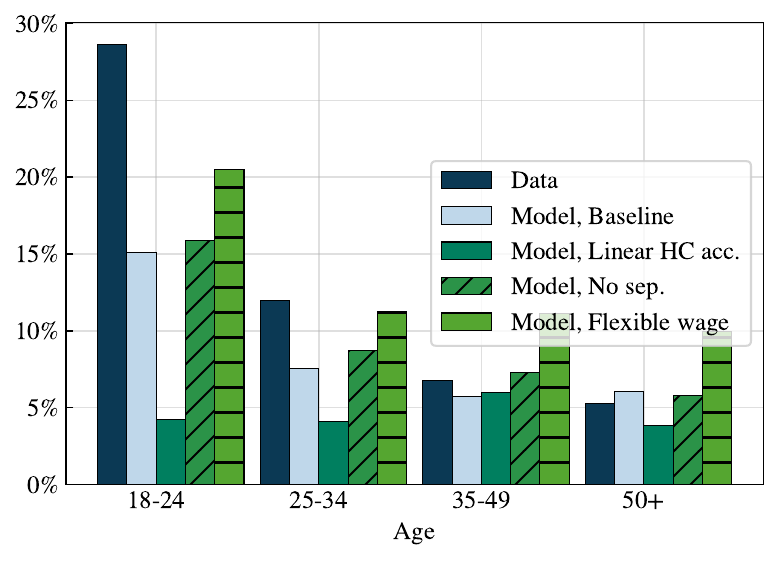}}
    \subcaptionbox{Unemployment-to-Employment rates\label{app_fig:ue_model_data}}
    {\includegraphics[width=0.49\textwidth]{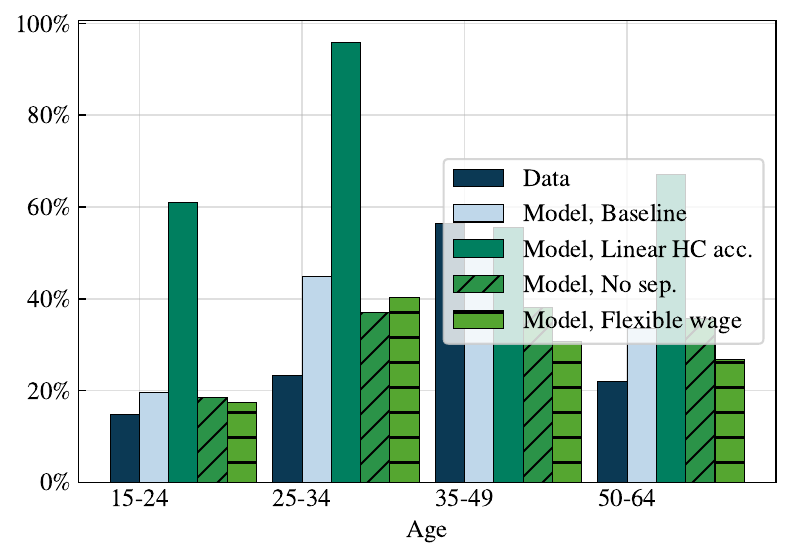}}
    \vspace{-1em}
    \floatfoot{\textbf{Note:}  Panel~(\textbf{a}) plots the unemployment rate by age groups in model simulations and in the data. 
    \textit{Sources:} unemployment rates are taken from the Italian National Statistical Agency (ISTAT). Panel~(\textbf{b}) reports the average UE rates by worker age.}
\end{figure}

\begin{figure}
\caption{Experience returns, Model and Data\label{app_fig:abs_returns}}
\includegraphics[height=0.4\textheight]{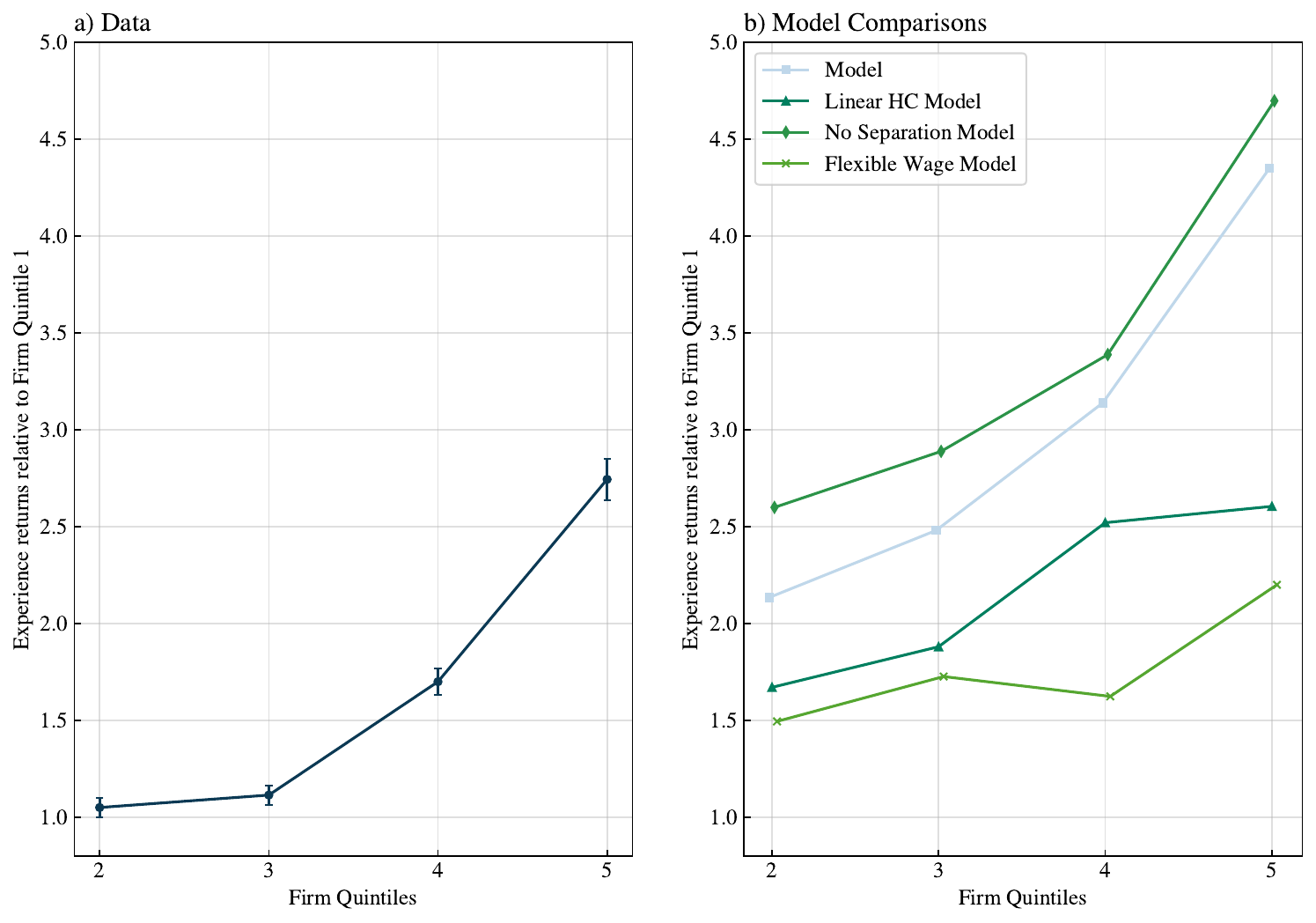}
\floatfoot{\textbf{Note}: The figure plots the returns to experience from working one year in a firm belonging to a given class. Panel a) reports the average returns to experiences from \textbf{Equation~\ref{app_eqn:experience_profiles_abs}}. Panel b) reports the coefficients from running the same regression on model simulations. We report the coefficients for the baseline model, a model linear human capital accumulation, one with flexible wages, and one with no separations.}
\end{figure}

\begin{figure}
\caption{Persistent effect of previous employers on post-unemployment wages\label{app_fig:eue_comparisons}}
\includegraphics[width=0.7\textwidth]{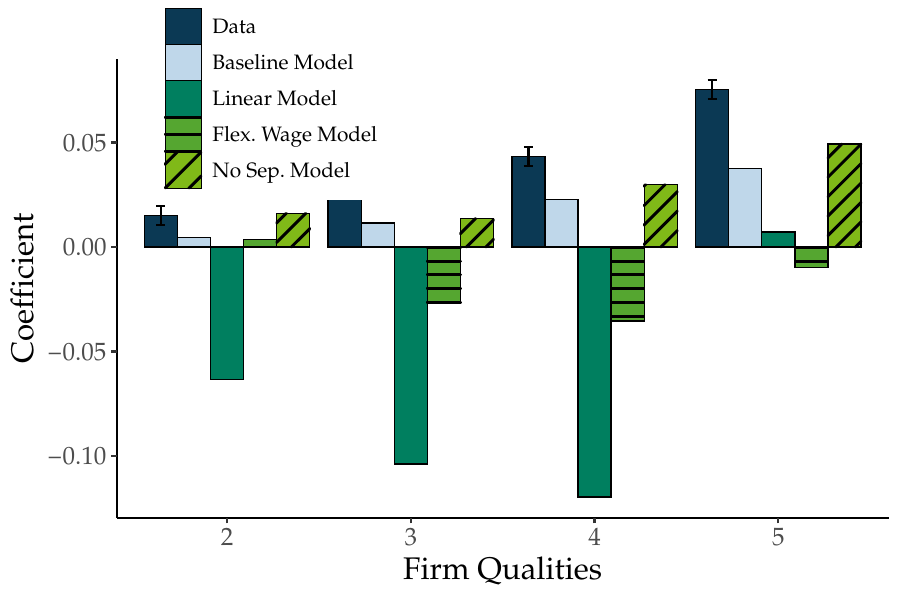}
\floatfoot{\textbf{Note}: The figure plots the persistency of previous employers on post-unemployment wages for a sample of workers experiencing an E-U-E transition. The coefficients are estimated following \textbf{Equation~\ref{eq:eue_reg}}. We compare the empirical estimates with a sample of E-U-E transitions in model simulations and we report the estimated coefficients for the baseline model, a model linear human capital accumulation, one with flexible wages, and one with no separations.}
\end{figure}

\begin{table}[t]
\centering
\caption{Co-movements at business cycle frequency \label{app_tab:validation_table}}
\resizebox{0.95\textwidth}{!}{\input{figures/PaperFigures/ValidationTable_AllModels.txt}}
\floatfoot{\textbf{Note:} \footnotesize{We report the correlation between model simulated series and their data counterparts (Panel~A), the correlations with aggregate output in the model and in the data (Panel~B) and the standard deviations of output and unemployment (Panel~C). Simulations are obtained by feeding the model with a TFP series  that matches the Italian TFP from 2000Q1 to 2019Q4. All series have been  detrended.  Panel~A and~B: Hamilton filter (4 lags and 8 leads); Panel~C: Hodrick-Prescott filter (smoothing equal to $10^5$, as in \cite{Shimer2005}). We report the correlations for our baseline model and for two alternative ones. One in which human capital accumulation does not depend on firm quality and is given by $h_{\iota,t+1} = \phi_{\iota} + h_{\iota, t} + \epsilon_{\iota,t}$ (\textit{Linear Human Capital}), and a second in which wages can decrease in order to satisfy the firm participation constraint (\textit{Flexible Wage}). All the alternative models have been re-estimated as described in \textbf{Section \ref{calib_est}}.}}
\end{table}

\begin{table}[t]
\centering
\caption{Cyclical job ladder collapse \label{app_tab:differentialnetjob}}
\resizebox{0.7\textwidth}{!}{\input{figures/PaperFigures/HHMS_comparisons.tex}}
\vspace{-1.0em}
\floatfoot{\textbf{Note:} The table reports the cyclicality of the difference in Net poaching and Net nonemployment flows as in  \cite{haltiwanger_cyclical_2025} and \cite{haltiwanger_cyclical_2018}. Each cell presents results from a separate regression. The dependent variable is the differential worker flow rate between high- and low-productivity firms. The independent variables in each regression include a cyclical indicator as well as a linear and quadratic time trend and a constant, which are not reported. In the data, the sample is quarterly from 1996 to 2018. Robust standard errors in parentheses.}
\end{table}

\begin{figure}
    \centering
    \caption{Decomposing output's losses across the age distribution\label{fig:decompByAge_irf}}
    \subcaptionbox{Baseline model\label{fig:decompByAge_irf_output}}{
    \includegraphics[width=0.8\textwidth]{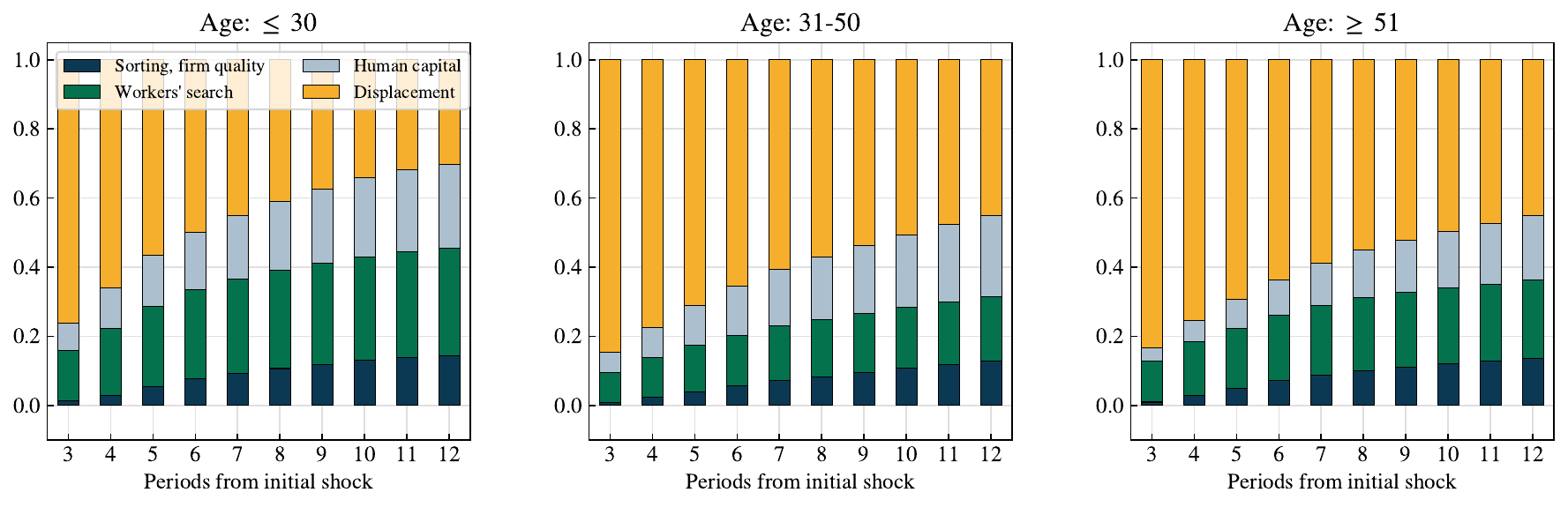}}
    
    \subcaptionbox{Linear human capital model\label{fig:decompByAge_irf_linear}}{\includegraphics[width=0.8\textwidth]{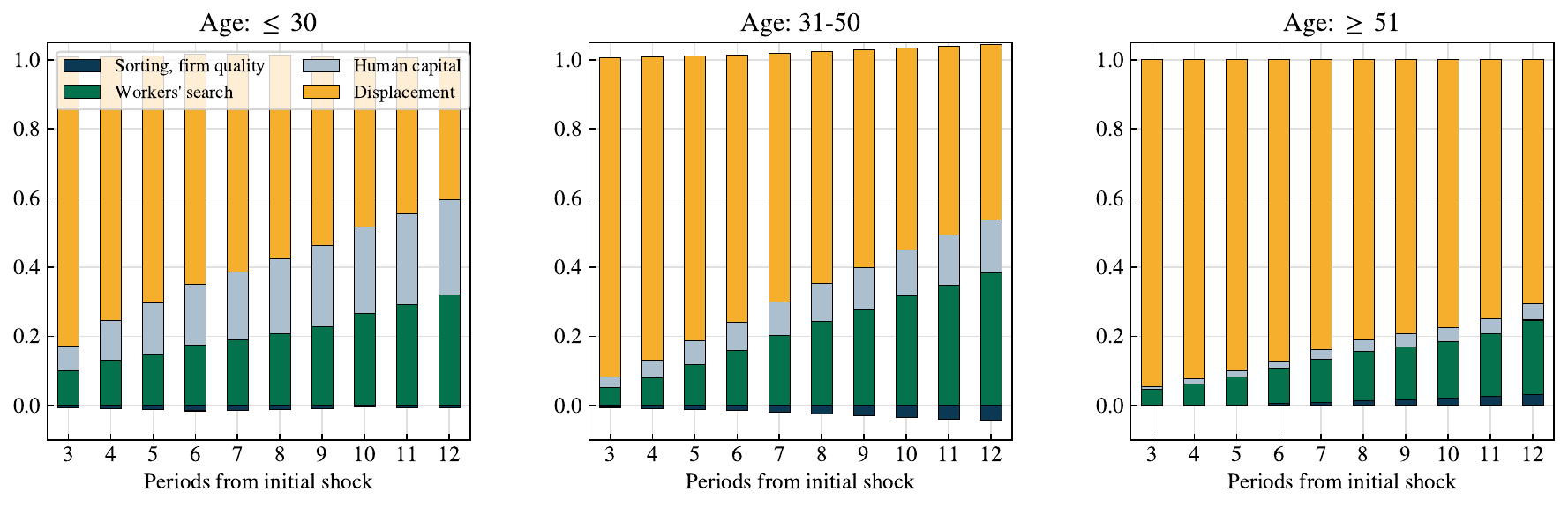}}
    \vspace{-1.0em}
    \floatfoot{\textbf{Note:} For each age group, Panel~(\textbf{a}) shows the relative importance of each transmission channel compared to the baseline recession for the cumulative impulse response of GDP across different age groups in the two years after the onset of the shock. Panel~(\textbf{b}) plots the same decomposition for the model with linear human capital accumulation.}
\end{figure}

%% file: figures/PaperFigures/StateDepTable_raw_avg.tex
\begin{tabular}{lccc}
\toprule
{} & \multicolumn{2}{c}{Output Response} & Persistence \\
\cmidrule{2-3}
& Short-term & Medium-term & {} \\
\midrule
Firm Quality    & -0.18      & -0.30     & 1.30         \\
                & (0.40)     & (0.91)    & (1.70)       \\
Human Capital   & -0.39      & -0.88     & 1.21         \\
                & (3.37)     & (7.73)    & (14.42)      \\
Wage            & -0.05      & -0.26     & -0.81        \\
                & (0.51)     & (1.18)    & (2.19)       \\
Labor Share     & -2.11      & -3.52     & -2.12        \\
                & (1.15)     & (2.65)    & (4.93)       \\
Sorting         & -1.14      & -2.56     & -7.03        \\
                & (1.71)     & (3.93)    & (7.32)       \\
\midrule
$R^2$ & 0.8 & 0.8 & 0.3 \\
N & 100 & 100 & 100 \\
\bottomrule
\end{tabular}

%% file: figures/PaperFigures/StateDepTable_raw_var.tex
\begin{tabular}{lccc}
\toprule
{} & \multicolumn{2}{c}{Output Response} & Persistence \\
\cmidrule{2-3}
& Short-term & Medium-term & {} \\
\midrule
Firm Quality    & -0.07      & -0.17     & 0.90         \\
                & (0.20)     & (0.46)    & (0.86)       \\
Human Capital   & 0.41       & 0.55      & -2.30        \\
                & (3.09)     & (7.09)    & (13.22)      \\
Wage            & 0.71       & 2.05*     & -0.62        \\
                & (0.52)     & (1.20)    & (2.23)       \\
Labor Share     & -1.43      & -7.75     & 9.23         \\
                & (6.42)     & (14.73)   & (27.46)      \\
                &      &       &   \\
                &      &      &       \\
\midrule
$R^2$ & 0.8 & 0.8 & 0.3 \\
N & 100 & 100 & 100 \\
\bottomrule
\end{tabular}

%% file: figures/PaperFigures/StateDepTable_raw_ske.tex
\begin{tabular}{lccc}
\toprule
{} & \multicolumn{2}{c}{Output Response} & Persistence \\
\cmidrule{2-3}
& Short-term & Medium-term & {} \\
\midrule
Firm Quality    & -0.09      & -0.06     & 0.64         \\
                & (0.34)     & (0.78)    & (1.45)       \\
Human Capital   & -0.18      & -0.44     & -2.87        \\
                & (2.13)     & (4.88)    & (9.10)       \\
Wage            & -0.10      & -0.26     & -0.03        \\
                & (0.13)     & (0.30)    & (0.56)       \\
Labor Share     & 0.34       & 1.03      & 3.34         \\
                & (0.73)     & (1.68)    & (3.13)       \\
                &      &       &   \\
                &      &      &       \\
\midrule
$R^2$ & 0.8 & 0.8 & 0.3 \\
N & 100 & 100 & 100 \\
\bottomrule
\end{tabular}

%% file: figures/PaperFigures/StateDepTable_raw_kur.tex
\begin{tabular}{lccc}
\toprule
{} & \multicolumn{2}{c}{Output Response} & Persistence \\
\cmidrule{2-3}
& Short-term & Medium-term & {} \\
\midrule
Firm Quality    & 0.01       & -0.02     & -0.13        \\
                & (0.13)     & (0.29)    & (0.55)       \\
Human Capital   & -0.26      & -0.51     & 1.19         \\
                & (0.47)     & (1.08)    & (2.01)       \\
Wage            & 0.08       & 0.21      & 0.38         \\
                & (0.07)     & (0.17)    & (0.31)       \\
Labor Share     & -0.20      & -0.55     & -1.04        \\
                & (0.28)     & (0.65)    & (1.20)       \\
                &      &       &   \\
                &      &      &       \\
\midrule
$R^2$ & 0.8 & 0.8 & 0.3 \\
N & 100 & 100 & 100 \\
\bottomrule
\end{tabular}

%% file: figures/PaperFigures/ValidationTable_AllModels.txt
\begin{tabular}{lccccc}
  \multicolumn{6}{l}{\textbf{(a)} {Correlation between Model and Data Series} } \\ 
  \midrule 
  {} & & \multicolumn{4}{c}{Model} \\
  \cmidrule{3-6}                 
   {} & & Baseline & Linear Human Capital Acc. & No Separations & Flexible Wages \\
  \midrule
    Aggregate output & & 0.81 & 0.82     & 0.81  & 0.81     \\
    Unemployment     & & 0.44   & 0.01       & 0.47    & 0.37       \\
  \midrule
  \\[0.25em]
  \multicolumn{6}{l}{\textbf{(b)} {Correlation with aggregate output} } \\
  \midrule
  {} & Data & \multicolumn{4}{c}{Model} \\
  \cmidrule{3-6}
  {} & {} & Baseline & Linear Human Capital Acc. & No Separations & Flexible Wage \\
  \midrule
    Unemployment & -0.66 & -0.56 & -0.01 & -0.65 & -0.45 \\
  \midrule
  \\[0.25em]
  \multicolumn{6}{l}{\textbf{(c)} {Unemployment and Output Volatility (in \%)} } \\
 {} & Data & \multicolumn{4}{c}{Model} \\
  \cmidrule{3-6}
  {} & {} & Baseline & Linear Human Capital Acc. & No Separations & Flexible Wage \\
  \midrule
    Aggregate Output SD & 2.4\% & 3.7\% & 2.5\% & 4.1\% & 2.8\% \\
    Unemployment SD     & 1.4\% & 0.9\% & 0.3\% & 1.3\% & 0.8\% \\
  \midrule
  \end{tabular}
  

%% file: figures/PaperFigures/HHMS_comparisons.tex
\begin{tabular}{l*{2}{c}}
\toprule
                    &\multicolumn{1}{c}{(1)}&\multicolumn{1}{c}{(2)}\\
                    &\multicolumn{1}{c}{Net poaching}&\multicolumn{1}{c}{Net nonemployment}\\
\midrule
\multicolumn{3}{l}{\textit{Panel A: Data}}\\[0.5ex]
Delta. unempl. rate &      -1.366 &       1.530\\
                    &     (0.557)         &     (0.540)         \\

\multicolumn{3}{l}{\textit{Panel B: Baseline Model}}\\[0.5ex]
Delta. unempl. rate &      -1.599 &       0.452\\
                    &     (0.415)         &     (0.124)         \\

\multicolumn{3}{l}{\textit{Panel C: Linear Model}}\\[0.5ex]
Delta. unempl. rate &      -0.251 &       0.035\\
                    &     (0.186)         &     (0.116)         \\
                    
\multicolumn{3}{l}{\textit{Panel D: Flex. Wage Model}}\\[0.5ex]
Delta. unempl. rate &      -0.226 &       0.059\\
                    &     (0.267)         &     (0.097)         \\
\multicolumn{3}{l}{\textit{Panel E: No Separations Model}}\\[0.5ex]
Delta. unempl. rate &      -2.039  &       0.072\\
                    &     (0.310)         &     (0.080)         \\
\midrule
Observations (Data)        &          92         &          92         \\
Observations (Models)        &          418         &          418         \\
\bottomrule
\end{tabular}